\documentclass[11pt]{article}
\usepackage[a4paper,margin=1in]{geometry}
\usepackage[T1]{fontenc}
\usepackage[utf8]{inputenc}
\usepackage{lmodern}
\usepackage{microtype}
\usepackage{amsmath,amssymb,amsthm,amsfonts,array,mathrsfs,ifthen,mathtools}
\usepackage{dsfont}
\usepackage{setspace}
\usepackage{authblk}

\usepackage{graphicx}
\usepackage{xcolor}
\usepackage{bm}
\usepackage{enumitem}
\usepackage{url}
\usepackage{hyperref}
\definecolor{warmdarkred}{RGB}{120, 20, 30} 
\hypersetup{
	colorlinks=true,
	linkcolor=warmdarkred,
	urlcolor=warmdarkred,
	citecolor=warmdarkred
}
\usepackage[nameinlink,capitalise,noabbrev]{cleveref}
\usepackage[round]{natbib}
\usepackage{comment}
\usepackage{xpatch}
\xpatchcmd{\proof}{\itshape}{\normalfont\proofnamefont}{}{}

\newcommand{\proofnamefont}{\bfseries}

\newtheorem{proposition}{Proposition}
\newtheorem{lemma}{Lemma}
\newtheorem{corollary}{Corollary}
\newtheorem{assumption}{Assumption}
\theoremstyle{remark}

\newtheorem{definition}{Definition}
\crefname{assumption}{assumption}{assumptions}
\Crefname{assumption}{Assumption}{Assumptions}

\crefname{example}{example}{examples}
\Crefname{example}{Example}{Examples}

\crefname{remark}{remark}{remarks}
\Crefname{remark}{Remark}{Remarks}

\crefname{section}{section}{sections}
\Crefname{section}{Section}{Sections}

\crefname{subsection}{subsection}{subsections}
\Crefname{subsection}{Subsection}{Subsections}

\crefname{equation}{equation}{equations}
\Crefname{equation}{Equation}{Equations}

\usepackage{authblk}

\usepackage[colorinlistoftodos]{todonotes}

\newcommand{\FI}{\mathrm{FI}}
\newcommand{\DE}{\mathrm{AE}}
\newcommand{\M}{\mathrm{M}}
\newcommand{\NWBR}{\mathrm{NWBR}}

\title{{\bf A Case for Competition in Information Provision}\thanks{This paper builds on a previously circulated draft titled “Direct Competition in News Markets.” We thank Yair Antler, Rossella Argenziano, Ennio Bilancini, Daniele Condorelli, Andrea Galeotti, Jayant Ganguli, Christian Ghiglino, Aniol Llorente-Saguer, Andrea Mattozzi, Oleg Rubanov, Alex Smolin, Santiago Oliveros, Federico Trombetta, and seminar participants at the University of Essex, the 2018 RES Symposium of Junior Researchers, the IV Leicester International PhD Conference in Economics, the 2019 Annual LSE–NYU Conference, and the IX IBEO Workshop for valuable comments on this and earlier versions. All errors are our own. Financial support from the Economic and Social Research Council (ESRC grant n.~1366373) is gratefully acknowledged by Vaccari. During the preparation of this work, we used OpenAI's ChatGPT as a form of research assistance, and \href{https://www.refine.ink/}{Refine.ink} for feedback. We reviewed the output as needed, and take full responsibility for the content of the article.}}

\date{}

\author{
	\begin{minipage}[t]{0.45\textwidth}
		\centering
		Bianca Sanesi\\
		{\small University of Bologna}\\
    Department of Economics\\
		{\small \texttt{bianca.sanesi@unibo.it}}
	\end{minipage}
	\begin{minipage}[t]{0.45\textwidth}
		\centering
		Federico Vaccari\\
		{\small University of Bergamo}\\
    Department of Economics\\
		{\small \texttt{federico.vaccari@unibg.it}}
	\end{minipage}
}

\begin{document}
	\maketitle

	\begin{abstract}
We study how competition among biased news sources affects information and welfare when sources can misrepresent facts at a cost. Monopolistic and competitive market structures admit many equilibria. We develop a common belief-based selection criterion that applies to both and makes welfare comparisons possible. Under the refined outcomes, adding an oppositely biased source improves the receiver’s welfare when that source faces sufficiently high misreporting costs. Competition disciplines the incumbent while introducing a few distortions of its own. Better information need not increase total welfare, and distorted advice from a monopolist can raise total welfare. Competition may improve decision-making without being socially beneficial.
\end{abstract}
	
	\bigskip
	
	\noindent \textbf{Keywords:} Competition, equilibrium selection, strategic communication, media bias \bigskip
	
	\noindent \textbf{JEL codes:} C72, D82, D83, L13, L82
	
	\newpage
	
	\begingroup
\hypersetup{linkcolor=black}
\tableofcontents
\endgroup
	
	\newpage

	
	\section{Introduction}

Competition in news markets is valued because it exposes audiences to different accounts of the same facts. Media-ownership rules, for example, are partly motivated by the idea that independent viewpoints help audiences distinguish information from persuasion.\footnote{See, for example, \cite{FCC2018,FCC2023}.} The same idea underlies the ``marketplace of ideas'' argument, and appears in settings ranging from expert advice to adversarial legal proceedings \citep{GentzkowShapiro2008}. When interested sources can challenge one another, decision makers should become better informed. This paper asks when that intuition is correct.

The answer is not immediate when claims are neither costless nor fully verifiable. News sources, experts, and advocates can often misrepresent facts, but larger distortions expose them to greater reputational, legal, or technological costs. Competitive information provision has two effects. First, a second source can challenge the incumbent and expose its attempts at persuasion. Second, it can also introduce distortions of its own, due to, e.g., a persuasion rat-race. Whether competition improves information depends on the balance between these forces.

We study this question in a costly communication game. One or two informed senders observe a continuous state and report it to a receiver, who chooses one between two actions. Reports need not be truthful, but the cost of a report increases with its distance from the truth. Under monopoly, a single sender biased toward one action communicates with the receiver. Under competition, a second sender with the opposite bias observes the same state and reports simultaneously. The model isolates the informational effect of giving the receiver access to an additional, strategically opposed source.

The main challenge in performing welfare comparisons is equilibrium multiplicity. Under monopoly, different beliefs about unexpected reports support outcomes ranging from substantial pooling to almost full revelation. Competition also admits outcomes with different informational properties. Comparing a favorable monopoly equilibrium with an unfavorable competitive equilibrium, or the reverse, can produce any welfare ranking. An informative comparison requires a common way of selecting outcomes in the two market structures.

Our first contribution is to put outcome selection in the two market structures on a common footing. We develop a belief-based criterion that applies uniformly across configurations and selects a single outcome in each. By subjecting both benchmarks to the same discipline on beliefs, it permits a welfare comparison that does not rest on separate selection arguments for monopoly and competition.

Although we focus on a specific environment, our model captures the central difficulty of comparing monopoly and competition in information provision. Both market structures admit many equilibria. The continuous state makes the size and cost of misreporting meaningful, while two senders provide the simplest setting in which one source can discipline another. We establish our refinement only for this class of models and make no claim of a general selection result. Within this setting, it provides a common basis for comparison, showing when competition improves the receiver’s decisions and why that improvement need not raise total welfare.

Our second contribution is to characterize when competition benefits the receiver. Competition does not dominate every monopoly equilibrium. A sufficiently skeptical receiver obtains almost full information from a monopolist and can be better off than under competition. The conclusion changes under the outcomes selected by our common refinement criterion. With symmetric primitives, competition gives the receiver a strictly higher welfare compared to the selected monopoly outcome. More generally, competition benefits the receiver when the additional sender faces sufficiently high misreporting costs. As those costs rise, the mistakes caused by the second sender disappear, while his presence continues to constrain the incumbent's attempts at persuasion. For any fixed finite bias of the additional sender and any fixed non-revealing monopoly equilibrium, competition gives the receiver higher welfare once the additional sender's misreporting cost is sufficiently high.\footnote{The required cost threshold depends on the selected monopoly equilibrium, and thus no single finite threshold applies to the entire equilibrium family.}

Our third contribution is to show that this informational gain need not raise total welfare. The receiver chooses the action that maximizes her own payoff, not necessarily the sum of all players' payoffs. A monopolist may induce an action that harms the receiver but increases aggregate action payoffs. If that gain exceeds the associated reporting cost, persuasion is socially valuable even though it makes the receiver worse off. In the linear-quadratic model specification, the selected monopoly outcome can generate greater total welfare than the selected competitive one when the monopolist's preference for the positive action is sufficiently strong relative to the second sender's opposition and the monopolist's misreporting-cost parameter is sufficiently high. Conversely, when the second sender's opposition is strong enough to align the receiver's preferred action with the action that maximizes aggregate surplus on the states distorted by monopoly, sufficiently disciplined competition improves both welfare criteria.

To sum up, competition is valuable to the receiver when an additional source can challenge one-sided persuasion but has limited scope to manipulate information itself. Its social value also depends on whose payoffs enter the welfare criterion. More sources can improve decisions without increasing total welfare. Plurality and accountability are complements, and better information is not by itself a complete measure of the benefits of competition in the whole economy.


\subsection{Related literature}

This paper contributes to work on communication by multiple informed senders. In cheap-talk models, consulting a second expert can improve communication, but the gain depends on the senders' preferences, the dimensionality of the state, and the timing of reports \citep{KrishnaMorgan2001,Battaglini2002,LiRantakariYang2016}.\footnote{Related work studies how preference alignment and the receiver's own information affect equilibrium informativeness in single-sender communication \citep{ChenGordon2015,IshidaShimizu2019}.} When interested parties can disclose verifiable evidence or choose information structures, competition can also increase revelation \citep{MilgromRoberts1986,BhattacharyaMukherjee2013,GentzkowKamenica2017a,GentzkowKamenica2017b,AuKawai2020,LiNorman2021}. These models provide two useful benchmarks: costless messages and hard evidence or committed experiments. Our reports lie between them. They are literal claims about a common state and may be false, but larger distortions cost more. Competition can restrain one source while inducing costly misreporting by the other.

The closest work studies communication with lying or reporting costs. In single-sender models, \citet{KartikOttavianiSquintani2007} and \citet{Kartik2009} give otherwise cheap messages a direct payoff consequence, while \citet{OttavianiSquintani2006} and \citet{Chen2011} study credulity and behavioral perturbations. Close to our monopoly problem, \citet{HodlerLoertscherRohner2014} and \citet{Vaccari2023ET} study a biased source who reports a continuous state to a receiver facing a binary decision and pays more for larger lies. With several interested parties, \citet{EmonsFluet2019} attach reporting costs to the disclosure of verifiable facts, while \citet{SkaperdasVaidya2012} model persuasion as costly evidence production by contending parties. In a two-sender costly-talk environment, \citet{Vaccari2023JET} characterizes the adversarial equilibrium that serves as our competitive benchmark. We compare that benchmark with monopoly under a common selection criterion and evaluate both receiver welfare and total welfare.

Equilibrium selection is part of the paper's contribution. Belief-based refinements restrict how a receiver interprets unexpected messages in single-sender signaling games \citep{GrossmanPerry1986,MailathOkunoFujiwaraPostlewaite1993}. The Never-a-Weak-Best-Response (NWBR) test extends forward-induction reasoning to infinite signaling games \citep{Manelli1997}. For multi-sender games, \citet{VidaHonryo2021} connect strategic stability to unprejudiced beliefs, and \citet{VidaHonryoAzacis2026} develop a forward-induction criterion for monotonic multi-sender signaling games. We use these ideas for a comparative purpose. The same restrictions on beliefs and continuation play are imposed under monopoly and competition, so the welfare ranking does not come from selecting a favorable equilibrium separately in each market structure.

The paper also relates to media bias and media competition. \citet{Stromberg2015} and \citet{GentzkowShapiroStone2015} provide broad surveys. Reader demand and reputation can generate slant \citep{MullainathanShleifer2005,GentzkowShapiro2006,BernhardtKrasaPolborn2008}. Other work studies bias arising from journalists' information, political capture, or the interaction between media slant and electoral incentives \citep{Baron2006,BesleyPrat2006,DugganMartinelli2011}. We take sources' preferences as given and abstract from prices, audience allocation, and the endogenous choice of slant. This isolates how access to an additional, oppositely biased source changes a receiver's decision when both sources observe the same facts.

The effect of media competition is not unambiguously positive in the existing literature. Competition may improve accuracy, but it can also change outlets' incentives to acquire, differentiate, or supply information \citep{GentzkowShapiro2008,PiolattoSchuett2015,PeregoYuksel2022}. Empirically, newspaper entry can reduce hard-news provision and political participation \citep{Cage2020}. Our mechanism is different. Competition changes neither audience demand nor the allocation of attention across outlets. It changes the strategic contest over one receiver's action. We identify when cross-checking improves that decision under partial verifiability, and why greater decision accuracy need not increase aggregate welfare.


\section{Model}

We study communication between an uninformed receiver and one or two informed senders. The senders observe a continuous state and report it to the receiver, who then chooses between two actions. Reports are literal but not verifiable. A sender may misrepresent the state, although doing so entails a cost that increases with the size of the misreport. The senders have opposing biases. Sender~1 favors the positive action relative to the receiver, whereas sender~2 favors the negative action.

The model allows us to isolate the informational effect of access to a competing source. Under monopoly, only sender~1 communicates with the receiver. Under competition, sender~2 also observes the state and submits a separate report. The receiver observes all reports before choosing her action. We first introduce the primitives common to both environments, and then describe the monopoly and competitive outcomes used in the subsequent analysis.


	\subsection{Common primitives}\label{subsec:common-primitives}
    
	There is a continuous state, $\theta\in\Theta\subseteq\mathbb{R}$, with density $f$ and full support on $\Theta$. The receiver chooses one of two actions, denoted by $a^-$ and $a^+$. We normalize the receiver's utility from action $a^-$ to zero and write $u_r(\theta)$ for the utility from action $a^+$.	We assume that there is a unique threshold in $\theta$, normalized to zero, such that the receiver's payoff is $u_r(\theta)<0$ for $\theta<0$, $u_r(\theta)>0$ for $\theta>0$, and $u_r(0)=0$. For each player $i$, $u_i$ is continuous and strictly increasing in $\theta$. Under full information, the receiver chooses $a^+$ if and only if $\theta\geq 0$. Whenever the receiver is indifferent between $a^+$ and $a^-$ under her posterior belief, she chooses $a^+$.

	In every market configuration, senders observe $\theta$ perfectly, and then send a literal report $r_j\in\Theta$. Let $u_j(\theta)$ denote sender $j$'s utility from action $a^+$, again with utility from action $a^-$ normalized to zero. Each sender has a unique threshold $\tau_j$ satisfying $u_j(\tau_j)=0$ and, throughout, sender~$1$ is upward biased while sender~$2$ is downward biased.\footnote{The threshold $\tau_j$ provides a convenient reduced-form summary of sender $j$'s bias and position relative to the other players, since it identifies the state at which $j$ is indifferent between the two actions. The analysis extends naturally to cases in which $u_j(\theta)$ never crosses zero, so that sender $j$ always weakly prefers the same action.} That is,
	\[
	\tau_1<0<\tau_2.
	\]
	If sender $j$ sends report $r_j$ when the state is $\theta$, he incurs misreporting cost $k_j C_j(r_j,\theta)$, where $k_j>0$ and the cost function $C_j$ satisfies the following properties,
	\begin{enumerate}
		\item[$(i)$] $C_j(\theta,\theta)=0$ for all $\theta$;
		\item[$(ii)$] $C_j$ is continuous on $\Theta^2$ and differentiable at
    every point $(r_j,\theta)$ with $r_j\neq\theta$;
		\item[$(iii)$] for every fixed state $\theta$, $C_j(r_j,\theta)>C_j(r_j',\theta)$ whenever $|r_j-\theta|>|r_j'-\theta|$;
        \item[$(iv)$] for every fixed report $r_j$, $C_j(r_j,\theta)>C_j(r_j,\theta')$ whenever $|r_j-\theta|>|r_j-\theta'|$;
		\item[$(v)$] $C_j(r_j,\theta)\to\infty$ as $|r_j-\theta|\to\infty$. 
	\end{enumerate}

A mixed strategy for sender $j$ assigns, at each state $\theta$, a probability distribution $\phi_j(\cdot\mid\theta)$ over reports. We use the standard independent-private-randomization convention. Conditional on the state, the senders' randomize independently. There is no public correlating device.
    
	Hence, sender $j$'s payoff is
	\[
	w_j(r_j,a,\theta)=\mathds{1}_{\{a=a^+\}}u_j(\theta)-k_j C_j(r_j,\theta).
	\]

	For every state $\theta\in\Theta$ with $\theta\geq\tau_1$, we define the \emph{reach} of sender~$1$ as the largest upward report that is weakly better than truthful reporting when action $a^+$ is induced with certainty, i.e.,
	\[
	\bar r_1(\theta)\coloneqq \max\{r\in\Theta \mid u_1(\theta)=k_1 C_1(r,\theta)\}.
	\]
	Likewise, for every state $\theta\in\Theta$ with $\theta\leq\tau_2$, the \emph{reach} of sender~$2$ is the smallest downward report that is weakly better than truthful reporting when action $a^-$ is induced with certainty, i.e.,
	\[
	\underline{r}_2 (\theta)\coloneqq \min\{r\in\Theta \mid -u_2(\theta)=k_2 C_2(r,\theta)\}.
	\]
	We assume that the report space is large enough for these reaches to exist on their respective domains and that both reach functions are continuous and strictly increasing there. Whenever needed, we denote the \emph{inverse reaches} by $\bar r_1^{-1}$ and $\underline{r}_2 ^{-1}$. We assume that $\bar r_1$ and $\underline r_2$ are continuous and strictly increasing on the domains used below. Their increasing inverses are well defined on the corresponding images.\footnote{The domain restrictions are necessary because $u_1(\theta)<0$ when $\theta<\tau_1$, whereas $k_1C_1(r,\theta)\geq0$, and $-u_2(\theta)<0$ when $\theta>\tau_2$, whereas $k_2C_2(r,\theta)\geq0$. Thus, the corresponding reach equations need not have solutions outside the stated domains.}
	
	We assume throughout that the state and report spaces are sufficiently large to contain all equilibrium and comparison-relevant reports generated by the parameter values under consideration. To analyze limits such as $|\tau_j|\to \infty$ or $k_j\to 0^+$, we require an unbounded report space, $\Theta\equiv\mathbb{R}$. Therefore, any limitation on information transmission is endogenous and arises from strategic incentives and misreporting costs rather than from exogenous bounds on feasible reports.

	\subsection{The monopolistic environment}

	The monopoly environment has a single sender, namely sender~$1$. The family of \emph{generic} monopolistic equilibria is indexed by a parameter $\lambda\in[0,\bar\lambda]$, where $\lambda$ indexes the receiver's skepticism.\footnote{We use ``skepticism'' as a label for the equilibrium index. Formally, $\lambda$ is the receiver's expected-payoff advantage from choosing $a^+$ at the pooling report, so a larger $\lambda$ makes $a^+$ more attractive there. Within the generic equilibrium family, it is associated with a smaller pooling region and, hence, a more informative equilibrium.} For every $\lambda$, the equilibrium takes the following form,\footnote{For a complete equilibrium characterization of this configuration, see \cite{Vaccari2023ET}. The generic family coincides exactly with the set of equilibria that survive the Intuitive Criterion \citep{ChoKreps1987}.}
	\begin{enumerate}[label=($\roman*$)]
		\item There is a pooling report, $r^*(\lambda)>0$;
		\item There is a lower pooling type, $\ell(\lambda)\coloneqq \bar r_1^{-1}(r^*(\lambda))<0$;
		\item Sender~$1$ reports truthfully when $\theta\notin(\ell(\lambda),r^*(\lambda))$;
		\item Sender~$1$ pools all states in $(\ell(\lambda),r^*(\lambda))$ at report $r^*(\lambda)$.
	\end{enumerate}
We index this family by the receiver's continuation value at the pooling report. Thus,
\begin{equation}
\lambda \coloneqq \frac{ \displaystyle\int_{\ell(\lambda)}^{r^*(\lambda)} u_r(\theta)f(\theta)\,d\theta }{ \displaystyle\int_{\ell(\lambda)}^{r^*(\lambda)} f(\theta)\,d\theta }.
\label{eq:lambda-pooling-value}
\end{equation}
	The least informative equilibrium is also the sender-preferred equilibrium, indexed by $\lambda=0$. We write
	\[
	r^M\coloneqq r^*(0),
	\]
	\[
	\ell^M\coloneqq \ell(0)=\bar r_1^{-1}(r^M).
	\]
	By construction, the receiver is indifferent at the least informative pooling report, i.e.,
	\begin{equation*}
		\int_{\ell^M}^{r^M} u_r(\theta)f(\theta)\,d\theta=0.
	\end{equation*}
	At the other extreme, the most informative equilibrium, indexed by $\lambda=\bar\lambda$, yields the receiver's full-information payoff.

    
	\subsection{The competitive environment and the adversarial equilibrium}\label{subsec:competitive-environment}
	
	The competitive environment has both sender~$1$ and sender~$2$. We focus on the \emph{adversarial equilibrium} of such a setting characterized by \citet{Vaccari2023JET}. That characterization applies to the primitives in \Cref{subsec:common-primitives}. Our maintained assumptions that the reaches are continuous and strictly increasing support the inverse-reach notation used below.

\begin{proposition}
\label{prop:AE-characterization}
Under the primitives in \Cref{subsec:common-primitives}, the two-sender game has an adversarial equilibrium. Adversarial equilibria are essentially unique in outcomes and strategies, as they induce the same reporting strategies and on-path receiver actions, although they may differ in off-path beliefs. In every adversarial equilibrium,
\begin{enumerate}[label=(\roman*)]
    \item the receiver's continuation value satisfies conditions $(wM)$, $(sM)$, and $(Dom)$ of \citet{Vaccari2023JET};

    \item there is a unique strictly decreasing ``swing function''
    \[
    s: [\underline r_2(0),\bar r_1(0)] \longrightarrow [\underline r_2(0),\bar r_1(0)]
    \]
    satisfying $s(s(r))=r$;

    \item there are two truthful cutoffs, $\theta_L,\theta_H$, such that $\tau_1<\theta_L<0<\theta_H<\tau_2$. Both senders report truthfully outside $(\theta_L,\theta_H)$ and mix inside that interval;

    \item conditional on a state in the mixing interval, sender $j$'s report law has the representation
    \[
    \phi_j(dr_j\mid\theta) = \alpha_j(\theta)\delta_\theta(dr_j) + \psi_j(r_j\mid\theta)\,dr_j,
    \]
    where the only possible atom is at the truthful report, $\delta_\theta$ is a Dirac's delta function. and $\psi_j(\cdot\mid\theta)$ is the density of the remaining atomless component;

    \item an adversarial equilibrium is robust to vanishing report noise, or is outcome- and strategy-equivalent to one with that robustness property.
\end{enumerate}
\end{proposition}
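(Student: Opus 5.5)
This proposition imports results, so the plan is to check that the primitives of \Cref{subsec:common-primitives} satisfy the hypotheses of the characterization theorem in \citet{Vaccari2023JET}. Each item (i)--(v) then becomes a restatement or direct corollary of that paper's results.

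\textbf{Hypotheses.} The first step matches assumptions. Sender payoffs $w_j$ are additively separable into an action term $\mathds{1}_{\{a=a^+\}}u_j(\theta)$ and a cost $k_jC_j(r_j,\theta)$. The utilities $u_j,u_r$ are continuous, strictly increasing, and single-crossing, with thresholds ordered $\tau_1<0<\tau_2$. The cost properties (i)--(v) deliver exactly the objects that paper relies on: zero cost of truth, monotonicity in distance from the state, and unbounded cost. Finally, the reaches $\bar r_1$ and $\underline r_2$ exist and are continuous and strictly increasing on the relevant domains, which gives the inverse reaches used to build the swing function. One point needs care: the independent-private-randomization convention (no public correlation device) must coincide with that paper's strategy space, because equilibrium mixing is only well defined under a given correlation structure.

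\textbf{Existence, uniqueness, and items (i)--(iv).} Existence and essential uniqueness are invoked directly. To explain why uniqueness holds only up to off-path beliefs, I would argue as follows. The adversarial equilibrium is defined through receiver continuation values satisfying (wM), (sM), and (Dom). These conditions pin down the receiver's action at every on-path report pair, and they constrain off-path pairs only through their payoff consequences for deviating senders. Hence any two adversarial equilibria share reporting strategies and on-path actions. The swing function in (ii) is built by pairing each report $r\in[\underline r_2(0),\bar r_1(0)]$ with the opposing report that leaves the receiver indifferent. Its strict monotonicity and involutive property $s(s(r))=r$ follow from the symmetry of the indifference relation between the two senders' reports. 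The cutoffs in (iii) come from the states at which each sender's reach stops threatening the receiver's decision. The inequalities $\tau_1<\theta_L<0<\theta_H<\tau_2$ follow because, at $\theta_L$ and $\theta_H$, the disfavored sender is still willing to misreport. For (iv), the only possible atom is at the truthful report: an atom at an untruthful report would let the opponent profitably undercut it by an arbitrarily small amount, a standard war-of-attrition or all-pay argument.

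\textbf{Item (v) and the main obstacle.} I expect (v), robustness to vanishing report noise, to be the main obstacle. If \citet{Vaccari2023JET} proves it only for a selected member of the class, the ``or is outcome- and strategy-equivalent'' clause covers the gap. Given any adversarial equilibrium, I would modify its off-path beliefs to match those of the noise-robust one, which is legitimate because the characterization fixes only on-path play. I would then verify that the modified profile remains an equilibrium, since the adversarial conditions make off-path deviations unprofitable under either belief system.
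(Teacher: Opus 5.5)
Your proposal takes the same route as the paper, which offers no independent argument and simply restates the characterization, strategy representation, and essential-uniqueness result of \citet{Vaccari2023JET}, asserting that it applies under the primitives of \Cref{subsec:common-primitives}. Your supplementary heuristics are not needed and would not stand alone as proofs (e.g., strict monotonicity of $s$ comes from $(sM)$ rather than from the symmetry of the indifference relation). For item (v), no belief modification is required: essential uniqueness plus the existence of one noise-robust adversarial equilibrium already makes every adversarial equilibrium outcome- and strategy-equivalent to it.
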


The proposition restates the characterization, strategy representation, and essential uniqueness result in \citet{Vaccari2023JET}. The swing function, truthful cutoffs, and mixed-strategy decomposition are conclusions of that result.

Hereafter, we say that sender~$j$ \emph{swings} the opponent's report $r_{-j}$ when the report pair induces sender~$j$'s preferred action. Thus, sender~$2$ swings a positive report $r_1$ if and only if $r_2<s(r_1)$, while sender~$1$ swings a negative report $r_2$ if and only if $r_1\geq s(r_2)$.

The truthful cutoffs are characterized by $s\!\left(\bar r_1(\theta_L)\right)=\theta_L$ and $s\!\left(\underline r_2(\theta_H)\right)=\theta_H$. Using $s(s(r))=r$, we can implicitly define the cutoffs by $s(\theta_L)=\bar r_1(\theta_L)$ and $s(\theta_H)=\underline r_2(\theta_H)$. Both senders report truthfully when $\theta\notin(\theta_L,\theta_H)$. At interior states, each sender mixes between truthful reporting and misreporting. We write $\alpha_j(\theta)$ for sender~$j$'s probability of truthful reporting and $\psi_j(\cdot\mid\theta)$ for the density of his continuous reporting component.

	Our central comparison is between the least informative monopoly equilibrium and the adversarial equilibrium. This is the natural welfare comparison, because the most informative monopoly equilibrium reaches the full-information benchmark and is trivially best for the receiver. 
	
\subsection{Polar communication benchmarks}
\label{subsec:polar-benchmarks}

Before analyzing costly misreporting, it is useful to recall the two polar communication technologies. Full results and proofs are collected in \Cref{app:polar-benchmarks}. 

Under cheap talk, equilibrium multiplicity makes the comparison between monopoly and competition inherently ambiguous. Competition can reproduce a monopoly outcome if the receiver ignores one source, which then babbles. In the standard advocacy equilibrium, however, an additional oppositely biased source weakly improves the receiver's welfare by providing a counterweight to the incumbent source.

This informational improvement need not increase total welfare. The receiver chooses the action that maximizes her own expected payoff, whereas total welfare also includes the action payoffs of both sources. When the receiver's preferred action differs from the action maximizing aggregate surplus, the additional information generated by competition can reduce total welfare.

At the opposite pole, when reports are fully verifiable, competition among oppositely biased sources implements the receiver's full-information action in every equilibrium. These benchmarks isolate the role of partial verifiability studied below. Costly misreporting disciplines communication without eliminating either strategic distortion or equilibrium multiplicity.
	
	
	\section{Monopoly and the limits of standard refinements}\label{sec:monopoly}
	
	In this section, we study the monopoly side of the model and ask which equilibria remain plausible once off-path deviations are disciplined by refinement. This step is important for two reasons. First, the monopolistic environment admits a continuum of equilibria, indexed by the receiver's skepticism, so the welfare comparison with competition requires a clear understanding of which monopoly outcomes survive refinement. Second, the refinement analysis is of independent interest. It shows how the sender's incentives to mimic different types vary across off-path reports, and how robust the equilibrium family is to a natural deletion procedure.

	Our focus is on the Never-a-Weak-Best-Response ($\NWBR$) refinement.\footnote{$\NWBR$ is a demanding refinement criterion. In standard signaling environments it is stronger than the Intuitive Criterion, divinity, universal divinity, and the refinements $D_1$ and $D_2$. The set of $\NWBR$ survivors contains strategically stable outcomes \citep{KohlbergMertens1986}.} The refinement is attractive in the present setting because it does not require committing to a specific off-path belief selection, but instead removes type-report pairs that are uniformly dominated as explanations for a deviation. We begin by characterizing the relevant deviation sets for a generic	monopoly equilibrium. We then show that, under a simple single-crossing condition on the attractiveness of off-path deviations, every generic monopoly equilibrium survives the refinement. The condition is transparent, and it is automatically satisfied in the linear-quadratic specification used later.
	
	We next provide primitive conditions under which the single-crossing property holds, including weakly convex distance costs, and specialize the result to the linear-quadratic environment. The section closes by explaining the implication for the monopoly benchmark. Under these conditions, $\NWBR$ leaves the entire generic equilibrium family intact. The case without single crossing is characterized report by report in Appendix~\ref{app:nwbr-arbitrary-qr}.

	
	\subsection{A generic NWBR result}
	Fix a generic monopoly equilibrium indexed by $\lambda$, with pooling report $r^*(\lambda)$ and lower pooling type $\ell(\lambda)$. Consider an off-path report $r\in(\ell(\lambda),r^*(\lambda))$. Since the receiver's action set is binary, any mixed reply after observing $r$ is summarized by a single number $\sigma\in[0,1]$, interpreted as the probability assigned to action $a^+$.

	Given such a report $r$, define the deviation payoff of type $\theta$ as
	\[
	w(r,\sigma,\theta)\coloneqq \sigma u_1(\theta)-k_1 C_1(r,\theta).
	\]
	Let $W_\lambda(\theta)$ denote the sender's equilibrium payoff in the monopoly equilibrium indexed by $\lambda$. Define
	\[
	D_0(\theta,r)\coloneqq \{\sigma\in[0,1] \mid w(r,\sigma,\theta)=W_\lambda(\theta)\},
	\]
	and
	\[
	D(\theta,r)\coloneqq \{\sigma\in[0,1] \mid w(r,\sigma,\theta)>W_\lambda(\theta)\}.
	\]
	Under the Never-a-Weak-Best-Response test, the type-report pair $(\theta,r)$ is deleted whenever
	\[
	D_0(\theta,r)\subseteq \bigcup_{\theta'\neq \theta} D(\theta',r).
	\]
	If every receiver mixed reply that makes type $\theta$ just indifferent makes some other type strictly prefer the same deviation, then the receiver should not assign the deviation to type $\theta$.
	
	We now turn to the refinement argument itself. The key step is to understand, for any off-path report, which sender types could rationally	benefit from deviating to it. This is summarized by the function $q_r$, which gives the receiver action probability that makes a type exactly indifferent between following the equilibrium strategy and deviating to report $r$. The shape of $q_r$ is what determines whether a deviation can be ruled out by the Never-a-Weak-Best-Response test. The next assumption imposes a simple single-crossing property on this object. It is the only substantive restriction needed for the result, and it will later be shown to hold automatically in the linear-quadratic specification and other standard cases.

	\begin{assumption}\label{ass:sc}
		Fix a generic monopoly equilibrium indexed by $\lambda$, with pooling report $r^*(\lambda)$ and lower pooling type $\ell(\lambda)$.
		For every off-path report $r\in(\ell(\lambda),r^*(\lambda))$, let $\vartheta(r,\lambda)$ be the unique type in $(r,r^*(\lambda))$ satisfying
		\[
		C_1(r,\vartheta(r,\lambda))=C_1(r^*(\lambda),\vartheta(r,\lambda)).
		\]
		Define
		\[
		q_r(\theta)\coloneqq 
		\begin{cases}
			\dfrac{k_1 C_1(r,\theta)}{u_1(\theta)}, & \theta\in[\bar r_1^{-1}(r),\ell(\lambda)],\\[1.1em] 1-\dfrac{k_1\left(C_1(r^*(\lambda),\theta)-C_1(r,\theta)\right)}{u_1(\theta)}, & \theta\in[\ell(\lambda),\vartheta(r,\lambda)].
		\end{cases}
		\]
		Assume that for every $r\in(\ell(\lambda),r^*(\lambda))$, the map $q_r$ is strictly decreasing on $[\bar r_1^{-1}(r),\ell(\lambda)]$ and strictly increasing on $[\ell(\lambda),\vartheta(r,\lambda)]$.
	\end{assumption}
	

	Assumption~\ref{ass:sc} concerns the shape of the indifference cutoff $q_r$ on the two segments of types that can potentially rationalize an off-path report. These two segments have a different economic meaning. On the first segment, types lie below the lower pooling type and compare an off-path deviation to an equilibrium payoff equal to zero. In that region, the argument is straightforward. As the type rises toward the report, the deviation becomes less costly, while the gain from inducing action $a^+$ becomes larger. Hence, the probability of action $a^+$ needed to make the sender indifferent should fall with the type. This part of the single-crossing property is already	built into the primitive structure of the model, and thus always satisfied.
	
	The second segment is more delicate. There, a type compares deviating to $r$ with pooling at $r^*(\lambda)$, so what matters is no longer the level of the deviation cost but the difference between the cost of deviating and the cost of pooling, relative to the sender's benefit from inducing $a^+$. The sign of that comparison is not pinned down by distance-monotonicity alone. It depends on how the cost function varies across reports as the type changes, and this is precisely where the substantive content of Assumption~\ref{ass:sc} lies.
	
	The next lemma isolates the easy part of the argument. It shows that the first piece of $q_r$ is automatically decreasing under the primitive monotonicity assumptions already imposed on preferences and misreporting costs. As a consequence, all of the real work in verifying Assumption~\ref{ass:sc} is on the second piece.

	\begin{lemma}
		\label{lem:first-branch-automatic}
		Fix $\lambda\in[0,\bar\lambda]$ and an off-path report $r\in\left(\bar r_1^{-1}(r^*(\lambda)),\,r^*(\lambda)\right)$. Suppose that $u_1(\theta)$ is strictly increasing on the relevant interval and that, for every fixed report $r_1$, the cost function satisfies $C_1(r_1,\theta)>C_1(r_1,\theta')$ whenever $|r_1-\theta|>|r_1-\theta'|$. Then, the first part of $q_r$,
		\[
		q_r(\theta)=\frac{k_1C_1(r,\theta)}{u_1(\theta)},
		\]
		is strictly decreasing in $\theta$ for all $\theta\in\left[\bar r_1^{-1}(r),\,\bar r_1^{-1}(r^*(\lambda))\right]$.
	\end{lemma}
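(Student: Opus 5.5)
The plan is to show, for any two types $\theta<\theta'$ in the interval $I\coloneqq[\bar r_1^{-1}(r),\,\bar r_1^{-1}(r^*(\lambda))]$, that the numerator of $q_r$ weakly falls and the denominator strictly rises, both staying positive. Then $q_r(\theta)>q_r(\theta')$ follows by a direct ratio comparison. No differentiability is needed, so the argument relies only on the monotonicity primitives $(iii)$–$(iv)$ and the strict monotonicity of $u_1$.

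\emph{Step 1 (location of the interval).} First I would place $I$ relative to $r$ and to $\tau_1$. Since $\bar r_1$ is strictly increasing and $r<r^*(\lambda)$, we have $\bar r_1^{-1}(r)<\bar r_1^{-1}(r^*(\lambda))=\ell(\lambda)<0<r$ when $r>0$. More generally, every $\theta\in I$ satisfies $\theta<r$. The reason is that $\bar r_1(\theta)\ge\theta$, since $C_1(\theta,\theta)=0\le u_1(\theta)/k_1$ together with property $(iii)$ gives a reach weakly above $\theta$; with strict monotonicity of the reach, $\bar r_1^{-1}(r)\le r$. The right endpoint is $\ell(\lambda)$. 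For $r\le \ell(\lambda)$, one of the ordering claims needs care, so I would handle it by noting that the relevant branch is only defined for types below the report. Concretely, I would restrict to the case used in \Cref{ass:sc}, namely $\theta\le\ell(\lambda)<r$, and remark that $\ell(\lambda)<r$ for all admissible off-path $r$.

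\emph{Step 2 (positivity of the denominator).} Next I would show that $u_1>0$ on $I$. The reach $\bar r_1$ is only defined on $\theta\ge\tau_1$, so $\bar r_1^{-1}(r)\ge\tau_1$. If $u_1$ vanished at the left endpoint, then $\bar r_1^{-1}(r)=\tau_1$ and $q_r$ would be infinite there. I would treat the endpoint as an extended value, or argue that $\bar r_1(\tau_1)=\tau_1<r$ forces $\bar r_1^{-1}(r)>\tau_1$. Hence $u_1(\theta)>0$ on all of $I$, and $u_1$ is strictly increasing there by hypothesis.

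\emph{Step 3 (numerator).} For $\theta<\theta'\le\ell(\lambda)<r$ we have $|r-\theta|>|r-\theta'|$. The stated cost property then gives $C_1(r,\theta)>C_1(r,\theta')>0$, where positivity follows from $(iii)$ since $r\ne\theta'$.

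\emph{Step 4 (conclusion).} Combining Steps 2 and 3,
\[
\frac{k_1C_1(r,\theta)}{u_1(\theta)}>\frac{k_1C_1(r,\theta')}{u_1(\theta)}>\frac{k_1C_1(r,\theta')}{u_1(\theta')},
\]
using $k_1>0$, $0<u_1(\theta)<u_1(\theta')$, and $C_1(r,\theta')>0$.

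The main obstacle is Step 1–2 bookkeeping: confirming that every type in $I$ lies strictly below $r$ and that $u_1$ is strictly positive on $I$, including at the left endpoint. Once these orderings are established, the monotonicity itself is immediate.
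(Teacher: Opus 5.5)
Your proof is correct and follows essentially the same route as the paper: for $\theta<\theta'$ in the interval, distance-monotonicity of $C_1$ in the state lowers the numerator, while $u_1$ is strictly increasing and positive above $\tau_1$, so $q_r(\theta)>q_r(\theta')$. Your Step~1 caveat about $r\le\ell(\lambda)$ is unnecessary, because the hypothesis $r>\bar r_1^{-1}(r^*(\lambda))=\ell(\lambda)$ gives $\theta\le\ell(\lambda)<r$ on the whole interval.
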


	\Cref{lem:first-branch-automatic} shows that the only substantive content of \Cref{ass:sc} lies on the second segment of $q_r$. There, the issue is how the cost difference between deviating to $r$ and pooling at $r^*(\lambda)$ varies with the type. A natural way to control this is to assume that costs depend only on the distance between report and state, with weakly convex distance disutility. Under that restriction, the required single-crossing property follows directly. The next proposition states this formally.
	
	\begin{proposition}
		\label{prop:Gamma-costs}
		Fix $\lambda\in[0,\bar\lambda]$ and $r\in\left(\bar r_1^{-1}(r^*(\lambda)),\,r^*(\lambda)\right)$. Suppose that $C_1(r_1,\theta)=\Gamma(|r_1-\theta|)$, where $\Gamma:\mathbb R_+\to\mathbb R_+$ is strictly increasing, weakly convex, and continuously differentiable on $(0,\infty)$. Then, \Cref{ass:sc} is satisfied. In particular, this holds for the quadratic loss cost $C_1(r_1,\theta)=(r_1-\theta)^2$, and more generally for every $\Gamma(x)=x^\alpha$ with $\alpha\geq 1$.
	\end{proposition}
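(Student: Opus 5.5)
The plan is to verify the two monotonicity requirements of \Cref{ass:sc} separately, exploiting the distance form $C_1(r_1,\theta)=\Gamma(|r_1-\theta|)$. This form satisfies properties $(i)$–$(v)$, with $\Gamma(0)=0$ by $(i)$. The first branch needs no new work. Since $\Gamma$ is strictly increasing, $C_1(r_1,\theta)>C_1(r_1,\theta')$ whenever $|r_1-\theta|>|r_1-\theta'|$, so \Cref{lem:first-branch-automatic} applies directly. It gives that $q_r(\theta)=k_1\Gamma(|r-\theta|)/u_1(\theta)$ is strictly decreasing on $[\bar r_1^{-1}(r),\ell(\lambda)]$.

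Next I will pin down $\vartheta(r,\lambda)$. For $\theta\in(r,r^*(\lambda))$, the equation $\Gamma(\theta-r)=\Gamma(r^*(\lambda)-\theta)$ holds, by strict monotonicity of $\Gamma$, if and only if $\theta-r=r^*(\lambda)-\theta$. Hence $\vartheta(r,\lambda)=(r+r^*(\lambda))/2$, and it is unique, so \Cref{ass:sc} is well posed. On the second branch, write
\[
q_r(\theta)=1-\frac{k_1N(\theta)}{u_1(\theta)},\qquad N(\theta)\coloneqq\Gamma(r^*(\lambda)-\theta)-\Gamma(|r-\theta|).
\]
It then suffices to show that $N/u_1$ is strictly decreasing on $[\ell(\lambda),\vartheta]$. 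Two facts make this work.

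\begin{itemize}
\item \emph{Positivity of $u_1$.} By definition of the reach, $u_1(\ell(\lambda))=k_1\Gamma(r^*(\lambda)-\ell(\lambda))>0$, and $u_1$ is strictly increasing. Hence $u_1>0$ and strictly increasing on the whole segment.
\item \emph{Sign of $N$.} For $\theta<\vartheta$ we have $r^*(\lambda)-\theta>|r-\theta|$. Indeed, if $\theta\le r$ this follows from $r<r^*(\lambda)$, and if $\theta\in[r,\vartheta)$ it follows from the midpoint characterization. Thus $N>0$ on $[\ell(\lambda),\vartheta)$ and $N(\vartheta)=0$.
\end{itemize}

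The main step is to show that $N$ is weakly decreasing, and I would argue it via increments rather than derivatives, to avoid the kink of $|r-\theta|$ at $\theta=r$. Split at $\theta=r$.

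\begin{itemize}
\item \emph{On $[\ell(\lambda),r]$.} Here $N(\theta)=\Gamma(r^*-\theta)-\Gamma(r-\theta)$. Set $a=r-\theta$ and $b=r^*-\theta$, so $a<b$. Raising $\theta$ by $h$ lowers both arguments by $h$. Weak convexity of $\Gamma$ gives $\Gamma(b)-\Gamma(b-h)\ge\Gamma(a)-\Gamma(a-h)$, which is the standard increasing-increments property of convex functions. Hence $N$ is weakly decreasing there; this is exactly where convexity is used.
\item \emph{On $[r,\vartheta]$.} Here $N(\theta)=\Gamma(r^*-\theta)-\Gamma(\theta-r)$. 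The first term is strictly decreasing and the second strictly increasing, so $N$ is strictly decreasing.
\end{itemize}

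By continuity at $\theta=r$, $N$ is weakly decreasing on the whole segment. Strictness of the ratio then follows. Take $\theta_1<\theta_2$ in $[\ell(\lambda),\vartheta]$. If $N(\theta_2)>0$, then
\[
\frac{N(\theta_1)}{u_1(\theta_1)}\ge\frac{N(\theta_2)}{u_1(\theta_1)}>\frac{N(\theta_2)}{u_1(\theta_2)}.
\]
If $\theta_2=\vartheta$, then $N(\theta_2)=0<N(\theta_1)$, and the inequality is again strict. So $q_r$ is strictly increasing on the second branch.

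The only potential obstacle is this second-branch monotonicity of $N$. Without convexity, the difference $\Gamma(r^*-\theta)-\Gamma(r-\theta)$ could increase in $\theta$ on $\theta<r$, and single crossing would fail. The increment argument above handles it and requires no differentiability beyond what is assumed. Finally, $\Gamma(x)=x^\alpha$ with $\alpha\ge1$, including the quadratic case $\alpha=2$, is strictly increasing, convex, and $C^1$ on $(0,\infty)$, so the particular cases follow immediately.
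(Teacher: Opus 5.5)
Your proof is correct and follows the paper's own route. You apply \Cref{lem:first-branch-automatic} to the first branch, split the cost difference on the second branch at $\theta=r$, use convexity for $\theta\le r$ and the opposite monotonicity of the two terms for $\theta\ge r$, and conclude from $u_1>0$ being strictly increasing. Your increment argument avoids the kink at $\theta=r$ (where the paper's derivative computation would need $\Gamma'(0)$), and your explicit handling of the endpoint $\vartheta=(r+r^*(\lambda))/2$, where the cost difference vanishes, is slightly more careful than the paper's one-line ratio step.
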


	\Cref{prop:Gamma-costs} provides a simple primitive route to \Cref{ass:sc}. In particular,	it covers the standard quadratic specification used later in the paper and, more generally, a broad class of distance-based cost functions. The refinement argument itself, however, does not require this specific structure. It only requires the single-crossing property of $q_r$. We state the next result directly under \Cref{ass:sc}, so as to keep the analysis fully general and separate the logic of the refinement from any particular functional-form restriction.

	
	\begin{proposition}\label{thm:nwbr}
		Under \Cref{ass:sc}, every generic monopolistic equilibrium survives the Never-a-Weak-Best-Response refinement.
	\end{proposition}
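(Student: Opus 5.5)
The plan is to show that, for every off-path report, the Never-a-Weak-Best-Response deletion leaves exactly one admissible explanation, the lower pooling type $\ell(\lambda)<0$. A receiver belief concentrated on that type justifies the off-path reply $a^-$, which sustains the equilibrium.

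\emph{Step 1: identify the off-path reports and the candidate reply.} In a generic equilibrium indexed by $\lambda$, types below $\ell(\lambda)$ and above $r^*(\lambda)$ report truthfully, and the pool reports $r^*(\lambda)$. Hence the off-path reports are exactly $r\in(\ell(\lambda),r^*(\lambda))$. I will check that the off-path reply $\sigma=0$ (action $a^-$) deters every type, since then $w(r,0,\theta)=-k_1C_1(r,\theta)\le 0\le W_\lambda(\theta)$ for all $\theta\ge\tau_1$. For types below $\tau_1$ the equilibrium payoff is $0$ under truthful reporting and a deviation only costs. It therefore suffices to show that, after NWBR deletion, the receiver can still hold a belief under which $a^-$ is a best reply at $r$.

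\emph{Step 2: compute $D_0$ and $D$ via $q_r$.} For $\theta\in[\bar r_1^{-1}(r),\ell(\lambda)]$ we have $W_\lambda(\theta)=0$, because such a type either reports truthfully and gets $a^-$, or sits at the indifference point $\theta=\ell(\lambda)$. Since $w$ is linear and strictly increasing in $\sigma$ when $u_1(\theta)>0$, we get $D_0(\theta,r)=\{q_r(\theta)\}\cap[0,1]$ and $D(\theta,r)=(q_r(\theta),1]$. The same holds on $[\ell(\lambda),\vartheta(r,\lambda)]$, where $W_\lambda(\theta)=u_1(\theta)-k_1C_1(r^*(\lambda),\theta)$. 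The two branches of $q_r$ agree at $\ell(\lambda)$ because $u_1(\ell(\lambda))=k_1C_1(r^*(\lambda),\ell(\lambda))$.

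For types outside $[\bar r_1^{-1}(r),\vartheta(r,\lambda)]$, the indifference level lies outside $[0,1]$, so $D_0(\theta,r)=\emptyset$. This covers types below $\bar r_1^{-1}(r)$, including those below $\tau_1$, for which even $\sigma=1$ does not cover the cost. It also covers types in $(\vartheta(r,\lambda),r^*(\lambda))$, for which deviating costs more than pooling and so $q_r>1$, and types above $r^*(\lambda)$. All these pairs are deleted vacuously. At $\ell(\lambda)$ I will verify $0<q_r(\ell(\lambda))<1$, using $|r-\ell|<|r^*-\ell|$ and property $(iii)$.

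\emph{Step 3: the deletion.} By \Cref{ass:sc}, $q_r$ is strictly decreasing on the left branch and strictly increasing on the right branch, so $\ell(\lambda)$ is its unique strict minimizer. For any $\theta\neq\ell(\lambda)$ in the relevant range, $q_r(\theta)>q_r(\ell(\lambda))$, so $D_0(\theta,r)\subseteq D(\ell(\lambda),r)$ and $(\theta,r)$ is deleted. Conversely, $D_0(\ell(\lambda),r)=\{q_r(\ell(\lambda))\}$ is not contained in any $D(\theta',r)=(q_r(\theta'),1]$, because every other $q_r(\theta')$ is strictly larger. Thus $(\ell(\lambda),r)$ survives. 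The surviving belief puts mass one on $\ell(\lambda)<0$, so $u_r(\ell(\lambda))<0$ and $a^-$ is the receiver's best reply, matching Step 1. Since deletion leaves only one type, iterating the test changes nothing.

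The main obstacle is Step 2's bookkeeping at the boundaries. I need to confirm that types with $D_0=\emptyset$ are handled consistently, that $q_r(\ell(\lambda))\in(0,1)$, and that the equilibrium payoff $W_\lambda$ takes the stated form on each branch for all $\lambda\in[0,\bar\lambda]$. The single-crossing content itself comes directly from \Cref{ass:sc}, with \Cref{lem:first-branch-automatic} covering the left branch.
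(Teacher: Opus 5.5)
Your proposal is correct and follows the paper's proof essentially step by step. You restrict attention to types in $[\bar r_1^{-1}(r),\vartheta(r,\lambda)]$, write $D_0(\theta,r)=\{q_r(\theta)\}$ and $D(\theta,r)=(q_r(\theta),1]$, use \Cref{ass:sc} to make $\ell(\lambda)$ the unique minimizer of $q_r$ so that one round of the test leaves only $(\ell(\lambda),r)$, and conclude that a belief concentrated on $\ell(\lambda)<0$ yields $a^-$ and deters the deviation. Your additional checks — that the two branches of $q_r$ coincide at $\ell(\lambda)$, and that $q_r(\ell(\lambda))\in(0,1)$ so $\ell(\lambda)$ is not itself deleted vacuously — are left implicit in the paper and are worth keeping.
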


	
	\Cref{thm:nwbr} establishes a strong robustness property of the monopoly side of the model. Under \Cref{ass:sc}, the Never-a-Weak-Best-Response test does not shrink the generic family of monopolistic equilibria at all. On the contrary, every generic equilibrium survives. This matters for the rest of the paper for two reasons.

	First, it shows that the multiplicity of monopoly outcomes is not an artifact of weak off-path discipline. Even after imposing a non-trivial refinement, the entire generic family remains viable. In particular, the welfare comparison with competition cannot be reduced to a simple claim that refinement eliminates all but one monopoly equilibrium. \Cref{thm:nwbr} implies that, under a broad and economically natural condition, the	monopolistic environment continues to admit a whole range of robust outcomes, from the least to the most informative in the generic family. This is one reason why the welfare analysis must keep track of which benchmark is being used.
	
	Second, the result clarifies what the refinement is doing economically. For any intermediate off-path report, the sender type that survives the deletion procedure is the lower pooling type. The reason is the single-crossing structure embodied in $q_r$. Among all types that could rationalize the deviation, the lower pooling type requires the smallest probability of the favorable receiver action in order to be willing to deviate. This makes it the most plausible candidate to which the receiver can assign the deviation. Once beliefs are concentrated on that type, the receiver's best reply is pessimistic, and the deviation is no longer profitable. The generic equilibria survive not because off-path deviations are harmless in themselves, but because the refinement consistently attributes them to the type for which they are easiest to justify.
	
	At the same time, \Cref{thm:nwbr} should not be read as saying that refinement is uninformative. Rather, it identifies a broad class of environments in which $\NWBR$ does not by itself select among generic monopoly equilibria. This observation helps organize the rest of the analysis. When the single-crossing condition holds, all generic monopoly equilibria remain admissible and welfare comparisons must be made against an explicit monopoly benchmark.
	
	Before turning to that more selective case, it is useful to specialize \Cref{thm:nwbr} to the linear-quadratic environment. This specialization serves two purposes. On the one hand, it shows that the single-crossing condition is not an abstract technical requirement. It is automatically satisfied in the workhorse specification most commonly used in	applications. On the other hand, that same specification is the one employed in the later comparative-statics and welfare analysis. The next corollary links the general refinement result directly to the benchmark environment used in the remainder of the paper.
	
	
	\begin{corollary}\label{cor:lq-nwbr}
		Suppose that $u_1(\theta)=\theta-\tau_1$ and $C_1(r,\theta)=(r-\theta)^2$. Then, \Cref{ass:sc} holds. As a result, every generic monopoly equilibrium survives the Never-a-Weak-Best-Response refinement in the linear-quadratic specification.
	\end{corollary}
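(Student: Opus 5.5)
The plan is to obtain the corollary as a direct specialization of \Cref{prop:Gamma-costs} and \Cref{thm:nwbr}. The only work is to check that the linear-quadratic primitives satisfy the hypotheses of \Cref{prop:Gamma-costs}.

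First, note that $C_1(r,\theta)=(r-\theta)^2=\Gamma(|r-\theta|)$ with $\Gamma(x)=x^2$. On $\mathbb R_+$, this $\Gamma$ is strictly increasing, convex, and continuously differentiable with $\Gamma'(x)=2x$. It is precisely the case $\alpha=2$ singled out in \Cref{prop:Gamma-costs}. Second, $u_1(\theta)=\theta-\tau_1$ is continuous and strictly increasing, with $u_1(\tau_1)=0$. On every interval where $q_r$ is defined, $\theta\ge \bar r_1^{-1}(r)\ge\tau_1$, so $u_1>0$ on the interior of the relevant range and the quotients defining $q_r$ are well posed. \Cref{lem:first-branch-automatic} then delivers the first branch. \Cref{prop:Gamma-costs} delivers the second branch, and hence \Cref{ass:sc}, for every $\lambda$ and every off-path $r$.

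As a sanity check, and to avoid relying on the general argument, I would verify the second branch explicitly. Writing $r^*=r^*(\lambda)$, the cost difference is
\[
C_1(r^*,\theta)-C_1(r,\theta)=(r^*-r)(r^*+r-2\theta),
\]
which is affine and decreasing in $\theta$. It is positive for $\theta<\vartheta(r,\lambda)=(r+r^*)/2$, and this identifies the unique indifference type. The second branch is therefore
\[
q_r(\theta)=1-k_1\frac{(r^*-r)(r^*+r-2\theta)}{\theta-\tau_1}.
\]
The fraction has a positive numerator that decreases in $\theta$ and a positive denominator that increases in $\theta$. So the fraction is strictly decreasing, and $q_r$ is strictly increasing on $[\ell(\lambda),\vartheta(r,\lambda)]$. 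The first branch, $k_1(r-\theta)^2/(\theta-\tau_1)$ for $\theta\le r$, is strictly decreasing for the same reason.

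The statement then follows by invoking \Cref{thm:nwbr}. No step is a genuine obstacle. The one point needing care is the endpoint $\theta=\bar r_1^{-1}(r)$, where the domain must keep $u_1$ bounded away from zero. Because the reach is defined only for $\theta\ge\tau_1$, and $\bar r_1(\tau_1)=\tau_1<r$, we have $\bar r_1^{-1}(r)>\tau_1$. This keeps $u_1$ strictly positive on the closed interval.
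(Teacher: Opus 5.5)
Your proposal is correct and matches the paper's proof. The paper checks the two branches of $q_r$ directly in the LQ case, using the same factorization $(r^*-r)(r^*+r-2\theta)$; it signs explicit derivatives where you compare a positive decreasing numerator with a positive increasing denominator, and it then invokes \Cref{thm:nwbr}. Your primary route through \Cref{prop:Gamma-costs} with $\Gamma(x)=x^2$ is an equally valid shortcut, because that proposition explicitly covers the quadratic case.
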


\Cref{cor:lq-nwbr} shows that, in the linear-quadratic environment, the single-crossing condition of \Cref{ass:sc} is automatically satisfied. Hence, in the workhorse specification used in the welfare analysis, the Never-a-Weak-Best-Response refinement does not select a unique monopoly equilibrium from the generic family. Instead, every generic monopoly equilibrium survives.
	
Even after imposing a demanding one-sender refinement, the monopolistic environment continues to admit a range of robust outcomes, from the least informative equilibrium to equilibria that are arbitrarily close to full information. Absent a stronger or cross-environment selection principle, monopoly may generate receiver welfare above the competitive adversarial benchmark simply because the receiver may coordinate on a sufficiently informative monopoly equilibrium.\footnote{The proof of \Cref{thm:nwbr} also clarifies how $\NWBR$ works in the benchmark cases. For any intermediate off-path report, the single-crossing property of $q_r$ makes the lower pooling type the unique survivor of one round of the deletion procedure. Once the receiver assigns the deviation to that type, the receiver's best reply is pessimistic and the deviation is not profitable.}

	If \Cref{ass:sc} is dropped, this uniform conclusion need no longer hold. The identity of the surviving types depends on the shape of $q_r$ for each off-path report. Appendix~\ref{app:nwbr-arbitrary-qr} records the corresponding report-by-report characterization. Those results are not part of the main selection argument of the paper. The main selection argument is developed later in Section~\ref{sec:common_refinement}, where a canonical refinement is applied across monopoly and competition.

	
	\subsubsection{Implication for the monopoly benchmark}
	\label{subsec:monopoly-selection}
	
	Within the full set of equilibria of the monopolistic environment, the generic family studied here coincides with the set of equilibria that survive the Intuitive Criterion \citep{ChoKreps1987}. Any non-generic monopolistic equilibrium is discarded on those grounds.\footnote{For a proof, see \citet{Vaccari2023ET}.} The results above show that, under \Cref{ass:sc}, the same generic family also survives the stronger $\NWBR$ refinement. In particular, in the linear-quadratic specification, $\NWBR$ does not eliminate the more informative generic monopoly equilibria.

This observation is important for the welfare comparison. It shows that the Intuitive Criterion and Never a Weak Best Response do not select among the generic monopoly equilibria under the conditions studied here. Thus, neither approach provide the selection needed for our comparison. The next section introduces a belief-based criterion that can be imposed in both market structures. The criterion selects the least-informative monopoly equilibrium and the adversarial-equilibrium outcome under competition. Its main advantage is that of providing a sharp justification for either benchmark while simultaneously placing them under a common selection discipline.
	
	The role of the present section is complementary, as it shows why a common refinement is needed. In the monopolistic environment alone, standard refinements leave too much multiplicity: the receiver may obtain anything from the least informative monopoly payoff up to the full-information payoff. The welfare comparison between monopoly and competition is necessarily ill posed unless one first specifies a selection principle that applies to both market configurations. We turn to that selection principle next.

	
\section{A common skeptical-envelope selection criterion}
\label{sec:common_refinement}

The previous section shows that standard one-sender refinements do not select a unique equilibrium from the generic monopoly family. A comparison with competition requires a selection rule that can be stated in the same way in both environments. We propose such a rule here.

The refinement is built on three simple ideas. First, the receiver should assign positive probability only to states from which the observed reports could rationally have been sent. Second, holding all other reports fixed, a higher report should make $a^+$ weakly more attractive to the receiver and, around her decision threshold, strictly more attractive. Third, an exact report or report profile that occurs with positive probability should not be interpreted more favorably to an active sender merely because it is used in equilibrium. We call such an exact report profile an \emph{atom}. A monopoly pooling report is the leading example.

The first two ideas have a natural behavioral ground. The third one has a \emph{local-robustness} interpretation. Suppose that an exact report profile may be recorded or understood with arbitrarily small imprecision. At an atom, the receiver uses the neighboring continuation value least favorable to the direction of a sender's bias. For an upward-biased sender, this is the smallest neighboring value of the receiver's expected payoff difference between $a^+$ and $a^-$. For a downward-biased sender, it is the largest. When opposed senders are active, both requirements apply. They are compatible only if all components bordering the atom deliver the same limiting value. Otherwise, the equilibrium is inadmissible. The rule does not resolve conflicting advocacy by arbitrarily favoring one sender.

This \emph{skeptical-envelope} condition is a selection assumption rather than a consequence of Bayes' rule. It imposes a local robustness discipline on the treatment of report atoms and rules out an atomicity premium in the direction favored by any active sender.

The definition below is stated for any finite set of active senders. Under monopoly, only the upward-biased sender is active, so the skeptical envelope is the lower envelope. Under two-sender competition, the active senders have opposed biases. As shown below, however, every joint report atom borders a single atom-free component, so the two directional requirements necessarily coincide.

For this section, we assume that $u_r$ is continuously differentiable, with $u_r'(\theta)>0$, on the states that can be inferred from the reports studied below. We also assume that the reach functions are continuous and strictly increasing on their relevant domains. These conditions ensure that the inverse reaches used below are well defined. As in the rest of the paper, the receiver chooses $a^+$ when she is indifferent between the two actions.


\subsection{The selection criterion}
\label{subsec:skeptical-envelope-criterion}

Let $N$ be a finite set of active senders and let $\bm r\coloneqq(r_j)_{j\in N}$ denote a report profile. We call a report $r_j$ \emph{directionally rational} at state $\theta$ if it weakly misrepresents the state in the direction of sender~$j$'s bias. That is, $r_j\geq\theta$ for an upward-biased sender, and $r_j\leq\theta$ for a downward-biased sender. Truthful reporting is included. Let $R_j(\theta)$ denote the directionally rational reports that are also within sender~$j$'s reach at state $\theta$. In the environments studied here, we obtain $R_1(\theta)=[\theta,\bar r_1(\theta)]$ and $R_2(\theta)=[\underline r_2(\theta),\theta]$. Moreover, define
\[
Q_j(r_j) \coloneqq \{\theta\in\Theta\mid r_j\in R_j(\theta)\}
\]
and
\[
P_N(\bm r) \coloneqq \bigcap_{j\in N}Q_j(r_j).
\]
Thus, $P_N(\bm r)$ contains exactly the states that can explain all the reports in $\bm r$. A state is excluded if at least one sender would have had to report in the wrong direction or incur a cost beyond that sender's reach.

The reports that are plausible when the true state is exactly the receiver's threshold, $\theta=0$, form the box
\[
Z_N \coloneqq \prod_{j\in N}R_j(0).
\]
These are the reports most relevant for equilibrium selection because they can leave the receiver uncertain about which action is better. For an upward-biased sender, $R_j(0)=[0,\bar r_j(0)]$. Likewise, for a downward-biased sender, $R_j(0)=[\underline r_j(0),0]$. To state a strict local response, we use the same box but remove each sender's farthest reachable endpoint, i.e.,
\[
K_N \coloneqq \prod_{j\in N}I_j,
\]
where
\[
I_j
\coloneqq
\begin{cases}
[0,\bar r_j(0)), & \tau_j<0,\\[3pt]
(\underline r_j(0),0], & \tau_j>0.
\end{cases}
\]
Both $Z_N$ and $K_N$ depend only on preferences and reporting costs. They are fixed before any candidate equilibrium is considered.

Consider a perfect Bayesian equilibrium, and call it $E$. Let $\nu^E$ be the ex ante distribution of report profiles induced by the prior and the senders' equilibrium strategies, and define
\[
\mathcal A(\nu^E) \coloneqq \left\{ \bm r\in Z_N \mid \nu^E(\{\bm r\})>0 \right\}.
\]
Thus, $\mathcal A(\nu^E)$ is the set of exact joint report profiles that occur with positive probability before the state is realized.\footnote{An atom is a point mass in the distribution of the full report profile observed by the receiver. A point mass in one sender's marginal report distribution is not a joint atom if another sender's report varies continuously.} There can be at most countably many such points. Let $\mathscr C(\nu^E)$ denote the set of connected components of $K_N\setminus\mathcal A(\nu^E)$.\footnote{In a one-dimensional report space, an atom can separate a lower region from an upper region. In a two-dimensional rectangle, removing isolated points need not disconnect the report space.}

Throughout this section, closures are taken relative to $Z_N$. We restrict the criterion to atom configurations satisfying
\[
\left\{ C\in\mathscr C(\nu^E) \mid \bm a\in\overline C \right\} \neq\varnothing
\]
for every $\bm a\in\mathcal A(\nu^E)$. Thus, every atom borders at least one atom-free component. This ensures that the envelope values defined below are well defined. The restriction is automatically satisfied in the monopoly and two-sender applications.

For a receiver posterior system $\mu=(\mu_{\bm r})_{\bm r}$, write
\[
V(\bm r) \coloneqq \int_\Theta u_r(\theta)\,d\mu_{\bm r}(\theta)
\]
for the receiver's expected payoff from choosing $a^+$ minus her expected payoff from choosing $a^-$. Hence, she chooses $a^+$ when $V(\bm r)\geq0$ and $a^-$ when $V(\bm r)<0$.

We now state the three requirements formally. In words, reports must be interpreted consistently with what senders could rationally report. Higher reports must favor $a^+$, and, between atoms, the expected payoff difference $V$ must rise smoothly and strictly with every report.

For every $C\in\mathscr C(\nu^E)$, define
\[
D_C \coloneqq C\cup\left(\mathcal A(\nu^E)\cap\overline C\right).
\]
Let $\bm e_j$ denote the unit vector in report coordinate $j$. For $\bm r\in D_C$, define the set of feasible coordinate increments
\[
H_j(\bm r\mid D_C) \coloneqq \left\{ h\in\mathbb R\setminus\{0\} \mid \bm r+h\bm e_j\in D_C \right\}.
\]
We say that the component-wise partial derivative of a function $g:D_C\to\mathbb R$ exists at $\bm r$ if zero is an accumulation point of $H_j(\bm r \mid D_C)$ and
\[
\partial_j^C g(\bm r) \coloneqq \lim_{\substack{h\to0\\ h\in H_j(\bm r\,\mid\, D_C)}} \frac{g(\bm r+h\bm e_j)-g(\bm r)}{h}
\]
exists. At a boundary point, this definition uses the feasible one-sided limit.

We say that $(\mu,V)$ is \emph{component-wise regular at report atoms} if it has the following properties.
\begin{enumerate}[label=($\roman*$)]
    \item \textbf{Plausible-state support.} For every $\bm r\in Z_N$, we have $\mu_{\bm r}\!\left(P_N(\bm r)\right)=1$. Thus, the receiver assigns probability zero to any state from which at least one of the observed reports could not have been sent directionally rationally and within the corresponding sender's reach. In particular, if $P_N(\bm r)=\{\theta\}$, then $\mu_{\bm r}=\delta_\theta$;

    \item \textbf{Higher reports favor $a^+$.} For every sender $j$, every fixed $\bm r_{-j}$, and every $r_j'\geq r_j$ such that both report profiles belong to $\Theta^N$, we have $V(r_j',\bm r_{-j}) \geq V(r_j,\bm r_{-j})$. Thus, increasing any one sender's report, while holding the other reports fixed, cannot make $a^+$ less attractive anywhere in the report space;

    \item \textbf{Smooth and strict response within regions and at bordering atoms.} For every $C\in\mathscr C(\nu^E)$, there is a function $\overline V_C:D_C\to\mathbb R$ that agrees with $V$ throughout $C$ and is continuous, relative to $D_C$, at every atom in $D_C$. For every $\bm r\in D_C$ and every $j\in N$, the component-wise partial derivative exists and satisfies
\[
\partial_j^C\overline V_C(\bm r)>0.
\]
Thus, every higher report makes $a^+$ strictly more attractive within an atom-free component, and the same strict response extends, as viewed from that component, through every atom bordering it.
\end{enumerate}

Requirement~$(iii)$ is imposed separately for each atom-free component. If several components border the same atom, their extensions may have different values and derivatives there. The skeptical-envelope condition determines whether those limiting values can be reconciled with skepticism toward the active senders. The requirement does not otherwise impose global differentiability across an atom that separates distinct components.

For an assessment in this domain, define the lower- and upper-envelope maps on $Z_N$ by
\begin{align*}
(\mathcal L_{\nu^E}V)(\bm r)
&\coloneqq
\begin{cases}
V(\bm r),
& \bm r\notin\mathcal A(\nu^E),\\[6pt] \displaystyle \inf_{\substack{C\in\mathscr C(\nu^E)\\ \bm r\in\overline C}} \overline V_C(\bm r),
& \bm r\in\mathcal A(\nu^E),
\end{cases}
\\[6pt]
(\mathcal U_{\nu^E}V)(\bm r)
&\coloneqq
\begin{cases}
V(\bm r),
& \bm r\notin\mathcal A(\nu^E),\\[6pt] \displaystyle \sup_{\substack{C\in\mathscr C(\nu^E)\\ \bm r\in\overline C}} \overline V_C(\bm r),
& \bm r\in\mathcal A(\nu^E).
\end{cases}
\end{align*}

A perfect Bayesian equilibrium $E$ satisfies the \emph{skeptical-envelope selection criterion}, or is \emph{skeptical-envelope admissible}, if its receiver assessment is component-wise regular at report atoms and, for every $\bm a\in\mathcal A(\nu^E)$,
\begin{equation*}
\begin{cases}
V^E(\bm a) = (\mathcal L_{\nu^E}V^E)(\bm a),
& \text{if }\tau_j<0\text{ for every }j\in N,\\[5pt] V^E(\bm a) = (\mathcal U_{\nu^E}V^E)(\bm a),
& \text{if }\tau_j>0\text{ for every }j\in N,\\[5pt] (\mathcal L_{\nu^E}V^E)(\bm a) = V^E(\bm a) = (\mathcal U_{\nu^E}V^E)(\bm a),
& \text{if there exist }j,k\in N \text{ with }\tau_j<0<\tau_k.
\end{cases}
\end{equation*}

At a non-atomic report profile, both maps leave $V$ unchanged. At an atom, they return the smallest and largest values obtained from the components bordering that point. Thus, an upward-biased monopolist faces the lower envelope, a downward-biased monopolist faces the upper envelope, and opposed senders can jointly generate an admissible atom only when the two envelope values coincide.


\subsection{Selection under monopoly}
\label{subsec:common-refinement-monopoly}

Under monopoly, we have $Z_1=[0,\bar r_1(0)]$ and $K_1=[0,\bar r_1(0))$. Consider a generic monopoly equilibrium $E_\lambda$, with pooling report $r^*(\lambda)$ and lower pooling type $\ell(\lambda)=\bar r_1^{-1}(r^*(\lambda))$. By the indexing introduced in the model section, we have that
\[
V^{E_\lambda}(r^*(\lambda))=\lambda.
\]
The equation says that, after observing the pooling report, the receiver's expected gain from $a^+$ rather than $a^-$ is $\lambda$. Because an interval of sender types all use $r^*(\lambda)$, this exact report occurs with positive probability and is an atom of $\nu^{E_\lambda}$. Every other exact report in $Z_1$ occurs with probability zero.

Because sender~$1$ is upward biased, the skeptical-envelope condition reduces in this environment to the lower-envelope condition $V^{E_\lambda} = \mathcal L_{\nu^{E_\lambda}}V^{E_\lambda}$ on $Z_1$.

Bayes' rule determines beliefs after reports used in equilibrium, but it does not determine beliefs after unused reports. A \emph{belief completion} fills in those off-path beliefs. The next proposition says that such beliefs can be filled in consistently with the criterion for $E_0$, but not for any $E_\lambda$ with $\lambda>0$.

\begin{proposition}
\label{thm:common-refinement-monopoly}
Within the generic monopoly family $\{E_\lambda\}_{\lambda\in[0,\bar\lambda]}$, the skeptical-envelope selection criterion selects the least informative equilibrium $E_0$. That is,
\begin{enumerate}[label=(\roman*)]
    \item $E_0$ has a belief completion that is skeptical-envelope admissible;
    \item no $E_\lambda$ with $\lambda>0$ is skeptical-envelope admissible.
\end{enumerate}
\end{proposition}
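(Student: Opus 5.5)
The plan is to work directly with the lower-envelope condition on $Z_1=[0,\bar r_1(0)]$. First I will locate the atom. Since $\ell(\lambda)<0$ and $\bar r_1$ is strictly increasing, $r^*(\lambda)=\bar r_1(\ell(\lambda))<\bar r_1(0)$. Together with $r^*(\lambda)>0$, this gives $r^*(\lambda)\in K_1$, so $\mathcal A(\nu^{E_\lambda})=\{r^*(\lambda)\}$. Removing this point splits $K_1$ into two components:
\[
C^-=[0,r^*(\lambda)),\qquad C^+=(r^*(\lambda),\bar r_1(0)).
\]
Reports in $C^+$ are on path and truthful, so $V(r)=u_r(r)$ there. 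The right-hand limiting value at the atom is therefore $u_r(r^*(\lambda))>0$.

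For part (ii), the key observation is an equilibrium restriction on $C^-$. Every $r\in[0,r^*(\lambda))$ is off path, since pooling types all send $r^*(\lambda)$ and truthful types lie outside $(\ell(\lambda),r^*(\lambda))$. Fix such an $r$ and take any pooling type $\theta\in(\ell(\lambda),r)$. Under $(iii)$ of the cost assumptions, $C_1(r,\theta)<C_1(r^*(\lambda),\theta)$, so $r$ is strictly cheaper for $\theta$ than the pooling report. If the receiver chose $a^+$ after $r$, this type would deviate profitably. Hence $V^{E_\lambda}(r)<0$ on all of $C^-$.

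Requirement $(iii)$ of component-wise regularity says $\overline V_{C^-}$ agrees with $V$ on $C^-$ and is continuous at $r^*(\lambda)$. It follows that $\overline V_{C^-}(r^*(\lambda))\le 0$, and so
\[
(\mathcal L_{\nu^{E_\lambda}}V)(r^*(\lambda))\le 0.
\]
By the indexing, however, $V^{E_\lambda}(r^*(\lambda))=\lambda>0$, so the lower-envelope equality fails. This also explains the selection: $\lambda$ is an average of $u_r$ over $(\ell(\lambda),r^*(\lambda))$, so $\lambda<u_r(r^*(\lambda))$, and the envelope can only bind through the left component.

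For part (i), I will construct an explicit belief completion for $E_0$. On $C^-=[0,r^M)$ I will use point-mass beliefs $\mu_r=\delta_{\eta(r-r^M)}$, with a constant
\[
0<\eta<|\ell^M|/r^M.
\]
Plausible-state support holds because $P_1(r)=[\bar r_1^{-1}(r),r]$ and, for $r<r^M$,
\[
\bar r_1^{-1}(r)\le \ell^M<-\eta r^M\le \eta(r-r^M)<0\le r.
\]
The resulting value is $V(r)=u_r(\eta(r-r^M))$. It is negative, differentiable with derivative $\eta u_r'>0$, and tends to $0=V(r^M)$. Thus $\overline V_{C^-}$ is continuous at the atom with a positive one-sided derivative. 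On $C^+$, $\overline V_{C^+}=u_r$ has limit $u_r(r^M)>0$ and positive derivative. The lower envelope at $r^M$ is then $\min\{0,u_r(r^M)\}=0=V(r^M)$, as required.

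Outside $Z_1$, I will fill in the remaining off-path reports $r\in(\ell^M,0)$ with $\mu_r=\delta_{\ell^M}$. This belief is plausible because $\bar r_1^{-1}(r)<\ell^M<r$. Since $u_r(\ell^M)\le u_r(-\eta r^M)$, global monotonicity (requirement $(ii)$) holds across all regions. I still need to check that this is an equilibrium. Every off-path report now induces $a^-$ and carries a nonnegative cost. Types below $\ell^M$ already obtain $0$ by reporting truthfully. Pooling types obtain at least $0$, since $\ell^M=\bar r_1^{-1}(r^M)$ is the marginal type and the reach inequality holds above it. So no deviation is profitable.

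The main obstacle I expect is bookkeeping rather than substance. The constructed $V$ must meet component-wise regularity and monotonicity simultaneously across the pieces $[\,\cdot,\ell^M]$, $(\ell^M,0)$, $[0,r^M)$, $\{r^M\}$ and $(r^M,\cdot)$. The linear choice $\eta(r-r^M)$ is what avoids needing any differentiability of the inverse reach $\bar r_1^{-1}$.
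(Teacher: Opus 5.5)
Your proposal is correct and follows essentially the paper's route. In part (i) the paper uses your point-mass construction with the particular slope $\eta=-\ell^M/(r^M-\ell^M)$; it extends that same linear map to $(\ell^M,0)$ instead of using $\delta_{\ell^M}$, which makes no difference because those reports lie outside $Z_1$. Part (ii) uses the same ingredients in the reverse order: the paper first derives from the envelope condition that the left limit equals $\lambda>0$, and only then exhibits the cheaper lie that a pooled type would prefer. Your ordering is slightly cleaner: sender optimality forces $V^{E_\lambda}<0$ on $[0,r^*(\lambda))$, so the lower envelope at the atom is at most $0<\lambda$, and this needs neither the upper-component value $u_r(r^*(\lambda))$ nor the existence of an upper component at all.
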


The logic is straightforward. In $E_0$, reports just below the pooling report lead to $a^-$, so pooled types do not want to use them. In any $E_\lambda$ with $\lambda>0$, the criterion requires reports just below the pooling report to lead to $a^+$. A pooled type can then obtain the same action with a smaller lie and a lower reporting cost. Only $E_0$ survives. Notice that the criterion does not mention the monopoly pooling report in advance. That report matters only because the equilibrium places positive probability on it.


\subsection{Selection under competition}
\label{subsec:common-refinement-competition}

With the two opposed senders, we obtain $Z_2 = [0,\bar r_1(0)] \times [\underline r_2(0),0]$ and $K_2 = [0,\bar r_1(0)) \times (\underline r_2(0),0]$. Each point in $Z_2$ is now a pair of reports: a non-negative report from the upward-biased sender, and a non-positive report from the downward-biased sender. The next lemma establishes three facts. It identifies the states that can explain any report pair, finds the two report pairs that identify $\theta=0$ uniquely, and shows that removing any countable collection of atoms does not split the remaining rectangle into separate pieces.

\begin{lemma}
\label{lem:operator-competition-geometry}
For every $(r_1,r_2)\in Z_2$,
\[
P_2(r_1,r_2) = \left[ \max\{\bar r_1^{-1}(r_1),r_2\}, \min\{r_1,\underline r_2^{-1}(r_2)\} \right].
\]
Moreover,
\[
P_2(r_1,r_2)=\{0\} \quad\Longleftrightarrow\quad (r_1,r_2) \in \{(0,0),(\bar r_1(0),\underline r_2(0))\}.
\]
Finally, for any probability distribution $\nu$ on $Z_2$, the set $K_2\setminus\mathcal A(\nu)$ is path connected and dense in $Z_2$. In particular,
\[
Z_2=\overline{K_2\setminus\mathcal A(\nu)}.
\]
\end{lemma}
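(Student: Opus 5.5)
The plan has three parts, matching the three claims of \Cref{lem:operator-competition-geometry}: compute $Q_1$ and $Q_2$ explicitly, identify when their intersection collapses to $\{0\}$, and then argue topologically about removing a countable set from a two-dimensional rectangle.

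For the first claim we compute the inverse images. The report $r_1$ lies in $R_1(\theta)=[\theta,\bar r_1(\theta)]$ exactly when $\theta\le r_1$ and $r_1\le\bar r_1(\theta)$. Because $\bar r_1$ is continuous and strictly increasing, the second condition is equivalent to $\theta\ge\bar r_1^{-1}(r_1)$. This is valid for $r_1\in[0,\bar r_1(0)]$, since $\bar r_1(\tau_1)=\tau_1<0\le r_1\le\bar r_1(0)$ puts $r_1$ in the image of $\bar r_1$ on $[\tau_1,\infty)$. Hence $Q_1(r_1)=[\bar r_1^{-1}(r_1),r_1]$. Symmetrically, $Q_2(r_2)=[r_2,\underline r_2^{-1}(r_2)]$ for $r_2\in[\underline r_2(0),0]$. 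Intersecting the two intervals gives the stated formula for $P_2$. Note that $0$ always belongs to both intervals, since $\bar r_1^{-1}(r_1)\le 0\le r_1$ and $r_2\le0\le\underline r_2^{-1}(r_2)$. So $P_2$ is nonempty and contains $0$.

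For the singleton characterization, $P_2=\{0\}$ requires both the lower endpoint $\max\{\bar r_1^{-1}(r_1),r_2\}$ and the upper endpoint $\min\{r_1,\underline r_2^{-1}(r_2)\}$ to equal $0$.
\begin{itemize}
\item The lower endpoint is $0$ iff $\bar r_1^{-1}(r_1)=0$ or $r_2=0$, since both terms are $\le0$. This means $r_1=\bar r_1(0)$ or $r_2=0$.
\item The upper endpoint is $0$ iff $r_1=0$ or $r_2=\underline r_2(0)$.
\end{itemize}
Combining the two conditions gives four cases:
\begin{itemize}
\item $(r_1,r_2)=(\bar r_1(0),\underline r_2(0))$ or $(0,0)$, which are the two claimed pairs;
\item $r_1=\bar r_1(0)$ and $r_1=0$, which is impossible because $\bar r_1(0)>0$ (strict monotonicity together with $\bar r_1(\tau_1)=\tau_1$ forces $\bar r_1(0)>0$);
\item $r_2=0$ and $r_2=\underline r_2(0)$, which is impossible because $\underline r_2(0)<0$.
\end{itemize}
For the converse we plug the two pairs back into the formula for $P_2$.

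For the topological claim, $\mathcal A(\nu)$ is countable because a probability measure has at most countably many atoms. $K_2$ is a nondegenerate rectangle: a product of two half-open intervals, each with positive length. Density of $K_2\setminus\mathcal A(\nu)$ in $Z_2$ then follows because $K_2$ is dense in $Z_2$ and removing a countable set from a set with nonempty two-dimensional interior near every point preserves density. Every neighborhood of a point of $Z_2$ meets $K_2$ in a set of positive area, hence in uncountably many points.

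Path connectedness is the main step. Take $x,y\in K_2\setminus\mathcal A(\nu)$. Consider the uncountable family of two-segment polygonal paths from $x$ to $y$ through an intermediate point $z$, where $z$ ranges over a small segment inside $K_2$ chosen so that these paths are pairwise disjoint apart from their endpoints. Convexity of $K_2$ keeps every such path inside $K_2$. Each atom lies on at most one of these paths, so countably many atoms can block only countably many paths, and some path avoids $\mathcal A(\nu)$. The degenerate case where $x$, $y$ and the candidate segment are collinear needs care: we pick the segment of intermediate points transversal to the line through $x$ and $y$, so that distinct paths meet only at $x$ and $y$. The final identity $Z_2=\overline{K_2\setminus\mathcal A(\nu)}$ is just the density statement, since closures are taken relative to $Z_2$.
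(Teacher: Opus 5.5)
Your proposal is correct and follows essentially the same route as the paper's proof: you compute $Q_1(r_1)=[\bar r_1^{-1}(r_1),r_1]$ and $Q_2(r_2)=[r_2,\underline r_2^{-1}(r_2)]$ the same way, use the same countable-versus-uncountable density argument, and build the same two-segment polygonal path through an intermediate point. The only variation is that the paper lets the pivot avoid the countably many lines through $x$ or $y$ and an atom, whereas you use a pairwise-disjoint family of paths; that works provided the line carrying your segment of pivots strictly separates $x$ from $y$, not merely crosses the line through them. One small repair: $\bar r_1(0)>0$ does not follow from $\bar r_1(\tau_1)=\tau_1$ and monotonicity, which only give $\bar r_1(0)>\tau_1$. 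It follows instead from $u_1(0)>0=k_1C_1(0,0)$, so $r=0$ does not solve the reach equation at $\theta=0$, while continuity and property $(v)$ give a positive solution; hence its largest solution is positive, and symmetrically $\underline r_2(0)<0$.
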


The connectedness result is the key difference from monopoly. A pooling atom can divide a one-dimensional line into a lower and an upper interval. By contrast, isolated atoms do not divide the two-dimensional report rectangle. The strict relationship between reports and the receiver's expected payoff difference must hold throughout one connected region. The two corner profiles play a separate role. Because only $\theta=0$ can explain either corner, the receiver must be certain that the state is zero there.

To show that the adversarial equilibrium itself satisfies the criterion, we also need to know whether its joint distribution of reports has any atoms. The next lemma shows that no fixed pair of reports occurs with positive probability.

\begin{lemma}
\label{lem:AE-atomless}
In an adversarial equilibrium, no fixed pair of reports occurs with positive ex ante probability.
\end{lemma}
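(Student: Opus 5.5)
The plan is to write the ex ante probability of a fixed report pair $(\bar r_1,\bar r_2)$ as an integral over states and use the representation of \Cref{prop:AE-characterization}(iv). Since senders randomize independently conditional on $\theta$,
\[
\nu^E(\{(\bar r_1,\bar r_2)\}) = \int_\Theta \phi_1(\{\bar r_1\}\mid\theta)\,\phi_2(\{\bar r_2\}\mid\theta)\,f(\theta)\,d\theta .
\]
The integrand will be bounded by showing that, at each state, each sender's report law has at most one atom, and that atom is located at the truthful report.

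First I handle states outside the mixing interval $(\theta_L,\theta_H)$. By \Cref{prop:AE-characterization}(iii), both senders report truthfully there, so $\phi_j(\{\bar r_j\}\mid\theta)=\mathds 1_{\{\theta=\bar r_j\}}$. Hence the integrand is nonzero only when $\theta=\bar r_1$, which is at most a single state. Because $f$ is a density, a single point has Lebesgue measure zero and this part of the integral vanishes.

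Next I handle states inside $(\theta_L,\theta_H)$. By \Cref{prop:AE-characterization}(iv), $\phi_j(\{\bar r_j\}\mid\theta)=\alpha_j(\theta)\mathds 1_{\{\theta=\bar r_j\}}$, since the density component $\psi_j(\cdot\mid\theta)\,dr_j$ places no mass on any single point. So once more the integrand can be positive only when $\theta=\bar r_1$, which is again a Lebesgue-null set, and the integral is zero. Combining the two cases gives $\nu^E(\{(\bar r_1,\bar r_2)\})=0$ for every pair, which is the claim.

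The step that needs the most care is justifying this integral representation. Specifically, I must check that $\theta\mapsto\phi_j(\{r\}\mid\theta)$ is measurable, which follows because it equals an indicator times $\alpha_j$. I must also check that the independent-private-randomization convention gives the product form. It is worth stressing that the joint atom fails to exist because truthful atoms of both senders occur only on the diagonal-type event $\{\theta=r_1=r_2\}$, which is null, even though each sender's marginal distribution may have mass. This is consistent with the footnote distinguishing joint atoms from marginal atoms.

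If the equilibrium of \Cref{prop:AE-characterization} is only guaranteed up to outcome and strategy equivalence, the argument still applies. In that case I would invoke essential uniqueness, which ensures that all adversarial equilibria share the same reporting strategies and therefore the same distribution $\nu^E$.
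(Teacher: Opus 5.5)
Your proof is correct and matches the paper's argument. Both write the joint mass as the product-form integral given by independent private randomization, use part~$(iv)$ to place each sender's only conditional atom at the truthful report, and conclude because a single state is null under a density. One small aside: your own bound, $\theta=\bar r_1$, already shows that each sender's ex ante marginal is atomless, so your closing remark about marginal mass holds only conditionally on $\theta$.
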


The two lemmas now connect the selection criterion to the existing characterization of the adversarial equilibrium. Connectedness makes the receiver's expected payoff difference strictly increasing in each report throughout the relevant rectangle, including along its axes. It also implies that every joint report atom borders a single component, so the lower and upper envelopes coincide there. The two corner profiles pin the receiver's expected payoff difference to zero. These are exactly the receiver-side conditions used by \citet{Vaccari2023JET}. Conversely, because the adversarial equilibrium has no joint-report atoms, both envelope maps leave its receiver assessment unchanged.

\begin{proposition}
\label{cor:common-refinement-AE}
Under the common primitives in \Cref{subsec:common-primitives},
\begin{enumerate}[label=(\roman*)]
    \item every skeptical-envelope admissible perfect Bayesian equilibrium has the adversarial equilibrium outcome;
    \item an adversarial equilibrium has a belief completion that is skeptical-envelope admissible.
\end{enumerate}
Hence, the skeptical-envelope selection criterion uniquely selects the adversarial equilibrium outcome under two-sender competition.
\end{proposition}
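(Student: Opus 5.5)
The plan is to reduce both parts of the statement to the receiver-side conditions $(wM)$, $(sM)$ and $(Dom)$ of \citet{Vaccari2023JET}. By \Cref{prop:AE-characterization}, these conditions, together with the senders' best responses, pin down the adversarial equilibrium outcome uniquely. So it suffices to show two things. First, admissibility implies those conditions. Second, some belief completion of the adversarial equilibrium is admissible.

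\textbf{Part (i).} Take an admissible perfect Bayesian equilibrium $E$ with joint report distribution $\nu^E$. By \Cref{lem:operator-competition-geometry}, $K_2\setminus\mathcal A(\nu^E)$ is path connected and dense in $Z_2$. Hence $\mathscr C(\nu^E)$ consists of a single component $C$, with $\overline C=Z_2$.

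Four consequences follow.
\begin{itemize}
\item Every atom borders only $C$. The lower and upper envelopes therefore coincide at each atom and equal $\overline V_C$ there. The third line of the admissibility condition then forces $V^E=\overline V_C$ at every atom, so $V^E=\overline V_C$ on all of $D_C$.
\item Requirement $(iii)$ then gives $\partial_j V^E>0$ throughout the relevant rectangle, including at former atoms and along the axes. This is the strict-monotonicity condition $(sM)$.
\item Requirement $(ii)$ gives weak monotonicity over the whole report space, which is $(wM)$.
\item Requirement $(i)$, combined with the corner characterization in \Cref{lem:operator-competition-geometry}, gives $\mu=\delta_0$ at $(0,0)$ and at $(\bar r_1(0),\underline r_2(0))$. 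Hence $V^E=0$ at both corners. Plausible-state support elsewhere supplies the dominance restriction $(Dom)$: beliefs are concentrated on states that can rationalize the observed pair.
\end{itemize}

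At this point I would invoke the uniqueness half of \Cref{prop:AE-characterization} and the underlying theorem in \citet{Vaccari2023JET}. Any equilibrium whose receiver assessment satisfies these conditions has the adversarial reporting strategies and on-path actions.

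\textbf{Part (ii).} Fix an adversarial equilibrium. By \Cref{lem:AE-atomless}, $\mathcal A(\nu)=\varnothing$, so both envelope maps act as the identity and the envelope equalities hold trivially. It remains to choose off-path beliefs so that $(\mu,V)$ is component-wise regular on $C=K_2$. Requirements $(ii)$ and $(iii)$ are exactly the $(wM)$ and $(sM)$ properties of the adversarial continuation value, which \Cref{prop:AE-characterization}(i) already guarantees. For requirement $(i)$, I would construct off-path beliefs supported on $P_2(r_1,r_2)$. One candidate is the prior restricted to that interval. A cleaner candidate is a belief that reproduces the value $V$ implied by the swing function: set $V(r_1,r_2)$ with sign determined by whether $r_2\gtrless s(r_1)$, and obtain it as a mixture over the endpoints of $P_2$.

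\textbf{Main obstacle.} I expect the hardest step to be part (ii): making the off-path beliefs simultaneously plausible-state supported and strictly, smoothly increasing along the boundary of $K_2$. This must hold in particular on the axes, where $P_2$ degenerates toward a single point. I would first check that the \citet{Vaccari2023JET} construction already places beliefs within $P_2$. Where it does not, I would modify beliefs only on off-path pairs, using mixtures of $\delta$'s at the two endpoints of $P_2(r_1,r_2)$. Because $u_r'>0$, these mixtures can interpolate any value between $u_r$ at the two endpoints. This should allow a continuous, strictly increasing $V$ that has the same sign as the original and hence leaves the receiver's actions unchanged.
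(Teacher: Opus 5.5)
Your proposal follows the paper's proof essentially step for step. In part (i) you do what the paper does:
\begin{itemize}
\item you get a single atom-free component from \Cref{lem:operator-competition-geometry}, so the two envelopes coincide at every atom and $V^E$ equals the regular extension;
\item you read off $(wM)$ from requirement (ii) and $(sM)$ from requirement (iii);
\item you get $V^E(0,0)=V^E(\bar r_1(0),\underline r_2(0))=0$ from plausible-state support at the two corners, and then invoke essential uniqueness.
\end{itemize}
Those corner values are exactly what the paper calls $(Dom)$; plausible-state support at other profiles plays no role in it. In part (ii) you use \Cref{lem:AE-atomless} to make both envelope maps the identity, and you complete off-path beliefs with two-point mixtures on the endpoints of $P_2$. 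That is also the paper's construction.

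The step you leave open is the one that makes part (ii) go through, and it is simpler than your ``main obstacle'' suggests. Do not look for a new $V$. Keep the adversarial $V^E$ unchanged at off-path profiles. Then $(wM)$, $(sM)$ including the axes, $(Dom)$, the receiver's actions, and hence all sender incentives are inherited automatically.

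What you must then prove is the bracket inequality
$u_r(\theta_-(r_1,r_2))\le V^E(r_1,r_2)\le u_r(\theta_+(r_1,r_2))$, where $[\theta_-,\theta_+]=P_2(r_1,r_2)$. The paper proves it as follows:
\begin{itemize}
\item approximate each off-path boundary profile of $Z_2$ by strictly conflicting on-path pairs;
\item at those pairs Bayes' rule puts all mass on $P_2$, because equilibrium reports are directionally rational and within reach;
\item pass to the limit using continuity of $V^E$ and of the reach functions.
\end{itemize}
With that inequality, the weight $\omega=(V^E-u_r(\theta_-))/(u_r(\theta_+)-u_r(\theta_-))\in[0,1]$ reproduces $V^E$ exactly, and one sets $\delta_0$ at the two corners.

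Two smaller corrections. First, $P_2$ collapses to a point only at those two corners, not along the axes. Second, drop the ``prior restricted to $P_2$'' candidate: it generally changes $V$, and with it possibly the receiver's actions and monotonicity.
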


The proposition selects an equilibrium \emph{outcome}, not a unique posterior at every off-path report pair. Different belief completions may still be possible at zero-probability reports, but every equilibrium satisfying the criterion generates the adversarial-equilibrium outcome. Together with \Cref{thm:common-refinement-monopoly}, this gives the welfare analysis a common basis. The same selection criterion yields the least-informative monopoly equilibrium and the adversarial-equilibrium outcome under competition.


\subsection{Interpretation and scope}
\label{subsec:interpretation-scope}

The criterion is easiest to understand as a restriction on the persuasive value of a focal report. The receiver excludes states from which the observed reports would require a sender to misreport in the wrong direction or beyond his reach. She also interprets higher reports as stronger evidence in favor of $a^+$. The additional restriction concerns reports used with positive probability. Such a report should not become more persuasive merely because equilibrium play places a mass point on it. The skeptical-envelope condition rules out this \emph{atomicity premium}. An atom is evaluated using the nearby interpretation least favorable to the direction in which the sender seeks to move the receiver's action. When senders have opposed biases, neither source is given priority. The two requirements are compatible only when the neighboring interpretations agree.

Under monopoly, the receiver observes a single claim from a source that would like her to choose $a^+$ more often. In a generic equilibrium, a range of states may pool on the same report. Because reports lie on a line, this pooling report separates lower claims from higher ones, and the receiver's beliefs can approach it differently from the two sides. Granting the pooling report a favorable jump in its persuasive value can sustain a more informative equilibrium. The skeptical-envelope condition rules out that jump. If the pooling report induced $a^+$ with a strictly positive payoff advantage for the receiver, reports just below it would have to do so as well. A pooled type could obtain the same action with a smaller lie and a lower reporting cost. This logic eliminates every $E_\lambda$ with $\lambda>0$ and leaves the least informative equilibrium $E_0$.

Competition changes this inference problem. The receiver now observes two claims, chosen independently by sources that favor opposite actions. Taken together, the claims restrict the states from which both reports could have been made. Moreover, the set of report pairs is two-dimensional. An isolated report pair can be approached from many directions and does not divide the report space into a lower and an upper region. The receiver cannot attach a favorable jump to a joint atom on one side of the market while remaining skeptical of the source on the other side. The neighboring interpretations must agree. Combined with the two report pairs that can arise only at $\theta=0$, this discipline delivers the receiver-side conditions that characterize the adversarial-equilibrium outcome. The economic force behind the result is that she can compare independently chosen claims made by sources with opposing interests.

This argument treats the dimension of the report space as an economic object, not as a choice of notation. If a monopolist's report $r$ were merely recorded as $(r,h(r))$, the two coordinates would still represent a single reporting decision. The feasible pairs would lie on a one-dimensional graph, and a pooling atom could still separate reports on that graph. By contrast, allowing the second coordinate to be chosen independently and observed by the receiver would create an additional signaling instrument and hence a different economic game, even if that coordinate did not enter payoffs directly.\footnote{More generally, skeptical-envelope admissibility is invariant to a strictly increasing $C^1$ reparameterization of each sender's report with strictly positive derivative. Such a reparameterization preserves atoms, connected components, and the signs of the responsiveness conditions.}

The criterion does not require the receiver to distrust all reports, nor does it assert that local skepticism is the only reasonable way to complete beliefs. Our welfare comparisons are conditional on this selection discipline. The common criterion places our two market configurations of interest under the same discipline and thereby provides a coherent basis for comparing their welfare consequences.

We restrict the formal applications to the one- and two-sender environments used in the welfare analysis. Although the criterion can be stated for any finite number of senders, extending the selection result to three or more senders would require a separate equilibrium characterization.

	
	\section{Receiver welfare: monopoly versus competition}\label{sec:welfare}

The previous section provides a common basis for comparing market structures. The skeptical-envelope criterion selects the least-informative equilibrium under monopoly, and the adversarial-equilibrium outcome under two-sender competition. This section compares the receiver's welfare under these selected outcomes. When we consider other monopoly equilibria, we state this explicitly. All competitive objects used below refer to the essentially unique adversarial-equilibrium outcome in \Cref{prop:AE-characterization}. They do not depend on which admissible off-path belief completion is chosen.

Competition has two opposing effects on the receiver's decisions. Under the selected monopoly outcome, sender~1 pools a range of negative states and induces the receiver to choose $a^+$ throughout that range. Introducing an oppositely biased sender allows the receiver to compare conflicting reports and reduces this systematic pooling error. Competition, however, also creates new mistakes. Strategic disagreement may lead the receiver to choose $a^-$ in positive states. Whether competition benefits the receiver depends on whether the reduction in the monopoly pooling loss outweighs the additional errors generated by competition.

Two conclusions organize the analysis. First, competition does not dominate every monopoly equilibrium. As we have already discusses, when the receiver is sufficiently skeptical, monopoly becomes highly informative and can yield greater welfare than the adversarial equilibrium. Second, competition can improve upon the selected least-informative monopoly benchmark. We establish this ranking in a symmetric environment and, under suitable regularity conditions, whenever the additional sender faces a sufficiently high misreporting cost. In the latter case, competitive mistakes vanish as misreporting becomes more costly, while the pooling loss under selected monopoly remains.

We proceed by first characterizing receiver welfare across the family of monopoly equilibria. We then derive tractable bounds on welfare in the adversarial equilibrium and use them to establish the symmetric comparison. Next, an exact decomposition separates the competitive mistakes made in positive and negative states and delivers the high-cost result. Finally, we specialize the analysis to the linear-quadratic class to study how the additional sender's bias and misreporting cost affect the equilibrium objects that govern the welfare comparison.
	
	
	\subsection{Monopoly welfare}
	
	Under monopoly, the skepticism parameter $\lambda$ determines the size of the pooling region and hence the receiver's welfare loss. Expressing welfare relative to the full-information benchmark makes this relationship immediate and prepares the comparison with the adversarial equilibrium.

	It is convenient to measure monopoly welfare relative to the full-information benchmark. Let
	\[
	W_r^{\FI}\coloneqq \int_{\{\theta\geq 0\}} u_r(\theta)f(\theta)\,d\theta
	\]
	denote receiver welfare under full information.
	\begin{proposition}\label{prop:mono-welfare}
		For every monopoly equilibrium indexed by $\lambda$, we have that
		\[
		W_r^{\M}(\lambda)=W_r^{\FI}+\underbrace{\int_{\ell(\lambda)}^0 u_r(\theta)f(\theta)\,d\theta}_{\leq 0}.
		\]
		In particular, $W_r^{\M}(\lambda)\leq W_r^{\FI}$, with equality if and only if $\lambda=\bar\lambda$.
	\end{proposition}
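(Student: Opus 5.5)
The plan is to compute the receiver's payoff in $E_\lambda$ state by state, using the four-part description of the generic family. Outside the pooling interval $(\ell(\lambda),r^*(\lambda))$ sender~1 reports truthfully. In $E_\lambda$ no pooled type uses a report in $[\ell(\lambda),r^*(\lambda))$, and $r^*(\lambda)$ itself is used only by the pool. So every on-path truthful report $\theta\notin(\ell(\lambda),r^*(\lambda))$ is sent only by type $\theta$, and Bayes' rule gives the degenerate posterior $\delta_\theta$. The receiver therefore takes her full-information action there: $a^+$ if and only if $\theta\ge 0$, with the tie-breaking rule at $\theta=0$. At the pooling report $r^*(\lambda)$, her expected gain from $a^+$ is exactly $\lambda\ge 0$ by the indexing equation~\eqref{eq:lambda-pooling-value}. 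Hence she chooses $a^+$ on the whole pool, using the tie-breaking convention when $\lambda=0$.

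Summing over states, the receiver obtains $u_r(\theta)$ on $\{\theta\ge 0\}\cup(\ell(\lambda),0)$ and $0$ elsewhere. Since $\ell(\lambda)<0<r^*(\lambda)$ in the nondegenerate case, this gives
\[
W_r^{\M}(\lambda)=\int_{\{\theta\ge 0\}}u_r f\,d\theta+\int_{\ell(\lambda)}^{0}u_r f\,d\theta ,
\]
which is the displayed formula. Because $u_r<0$ on $(\ell(\lambda),0)$ and $f$ has full support, the second term is $\le 0$, and it is strictly negative whenever $\ell(\lambda)<0$. This yields $W_r^{\M}(\lambda)\le W_r^{\FI}$, and equality holds exactly when $\ell(\lambda)=0$.

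The remaining step, and the main obstacle, is to show that $\ell(\lambda)=0$ if and only if $\lambda=\bar\lambda$. I would view the family through its lower pooling type, using $r^*=\bar r_1(\ell)$, and define
\[
g(\ell)\coloneqq\frac{\int_{\ell}^{\bar r_1(\ell)}u_r f\,d\theta}{\int_{\ell}^{\bar r_1(\ell)} f\,d\theta}.
\]
By~\eqref{eq:lambda-pooling-value}, $\lambda=g(\ell(\lambda))$.

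I would then prove that $g$ is strictly increasing. Both endpoints $\ell$ and $\bar r_1(\ell)$ are strictly increasing, since the reach is strictly increasing. Moving an interval rightward at both ends removes mass at the bottom, where $u_r$ is smallest, and adds mass at the top, where $u_r$ is largest. With $u_r$ strictly increasing, the $f$-weighted conditional mean therefore rises strictly. This can be checked either by differentiating the quotient or by a direct comparison of conditional expectations.

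Strict monotonicity makes $\lambda\mapsto\ell(\lambda)$ a strictly increasing bijection onto its range. The most informative member, $\lambda=\bar\lambda$, attains full information, as stated in the model section. By the equality criterion just derived, this forces $\ell(\bar\lambda)=0$, so $\bar\lambda=g(0)$. Consequently $\ell(\lambda)<0$ for every $\lambda<\bar\lambda$, the inequality is strict there, and it holds with equality only at $\lambda=\bar\lambda$. In that limiting case $\ell=0$ and $r^*=\bar r_1(0)$, the receiver has $u_r>0$ on the pool, and the integral term vanishes.
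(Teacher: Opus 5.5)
Your proof is correct and follows essentially the paper's argument: outside $(\ell(\lambda),r^*(\lambda))$ the receiver takes her full-information action, and at the pooling report she takes $a^+$, so her action differs from the full-information rule only on $(\ell(\lambda),0)$. Your strict-monotonicity argument for $g$ also justifies the equality clause, namely that $\ell(\lambda)=0$ exactly when $\lambda=\bar\lambda$; the paper's proof states this without argument.
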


	
	\Cref{prop:mono-welfare} provides a simple representation of the receiver's loss under monopoly. Relative to full information, mistakes occur only in the negative states contained in the pooling interval $[\ell(\lambda),0)$. In those states, pooling induces the receiver to choose $a^+$ rather than $a^-$. As skepticism increases and $\ell(\lambda)$ approaches zero, the pooling interval shrinks and the associated welfare loss falls. At $\lambda=\bar\lambda$, the pooling interval disappears and the receiver obtains her full-information payoff.

It naturally follows that competition cannot dominate every monopoly equilibrium. When the receiver is sufficiently skeptical, monopoly becomes arbitrarily close to full information, whereas the adversarial equilibrium remains only partially revealing. The next proposition formalizes this observation.

	\begin{proposition}\label{thm:high-skepticism}
		Fix the adversarial equilibrium of the competitive game and let $W_r^{\DE}$ denote the receiver's welfare in that equilibrium. There exists $\lambda_0<\bar\lambda$ such that $W_r^{\M}(\lambda)>W_r^{\DE}$ for every $\lambda\in(\lambda_0,\bar\lambda]$. 
	 \end{proposition}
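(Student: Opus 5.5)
The plan is to combine \Cref{prop:mono-welfare} with a strict welfare gap for the adversarial equilibrium, and then let the monopoly pooling loss go to zero as $\lambda\to\bar\lambda$. Write the monopoly loss as
\[
L(\lambda)\coloneqq -\int_{\ell(\lambda)}^0 u_r(\theta)f(\theta)\,d\theta\geq 0,
\]
so that $W_r^{\M}(\lambda)=W_r^{\FI}-L(\lambda)$. The argument therefore has two steps. First, show that $W_r^{\DE}<W_r^{\FI}$ strictly, and set $\Delta\coloneqq W_r^{\FI}-W_r^{\DE}>0$. Second, show that $L(\lambda)<\Delta$ for all $\lambda$ in a left neighborhood of $\bar\lambda$.

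For the strict gap, I would use \Cref{prop:AE-characterization}(iii)--(iv). Both senders mix on $(\theta_L,\theta_H)$, where $\theta_L<0<\theta_H$. The receiver's welfare equals $W_r^{\FI}$ only if, for almost every state, she takes $a^+$ exactly when $\theta\geq0$. I would argue that this fails on a set of states of positive measure.

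The route I would try first is a contradiction argument. If the receiver's action were almost surely correct, sender~2 could profitably deviate at states slightly above $0$. A misreport just below $s(r_1)$ swings the action, and its cost tends to zero as the state approaches the cutoff. Sender~1 has the symmetric incentive at states slightly below $0$. Such deviations would be strictly profitable, which is inconsistent with the mixing described in the characterization.

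A more direct route is to fix a state $\theta\in(0,\theta_H)$. There, sender~2 puts positive density $\psi_2$ on misreports, and some of these swing sender~1's truthful report or his low misreports. I would show that the event $\{\theta\in(0,\theta_H),\ a=a^-\}$ has positive probability. Its welfare cost is $\int u_r f>0$, because $u_r>0$ on $(0,\infty)$ and $f$ has full support. This positive-probability mistake is the main obstacle, because it requires using the structure of the mixed strategies. Concretely, one needs that the swing region $\{r_2<s(r_1)\}$ receives positive joint probability at positive states. I expect this to follow from the fact that $\alpha_j<1$ and that the densities $\psi_j$ are nonzero on the interior of the mixing interval. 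If this cannot be extracted cleanly from \Cref{prop:AE-characterization}, I would instead invoke the bounds on adversarial welfare developed later, though that reverses the paper's order.

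For the second step, I need $\ell(\lambda)\to0^-$ as $\lambda\to\bar\lambda$. This holds because at $\bar\lambda$ the pooling interval disappears: \Cref{prop:mono-welfare} gives equality with $W_r^{\FI}$ only there, which forces $\ell(\bar\lambda)=0$. I would also use continuity of $r^*(\lambda)$, and hence of $\ell(\lambda)=\bar r_1^{-1}(r^*(\lambda))$ through the continuous inverse reach. Then $L(\lambda)\to0$ by continuity of the integral, since $u_r f$ is integrable near $0$.

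If continuity of $\lambda\mapsto\ell(\lambda)$ is not available directly, I would obtain it from the defining equation \eqref{eq:lambda-pooling-value}. Its right-hand side is continuous and strictly monotone in $\ell$ along the family, so it can be inverted. Finally, choose $\lambda_0<\bar\lambda$ such that $L(\lambda)<\Delta$ on $(\lambda_0,\bar\lambda]$. This gives $W_r^{\M}(\lambda)>W_r^{\DE}$ on that interval, including at $\bar\lambda$, where $L(\bar\lambda)=0<\Delta$.
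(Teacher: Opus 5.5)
Your proposal is correct and follows essentially the paper's argument: write $W_r^{\M}(\lambda)=W_r^{\FI}-L(\lambda)$ via \Cref{prop:mono-welfare}, show $\ell(\lambda)\uparrow0$ so that $L(\lambda)\to0$, and compare this with the strict gap $W_r^{\FI}-W_r^{\DE}>0$. The paper simply asserts both $\ell(\lambda)\uparrow0$ and $W_r^{\DE}<W_r^{\FI}$, so your extra justifications make the argument more complete: strict monotonicity of the pooling value in $\ell$ gives the first, and nondegenerate mixing on $(0,\theta_H)$ gives the second through the bound $m_+(\theta)\geq\alpha_1(\theta)\left(1-\alpha_2(\theta)\right)>0$.
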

	
 \Cref{thm:high-skepticism} establishes an important limitation on the welfare case for competition. As $\lambda$ approaches $\bar\lambda$, monopoly welfare approaches $W_r^{\FI}$. Because welfare in the adversarial equilibrium is strictly below this benchmark, sufficiently informative monopoly equilibria must outperform competition.

Our main comparison is between the outcomes selected by the common skeptical-envelope criterion, that is, the adversarial equilibrium under competition and the least-informative equilibrium under monopoly. We proceed by deriving tractable bounds on welfare in the adversarial equilibrium. These bounds will allow us to compare competition with the selected monopoly benchmark without relying immediately on the full expression for competitive welfare.

	
	\subsection{Welfare bounds in the adversarial equilibrium}
	
	Receiver welfare in the adversarial equilibrium can be computed exactly, as we show in \Cref{subsec:exact-welfare}. For the present comparison, however, the exact expression contains more detail than is needed because it depends on the full pattern of mistakes generated by the senders' mixed strategies. Therefore, we begin with bounds that depend on only a few equilibrium objects.

The bounds have simple interpretations. The lower bound evaluates a feasible decision rule under which the receiver ignores sender~2 and follows the sign of sender~1's report. Because the receiver's equilibrium decision is optimal given both reports, her equilibrium welfare cannot be lower than the payoff generated by this rule. The upper bound instead counts one event on which the receiver necessarily makes a mistake: sender~1 misreports while sender~2 reports truthfully in a negative state. Ignoring all other possible mistakes provides an upper bound on welfare.

These bounds are conservative because they do not use all the information contained in the two reports. Their advantage is tractability, as they express the relevant welfare losses in terms of the truthful cutoff and the senders' truthful-reporting probabilities. This will yield a transparent sufficient condition under which competition benefits the receiver.
	
	\begin{proposition}\label{prop:bounds}
		The receiver's welfare in the adversarial equilibrium satisfies
		\[
		\underline W_r^{\DE}\leq W_r^{\DE}\leq \overline W_r^{\DE},
		\]
		where
		\[
		\underline W_r^{\DE}\coloneqq W_r^{\FI}+\int_{\theta_L}^0 u_r(\theta)f(\theta)\left(1-\alpha_1(\theta)\right)\,d\theta,
		\]
		and
		\[
		\overline W_r^{\DE}\coloneqq W_r^{\FI}+\int_{\theta_L}^0 u_r(\theta)f(\theta)\left(1-\alpha_1(\theta)\right)\alpha_2(\theta)\,d\theta.
		\]
		Moreover, if $\theta_L<0$ and $\alpha_1,\alpha_2\in(0,1)$ on a set of positive measure, then
		\[
		\underline W_r^{\DE}<\overline W_r^{\DE}<W_r^{\FI}.
		\]
	\end{proposition}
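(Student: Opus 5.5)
The approach is to write receiver welfare in the adversarial equilibrium as $W_r^{\FI}$ minus expected losses from mistakes, with the lower bound coming from a feasible decision rule and the upper bound from a single forced mistake. Outside $(\theta_L,\theta_H)$ both senders report truthfully by \Cref{prop:AE-characterization}(iii). The profile $(\theta,\theta)$ then has $P_2(\theta,\theta)=\{\theta\}$ by \Cref{lem:operator-competition-geometry}, so the receiver acts correctly there. All losses therefore come from the mixing interval. A mistake at a negative state $\theta$ costs $-u_r(\theta)$, and one at a positive state costs $u_r(\theta)$. Hence
\[
W_r^{\DE}=W_r^{\FI}+\int_{\theta_L}^{0}u_r f\,\Pr(a^+\mid\theta)\,d\theta-\int_0^{\theta_H}u_r f\,\Pr(a^-\mid\theta)\,d\theta .
\]

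For the lower bound, I would consider the rule ``choose $a^+$ iff $r_1\ge 0$''. It is measurable with respect to the receiver's information. The receiver's equilibrium action maximizes her conditional expected payoff at every on-path report pair, so integrating gives $W_r^{\DE}$ at least the value of this rule.

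Under the rule, sender 1 reports directionally, with $r_1\ge\theta$; I would justify this from the support of $\psi_1$ established in \citet{Vaccari2023JET}. So at $\theta\ge0$ the rule picks $a^+$ and makes no mistake. At $\theta<\theta_L$ sender 1 is truthful, so the rule picks $a^-$. At $\theta\in(\theta_L,0)$ it picks $a^+$ exactly when sender 1 misreports to a nonnegative report. The probability of this is at most $1-\alpha_1(\theta)$. Multiplying by $u_r(\theta)<0$ gives the stated lower bound.

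I need misreports in $(\theta_L,0)$ to land at nonnegative values, otherwise the bound is only an inequality in the right direction. A positive-cost misreport that still induces $a^-$ would be dominated, so I would check that every such misreport crosses $0$. If it does not, the inequality still holds, because $u_r<0$ and the true probability is smaller.

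For the upper bound, I drop the positive-state loss term, which is nonnegative, and every negative-state mistake except one event: $\theta\in(\theta_L,0)$, sender 2 truthful, and sender 1 misreporting. This event has probability $(1-\alpha_1)\alpha_2$ by independent randomization. I must show the receiver chooses $a^+$ on it.

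Sender 1's misreport must be swinging, i.e. $r_1\ge s(r_2)=s(\theta)$, since a non-swinging misreport is strictly costly with no benefit. To verify this I would use the indifference structure: an atomless misreport on which $a^-$ still results yields payoff below truthful reporting. Then $\Pr(a^+\mid\theta)\ge(1-\alpha_1)\alpha_2$. Since $u_r<0$, the loss is at least the restricted integral, which gives $W_r^{\DE}\le\overline W_r^{\DE}$. I expect this step to be the main obstacle: it requires that, against a truthful $r_2=\theta$, every report in the support of $\psi_1(\cdot\mid\theta)$ swings. I would establish it from the equal-payoff condition on the mixing support in \citet{Vaccari2023JET}.

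For the strict inequalities, I use the hypothesis that $\alpha_1,\alpha_2\in(0,1)$ on a set of positive measure, which I read as a subset of $(\theta_L,0)$. On that set $u_r f<0$. Also $(1-\alpha_1)\alpha_2>0$, which gives $\overline W_r^{\DE}<W_r^{\FI}$. The integrand difference $u_r f(1-\alpha_1)(1-\alpha_2)$ is strictly negative, which gives $\underline W_r^{\DE}<\overline W_r^{\DE}$.
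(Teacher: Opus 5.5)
Your proposal is correct and follows the paper's proof essentially step for step: the same feasible rule (choose $a^+$ iff $r_1\geq0$) gives the lower bound via sequential rationality, the event that sender~2 is truthful while sender~1 misreports at $\theta\in(\theta_L,0)$ gives the upper bound, and strictness comes from $u_r f<0$ there; your dominance argument for why every misreport lies weakly above $s(\theta)$ is sound and fills in what the paper calls ``by construction'' (a lower report swings none of sender~2's reports, all of which are $\leq\theta$, because $s$ is decreasing). One small slip is that \Cref{lem:operator-competition-geometry} is stated on $Z_2$, which excludes $(\theta,\theta)$ for $\theta\neq0$, but neither bound needs the receiver's behavior outside the mixing interval, since the upper bound only uses $q\leq1$ for $\theta\geq0$ and $q\geq(1-\alpha_1)\alpha_2$ on $(\theta_L,0)$.
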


For the welfare comparison, the lower bound is the key object. Under the least informative monopoly equilibrium, the receiver bears the full pooling loss over the relevant negative states. Under competition, sender~1 reports truthfully with positive probability in the negative mixing region. The feasible rule used to construct the lower bound allows the receiver to avoid some of the mistakes that she makes under monopoly. If this lower bound already exceeds monopoly welfare at $\lambda=0$, equilibrium welfare under competition must do so as well. The resulting condition is sufficient rather than necessary because the bound disregards some of the information supplied by sender~2.

The upper bound serves a different purpose. Whenever sender~1 misreports and sender~2 reports truthfully in the relevant negative states, the adversarial-equilibrium decision is incorrect. This unavoidable source of error shows why welfare remains strictly below the full-information benchmark and provides an upper limit on the receiver's payoff. We now use the lower bound to compare competition with the selected monopoly outcome.

	
	\subsection{Competition versus the least-informative monopoly equilibrium}
	
	We now compare the adversarial equilibrium with the least-informative monopoly benchmark. Under selected monopoly, the receiver chooses $a^+$ throughout $[\ell^M,0)$, even though $a^-$ is optimal in these negative states. The lower bound in \Cref{prop:bounds} shows how competition can reduce this loss.

The comparison turns on the left truthful cutoff $\theta_L$. Under the feasible decision rule used to construct the lower bound, the receiver avoids the monopoly error entirely on $[\ell^M,\theta_L)$ whenever $\theta_L>\ell^M$. Within the remaining interval $(\theta_L,0)$, sender~1 reports truthfully with probability $\alpha_1(\theta)$, allowing the receiver to avoid the monopoly error at least some of the time. Thus, if $\theta_L\geq\ell^M$ and truthful reporting occurs on a non-negligible range of states, even this conservative lower bound exceeds receiver welfare under selected monopoly. The following lemma formalizes this sufficient condition.
	
	\begin{lemma}
		\label{lem:cutoff-sufficient-condition}
		Suppose that the adversarial equilibrium has left truthful cutoff $\theta_L$ satisfying $\theta_L\geq \ell^M$. Suppose also that sender $1$'s truthful-reporting probability satisfies $\alpha_1(\theta)>0$	on a subset of $(\theta_L,0)$ with positive measure whenever $\theta_L<0$. Then,
		\[
		W_r^{\DE}>W_r^{\M}(0).
		\]
	\end{lemma}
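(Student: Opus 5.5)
The plan is to compare monopoly welfare with the lower bound from \Cref{prop:bounds}. By \Cref{prop:mono-welfare} at $\lambda=0$,
\[
W_r^{\M}(0)=W_r^{\FI}+\int_{\ell^M}^0 u_r(\theta)f(\theta)\,d\theta .
\]
By \Cref{prop:bounds}, $W_r^{\DE}\geq \underline W_r^{\DE}$. It therefore suffices to show $\underline W_r^{\DE}>W_r^{\M}(0)$, which is equivalent to
\[
\int_{\theta_L}^0 u_r(\theta)f(\theta)\bigl(1-\alpha_1(\theta)\bigr)\,d\theta>\int_{\ell^M}^0 u_r(\theta)f(\theta)\,d\theta .
\]
If $\theta_L\geq 0$, the left-hand integral is over an empty or degenerate range and so is nonnegative. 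Since $\ell^M<0$, $u_r<0$ on $(\ell^M,0)$, and $f$ has full support, the right-hand side is strictly negative, and the inequality holds.

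For the main case $\ell^M\leq\theta_L<0$, I split the right-hand side as $\int_{\ell^M}^{\theta_L}+\int_{\theta_L}^{0}$. The difference between the two sides is then
\[
-\int_{\ell^M}^{\theta_L} u_r f\,d\theta\;-\;\int_{\theta_L}^0 u_r(\theta)f(\theta)\,\alpha_1(\theta)\,d\theta .
\]
The first term is nonnegative because $u_r<0$ on negative states. The second term is also nonnegative, since $u_r<0$, $f>0$, and $\alpha_1\geq 0$ on $(\theta_L,0)$. It is strictly positive because, by hypothesis, $\alpha_1>0$ on a positive-measure subset of $(\theta_L,0)$, where $u_r f<0$ by full support. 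This gives strict inequality.

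The only delicate point I expect is bookkeeping rather than analysis. One part is making sure that $\alpha_1$ is measurable and takes values in $[0,1]$ on $(\theta_L,0)$. The paper defines it as a mixing probability, and \Cref{prop:bounds} already uses it inside an integral, so this is taken as given. The other part is the boundary case $\theta_L\ge 0$. \Cref{prop:AE-characterization} actually gives $\theta_L<0$, so that case is only a formality. I would also note that the condition $\theta_L\geq\ell^M$ is used solely to ensure that $[\ell^M,\theta_L]$ is a genuine subinterval of the negative states, so that splitting the monopoly loss integral produces a nonnegative residual term.
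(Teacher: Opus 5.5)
Your proof is correct and follows essentially the same route as the paper: combine the lower bound $\underline W_r^{\DE}$ from \Cref{prop:bounds} with the monopoly formula of \Cref{prop:mono-welfare} at $\lambda=0$, split the pooling loss at $\theta_L$, and sign the two residual terms $-\int_{\ell^M}^{\theta_L}u_r f\,d\theta$ and $-\int_{\theta_L}^{0}u_r f\alpha_1\,d\theta$ using $u_r<0$ on negative states. Your separate treatment of $\theta_L\geq 0$ is harmless but unnecessary, since \Cref{prop:AE-characterization} already gives $\tau_1<\theta_L<0$.
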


    The lemma is stated in terms of the endogenous cutoffs $\theta_L$ and $\ell^M$, so it does not yet provide a condition expressed entirely in primitives. The symmetric benchmark supplies such a condition. Symmetry makes the negative competitive cutoff coincide with the lower cutoff of the monopoly pooling interval, so that $\theta_L=\ell^M$. The welfare ranking follows directly from the lemma.

	\begin{proposition}\label{thm:symmetric}
		Assume the following symmetric benchmark,
		\begin{enumerate}[label=(\roman*)]
			\item $f(\theta)=f(-\theta)$ for every $\theta$;
			\item $u_r(\theta)=-u_r(-\theta)$ for every $\theta$;
			\item $u_2(\theta)=-u_1(-\theta)$, so $\tau_2=-\tau_1$;
			\item $k_1C_1(r,\theta)=k_2C_2(-r,-\theta)$ for every $r,\theta\in\Theta$.
		\end{enumerate}
		Then, $\theta_L=\ell^M$, and thus $\underline W_r^{\DE}>W_r^{\M}(0)$.
	\end{proposition}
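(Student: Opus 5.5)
The plan is to show that, under the symmetric benchmark, the competitive cutoff $\theta_L$ and the monopoly cutoff $\ell^M$ solve the same scalar equation, namely $\bar r_1(\theta)=-\theta$. The strict ranking then follows from \Cref{lem:cutoff-sufficient-condition}, and more sharply from comparing $\underline W_r^{\DE}$ with $W_r^{\M}(0)$ term by term.

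\emph{Monopoly side.} I start from $\int_{\ell^M}^{r^M}u_r f\,d\theta=0$ with $\ell^M<0<r^M$. Assumptions (i)--(ii) make $u_rf$ odd, so $\int_{-r^M}^{r^M}u_rf=0$. Because $u_r>0$ on positive states and $u_r<0$ on negative states, the map $x\mapsto\int_{x}^{r^M}u_rf$ is strictly decreasing on $x<0$. Hence $\ell^M=-r^M$, which means $\bar r_1(\ell^M)=-\ell^M$. The function $\theta\mapsto\bar r_1(\theta)+\theta$ is strictly increasing, so this equation has a unique root in the relevant domain.

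\emph{Competitive side.} I will show that $s(r)=-r$. Consider the reflection map that swaps the senders' roles and sends $(r_1,r_2)\mapsto(-r_2,-r_1)$ and $\theta\mapsto-\theta$. By (i)--(iv), this map carries any adversarial equilibrium into another assessment satisfying the conditions of \Cref{prop:AE-characterization}; in particular $\underline r_2(\theta)=-\bar r_1(-\theta)$ by (iii)--(iv). Essential uniqueness in strategies and on-path actions then implies that the equilibrium reporting strategies are reflection-invariant. Alternatively, I can work directly with the receiver conditions made explicit in \Cref{cor:common-refinement-AE} and \Cref{lem:operator-competition-geometry}, choosing a reflection-symmetric belief completion. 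Either way, the receiver's continuation value satisfies $V(r_1,r_2)=-V(-r_2,-r_1)$, and therefore $V(r,-r)=0$. Since $V$ is strictly increasing in each report on the relevant rectangle (requirement (iii) together with connectedness), the swing function defined by the indifference locus is $s(r)=-r$. This matches the corner restrictions $s(0)=0$ and $s(\bar r_1(0))=\underline r_2(0)=-\bar r_1(0)$. The cutoff condition $s(\theta_L)=\bar r_1(\theta_L)$ then reads $\bar r_1(\theta_L)=-\theta_L$, so uniqueness gives $\theta_L=\ell^M$.

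\emph{Welfare comparison.} With $\theta_L=\ell^M$, \Cref{prop:mono-welfare} and \Cref{prop:bounds} give
\[
\underline W_r^{\DE}-W_r^{\M}(0)=-\int_{\ell^M}^0u_r(\theta)f(\theta)\alpha_1(\theta)\,d\theta,
\]
which is nonnegative because $u_r<0$ there. For strictness I need $\alpha_1>0$ on a positive-measure subset of $(\ell^M,0)$.

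I expect this last point to be the main obstacle, since it requires information from \citet{Vaccari2023JET} beyond what \Cref{prop:AE-characterization} records. I would argue by contradiction. Suppose $\alpha_1=0$ a.e.\ on $(\theta_L,0)$. Then sender~1's reports there are atomless and strictly upward. Meanwhile sender~2's truthful atom and its continuous component would face no truthful counterpart. This would conflict with sender~2's indifference between truthful reporting and misreporting inside the mixing interval, where the truthful report is optimal against truthful opponents near the cutoff. Failing that, I would invoke the JET characterization that both truthful-reporting probabilities are positive (indeed continuous, equal to one at the cutoffs) on the mixing region.

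Finally, $W_r^{\DE}\ge\underline W_r^{\DE}$ by \Cref{prop:bounds}, which delivers the ranking of the selected outcomes.
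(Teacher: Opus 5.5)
Your proof is correct and shares the paper's skeleton: reflection plus essential uniqueness to get $s(r)=-r$, the cutoff equation $\bar r_1(\theta_L)=-\theta_L$, identification with $\ell^M$, then \Cref{lem:cutoff-sufficient-condition}. Two steps are done differently.

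On the competitive side, the paper obtains only oddness, $s(-r)=-s(r)$, from the fact that $-s(-r)$ is again a swing function. It then needs a further step: $g(r)=-s(r)$ is an increasing involution on positive reports, hence the identity. You instead note that the anti-diagonal $\{(r,-r)\}$ is exactly the fixed-point set of the reflection $(r_1,r_2)\mapsto(-r_2,-r_1)$. Antisymmetry of $V$ then forces $V(r,-r)=0$, and strict monotonicity in $r_2$ gives $s(r)=-r$ at once. This is more direct. Because it runs through the receiver's value function rather than the action rule, it relies on continuity of $V$: the anti-diagonal is a null set of report pairs, where Bayes' rule alone does not pin down posteriors. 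The selected assessment does have this continuity, so the step goes through.

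On the monopoly side, the paper builds the candidate interval $(\theta_L,-\theta_L)$ and invokes uniqueness of the least-informative pooling report, which it imports from the monopoly characterization. You derive $\ell^M=-r^M$ directly from oddness of $u_rf$ and then use only that $\theta\mapsto\bar r_1(\theta)+\theta$ is strictly increasing, which is self-contained. One small slip: $x\mapsto\int_x^{r^M}u_rf\,d\theta$ is strictly \emph{increasing} on $x<0$, since its derivative is $-u_r(x)f(x)>0$. Only injectivity is used, so nothing changes.

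On strictness you are more explicit than the paper. The paper invokes the lemma without checking that $\alpha_1>0$ on a positive-measure subset of $(\theta_L,0)$. It thus relies, as your fallback does, on the statement that each sender mixes between truthful reporting and misreporting at interior states.

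Your contradiction sketch does not work as written, because it presupposes that sender~2 is indifferent at his truthful atom, i.e.\ that $\alpha_2>0$, which is not given. The operative fact is that in negative mixing states sender~1's reach binds first. For $\theta\in(\theta_L,0)$ we have $\underline r_2(\theta)<s(\bar r_1(\theta))<\theta$. This holds because $x\mapsto s(\bar r_1(x))-\underline r_2(x)$ is strictly decreasing and vanishes at $0$, and $x\mapsto s(\bar r_1(x))-x$ is strictly decreasing and vanishes at $\theta_L$. The argument then runs as follows.
\begin{itemize}
\item Suppose $\alpha_1(\theta)=0$. Then sender~1 reports in $[s(\theta),\bar r_1(\theta)]$ almost surely.
\item Truthful reporting therefore gives sender~2 action $a^+$ for sure. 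Reporting just below $s(\bar r_1(\theta))$ instead secures $a^-$ at a cost below $-u_2(\theta)$.
\item Hence $\alpha_2(\theta)=0$. Let $\underline r_1$ be the infimum of sender~1's support. Every report in $[s(\underline r_1),\theta]$ is swung by all of sender~1's reports, so sender~2 puts no mass there.
\item By atomlessness, sender~1's reports near $\underline r_1$ induce $a^+$ with vanishing probability, at a cost of at least $k_1C_1(s(\theta),\theta)>0$. This contradicts the zero payoff he secures by reporting truthfully.
\end{itemize}
So $\alpha_1(\theta)>0$ at every $\theta\in(\theta_L,0)$, using only the structure recorded in \Cref{prop:AE-characterization}.
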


\Cref{thm:symmetric} provides a clean welfare case for competition, but its scope should be kept in view. Specifically, that result does not imply that competition eliminates all decision errors or that it dominates more informative monopoly equilibria. Rather, it shows that, under symmetric primitives, even the conservative competitive lower bound strictly exceeds receiver welfare in the selected least-informative monopoly equilibrium. The next subsection examines the competitive errors directly by deriving exact receiver welfare.

	

	\subsection{Exact receiver welfare in the adversarial equilibrium}
	\label{subsec:exact-welfare}
	
	The preceding bounds provide tractable sufficient conditions for the welfare comparison, but they do not describe the receiver's equilibrium mistakes exactly. We now derive an exact decomposition of receiver welfare into losses from mistakes in positive and negative states. Besides clarifying the welfare trade-off generated by competition, this decomposition provides the basis for the high-misreporting-cost result in the next subsection.

The decomposition uses parts~$(iii)$ and~$(iv)$ of \Cref{prop:AE-characterization}. Conditional on the state, each sender has a possible truthful atom and an otherwise atomless reporting component. The independent-private-randomization convention in \Cref{subsec:common-primitives} makes the probabilities of the two senders' reporting events multiplicative.
	
	Let $\theta_H>0$ denote the right truthful cutoff, implicitly defined by $s(\theta_H)=\underline{r}_2 (\theta_H)$. Thus, for every $\theta\geq \theta_H$, both senders report truthfully and the receiver chooses the positive action with probability one. Likewise, for every $\theta\leq \theta_L$, both senders report truthfully and the	receiver chooses the negative action with probability one.
	
	For $\theta\in[0,\theta_H)$, let $\psi_1(\cdot\mid\theta)$ denote the density of the continuous component of sender~1's report distribution, and let $\Psi_2(\cdot\mid\theta)$ denote the sub-distribution function of the continuous component of sender~2's report distribution. These components are not normalized. For sender~$j$, their total mass is $1-\alpha_j(\theta)$. For $\theta\in(\theta_L,0]$, define $\psi_2(\cdot\mid\theta)$ and $\Psi_1(\cdot\mid\theta)$ analogously.
	
	Define the conditional mistake probabilities
	\begin{align}
		m_+(\theta)
		&\coloneqq 
		\alpha_1(\theta)\left(1-\alpha_2(\theta)\right) +\int_{\theta}^{\,s(\underline{r}_2 (\theta))} \Psi_2\!\left(s(r)\mid \theta\right)\psi_1(r\mid \theta)\,dr \; \text{ for }\; \theta\in[0,\theta_H),
		\label{eq:mplus-def}
		\\
		m_-(\theta)
		&\coloneqq 
		\alpha_2(\theta)\left(1-\alpha_1(\theta)\right) \nonumber\\
		& +\int_{s(\bar r_1(\theta))}^{\,\theta} \left[1-\alpha_1(\theta)-\Psi_1\!\left(s(r)\mid \theta\right)\right]\psi_2(r\mid \theta)\,dr \; \text{ for }\; \theta\in(\theta_L,0].
		\label{eq:mminus-def}
	\end{align}
	Here, $m_+(\theta)$ is the conditional probability that the receiver incorrectly chooses $a^-$ in a positive state, whereas $m_-(\theta)$ is the conditional probability that she incorrectly chooses $a^+$ in a negative state. In each expression, the first term covers the case in which one sender reports truthfully and the other misreports. The integral covers mistakes that arise when both reports are drawn from their continuous components and fall on the relevant side of the swing function.

The receiver's exact welfare is obtained by subtracting the expected loss from positive-state mistakes and the expected loss from negative-state mistakes from the full-information benchmark. The following proposition states this decomposition.
	\begin{proposition}
		\label{prop:exact-welfare}
		Receiver welfare in the adversarial equilibrium is
		\begin{equation}
			W_r^{\DE} = W_r^{\FI} -\int_{0}^{\theta_H} u_r(\theta)f(\theta)m_+(\theta)\,d\theta +\int_{\theta_L}^{0} u_r(\theta)f(\theta)m_-(\theta)\,d\theta,
			\label{eq:exact-welfare}
		\end{equation}
		or, alternatively,
		\begin{equation}
			W_r^{\DE} = \int_{\theta_H}^{\sup\Theta} u_r(\theta)f(\theta)\,d\theta +\int_{0}^{\theta_H} u_r(\theta)f(\theta)\left[1-m_+(\theta)\right]\,d\theta +\int_{\theta_L}^{0} u_r(\theta)f(\theta)m_-(\theta)\,d\theta.
			\label{eq:exact-welfare-alt}
		\end{equation}
	\end{proposition}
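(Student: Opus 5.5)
The plan is to write receiver welfare state by state as $W_r^{\DE}=\int_\Theta u_r(\theta)f(\theta)\,\Pr(a^+\mid\theta)\,d\theta$, since the payoff from $a^-$ is normalized to zero. I then split $\Theta$ into four regions: $\theta\le\theta_L$, $(\theta_L,0)$, $[0,\theta_H)$, and $\theta\ge\theta_H$. The two outer regions are handled by part $(iii)$ of \Cref{prop:AE-characterization}. Outside $(\theta_L,\theta_H)$ both senders report truthfully, so the receiver learns $\theta$ and chooses $a^+$ if and only if $\theta\ge0$. On these regions the integrand therefore equals the full-information integrand.

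On $[0,\theta_H)$, I write $\Pr(a^+\mid\theta)=1-m_+(\theta)$. On $(\theta_L,0)$, I write $\Pr(a^+\mid\theta)=m_-(\theta)$. Adding and subtracting $\int_0^{\theta_H}u_rf$ then gives \eqref{eq:exact-welfare-alt}. Recombining the first two terms of \eqref{eq:exact-welfare-alt} into $W_r^{\FI}$ gives \eqref{eq:exact-welfare}. The algebra is routine once the mistake probabilities are justified.

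The substantive step is verifying that \eqref{eq:mplus-def} and \eqref{eq:mminus-def} are the conditional probabilities of the wrong action. I will use part $(iv)$ of \Cref{prop:AE-characterization} together with the independent-private-randomization convention. Conditional on $\theta$, the joint report law is the product of $\alpha_1\delta_\theta+\psi_1\,dr$ and $\alpha_2\delta_\theta+\psi_2\,dr$. This splits into four events: truth/truth, truth/mix, mix/truth, and mix/mix. The action on each event is read off from the swing function, using the convention that sender~2 swings $r_1$ iff $r_2<s(r_1)$.

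Take $\theta\ge0$ first.
\begin{itemize}
\item Truth/truth yields the revealing pair $(\theta,\theta)$ and action $a^+$, so there is no mistake.
\item If sender~1 is truthful and sender~2 misreports, sender~2's misreport lies in his equilibrium support. A misreport is only worth its cost if it swings, so it induces $a^-$. This contributes $\alpha_1(1-\alpha_2)$.
\item If sender~2 is truthful and sender~1 misreports upward, the outcome is $a^+$ and there is no mistake.
\item If both mix, a mistake occurs iff $r_2<s(r_1)$, which has conditional mass $\Psi_2(s(r_1)\mid\theta)$. Integrating against $\psi_1(r\mid\theta)$ over sender~1's support $[\theta,s(\underline r_2(\theta))]$ gives the integral term.
\end{itemize}
The case $\theta\in(\theta_L,0]$ is symmetric. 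Here the mistake is $a^+$, which occurs when $r_1\ge s(r_2)$. Its mass is the continuous mass $1-\alpha_1-\Psi_1(s(r_2)\mid\theta)$ above the swing point. Ties have probability zero by \Cref{lem:AE-atomless}.

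I expect the main obstacle to be justifying these claims about supports: that every non-truthful report in the equilibrium support swings a truthful opponent, and that the integration limits match sender~1's support on $[0,\theta_H)$ (respectively sender~2's support on $(\theta_L,0]$). My first attempt will use an optimality argument. A misreport that fails to swing a truthful opponent earns strictly less than truth-telling against that opponent. Against the opponent's continuous component, a misreport must swing with positive probability to cover its cost. Together with monotonicity of $s$ and the reach, this pins the support inside the stated interval. If this is not directly available, I will cite the support characterization in \citet{Vaccari2023JET} behind \Cref{prop:AE-characterization}. Measurability and Fubini then let me integrate $m_\pm$ against $u_rf$.
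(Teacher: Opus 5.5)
Your proposal is correct and follows the paper's proof essentially step by step. Both compute $q(\theta)$ region by region, use truthful reporting outside $(\theta_L,\theta_H)$, obtain $q=1-m_+$ and $q=m_-$ from the truth/misreport and misreport/misreport swing events under independent randomization, and then substitute. The paper simply takes the report supports from the adversarial-equilibrium characterization, so your dominance argument for them adds rigor rather than changing the route. One small correction: ties $r_2=s(r_1)$ have probability zero because each sender's non-truthful component is atomless (part $(iv)$ of \Cref{prop:AE-characterization}) and the randomizations are conditionally independent. \Cref{lem:AE-atomless} only excludes point masses and would not by itself rule out positive mass on the swing curve.
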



	
	The exact formula yields a useful loss decomposition. Define
	\begin{align*}
		\mathcal L_+^{\DE}
		&\coloneqq 
		\int_{0}^{\theta_H} u_r(\theta)f(\theta)m_+(\theta)\,d\theta,
		\\
		\mathcal L_-^{\DE}
		&\coloneqq 
		-\int_{\theta_L}^{0} u_r(\theta)f(\theta)m_-(\theta)\,d\theta,
		\\
		\mathcal L^{\M}
		&\coloneqq 
		-\int_{\ell^M}^{0} u_r(\theta)f(\theta)\,d\theta.
	\end{align*}
	Because $u_r(\theta)>0$ for $\theta>0$ and $u_r(\theta)<0$ for $\theta<0$, all three loss terms are non-negative. The term $\mathcal L_+^{\DE}$ (resp.~$\mathcal L_-^{\DE}$) is the expected loss generated by mistakes in positive (resp.~negative) states under competition. The term $\mathcal L^{\M}$ is the expected pooling loss in the least informative	monopoly equilibrium.
	
	\begin{corollary}
		\label{cor:exact-comparison}
		The receiver is better off under competition than in the least informative monopoly equilibrium if and only if
		\begin{equation}
			\mathcal L_+^{\DE}+\mathcal L_-^{\DE}<\mathcal L^{\M}.
			\label{eq:exact-comparison}
		\end{equation}
	\end{corollary}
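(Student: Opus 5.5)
The plan is to write both sides of the comparison relative to the full-information benchmark $W_r^{\FI}$, using the two exact representations already available, and then cancel $W_r^{\FI}$.

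For competition, \Cref{prop:exact-welfare}, equation \eqref{eq:exact-welfare}, together with the definitions of $\mathcal L_+^{\DE}$ and $\mathcal L_-^{\DE}$, gives
\[
W_r^{\DE}=W_r^{\FI}-\mathcal L_+^{\DE}-\mathcal L_-^{\DE}.
\]
The second integral in \eqref{eq:exact-welfare} carries a plus sign and is non-positive, and $\mathcal L_-^{\DE}$ is defined as its negative, so it enters with a minus sign.

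For monopoly, \Cref{prop:mono-welfare} evaluated at $\lambda=0$ gives
\[
W_r^{\M}(0)=W_r^{\FI}+\int_{\ell(0)}^0u_r f\,d\theta.
\]
Since $\ell(0)=\ell^M$ by definition, this equals $W_r^{\FI}-\mathcal L^{\M}$.

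Subtracting the two expressions yields
\[
W_r^{\DE}-W_r^{\M}(0)=\mathcal L^{\M}-\mathcal L_+^{\DE}-\mathcal L_-^{\DE}.
\]
This difference is strictly positive if and only if \eqref{eq:exact-comparison} holds, which is the claim.

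The only step needing care is finiteness, so that cancelling $W_r^{\FI}$ is legitimate. The correction terms are integrals over bounded intervals of continuous $u_r$ weighted by $f$ and by probabilities in $[0,1]$, so they are finite. I would assume $W_r^{\FI}$ is finite, as is implicit throughout the paper's welfare statements. Beyond that, the corollary is a direct algebraic consequence of the two earlier representations, and I expect no real obstacle.
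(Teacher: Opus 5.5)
Your proposal is correct and is essentially the paper's own proof: you write $W_r^{\DE}=W_r^{\FI}-\mathcal L_+^{\DE}-\mathcal L_-^{\DE}$ using \Cref{prop:exact-welfare}, write $W_r^{\M}(0)=W_r^{\FI}-\mathcal L^{\M}$ using \Cref{prop:mono-welfare}, and subtract. Your remark that $W_r^{\FI}$ must be finite for the cancellation is a harmless addition that the paper leaves implicit.
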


	The previous corollary gives an exact necessary-and-sufficient criterion for the welfare ranking. Its limitation is that the mistake probabilities $m_+$ and $m_-$ are themselves endogenous objects, because they depend on the equilibrium swing report and on the senders' equilibrium mixed strategies. Still, the formula is useful in two ways. First, it shows precisely which pieces of the competitive equilibrium matter for welfare. Second, it clarifies why the lower-bound approach is conservative. Indeed, it collapses the two exact loss terms $\mathcal L_+^{\DE}$ and $\mathcal L_-^{\DE}$ into a simpler object that is easier to compare with monopoly but inevitably discards information.
	
	\Cref{prop:exact-welfare} gives the exact welfare formula in full generality. The expression is informative, but still quite rich, because it keeps track of the entire mistake structure of the adversarial equilibrium. Before using it for further comparative-statics arguments, it is useful to specialize it to the symmetric benchmark.
	
	This specialization serves two purposes. First, it shows how the exact formula collapses in the most transparent case, making the welfare logic easier to read directly from the expression. Second, it provides a sharp benchmark against which the earlier bound-based results can be interpreted. In the symmetric environment, the exact comparison becomes especially simple, and the next corollary records that reduction.
	
	\begin{corollary}
		\label{cor:exact-symmetric}
		Suppose the environment is symmetric, in the sense of \Cref{thm:symmetric}. Then, $\theta_H=-\theta_L$ and $m_+(\theta)=m_-(-\theta)$ for every $\theta\in[0,\theta_H)$. As a result, we obtain that
		\begin{equation}
			\mathcal L_+^{\DE}=\mathcal L_-^{\DE},
			\label{eq:symmetric-losses}
		\end{equation}
		and exact receiver welfare simplifies to
		\begin{equation}
			W_r^{\DE} = W_r^{\FI} + 2\int_{\theta_L}^{0} u_r(\theta)f(\theta)m_-(\theta)\,d\theta.
			\label{eq:exact-welfare-symmetric}
		\end{equation}
		Therefore, in the symmetric benchmark we obtain that
		\[
		W_r^{\DE}>W_r^{\M}(0) \quad\Longleftrightarrow\quad 2\int_{\theta_L}^{0} \left[-u_r(\theta)\right]f(\theta)m_-(\theta)\,d\theta < \int_{\ell^M}^{0} \left[-u_r(\theta)\right]f(\theta)\,d\theta.
		\]
	\end{corollary}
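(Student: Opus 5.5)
The plan is to exploit the reflection map $\theta\mapsto-\theta$ together with the essential uniqueness of the adversarial equilibrium outcome in \Cref{prop:AE-characterization}. Starting from an adversarial equilibrium, we would define a mirrored profile. Sender~1's new strategy at state $\theta$ is the law of $-r_2$ when $r_2$ is drawn from $\phi_2(\cdot\mid-\theta)$. Sender~2's new strategy is defined symmetrically from $\phi_1(\cdot\mid-\theta)$. The receiver's new rule at $(r_1,r_2)$ takes the action opposite to the one taken at $(-r_2,-r_1)$, with the tie-breaking convention handled through the swing relation.

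By conditions (i)--(iv) of \Cref{thm:symmetric}, this mirrored profile is again a PBE. Each sender's payoff problem maps onto the other's: $u_2(\theta)=-u_1(-\theta)$, and the costs match by (iv). The receiver's incentives also reflect, using $u_r(\theta)=-u_r(-\theta)$ and $f(\theta)=f(-\theta)$. The mirrored equilibrium also satisfies the adversarial conditions $(wM)$, $(sM)$, $(Dom)$, since these are invariant under the reflection that swaps the roles of the two senders. Essential uniqueness then implies three things:
\begin{itemize}
\item the swing function satisfies $s(r)=-s^{-1}(-r)=-s(-r)$, using $s=s^{-1}$;
\item $\alpha_1(\theta)=\alpha_2(-\theta)$ and $\psi_1(r\mid\theta)=\psi_2(-r\mid-\theta)$;
\item the cutoffs satisfy $\theta_H=-\theta_L$, because the defining equations $s(\theta_L)=\bar r_1(\theta_L)$ and $s(\theta_H)=\underline r_2(\theta_H)$ are exchanged by the reflection, with $\underline r_2(\theta)=-\bar r_1(-\theta)$ following from the definitions of the reaches.
\end{itemize}

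Next we would verify $m_+(\theta)=m_-(-\theta)$ for $\theta\in[0,\theta_H)$ by substitution in \eqref{eq:mplus-def}. The atom term $\alpha_1(\theta)(1-\alpha_2(\theta))$ becomes $\alpha_2(-\theta)(1-\alpha_1(-\theta))$, which is the first term of $m_-(-\theta)$. For the integral, change variables $r\mapsto-r$. The limits $[\theta,s(\underline r_2(\theta))]$ map to $[s(\bar r_1(-\theta)),-\theta]$, and the density becomes $\psi_2(\cdot\mid-\theta)$. The sub-distribution $\Psi_2(s(r)\mid\theta)$, the mass of sender~2's continuous component below $s(r)$, becomes the mass of sender~1's continuous component above $s(-r)$ at state $-\theta$, which is $1-\alpha_1(-\theta)-\Psi_1(s(-r)\mid-\theta)$. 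This matches the integrand of \eqref{eq:mminus-def}.

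\textbf{Main obstacle.} This matching step is the delicate one. The term $\Psi_1(s(r))$ must exclude any mass exactly at the swing point, and the left/right limit conventions must agree with the receiver's tie-breaking toward $a^+$. We would argue that atomlessness of the continuous components makes these boundary events null, so the convention does not matter.

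Finally, substitute $\theta\mapsto-\theta$ in $\mathcal L_+^{\DE}=\int_0^{\theta_H}u_rfm_+$. Using $u_r(\theta)f(\theta)=-u_r(-\theta)f(-\theta)$ and $\theta_H=-\theta_L$, this gives $\mathcal L_+^{\DE}=-\int_{\theta_L}^0u_rfm_-=\mathcal L_-^{\DE}$, which is \eqref{eq:symmetric-losses}. Plugging into \eqref{eq:exact-welfare} yields \eqref{eq:exact-welfare-symmetric}. The stated equivalence then follows from \Cref{cor:exact-comparison} by writing $\mathcal L_+^{\DE}+\mathcal L_-^{\DE}=2\mathcal L_-^{\DE}$ and expressing $\mathcal L^{\M}$ with $-u_r$.
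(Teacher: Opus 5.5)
Your proposal is correct and follows essentially the same route as the paper. Reflection symmetry combined with essential uniqueness of the adversarial equilibrium yields mirror-image strategies and $\theta_H=-\theta_L$; the change of variable $x=-\theta$ then gives $\mathcal L_+^{\DE}=\mathcal L_-^{\DE}$, and the equivalence follows from \Cref{cor:exact-comparison}. Your explicit term-by-term check of $m_+(\theta)=m_-(-\theta)$, which uses only oddness of $s$ and atomlessness of the continuous components, usefully fills in a step the paper merely asserts.
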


	
	\begin{figure}
	    \centering
        \includegraphics[width=0.9\linewidth]{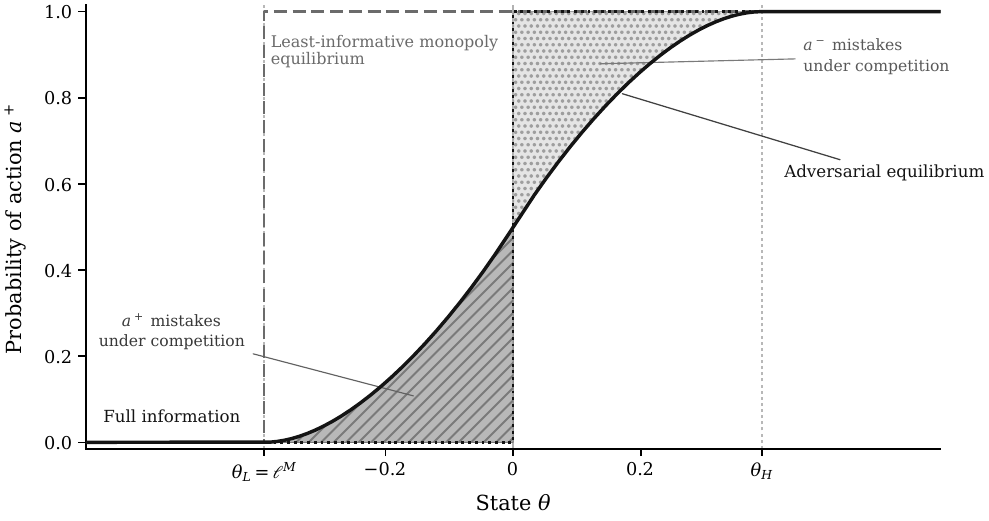}
	    \caption{Decision probabilities in the symmetric LQ benchmark. The figure assumes a standard normal prior, $(\tau_1,k_1)=(-1,1)$, and $(\tau_2,k_2)=(1,1)$. The dotted and dashed step functions represent, respectively, the receiver's full-information rule and the least-informative monopoly equilibrium, while the solid curve is $q^{\DE}(\theta)$, the probability of action $a^+$ in the adversarial equilibrium. The cutoffs satisfy $\theta_L=\ell^M=-\theta_H=-(\sqrt{17}-1)/8$. Selected monopoly chooses $a^+$ with probability one throughout $(\ell^M,0)$. The shaded regions show the two competitive decision errors, i.e., choosing $a^+$ in negative states and choosing $a^-$ in positive states.}
    \label{fig:decision-probabilities}
	\end{figure}

Figure~\ref{fig:decision-probabilities} illustrates the discipline-noise trade-off generated by competition. Under the selected monopoly outcome, the receiver chooses $a^+$ with probability one throughout $(\ell^M,0)$ and makes a systematic error in those negative states. Competition reduces this negative-state error but introduces mistakes in positive states. In the symmetric benchmark depicted in the figure, the former gain exceeds the latter loss, so competition increases receiver welfare.

The earlier lower bound should not be interpreted as simply dropping $\mathcal L_+^{\DE}$ from the exact decomposition. It evaluates a feasible decision rule under which the receiver ignores sender~2 and follows the sign of sender~1's report. This rule avoids positive-state mistakes but uses only coarse information in negative states. Because the receiver's equilibrium decision optimally uses both reports, equilibrium welfare must be at least as large as the payoff from this feasible rule. The exact decomposition instead records both types of equilibrium mistake and is therefore the appropriate basis for the comparative-static result that follows.
	
	
	\subsection{A sufficiently reliable additional sender}
	\label{subsec:high-cost-sender}
	
	\Cref{cor:exact-comparison} reduces the welfare comparison to whether the two competitive losses satisfy $\mathcal L_+^{\DE}+\mathcal L_-^{\DE}<\mathcal L^{\M}$. Although exact, this condition is expressed in terms of endogenous equilibrium objects. We now obtain a sufficient condition stated directly in terms of sender~$2$'s misreporting-cost parameter.

Fix sender~$2$'s bias $\tau_2$ and let $k_2$ grow. At every fixed state away from the receiver's cutoff, both senders eventually report truthfully. Under the regularity conditions stated below, the resulting disappearance of competitive mistakes implies that both loss terms converge to zero. By contrast, the pooling loss under the least-informative monopoly equilibrium remains strictly positive whenever its pooling region is non-degenerate. This difference yields the high-cost welfare comparison.
	
	Because the mixing regions vary with $k_2$, we first extend the conditional mistake probabilities by zero outside those regions. This allows the two welfare losses to be written over fixed state domains. For every competitive environment $(\tau_2,k_2)$, define\footnote{Because the mixing regions vary with $k_2$, we make explicit the dependence of the conditional mistake probabilities defined in \eqref{eq:mplus-def} and \eqref{eq:mminus-def} on $(\tau_2,k_2)$ and extend them by zero outside their original domains. This allows the two welfare losses to be written over fixed state domains.}
	\[
	\widetilde m_+(\theta\mid\tau_2,k_2)
	=
	\begin{cases}
		m_+(\theta\mid\tau_2,k_2), & \theta\in[0,\theta_H(\tau_2,k_2)),\\
		0, & \theta\geq \theta_H(\tau_2,k_2),
	\end{cases}
	\]
	and
	\[
	\widetilde m_-(\theta\mid\tau_2,k_2)
	=
	\begin{cases}
		m_-(\theta\mid\tau_2,k_2), & \theta\in(\theta_L(\tau_2,k_2),0],\\
		0, & \theta\leq \theta_L(\tau_2,k_2).
	\end{cases}
	\]
	Then,
	\[
	\mathcal L_+^{\DE}(\tau_2,k_2)
	=
	\int_{0}^{\sup\Theta}
	u_r(\theta)f(\theta)\widetilde m_+(\theta\mid\tau_2,k_2)\,d\theta,
	\]
	and
	\[
	\mathcal L_-^{\DE}(\tau_2,k_2)
	=
	-\int_{\inf\Theta}^{0}
	u_r(\theta)f(\theta)\widetilde m_-(\theta\mid\tau_2,k_2)\,d\theta.
	\]
	
The next lemma establishes the required point-wise convergence. The proposition converts that convergence into the disappearance of the expected welfare
losses.
	
	\begin{lemma}
\label{lem:high-k2-convergence}
Fix sender~$1$'s primitives, and a finite bias $\tau_2>0$ for sender~$2$. Then, for every fixed $\theta\neq0$, both senders report truthfully for all sufficiently large $k_2$. As a result, $q^{\DE}(\theta\mid\tau_2,k_2)\to \mathds{1}_{\{\theta\geq0\}}$ for $\theta\neq0$, the two conditional mistake probabilities converge pointwise to zero, and the expected misreporting cost of each sender converges pointwise to zero.
\end{lemma}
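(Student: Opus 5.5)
The plan is to show that both truthful cutoffs collapse to the receiver's threshold, $\theta_L(\tau_2,k_2)\to0$ and $\theta_H(\tau_2,k_2)\to0$ as $k_2\to\infty$. Once this holds, every fixed $\theta\neq0$ eventually lies outside the mixing interval $(\theta_L,\theta_H)$. By part~$(iii)$ of \Cref{prop:AE-characterization}, both senders then report truthfully there. The remaining claims follow directly from truthful reporting outside the mixing interval:
\begin{itemize}
\item the receiver chooses $a^+$ for $\theta\geq\theta_H$ and $a^-$ for $\theta\leq\theta_L$, so $q^{\DE}(\theta\mid\tau_2,k_2)=\mathds 1_{\{\theta\geq0\}}$ eventually;
\item $\widetilde m_\pm(\theta\mid\tau_2,k_2)=0$ eventually, by their definition as zero extensions;
\item each sender's misreporting cost at $\theta$ is $k_jC_j(\theta,\theta)=0$ eventually.
\end{itemize}
Note that the convergence is eventual equality at each fixed $\theta$, so no dominated-convergence argument is needed for this lemma.

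The key step is the behaviour of sender~2's reach at the threshold. Because $\underline r_2(0)$ solves $-u_2(0)=k_2C_2(r,0)$ with $-u_2(0)>0$ fixed, the cost level $C_2(\underline r_2(0),0)=-u_2(0)/k_2$ tends to zero. By property~$(iii)$ of the cost function together with continuity and $C_2(0,0)=0$, this forces $\underline r_2(0)\to0$. More generally, $\underline r_2(\theta)\to\theta$ uniformly on compact subsets of $(-\infty,\tau_2]$. For compactness I would combine continuity of $C_2$ with strict monotonicity in distance, which gives that $\inf\{C_2(r,\theta):|r-\theta|\geq\varepsilon,\ \theta\text{ in a compact set}\}$ is positive.

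The swing function $s$ maps $[\underline r_2(0),\bar r_1(0)]$ onto itself, is strictly decreasing, and is an involution. Hence $s(\bar r_1(0))=\underline r_2(0)$ and $s(\underline r_2(0))=\bar r_1(0)$, and the unique fixed point $s(0)$ lies in this interval. I would then bound the cutoffs as follows.
\begin{itemize}
\item \textbf{Left cutoff.} The condition $s(\bar r_1(\theta_L))=\theta_L$ requires $\bar r_1(\theta_L)$ to lie in the domain of $s$. Since $s$ takes values in $[\underline r_2(0),\bar r_1(0)]$, this gives $\theta_L\geq\underline r_2(0)$. Together with $\theta_L<0$, we get $\theta_L\in[\underline r_2(0),0)$, so $\theta_L\to0$.
\item \textbf{Right cutoff.} From $s(\underline r_2(\theta_H))=\theta_H$, we have $\underline r_2(\theta_H)\geq\underline r_2(0)$ because $\underline r_2$ is increasing, and $\underline r_2(\theta_H)\leq s(\theta_H)$. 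I expect the cleanest route is the following. If $\theta_H\geq\varepsilon$ along a subsequence, then $\underline r_2(\theta_H)$ is close to $\theta_H\geq\varepsilon$ by uniform convergence of the reach. (The set $\{\theta_H\}$ is bounded by $\bar r_1(0)$, since $\theta_H$ lies in the range of $s$.) But $\underline r_2(\theta_H)$ must lie in the domain of $s$, which requires $\underline r_2(\theta_H)\leq\bar r_1(0)$. That alone does not contradict.
\end{itemize}

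The main obstacle is therefore the right cutoff, because $\bar r_1(0)$ does not shrink with $k_2$. I would handle it with the identity $s(\theta_H)=\underline r_2(\theta_H)$ and the fact that $s$ is decreasing with fixed point $s^\ast$. Here $s^\ast\in[\underline r_2(0),\bar r_1(0)]$, and it satisfies $s^\ast\leq0$ in the limit by the receiver conditions $(wM)$, $(sM)$, and $(Dom)$ of \citet{Vaccari2023JET}. If $\theta_H>s^\ast$, then $s(\theta_H)<s^\ast$. At the same time, $\underline r_2(\theta_H)\approx\theta_H>s^\ast$ for large $k_2$, which contradicts $s(\theta_H)=\underline r_2(\theta_H)$ unless $\theta_H-s^\ast\to0$. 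So $\limsup\theta_H\leq\limsup\max\{s^\ast,0\}$. It then remains to show that the fixed point of $s$ converges to $0$. I would attempt this using the corner normalisation $V(0,0)=0$, which pins the swing so that $s(0)$ is near $0$ once sender~2's reports are nearly truthful. If this direct route fails, I would fall back on the explicit characterisation of $s$ in \citet{Vaccari2023JET} to verify that $s(0)\to0$.
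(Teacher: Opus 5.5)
Your overall reduction matches the paper's: shrink the mixing interval onto $0$, then read off truthful reporting, $q^{\DE}$, the zero-extended mistake probabilities and the costs at each fixed $\theta\neq0$. Your left-cutoff step is also exactly the paper's argument, $\theta_L=s(\bar r_1(\theta_L))\geq\underline r_2(0)\to0$.

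The gap is the right cutoff. You leave the decisive step open (``it remains to show that the fixed point of $s$ converges to $0$'', with a fallback to the explicit characterization). The preceding claim that $s^\ast\leq0$ ``in the limit'' is not justified, and without it the bound $\limsup\theta_H\leq\limsup\max\{s^\ast,0\}$ proves nothing. The missing fact is exact, not asymptotic. The profile $(0,0)$ is a corner at which only $\theta=0$ is plausible, so $(Dom)$ gives $V(0,0)=u_r(0)=0$. Strict monotonicity of $V$ in $r_2$, condition $(sM)$, then forces $s(0)=0$ for every $k_2$. Since $s$ is strictly decreasing, $s(x)<0$ for every $x>0$ in its domain. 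This is what the paper uses, and it is recorded in the LQ section as $s(r)\in[\underline r_2(0),0)$ for $r>0$.

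With this fact, the right side is handled pointwise, just like the left. You need neither uniform convergence of $\underline r_2$ on compacts nor any limit of the swing function, which itself moves with $k_2$.
\begin{itemize}
\item Fix $\theta\in(0,\tau_2)$ in the domain of $s$. Since $C_2(\underline r_2(\theta),\theta)=-u_2(\theta)/k_2\to0$, we get $\underline r_2(\theta)\to\theta>0$. Hence eventually $\underline r_2(\theta)>0>s(\theta)$.
\item The map $x\mapsto s(x)-\underline r_2(x)$ is strictly decreasing and vanishes at $\theta_H$, so the previous inequality gives $\theta>\theta_H$.
\item If $\theta$ lies above the domain of $s$, or $\theta\geq\tau_2$, the conclusion is immediate, because $\theta_H\leq\bar r_1(0)$ (it is a value of $s$) and $\theta_H<\tau_2$.
\end{itemize}
Your uniform-convergence route would also close once $s(0)=0$ is in hand, but it is more machinery than the lemma needs.
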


The lemma establishes the state-by-state effect of a higher $k_2$. Away from the receiver's cutoff, sufficiently costly misreporting eventually induces both senders to report truthfully, and the receiver takes her full-information action. Establishing the welfare comparison requires one additional step. We must show that the receiver's expected losses, aggregated across states, also disappear. The next proposition provides this step and uses it to compare competition with the positive pooling loss that remains under selected monopoly.

\begin{proposition}
	\label{prop:high-k2-threshold}
	Fix sender~$1$'s primitives, and a finite bias $\tau_2>0$ for sender~$2$. Suppose that the least informative monopoly equilibrium pools a non-degenerate interval, so that $\mathcal L^{\M}>0$. Assume also that $|u_r(\theta)|f(\theta)$ is integrable on $\Theta$, that $u_r$ is continuous at $0$, with $u_r(0)=0$, and that the high-cost convergence conclusion of \Cref{lem:high-k2-convergence} holds. Then,
	\[
	\lim_{k_2\to\infty}\mathcal L_+^{\DE}(\tau_2,k_2)=\lim_{k_2\to\infty}\mathcal L_-^{\DE}(\tau_2,k_2)=0.
	\]
	Thus, there exists a finite threshold $\bar k_2(\tau_2)$ such that $W_r^{\DE}(\tau_2,k_2)>W_r^{\M}(0)$	for all $k_2\geq \bar k_2(\tau_2)$.
\end{proposition}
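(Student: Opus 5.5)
The plan is dominated convergence applied to the two fixed-domain representations of the competitive losses, followed by a limit comparison with the fixed monopoly loss. Write
\[
\mathcal L_+^{\DE}(\tau_2,k_2)=\int_0^{\sup\Theta}u_r(\theta)f(\theta)\widetilde m_+(\theta\mid\tau_2,k_2)\,d\theta .
\]
The integrand is bounded in absolute value by $|u_r(\theta)|f(\theta)$, because $\widetilde m_+\in[0,1]$. This holds since $m_+$ is a conditional probability, namely the probability that the receiver chooses $a^-$ at $\theta$, and it is extended by zero outside the mixing region. The same bound applies to $\widetilde m_-$ on $(\inf\Theta,0]$. The dominating function $|u_r|f$ is integrable by assumption and does not depend on $k_2$.

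Pointwise convergence comes from \Cref{lem:high-k2-convergence}. For each fixed $\theta\neq0$, both senders report truthfully for all sufficiently large $k_2$. In the truthful-cutoff language, this means $\theta\geq\theta_H(\tau_2,k_2)$ if $\theta>0$, and $\theta\leq\theta_L(\tau_2,k_2)$ if $\theta<0$. The extended mistake probabilities are therefore eventually identically zero at $\theta$, and the lemma also states $\widetilde m_\pm(\theta\mid\tau_2,k_2)\to0$ directly. The single point $\theta=0$ has Lebesgue measure zero, so convergence holds almost everywhere. In any case $u_r(0)=0$ and $u_r$ is continuous at $0$, so the integrand vanishes there regardless of the mistake probabilities. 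Dominated convergence then gives $\mathcal L_\pm^{\DE}(\tau_2,k_2)\to0$ along any sequence $k_2\to\infty$, and hence as $k_2\to\infty$.

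For the threshold, note that $\mathcal L^{\M}$ depends only on sender~$1$'s primitives, the prior, and $u_r$, and not on $(\tau_2,k_2)$. By hypothesis it is strictly positive. Since $\mathcal L_+^{\DE}+\mathcal L_-^{\DE}\to0$, there is a $\bar k_2(\tau_2)$ with $\mathcal L_+^{\DE}+\mathcal L_-^{\DE}<\mathcal L^{\M}$ for all $k_2\geq\bar k_2(\tau_2)$. \Cref{cor:exact-comparison}, which rests on the exact decomposition in \Cref{prop:exact-welfare}, then gives $W_r^{\DE}(\tau_2,k_2)>W_r^{\M}(0)$ on that range.

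I expect the main obstacle to be justifying the uniform bound $0\le\widetilde m_\pm\le1$ from the explicit formulas \eqref{eq:mplus-def} and \eqref{eq:mminus-def}, rather than from their interpretation alone. I would check this using the mass accounting $\alpha_j+\int\psi_j=1$, together with $0\le\Psi_j\le1-\alpha_j$. Each expression is then bounded by the probability of a union of disjoint report events, so it is at most $1$. A second point to confirm is that the limit statement needs no uniformity in $\theta$ near $0$. Dominated convergence requires only pointwise convergence almost everywhere plus the fixed integrable bound, so the shrinking-but-possibly-slow collapse of $(\theta_L,\theta_H)$ toward $0$ causes no difficulty.
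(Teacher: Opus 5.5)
Your proposal is correct and matches the paper's proof: dominated convergence with the fixed integrable bound $|u_r|f$ and $0\le\widetilde m_\pm\le 1$, pointwise convergence from \Cref{lem:high-k2-convergence}, then \Cref{cor:exact-comparison} against the fixed positive loss $\mathcal L^{\M}$. Your extra checks only make explicit what the paper takes for granted, namely deriving the bound on $\widetilde m_\pm$ from the explicit formulas and noting that $\theta=0$ is a null set.
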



	
\Cref{prop:high-k2-threshold} provides the main receiver-welfare case for competition. For every fixed finite bias of sender~$2$, sufficiently costly misreporting makes the selected competitive outcome arbitrarily close to the receiver's full-information benchmark and eventually eliminates the fixed pooling advantage of selected monopoly. The proposition establishes eventual dominance, and does not require welfare to be monotone in $k_2$ or provide an explicit expression for the threshold.

The argument relies only on the monopoly equilibrium having a strictly positive pooling loss. It applies point-wise to each non-revealing generic monopoly equilibrium. The associated threshold is allowed to depend on the equilibrium index $\lambda$.
	
	\begin{corollary}
		\label{cor:high_cost}
		Fix sender $1$'s primitives and a finite bias $\tau_2>0$ for sender $2$. Assume that $|u_r(\theta)|f(\theta)$ is integrable on $\Theta$, that $u_r$ is continuous at $0$, with $u_r(0)=0$, and that the high-cost convergence conclusion of \Cref{lem:high-k2-convergence} holds. Then, for every monopoly equilibrium indexed by $\lambda<\bar\lambda$, there exists a finite threshold such that $W_r^{\DE}(\tau_2,k_2)>W_r^{\M}(\lambda)$ for every $k_2\geq\bar k_2(\tau_2,\lambda)$.
	\end{corollary}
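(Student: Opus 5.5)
The plan is to reuse the argument of \Cref{prop:high-k2-threshold} with the least-informative pooling loss $\mathcal L^{\M}$ replaced by the pooling loss of the fixed equilibrium $E_\lambda$. By \Cref{prop:mono-welfare}, for any $\lambda$ we can write
\[
W_r^{\M}(\lambda)=W_r^{\FI}-\mathcal L^{\M}(\lambda), \qquad \mathcal L^{\M}(\lambda)\coloneqq -\int_{\ell(\lambda)}^{0}u_r(\theta)f(\theta)\,d\theta .
\]
The first step is to show that $\mathcal L^{\M}(\lambda)>0$ whenever $\lambda<\bar\lambda$. \Cref{prop:mono-welfare} states that $W_r^{\M}(\lambda)=W_r^{\FI}$ if and only if $\lambda=\bar\lambda$. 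Hence for $\lambda<\bar\lambda$ the inequality $W_r^{\M}(\lambda)\leq W_r^{\FI}$ is strict, and so $\mathcal L^{\M}(\lambda)>0$. The same conclusion can be read off directly: $\ell(\lambda)<0$ in the generic family, $u_r<0$ on $[\ell(\lambda),0)$, and $f$ has full support.

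The second step is to control the competitive side, which does not depend on $\lambda$ at all. Under the stated integrability, continuity and high-cost convergence hypotheses, the first part of the conclusion of \Cref{prop:high-k2-threshold} gives
\[
\mathcal L_+^{\DE}(\tau_2,k_2)\to0 \quad\text{and}\quad \mathcal L_-^{\DE}(\tau_2,k_2)\to0 \qquad\text{as } k_2\to\infty .
\]
That argument is dominated convergence: the extended mistake probabilities $\widetilde m_\pm$ lie in $[0,1]$, converge pointwise to zero by \Cref{lem:high-k2-convergence}, and are dominated by $|u_r|f$, which is integrable. It does not use the hypothesis $\mathcal L^{\M}>0$, so it can be invoked verbatim. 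By \Cref{prop:exact-welfare},
\[
W_r^{\DE}(\tau_2,k_2)=W_r^{\FI}-\mathcal L_+^{\DE}-\mathcal L_-^{\DE}.
\]
Therefore $W_r^{\DE}(\tau_2,k_2)\to W_r^{\FI}$ as $k_2\to\infty$.

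The final step combines the two. Set $\varepsilon=\mathcal L^{\M}(\lambda)>0$ and choose $\bar k_2(\tau_2,\lambda)$ such that $\mathcal L_+^{\DE}+\mathcal L_-^{\DE}<\varepsilon$ for all $k_2\geq\bar k_2(\tau_2,\lambda)$. Then, exactly as in \Cref{cor:exact-comparison}, we get $W_r^{\DE}>W_r^{\FI}-\varepsilon=W_r^{\M}(\lambda)$.

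The only delicate point is that the threshold depends on $\lambda$ through $\varepsilon$. Since $\mathcal L^{\M}(\lambda)\to0$ as $\lambda\to\bar\lambda$, no threshold can work uniformly, which is consistent with the footnote in the introduction. The proof must therefore fix $\lambda$ first and choose the threshold afterwards; no uniformity is claimed.
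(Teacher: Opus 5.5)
Your proposal is correct and follows the paper's proof essentially step for step. You fix $\lambda<\bar\lambda$, write $W_r^{\M}(\lambda)=W_r^{\FI}-\mathcal L^{\M}(\lambda)$ with $\mathcal L^{\M}(\lambda)>0$ by \Cref{prop:mono-welfare}, and combine \Cref{prop:exact-welfare} with the vanishing of $\mathcal L_+^{\DE}+\mathcal L_-^{\DE}$ from \Cref{prop:high-k2-threshold}, choosing the threshold after $\lambda$. Your observation that the limit part of \Cref{prop:high-k2-threshold} does not use its hypothesis $\mathcal L^{\M}>0$ is a piece of care the paper leaves implicit.
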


The quantifiers in \Cref{cor:high_cost} are point-wise in $\lambda$. By \Cref{thm:high-skepticism}, for every fixed finite competitive environment $(\tau_2,k_2)$, generic monopoly equilibria with $\lambda<\bar\lambda$ sufficiently close to $\bar\lambda$ yield higher receiver welfare than competition. Hence, no single finite cost threshold can make competition dominate the entire non-revealing generic monopoly family at once.
	
	
	\subsection{The linear-quadratic class}\label{sec:LQ}
	
	The high-$k_2$ result provides a welfare case for competition. We now use the linear-quadratic (LQ) specification to examine how sender~2's bias and misreporting cost affect the competitive equilibrium. The central object is the swing report, i.e., the report by sender~2 that exactly offsets a given report by sender~1 and leaves the receiver indifferent between her two actions.

The LQ specification delivers explicit reach functions and a tractable equation for the swing report. We proceed in three steps. First, we derive the swing equation and study how the swing report responds to sender~2's bias and misreporting cost. Second, we use these comparative statics to extend the symmetric welfare comparison. Finally, we study the non-binding inverse-reach condition under which the effect of sender~2's bias can be signed. These results clarify the forces behind the welfare comparison, but they do not produce a global welfare ranking, as receiver welfare also depends on the equilibrium cutoffs, mixing probabilities, and resulting mistakes.\footnote{The LQ specification is used for tractability. The Appendix shows which reach and swing-report comparative statics extend to broader classes of distance-based misreporting costs, including standard power costs under additional curvature restrictions.}

	
	\subsubsection{Linear-quadratic primitives and the swing equation}\label{sec:LQ_class}
	
	In the LQ case, the senders' reaches are explicit and the receiver's indifference condition can be written as a single equation for the swing report. This allows us to trace how sender~2's bias and misreporting cost change the evidence required to overturn sender~1's report.
	
	\begin{definition}\label{def:LQ}
		The environment is linear-quadratic (LQ) if $u_i(\theta)=\theta-\tau_i$ for $i\in\{1,2,r\}$, and $C_j(r_j,\theta)=(r_j-\theta)^2$ for $j\in\{1,2\}$, with $\tau_r=0$, $\tau_1<0<\tau_2$, and $k_j>0$.
	\end{definition}

	Throughout this subsection, maintain the LQ specification of \Cref{def:LQ}. For $\theta\geq\tau_1$, the reach of sender~$1$ is
	\[
	\bar r_1(\theta)=\theta+\sqrt{\frac{\theta-\tau_1}{k_1}},
	\]
	while, on the image of sender~$2$'s lower reach, its inverse reach is
	\begin{equation}\label{eq:r2-inv}
		\underline{r}_2 ^{-1}(r)=r+\frac{\sqrt{1+4k_2(\tau_2-r)}-1}{2k_2}.
	\end{equation}
	The swing report $s(r)$ is implicitly pinned down by the equilibrium conditions. Fix $r\in(0,\bar r_1(0)]$ and let $s(r)\in[\underline{r}_2 (0),0)$ be the report that makes the receiver indifferent between $a^+$ and $a^-$. Define
	\begin{equation}\label{eq:G}
		G(r,s\mid \tau_2,k_2)\coloneqq \int_{\max\{s,\bar r_1^{-1}(r)\}}^{\min\{r,\underline{r}_2 ^{-1}(s)\}} \frac{4\theta f(\theta)(s-\theta)(r-\theta)}{(\theta-\tau_1)(\theta-\tau_2)}\,d\theta.
	\end{equation}
	Then, $s(r)$ is implicitly characterized by $G\left(r,s(r)\mid \tau_2,k_2\right)=0$. Whenever the partial derivative $G_s$ is nonzero, the implicit function theorem yields
	\[
	\frac{\partial s(r)}{\partial \tau_2}=-\frac{G_{\tau_2}(r,s(r))}{G_s(r,s(r))},
	\] 
	\[
	\frac{\partial s(r)}{\partial k_2}=-\frac{G_{k_2}(r,s(r))}{G_s(r,s(r))}.
	\]
	
To sign these comparative statics, we first need the swing equation to be locally well behaved. The next lemma establishes that the derivative with respect to the swing report is strictly positive at every non-degenerate swing root, namely for $r\in(0,\bar r_1(0))$. The relevant root is locally unique, and the implicit function theorem applies. At the endpoint $r=\bar r_1(0)$, the state interval consistent with the swing pair collapses to the singleton $\{0\}$ and the corresponding derivative is zero. We handle the endpoint comparative static directly from the reach formula.

We evaluate derivatives at points where the integration bounds are locally differentiable. At a boundary-switching point, where $\underline r_2^{-1}(s(r))=r$, the swing root may have a kink. The corresponding results should then be interpreted using one-sided derivatives and continuity.
	
	
	\begin{lemma}\label{prop:Gs-positive}
    For every $r\in(0,\bar r_1(0))$, we have $G_s(r,s(r))>0$. At the endpoint, we obtain $G_s\!\left(\bar r_1(0),s(\bar r_1(0))\right)=0$, where the derivative is understood in the feasible one-sided sense.
	\end{lemma}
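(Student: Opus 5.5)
The plan is to differentiate $G$ in $s$ by Leibniz's rule, and then use the swing equation $G(r,s(r))=0$ itself to sign the resulting interior integral. Write the integrand as $g(\theta;r,s)\coloneqq h(\theta)(s-\theta)(r-\theta)$ with $h(\theta)\coloneqq 4\theta f(\theta)/[(\theta-\tau_1)(\theta-\tau_2)]$. Let the integration bounds be $a(r,s)=\max\{s,\bar r_1^{-1}(r)\}$ and $b(r,s)=\min\{r,\underline r_2^{-1}(s)\}$. Then
\[
G_s(r,s)=\int_a^b h(\theta)(r-\theta)\,d\theta+g(b;r,s)\,\partial_s b-g(a;r,s)\,\partial_s a .
\]
At boundary-switching points we read this with one-sided derivatives, as the paper stipulates.

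\emph{Step 1 (sign of the integrand).} On $[a,b]\subseteq(\tau_1,\tau_2)$ the denominator of $h$ is negative, so $\operatorname{sgn}h(\theta)=-\operatorname{sgn}\theta$. Every $\theta\in[a,b]$ satisfies $s\le\theta\le r$, so $(s-\theta)\le0\le(r-\theta)$. Hence $\operatorname{sgn}g(\theta)=\operatorname{sgn}\theta$, which mirrors $u_r$ weighted by the posterior. For $r\in(0,\bar r_1(0))$ the interval $P_2(r,s(r))$ is non-degenerate (\Cref{lem:operator-competition-geometry}). Since $g\neq0$ off $\{0,s,r\}$ and $G=0$, we must have $a<0<b$.

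\emph{Step 2 (boundary terms).} Only $a=s$ makes $a$ depend on $s$ (with $\partial_s a=1$), and there $g(s)=0$ because of the factor $(s-\theta)$. So the lower-bound term vanishes. The upper bound depends on $s$ only when $b=\underline r_2^{-1}(s)$. In that case $\partial_s b>0$ by \eqref{eq:r2-inv}, and $g(b)\ge0$ because $b>0$. So the upper-bound term is nonnegative.

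\emph{Step 3 (interior term, the main step).} For $\theta\in(a,b]$ with $\theta\neq s$, rewrite $h(\theta)(r-\theta)=g(\theta)\,c(\theta)$ with $c(\theta)\coloneqq -1/(\theta-s)$. The function $c$ is strictly increasing on $(s,\infty)$. At a possible endpoint $\theta=s$ the product stays bounded, so integrability is not an issue. Using $\int_a^b g=G(r,s(r))=0$, we obtain
\[
\int_a^b h(\theta)(r-\theta)\,d\theta=\int_a^b g(\theta)\bigl[c(\theta)-c(0)\bigr]\,d\theta .
\]
For $\theta>0$ both factors in the integrand are positive. For $\theta<0$ both are negative. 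The integrand is therefore nonnegative, and it is strictly positive on a set of positive measure because $a<0<b$ and $f$ has full support. Adding the nonnegative boundary terms gives $G_s(r,s(r))>0$.

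The main obstacle is noticing that the interior integral has no sign a priori. Positivity appears only after subtracting $c(0)\,G=0$, a single-crossing or Chebyshev-type argument exploiting that $g$ changes sign exactly at $0$. I would also take care with the kink where $\underline r_2^{-1}(s)=r$: there both one-sided derivatives must be checked, but each is covered by Steps 2 and 3.

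\emph{Endpoint.} At $r=\bar r_1(0)$, \Cref{lem:operator-competition-geometry} implies $s(\bar r_1(0))=\underline r_2(0)$, and the feasible interval collapses to $\{0\}$. The interior integral is then zero. Every boundary term is evaluated at $\theta=0$, where $h(0)=0$, so $g=0$. Hence the one-sided derivative $G_s$ equals zero.
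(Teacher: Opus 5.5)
Your proof is correct and follows essentially the same route as the paper. The paper also uses Leibniz's rule, kills the lower-bound term with the factor $(s-\theta)$, gets a nonnegative upper-bound term from \eqref{eq:r2-inv}, and signs the interior integral using the swing equation $G=0$. Since $g(\theta)\,[c(\theta)-c(0)]=\theta h(\theta)(r-\theta)/s$, your Chebyshev-type subtraction is algebraically identical to the paper's rearrangement $\int_a^b h(\theta)(r-\theta)\,d\theta=\frac{1}{s}\int_a^b \theta h(\theta)(r-\theta)\,d\theta$.

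The only difference is at the endpoint. The paper obtains $s(\bar r_1(0))=\underline r_2(0)$ from the swing function being a strictly decreasing involution of $[\underline r_2(0),\bar r_1(0)]$. \Cref{lem:operator-competition-geometry} alone does not give this; it does only when combined with your Step~1 sign argument.
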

	

	
	We begin with sender~2's misreporting cost. The parameter $k_2$ enters the swing equation only through sender~2's inverse reach. Its effect depends on whether that reach constrains the states consistent with the observed reports. The next proposition states the resulting comparative static.
	
	\begin{proposition}
\label{prop:cost-effect}
Fix $\tau_2>0$ and $r\in(0,\bar r_1(0)]$. The adversarial-equilibrium swing report $s_{k_2}(r)$ is weakly increasing in $k_2$. At every differentiability point,
\[
\frac{\partial s_{k_2}(r)}{\partial k_2}=0 \quad\text{if }\underline r_2^{-1}(s_{k_2}(r))>r,
\]
and
\[
\frac{\partial s_{k_2}(r)}{\partial k_2}>0 \quad\text{if }\underline r_2^{-1}(s_{k_2}(r))<r.
\]
At a boundary-switching point, the corresponding one-sided derivatives are weakly non-negative whenever they exist.
\end{proposition}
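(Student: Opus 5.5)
The plan is to apply the implicit function theorem to the swing equation $G(r,s_{k_2}(r)\mid\tau_2,k_2)=0$ from \eqref{eq:G}, using \Cref{prop:Gs-positive} for the sign of the denominator. I will then supplement this with a direct, derivative-free monotonicity argument, so that the weak monotonicity claim also holds across boundary-switching points.

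\textbf{Step 1: where $k_2$ enters.} In \eqref{eq:G}, $k_2$ appears only through the upper limit $\bar\theta(s)\coloneqq\min\{r,\underline r_2^{-1}(s)\}$. The integrand and the lower limit $\max\{s,\bar r_1^{-1}(r)\}$ do not depend on $k_2$.

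If $\underline r_2^{-1}(s_{k_2}(r))>r$, then by continuity the upper limit equals $r$ in a neighbourhood of $(s_{k_2}(r),k_2)$. Hence $G_{k_2}=0$ there, and the implicit function theorem gives $\partial s_{k_2}(r)/\partial k_2=0$.

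If instead $\underline r_2^{-1}(s_{k_2}(r))<r$, then locally $\bar\theta=\underline r_2^{-1}(s)$. Leibniz's rule gives
\[
G_{k_2}=g\bigl(\underline r_2^{-1}(s)\bigr)\,\frac{\partial \underline r_2^{-1}(s)}{\partial k_2},
\]
where $g$ denotes the integrand.

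\textbf{Step 2: signing the two factors.} Write $x\coloneqq\tau_2-s>0$. By \eqref{eq:r2-inv},
\[
\underline r_2^{-1}(s)-s=\frac{\sqrt{1+4k_2x}-1}{2k_2}=\frac{2x}{\sqrt{1+4k_2x}+1},
\]
which is strictly decreasing in $k_2$. So the inverse-reach factor is negative.

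For the integrand at $\theta=\bar\theta$, I will argue that $0<\bar\theta<\min\{r,\tau_2\}$. The interval of states consistent with a swing pair must contain states on both sides of $0$, since otherwise the receiver could not be indifferent. Also $\bar\theta<\tau_2$ because it lies in the domain of sender~2's reach. Then the factors of $g(\bar\theta)$ have the following signs:
\begin{itemize}
\item $\bar\theta f(\bar\theta)>0$;
\item $s-\bar\theta<0$;
\item $r-\bar\theta>0$;
\item $\bar\theta-\tau_1>0$;
\item $\bar\theta-\tau_2<0$.
\end{itemize}
Hence $g(\bar\theta)>0$ and $G_{k_2}<0$. Combined with $G_s>0$ from \Cref{prop:Gs-positive}, this gives $\partial s_{k_2}(r)/\partial k_2=-G_{k_2}/G_s>0$.

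At the endpoint $r=\bar r_1(0)$, where $G_s=0$, I will use the explicit identity $s(\bar r_1(0))=\underline r_2(0)=-\sqrt{\tau_2/k_2}$. This is strictly increasing in $k_2$.

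\textbf{Step 3: global weak monotonicity.} The derivative computation alone does not cover kinks, so I will prove monotonicity directly.

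Fix $s$ and take $k_2'>k_2$. The upper limit weakly falls, since $\underline r_2^{-1}(s)$ decreases. The states removed from the integration interval are positive states below $\tau_2$ and below $r$. By the sign argument of Step 2, the integrand is positive on them. Hence
\[
G(r,s\mid\tau_2,k_2')\le G(r,s\mid\tau_2,k_2)\quad\text{for every } s.
\]
In particular $G(r,s_{k_2}(r)\mid\tau_2,k_2')\le0$.

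Since $G(r,\cdot\mid\tau_2,k_2')$ is increasing through its root (\Cref{prop:Gs-positive}, together with uniqueness of the adversarial swing function from \Cref{prop:AE-characterization}), its root satisfies $s_{k_2'}(r)\ge s_{k_2}(r)$. The one-sided derivatives at boundary-switching points are then limits of non-negative difference quotients, so they are weakly non-negative whenever they exist.

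\textbf{Main obstacle.} The delicate step is justifying that $G(r,\cdot\mid\tau_2,k_2')$ crosses zero only once, from below, on the relevant range of $s$. \Cref{prop:Gs-positive} gives positivity of $G_s$ only at roots. My first attempt will be the standard argument that a $C^1$ function whose derivative is positive at every zero can cross zero at most once, and only upward; boundary-switching points are handled through one-sided derivatives. If that fails, I would fall back on continuity of $s_{k_2}(r)$ in $k_2$ and integrate the almost-everywhere non-negative derivative.
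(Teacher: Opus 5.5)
Your route is the paper's. In both, $k_2$ enters $G$ only through the upper limit $\min\{r,\underline r_2^{-1}(s)\}$, and a binding inverse reach lies in $(0,r)$, where the integrand is positive. Implicit differentiation with \Cref{prop:Gs-positive} then gives the local signs, and the endpoint is handled through $s_{k_2}(\bar r_1(0))=-\sqrt{\tau_2/k_2}$. Global weak monotonicity comes from comparing $G$ across $k_2$ at a fixed $s$ and using the single upward crossing at the root.

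\textbf{A case is missing in Step~3.} Your claim that $G(r,s\mid\tau_2,k_2')\le G(r,s\mid\tau_2,k_2)$ for every $s$ assumes the removed states are positive. That holds only when the new inverse reach $\underline r_2^{-1}(s\mid k_2')$ is non-negative, that is, when $s\ge\underline r_2(0\mid k_2')$.

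Since $\underline r_2(0\mid k_2)=-\sqrt{\tau_2/k_2}$ rises with $k_2$, the old root $s_{k_2}(r)$ can fall in $[\underline r_2(0\mid k_2),\underline r_2(0\mid k_2'))$. This happens for $r$ close to $\bar r_1(0)$. In that range, moving from $k_2$ to $k_2'$ also removes the states $(\underline r_2^{-1}(s\mid k_2'),0)$, where the integrand is negative. The pointwise inequality can then fail, and it does fail for $s$ just above $\underline r_2(0\mid k_2)$. So your justification of $G(r,s_{k_2}(r)\mid\tau_2,k_2')\le 0$ does not cover this case.

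The paper treats this case separately. If the old root lies below the admissible interval for $k_2'$, then $s_{k_2'}(r)\ge\underline r_2(0\mid k_2')>s_{k_2}(r)$ immediately. Alternatively, $G(r,s_{k_2}(r)\mid\tau_2,k_2')$ then integrates only over negative states and so is still $\le 0$. With this case added, your argument is complete.

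On the obstacle you flagged: the proof of \Cref{prop:Gs-positive} uses only $G(r,s)=0$ and $s<0$. It therefore gives $G_s>0$ at every zero in the relevant range, and your "at most one upward crossing" argument goes through. The paper instead invokes uniqueness of the swing root directly.
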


	
	The cost effect is a boundary one. When sender~2's inverse reach is non-binding, a marginal change in $k_2$ leaves the swing report unchanged. When it binds, a higher misreporting cost raises the swing report. Bias operates differently. The parameter $\tau_2$ enters the indifference condition directly through the integrand and can affect the swing report even when sender~2's inverse reach is non-binding. The next proposition signs this interior effect.
	
	\begin{proposition}\label{prop:bias-effect}
		Fix $r\in(0,\bar r_1(0))$. Suppose that sender~$2$'s inverse reach is locally non-binding at the swing report, namely $\underline r_2^{-1}(s(r))>r$. Then, at every differentiability point of the swing root,
		\[
		\frac{\partial s(r)}{\partial \tau_2}>0.
		\]
	\end{proposition}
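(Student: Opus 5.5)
The plan is to apply the implicit function theorem exactly as set up before \Cref{prop:Gs-positive}:
\[
\frac{\partial s(r)}{\partial\tau_2}=-\frac{G_{\tau_2}(r,s(r))}{G_s(r,s(r))}.
\]
Since $r\in(0,\bar r_1(0))$, \Cref{prop:Gs-positive} gives $G_s(r,s(r))>0$. The whole task therefore reduces to showing that $G_{\tau_2}(r,s(r))<0$.

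\textbf{Step 1: differentiate under the integral sign.} By hypothesis $\underline r_2^{-1}(s(r))>r$, so locally the upper limit of integration in \eqref{eq:G} equals $r$. The lower limit is $\max\{s,\bar r_1^{-1}(r)\}$. Neither limit depends on $\tau_2$; only the inverse reach of sender~2 does, and it is not active here. Hence no boundary terms arise, and $\tau_2$ enters only through the factor $1/(\theta-\tau_2)$. Write
\[
h(\theta)\coloneqq\frac{4\theta f(\theta)(s-\theta)(r-\theta)}{(\theta-\tau_1)(\theta-\tau_2)}.
\]
Using $\partial_{\tau_2}(\theta-\tau_2)^{-1}=(\theta-\tau_2)^{-2}$, we obtain
\[
G_{\tau_2}=\int h(\theta)\,\frac{1}{\theta-\tau_2}\,d\theta=-\int h(\theta)\,w(\theta)\,d\theta,
\qquad w(\theta)\coloneqq\frac{1}{\tau_2-\theta}.
\]

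\textbf{Step 2: sign the integrand.} On the integration interval $[\underline\theta,r]$, where $\underline\theta=\max\{s,\bar r_1^{-1}(r)\}$, the factors behave as follows:
\begin{itemize}[label={}]
\item $s-\theta\le 0$ and $r-\theta\ge 0$;
\item $\theta-\tau_1>0$, because $\bar r_1^{-1}(r)\ge\tau_1$;
\item $\theta-\tau_2<0$, because states in $P_2$ lie below $\underline r_2^{-1}(s)\le\tau_2$, which is also what makes $w$ finite.
\end{itemize}
Hence $h(\theta)$ has the sign of $\theta$. Moreover, the interval straddles zero strictly. Indeed $\underline\theta<0$, since $s<0$ and $\bar r_1^{-1}(r)<0$ for $r<\bar r_1(0)$. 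Also the upper limit is $r>0$. So $h$ is strictly negative on a nondegenerate set to the left of $0$ and strictly positive on a nondegenerate set to the right.

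\textbf{Step 3: a weighting argument.} The swing equation gives $\int h=0$. The weight $w$ is positive and strictly increasing on the interval. Therefore $h(\theta)\bigl(w(\theta)-w(0)\bigr)>0$ for every $\theta\neq0$: both factors are negative when $\theta<0$, and both are positive when $\theta>0$. It follows that
\[
\int h\,w=\int h\,\bigl(w-w(0)\bigr)+w(0)\int h>0.
\]
So $G_{\tau_2}<0$, and therefore $\partial s/\partial\tau_2=-G_{\tau_2}/G_s>0$.

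The main care point is Step 2. One must check that every state in the integration range satisfies $\tau_1<\theta<\tau_2$ and that $0$ is interior to the range, since these facts give the sign pattern and the strictness. The differentiability-point qualification in the statement covers the case where the lower limit switches between $s$ and $\bar r_1^{-1}(r)$. That switch does not affect the argument, because neither expression depends on $\tau_2$.
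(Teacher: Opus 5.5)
Your proof is correct and follows the paper's route. You use implicit differentiation with $G_s>0$ from \Cref{prop:Gs-positive}, note that there is no boundary term under the non-binding condition, and sign $G_{\tau_2}$ using the swing identity $G=0$. Your monotone-weight step is the paper's identity in different form: since $w(\theta)-w(0)=\theta/[\tau_2(\tau_2-\theta)]$, the relation $\int h\,w=\int h\,(w-w(0))$ reproduces exactly the paper's $\tau_2 G_{\tau_2}=\int 4\theta^2 f(\theta)(s-\theta)(r-\theta)/[(\theta-\tau_1)(\theta-\tau_2)^2]\,d\theta<0$.
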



	
	\begin{figure}
		\centering
		\includegraphics[width=0.8\linewidth]{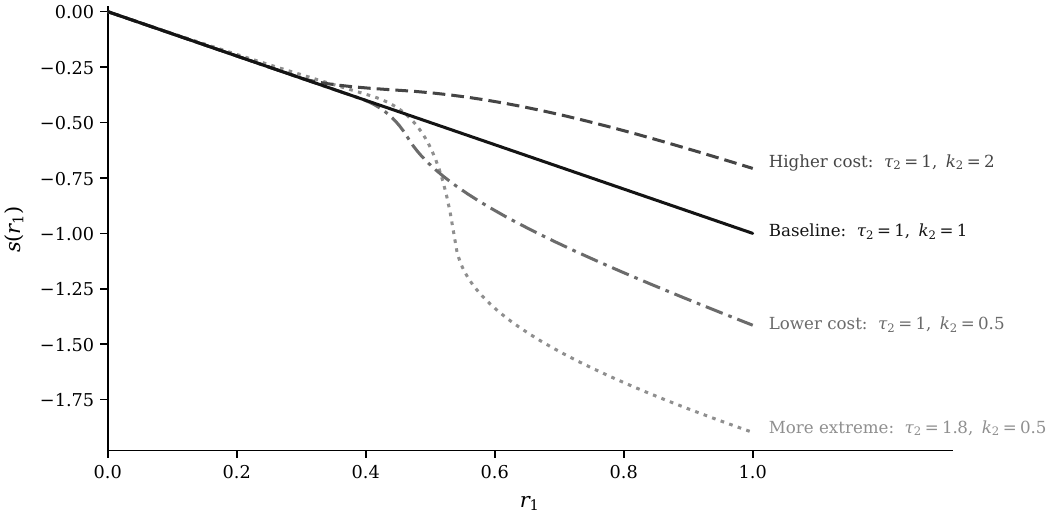}
		\caption{Swing report functions in the LQ class. The figure assumes a standard normal prior for $\theta$, and $(\tau_1,k_1)=(-1,1)$. It plots $s(r_1)$ for four competitive environments.}
		\label{fig:swing}
	\end{figure}
	

\Cref{prop:bias-effect} is a local result about the receiver's indifference condition. It does not imply that receiver welfare is globally increasing in $\tau_2$, because welfare also depends on the truthful cutoffs, equilibrium mixing, and mistake probabilities.

Figure~\ref{fig:swing} illustrates these comparative statics. In the symmetric benchmark, $s(r_1)=-r_1$. Increasing $k_2$ moves the swing function upward where sender~2's inverse reach binds and leaves it locally unchanged where that reach is non-binding. When sender~2 is more biased and faces cheaper misreporting costs, the change is nonuniform. Indeed, the swing function lies above the benchmark near the origin, but below it for stronger reports by sender~1. Thus, changes in sender~2's primitives need not shift the receiver's burden of proof uniformly across report pairs.

The comparative statics are local and may be piecewise. They are nevertheless useful for welfare analysis when evaluated at the report pair that governs the comparison with selected monopoly. We turn to that application next.
	

	\subsubsection{Welfare implications of the LQ comparative statics}\label{sec:more_bias_cost}
	
	The symmetric benchmark in \Cref{thm:symmetric} is a primitive environment in which competition strictly improves upon selected monopoly. To connect the LQ comparative statics to that result, evaluate the swing function at sender~1's monopoly pooling report $r^M$. The relevant object is
\[
s_{\tau_2,k_2}(r^M).
\]
In the symmetric benchmark, $s_{-\tau_1,k_1}(r^M)=\ell^M$. As a result, any primitive change that preserves $s_{\tau_2,k_2}(r^M)\geq\ell^M$ also keeps the competitive left cutoff weakly to the right of the monopoly pooling cutoff. The lower-bound argument in \Cref{prop:bounds} yields
\[
W_r^{\DE}(\tau_2,k_2)>W_r^{\M}(0).
\]

The cost comparative static provides an immediate application. Holding sender~2's bias at its symmetric value, a higher $k_2$ weakly raises the comparison-relevant swing report. The welfare ranking from the symmetric benchmark continues to hold.
	
	\begin{corollary}\label{cor:more-costly}
		Suppose the LQ specification holds. If $\widehat\tau_2=-\tau_1$ and $\widehat k_2\geq k_1$, then $W_r^{\DE}(-\tau_1,\widehat k_2)>W_r^{\M}(0)$.
	\end{corollary}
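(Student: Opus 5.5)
The argument chains the symmetric benchmark, the cost comparative static of \Cref{prop:cost-effect}, and the lower-bound argument behind \Cref{lem:cutoff-sufficient-condition}. Under the LQ specification with $\widehat\tau_2=-\tau_1$ and $k_2=k_1$, the primitives satisfy the symmetry conditions of \Cref{thm:symmetric}: $u_2(\theta)=\theta+\tau_1=-u_1(-\theta)$, and $k_1(r-\theta)^2=k_1(-r+\theta)^2$. That result gives $\theta_L(-\tau_1,k_1)=\ell^M$. The symmetric swing function is $s(r)=-r$, so $s_{-\tau_1,k_1}(r^M)=-r^M$.

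To connect this to the monopoly pooling cutoff, I would use the characterization of the truthful cutoffs, $s(\bar r_1(\theta_L))=\theta_L$, together with $\bar r_1(\ell^M)=r^M$. This gives $s_{-\tau_1,k_1}(r^M)=\ell^M$. Now raise $k_2$ from $k_1$ to $\widehat k_2$, holding $\tau_2=-\tau_1$ fixed. By \Cref{prop:cost-effect}, $s_{k_2}(r^M)$ is weakly increasing in $k_2$; where the function has kinks I would argue through one-sided derivatives and continuity, as that proposition allows. Hence $s_{\widehat k_2}(r^M)\geq\ell^M$. One caveat needs checking: $r^M$ must lie in the domain $(0,\bar r_1(0)]$ of the swing function. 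This holds because $\ell^M<0$ and $\bar r_1$ is increasing, so $r^M=\bar r_1(\ell^M)<\bar r_1(0)$; also $r^M>0$.

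The main obstacle is turning $s_{\widehat k_2}(r^M)\geq\ell^M$ into $\theta_L(\widehat k_2)\geq\ell^M$. I would work with $g(\theta)=s_{\widehat k_2}(\bar r_1(\theta))-\theta$. Because $s$ is strictly decreasing and $\bar r_1$ is strictly increasing, $g$ is strictly decreasing. The cutoff $\theta_L$ is the zero of $g$. Since $g(\ell^M)=s_{\widehat k_2}(r^M)-\ell^M\geq0$, the zero satisfies $\theta_L\geq\ell^M$. The subtle point is that the cutoff characterization requires $\bar r_1(\theta_L)$ to lie in the domain of $s$. This holds for $\theta\in[\bar r_1^{-1}(\underline r_2(0)),0]$, and $g(0)=s(\bar r_1(0))<0$ because the swing function maps the endpoint $\bar r_1(0)$ to $\underline r_2(0)<0$. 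Together these guarantee that the zero exists in that range.

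With $\theta_L\geq\ell^M$ established, I would invoke \Cref{lem:cutoff-sufficient-condition}. Its second hypothesis requires $\alpha_1>0$ on a positive-measure subset of $(\theta_L,0)$ when $\theta_L<0$. I would justify this from the strategy representation in \Cref{prop:AE-characterization}, where truthful atoms are part of the mixing in the adversarial equilibrium (as in \citet{Vaccari2023JET}). This is the step I would flag as relying on the external characterization. The lemma then gives $W_r^{\DE}(-\tau_1,\widehat k_2)>W_r^{\M}(0)$.
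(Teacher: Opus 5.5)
Your proposal is correct and follows the paper's route. \Cref{thm:symmetric} gives $s_{-\tau_1,k_1}(r^M)=\ell^M$, and \Cref{prop:cost-effect} lifts this to $s_{-\tau_1,\widehat k_2}(r^M)\geq\ell^M$. The strictly decreasing map $\theta\mapsto s(\bar r_1(\theta))-\theta$ (the paper's $\Lambda$) then turns this into $\widehat\theta_L\geq\ell^M$, and \Cref{lem:cutoff-sufficient-condition} concludes.

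One caveat, which you share with the paper: \Cref{thm:symmetric} also requires the symmetric prior $f(\theta)=f(-\theta)$, which the LQ specification does not supply, so it should be stated as an assumption rather than said to follow from the LQ primitives. Your explicit checks that $r^M\in(0,\bar r_1(0))$ and that $\alpha_1>0$ on a positive-measure subset of $(\widehat\theta_L,0)$ fill in steps the paper leaves implicit.
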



	The corollary extends the symmetric welfare ranking to every $\widehat k_2\geq k_1$ while maintaining symmetric biases. Extending the ranking through changes in sender~2's bias is less immediate. \Cref{prop:bias-effect} applies locally and requires sender~2's inverse reach to be non-binding. When that condition holds along the comparison, a higher $\tau_2$ cannot lower the swing report evaluated at $r^M$. We now study when the required non-binding condition is satisfied.

	
	\subsubsection{The non-binding inverse-reach condition}\label{subsec:interpretation_condition}
	
	For a report $r$ by sender~1, the \emph{non-binding inverse-reach} condition is
	\begin{equation}\label{eq:nbir_inverse}
		\underline r_2^{-1}\left(s(r)\right)\geq r.
	\end{equation}
	Since $\underline r_2(\cdot)$ is strictly increasing, this is equivalent to $s(r)\geq \underline r_2(r)$. The interpretation is simple. At the report pair $(r,s(r))$, sender~2 is not constrained by the boundary of his feasible misreporting set. The counter-report required to overturn sender~1's report is still affordable for sender~2.
	
	Condition~\eqref{eq:nbir_inverse} is not automatic. We separate two questions. First, holding the report pair fixed, how do sender~2's primitives affect his reach? Second, does the condition continue to hold once the swing report is allowed to adjust endogenously? The next lemma addresses the first question, and the proposition that follows addresses the second in an extreme LQ limit.
	
	\begin{lemma}\label{prop:condition_lq}
		Consider the LQ class. For every fixed $\theta<\tau_2$,
		\[
		\frac{\partial \underline r_2(\theta)}{\partial \tau_2}<0 < \frac{\partial \underline r_2(\theta)}{\partial k_2}.
		\]
		likewise, for every fixed report $r_2<\tau_2$ in the domain of $\underline r_2^{-1}$,
		\[
		\frac{\partial \underline r_2^{-1}(r_2)}{\partial \tau_2}>0 > \frac{\partial \underline r_2^{-1}(r_2)}{\partial k_2}.
		\]
		Hence, for any fixed report pair $(r,s(r))$, condition~\eqref{eq:nbir_inverse} becomes weakly easier to satisfy when sender~2 becomes more biased, meaning higher $\tau_2$, or cheaper to lie, meaning lower $k_2$.
	\end{lemma}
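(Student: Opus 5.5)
The plan is to derive an explicit formula for sender~2's reach in the LQ class and then differentiate it. Under \Cref{def:LQ}, with $\theta<\tau_2$, the reach equation $-u_2(\theta)=k_2C_2(r,\theta)$ becomes $\tau_2-\theta=k_2(r-\theta)^2$. The reach is the smallest root, so
\[
\underline r_2(\theta)=\theta-\sqrt{\frac{\tau_2-\theta}{k_2}}.
\]

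For fixed $\theta<\tau_2$ the square-root term is strictly positive, and the partial derivatives follow directly:
\[
\frac{\partial\underline r_2(\theta)}{\partial\tau_2}=-\frac{1}{2\sqrt{k_2(\tau_2-\theta)}}<0,
\qquad
\frac{\partial\underline r_2(\theta)}{\partial k_2}=\frac{1}{2}\sqrt{\tau_2-\theta}\,k_2^{-3/2}>0.
\]

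For the inverse, I will not differentiate \eqref{eq:r2-inv} term by term. Instead I use the identity $\underline r_2\bigl(\underline r_2^{-1}(r_2)\bigr)=r_2$ and apply implicit differentiation. Let $\theta^*=\underline r_2^{-1}(r_2)$ and let $x\in\{\tau_2,k_2\}$. Then
\[
\frac{\partial\underline r_2^{-1}(r_2)}{\partial x}=-\frac{\partial_x\underline r_2(\theta^*)}{\partial_\theta\underline r_2(\theta^*)}.
\]
The denominator is $1+\bigl(2\sqrt{k_2(\tau_2-\theta^*)}\bigr)^{-1}>0$, since the maintained assumption is that $\underline r_2$ is strictly increasing, and here this is verified explicitly. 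The inverse derivatives therefore have signs opposite to the forward ones: positive in $\tau_2$ and negative in $k_2$.

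Two points need care. First, $\theta^*<\tau_2$ must hold strictly, so that the derivatives are finite. This holds because $r_2<\tau_2$ and $r_2\le\theta^*$, with $\theta^*=\tau_2$ only if $r_2=\tau_2$. Second, the formula \eqref{eq:r2-inv} provides a consistency check. For example, the $\tau_2$-derivative is $(1+4k_2(\tau_2-r))^{-1/2}>0$.

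The final claim is about a fixed report pair $(r,s(r))$. Condition~\eqref{eq:nbir_inverse} requires $\underline r_2^{-1}(s(r))\ge r$. With the pair held fixed, the left side weakly rises when $\tau_2$ rises or $k_2$ falls, so the condition is preserved. Equivalently, $\underline r_2(r)$ falls, which makes $s(r)\ge\underline r_2(r)$ easier to satisfy.

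The only mild obstacle is domain bookkeeping. The report $r_2$ must stay in the image of $\underline r_2$ as the parameters change. I will handle this by noting that the square-root argument in \eqref{eq:r2-inv} stays positive for $r_2<\tau_2$ under any perturbation.
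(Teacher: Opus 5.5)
Your proposal is correct and follows essentially the same route as the paper: you derive the explicit LQ reach $\underline r_2(\theta)=\theta-\sqrt{(\tau_2-\theta)/k_2}$, differentiate it in $\tau_2$ and $k_2$, and obtain the inverse-reach signs by implicitly differentiating $\underline r_2\bigl(\underline r_2^{-1}(r_2)\bigr)=r_2$ using $\partial_\theta\underline r_2>0$. Your extra checks are harmless additions: the strict inequality $\theta^*<\tau_2$ (which the paper makes explicit only in the LC version) and the direct $\tau_2$-derivative of \eqref{eq:r2-inv}.
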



	Holding the report pair fixed, a higher bias or a lower misreporting cost expands sender~2's reach and makes the non-binding condition easier to satisfy. This observation is not sufficient by itself because the swing report also changes with the primitives. The next proposition accounts for that endogenous response. It shows that the non-binding condition eventually holds in the LQ class when sender~2 becomes arbitrarily biased and arbitrarily cheap to misreport.
	
	\begin{proposition}
		\label{prop:condition_inverse}
		Consider the LQ class. Fix any sequence of competitive environments $(\tau_2^n,k_2^n)_{n\geq1}$	such that $\tau_2^n\to+\infty$ and $k_2^n\to0^+$. For each $n$, let
		\[
		s_n\coloneqq s(r^M\mid\tau_2^n,k_2^n)
		\]
		be the comparison-relevant swing report. Then, for all sufficiently large $n$, we obtain
		\[
		\underline r_2^{-1}\left(s_n\mid\tau_2^n,k_2^n\right)\geq r^M.
		\]
	\end{proposition}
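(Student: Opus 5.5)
The plan is to argue by contradiction. The idea is to show that if sender~2's inverse reach binds at the comparison-relevant swing pair, then the receiver's indifference condition $G(r^M,s_n\mid\tau_2^n,k_2^n)=0$ cannot hold for large $n$.

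\emph{Step 1: restate the condition.} By the equivalence noted after \eqref{eq:nbir_inverse}, the target inequality is equivalent to $s_n\geq \underline r_2(r^M\mid\tau_2^n,k_2^n)$. In the LQ class, $\underline r_2(\theta)=\theta-\sqrt{(\tau_2-\theta)/k_2}$. Because $r^M$ is fixed and $\tau_2^n\to\infty$, $k_2^n\to0^+$, we get $\underline r_2(r^M\mid\tau_2^n,k_2^n)\to-\infty$. Suppose, along a subsequence, that $b_n\coloneqq\underline r_2^{-1}(s_n)<r^M$. Then $s_n<\underline r_2(r^M)\to-\infty$, so eventually $s_n<\ell^M=\bar r_1^{-1}(r^M)$. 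The integration interval in \eqref{eq:G} is therefore $[\ell^M,b_n]$. Since $s_n\geq\underline r_2(0)$ by definition of the swing domain and $\underline r_2^{-1}$ is increasing, we also have $b_n\in[0,r^M)$.

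\emph{Step 2: normalize the indifference condition.} On $[\ell^M,b_n]$, write the integrand of \eqref{eq:G} as
\[
4\theta f(\theta)\,\frac{r^M-\theta}{\theta-\tau_1}\cdot\frac{\theta-s_n}{\tau_2^n-\theta}.
\]
This uses $(s-\theta)(r-\theta)/(\theta-\tau_2)=(\theta-s)(r-\theta)/(\tau_2-\theta)$. Next, divide by the positive constant $(-s_n)/\tau_2^n$. Since $s_n\to-\infty$, $\tau_2^n\to\infty$, and $\theta$ ranges over the fixed compact set $[\ell^M,r^M]$, the factor $\frac{(\theta-s_n)/(-s_n)}{(\tau_2^n-\theta)/\tau_2^n}$ converges to $1$ uniformly in $\theta$. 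Hence the normalized condition reads $I(b_n)+o(1)=0$, with the $o(1)$ uniform in $b_n\in[0,r^M]$, where
\[
I(b)\coloneqq\int_{\ell^M}^{b}\theta f(\theta)\,\frac{r^M-\theta}{\theta-\tau_1}\,d\theta .
\]

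\emph{Step 3: show $I<0$ on $[0,r^M]$ with a uniform margin.} This is the main step. The weight $\omega(\theta)=(r^M-\theta)/(\theta-\tau_1)$ is positive and strictly decreasing on $[\ell^M,r^M)$, and $\theta f(\theta)$ changes sign once, at $0$. By the defining indifference $\int_{\ell^M}^{r^M}\theta f\,d\theta=0$ of $E_0$ in the LQ case, a standard single-crossing argument gives
\[
I(r^M)=\int_{\ell^M}^{r^M}\theta f(\theta)\bigl(\omega(\theta)-\omega(0)\bigr)d\theta<0,
\]
because both parts of the integrand are negative off $\theta=0$. Moreover, $I(0)<0$ trivially, and $I$ is increasing on $[0,r^M]$. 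Hence $\max_{b\in[0,r^M]}I(b)=I(r^M)<0$. This strict margin contradicts $I(b_n)+o(1)=0$ for large $n$.

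Therefore, for all sufficiently large $n$, the inverse reach is non-binding: $\underline r_2^{-1}(s_n\mid\tau_2^n,k_2^n)\geq r^M$. The delicate points are the uniformity of the normalization in Step~2 and the need for $b_n\geq0$, which rules out the degenerate case $b\approx\ell^M$ where $I\approx0$. I would also check the boundary case $b_n=r^M$ separately using continuity.
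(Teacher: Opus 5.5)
Your proof is correct. It shares the paper's key computation: after normalization, the swing integral converges to a positive multiple of $\int_{\ell^M}^{r^M}\theta f(\theta)\frac{r^M-\theta}{\theta-\tau_1}\,d\theta$. That integral is negative by the monopoly indifference condition and the strictly decreasing weight.

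Where you differ is in how this sign is turned into a statement about the root. The paper evaluates $G(r^M,\cdot)$ at the fixed candidate $s_n^0=\underline r_2(r^M\mid\tau_2^n,k_2^n)$, where the integration interval is exactly $[\ell^M,r^M]$, and shows $G(r^M,s_n^0)<0$. It then uses uniqueness of the zero together with the positive crossing from \Cref{prop:Gs-positive} to place $s_n$ above $s_n^0$.

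You instead evaluate the indifference condition at the actual root $s_n$ and argue by contradiction. This forces you to control $I(b)$ uniformly over the unknown upper endpoint $b_n$, which you handle cleanly: $b_n\geq 0$ follows from $s_n\geq\underline r_2(0)$, and $I$ is nondecreasing on $[0,r^M]$, so $\max_{b\in[0,r^M]}I(b)=I(r^M)<0$.

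The trade-off is this. The paper's route needs only one fixed interval and no uniformity in the endpoint, but it relies on the global sign structure of $G(r^M,\cdot)$. Your route uses only that $s_n$ is a zero lying in the swing domain, so it does not depend on uniqueness of the root or on $G_s>0$.

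Finally, your uniform bound already covers $b=r^M$. You can therefore run the contradiction under $b_n\leq r^M$ and obtain the strict inequality $\underline r_2^{-1}(s_n\mid\tau_2^n,k_2^n)>r^M$, as the paper does. The separate boundary check you mention is unnecessary.
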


	
\Cref{prop:condition_inverse} establishes that the non-binding condition eventually holds along any sequence for which sender~2's bias diverges and his misreporting cost converges to zero. As sender~2 becomes more extreme and less constrained by misreporting costs, the equilibrium mistake probabilities may also change.

The LQ analysis clarifies how sender~2's primitives affect the receiver's burden of proof, but it does not imply that receiver welfare is globally monotone in either bias or cost. The paper's main welfare case for competition remains the high-$k_2$ result, under which competitive mistakes vanish.
	


	\section{Total welfare}
	\label{sec:total-welfare}

	The analysis so far has evaluated market structures from the receiver's perspective. We now consider total welfare, defined as the sum of the receiver's payoff and both senders' action payoffs, net of communication costs. We hold the set of players fixed across market structures. Sender~2's action payoff is counted under both monopoly and competition. Under monopoly, however, sender~2 is muted and incurs no communication cost. Under competition, both senders communicate and may incur such costs.

For a given market configuration, let $q(\theta)$ denote the probability that the receiver chooses $a^+$ in state $\theta$, and let $\mathcal C_j(\theta)$ denote sender~$j$'s expected communication cost in that state, including the cost parameter $k_j$. The comparison has two components: $(i)$ the action rule induced by each market structure, and $(ii)$ the communication costs required to sustain it.
	
	In the monopolistic environment, sender~1 is active and sender~2 is muted. Hence, total welfare in the least informative monopoly equilibrium is
	\[
	\mathcal W_{\mathrm{tot}}^{\M}(0) = \int_\Theta \left[u_r(\theta)+u_1(\theta)+u_2(\theta)\right] q^{\M}_0(\theta)f(\theta)\,d\theta - \int_\Theta \mathcal C_1^{\M,0}(\theta)f(\theta)\,d\theta,
	\]
	where $q^{\M}_0$ is the probability of action $a^+$ in the least informative monopoly equilibrium and $\mathcal C_1^{\M,0}$ is sender~1's expected communication cost in that equilibrium.
	
	In the competitive environment, both senders are active. Total welfare in the adversarial equilibrium is
	\[
	\mathcal W_{\mathrm{tot}}^{\DE}(\tau_2,k_2) = \int_\Theta \left[u_r(\theta)+u_1(\theta)+u_2(\theta)\right] q^{\DE}(\theta\mid \tau_2,k_2)f(\theta)\,d\theta - \sum_{j=1}^2 \int_\Theta \mathcal C_j^{\DE}(\theta\mid \tau_2,k_2)f(\theta)\,d\theta.
	\]
	Because the same three players are counted in both environments, any welfare difference comes from the induced action rule and communication costs, not from mechanically adding sender~2's payoff under competition.
	
	The receiver-full-information total-welfare benchmark is
	\[
	\mathcal W_{\mathrm{tot}}^{\FI} = \int_{\{\theta \,\mid\,u_r(\theta)\geq0\}} \left[u_r(\theta)+u_1(\theta)+u_2(\theta)\right]f(\theta)\,d\theta.
	\]
	This is the total welfare generated by the receiver's full-information decision rule. It is not necessarily the maximum of total welfare over all decision rules, because the receiver's cutoff need not coincide with the cutoff that maximizes the sum of players' action payoffs.
	
	The distinction between receiver welfare and total welfare is captured by the aggregate action-payoff difference
\[
S(\theta) \coloneqq u_r(\theta)+u_1(\theta)+u_2(\theta).
\]
The receiver's full-information decision is governed by the sign of $u_r(\theta)$, whereas the action that maximizes total payoffs is governed by the sign of $S(\theta)$. Hence, if $S(\theta)>0$ in some states in which $u_r(\theta)<0$, persuasion toward $a^+$ can harm the receiver while improving the allocation of actions.

Misreporting costs do not create this wedge. As the cheap-talk benchmark illustrates, the same conflict can arise even when communication is costless. In the present model, misreporting costs instead determine whether a distortion can be sustained and whether its action-payoff benefit survives after the communication costs of the active senders are taken into account.

These forces allow either market structure to maximize total welfare. When $S(\theta)$ is positive on some negative states, the distortion induced by monopoly may raise total welfare despite reducing receiver welfare. When $S(\theta)$ is negative throughout the negative states distorted by monopoly, that distortion is socially harmful, and sufficiently disciplined competition can improve both receiver welfare and total welfare. We begin by establishing the limiting level of total welfare under competition.
	
	\begin{lemma}
		\label{lem:total-welfare-high-k2}
		Consider the LQ class with fixed sender~1 primitives, fixed finite $\tau_2>0$, and fixed $k_1>0$. Suppose that $f$ has finite first moment and that the high-cost convergence conclusion of \Cref{lem:high-k2-convergence} holds. Then, $\mathcal W_{\mathrm{tot}}^{\DE}(\tau_2,k_2) \to \mathcal W_{\mathrm{tot}}^{\FI}$ as $k_2\to\infty$.
	\end{lemma}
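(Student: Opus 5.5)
Split total welfare into an action-payoff part and a communication-cost part, and prove each converges by dominated convergence, using the pointwise conclusions of \Cref{lem:high-k2-convergence}. In the LQ class $S(\theta)=3\theta-\tau_1-\tau_2$, so
\[
\mathcal W_{\mathrm{tot}}^{\DE}(\tau_2,k_2)-\mathcal W_{\mathrm{tot}}^{\FI}
=\int_\Theta S(\theta)\bigl[q^{\DE}(\theta\mid\tau_2,k_2)-\mathds{1}_{\{\theta\geq0\}}\bigr]f(\theta)\,d\theta
-\sum_{j=1}^2\int_\Theta \mathcal C_j^{\DE}(\theta\mid\tau_2,k_2)f(\theta)\,d\theta .
\]
It therefore suffices to show that both integrals on the right vanish as $k_2\to\infty$.

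For the action term, \Cref{lem:high-k2-convergence} gives $q^{\DE}(\theta\mid\tau_2,k_2)\to\mathds{1}_{\{\theta\geq0\}}$ for every $\theta\neq0$, which is $f$-almost everywhere. The integrand is bounded in absolute value by $|S(\theta)|f(\theta)\leq(3|\theta|+|\tau_1|+\tau_2)f(\theta)$. This bound is integrable by the finite-first-moment assumption and does not depend on $k_2$. Dominated convergence then sends the first integral to zero.

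For the cost term, \Cref{lem:high-k2-convergence} gives $\mathcal C_j^{\DE}(\theta\mid\tau_2,k_2)\to0$ pointwise for $\theta\neq0$. What remains is a $k_2$-independent integrable dominating function, and I expect this to be the main obstacle, since for finite $k_2$ sender~2's individual misreports may be large. I would use individual rationality of each sender's equilibrium strategy. Truthful reporting is always feasible, so no report in the support of a mixed strategy can cost more than the action gain it could secure, and hence $\mathcal C_j^{\DE}(\theta)\leq|u_j(\theta)|=|\theta-\tau_j|$. Outside the mixing interval $(\theta_L,\theta_H)$ both senders report truthfully, so the cost is zero there. Because $\theta_L>\tau_1$ and $\theta_H<\tau_2$ by \Cref{prop:AE-characterization}, this gives the $k_2$-uniform bound
\[
\mathcal C_j^{\DE}(\theta\mid\tau_2,k_2)\leq |\theta-\tau_j|\,\mathds{1}_{\{\tau_1<\theta<\tau_2\}}\leq \tau_2-\tau_1 .
\]
This bound is integrable against $f$, so dominated convergence finishes the cost term as well.

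If the individual-rationality bound needs justification at states where a sender's action payoff has the sign opposing misreporting, I would argue as follows. A report in the support that lowers the sender's payoff relative to truth-telling contradicts optimality. Hence on the support the cost is at most $\max\{u_j(\theta),-u_j(\theta),0\}$, which is still bounded by $|\theta-\tau_j|$. Combining the two limits yields $\mathcal W_{\mathrm{tot}}^{\DE}(\tau_2,k_2)\to\mathcal W_{\mathrm{tot}}^{\FI}$.
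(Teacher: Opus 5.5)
Your proof is correct and follows the paper's own argument. Both proofs split welfare into an action-payoff term, handled by dominated convergence with dominating function $|S|f$, and a cost term, dominated through the individual-rationality bound $\mathcal C_j^{\DE}(\theta)\leq|u_j(\theta)|$, which comes from the feasibility of costless truthful reporting. Your extra observation that costs vanish outside $(\theta_L,\theta_H)\subseteq(\tau_1,\tau_2)$ yields a bounded dominating function, so the finite-first-moment assumption is needed only for the action term, a minor sharpening of the paper's bound.
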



	The lemma reduces the large-$k_2$ comparison to the ranking between the selected monopoly outcome and the receiver's full-information rule. If the selected monopoly outcome generates more total welfare than that rule, it must eventually dominate competition as $k_2$ increases. The next proposition provides conditions under which this ranking holds in the LQ class.
	
	\begin{proposition}
		\label{thm:total-welfare-monopoly-dominates}
		Consider the LQ class of \Cref{def:LQ}, with $\tau_1<0<\tau_2$. Suppose that the density $f$ is continuous at $0$, satisfies $f(0)>0$, has full support, and has finite first moment. Suppose also that the least informative monopoly equilibrium is well defined for the relevant values of $k_1$. If $-\tau_1>3\tau_2$, then there exists $\bar k_1<\infty$ such that $\mathcal W_{\mathrm{tot}}^{\M}(0)> \mathcal W_{\mathrm{tot}}^{\FI}$ for every $k_1\geq\bar k_1$.
	\end{proposition}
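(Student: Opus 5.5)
The plan is to compute the difference $\mathcal W_{\mathrm{tot}}^{\M}(0)-\mathcal W_{\mathrm{tot}}^{\FI}$ exactly, and then expand it to leading order in the size of the pooling interval as $k_1\to\infty$.

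\textbf{Step 1 (exact difference).} In the least informative monopoly equilibrium $q^{\M}_0(\theta)=\mathds 1_{\{\theta\geq \ell^M\}}$. Types in $(\ell^M,r^M)$ pool at $r^M$, and all other types report truthfully at zero cost. The receiver-full-information rule is $\mathds 1_{\{\theta\geq0\}}$. In the LQ class, $S(\theta)=3\theta-\tau_1-\tau_2$. Hence
\[
\mathcal W_{\mathrm{tot}}^{\M}(0)-\mathcal W_{\mathrm{tot}}^{\FI}=\int_{\ell^M}^{0}(3\theta-\tau_1-\tau_2)f(\theta)\,d\theta-k_1\int_{\ell^M}^{r^M}(r^M-\theta)^2f(\theta)\,d\theta .
\]
The finite first moment guarantees that both welfare levels are finite, so the difference is well defined. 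It therefore suffices to show that the right-hand side is strictly positive for all large $k_1$.

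\textbf{Step 2 (asymptotics of the pooling interval).} Let $d\coloneqq\sqrt{-\tau_1/k_1}$, so that $d\to0$ as $k_1\to\infty$. The pooling interval is pinned down by two conditions. The first is the reach condition
\[
r^M=\ell^M+\sqrt{(\ell^M-\tau_1)/k_1}.
\]
The second is receiver indifference, $\int_{\ell^M}^{r^M}\theta f(\theta)\,d\theta=0$.

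From the reach condition, $r^M-\ell^M=d\,(1+O(|\ell^M|))$, so $r^M-\ell^M\to0$. By indifference, $\ell^M<0<r^M$, so both endpoints shrink to $0$.

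Continuity of $f$ at $0$ with $f(0)>0$ gives $f=f(0)(1+o(1))$ uniformly on the interval. The indifference condition then becomes $(r^M)^2-(\ell^M)^2=o(d^2)$. Combined with $r^M-\ell^M\sim d$, this yields $r^M=d/2+o(d)$ and $\ell^M=-d/2+o(d)$.

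This step is the main obstacle. The $o(\cdot)$ terms must be controlled carefully, using continuity of $f$ only at $0$, and an implicit-function or squeeze argument on the two defining equations is needed. I would first prove $r^M-\ell^M\to0$ by contradiction using the reach bound, and then bootstrap to the first-order expansions.

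\textbf{Step 3 (leading-order comparison).} Substituting the expansions from Step 2:
\begin{itemize}
\item The action-payoff gain is $f(0)(-\tau_1-\tau_2)\,d/2+o(d)$. The $3\theta$ term contributes only $O(d^2)$.
\item The communication cost is $k_1f(0)\int_{-d/2}^{d/2}(d/2-\theta)^2\,d\theta\,(1+o(1))=k_1f(0)\,d^3/3+o(d)$. Using $k_1d^2=-\tau_1$, this equals $f(0)(-\tau_1)\,d/3+o(d)$.
\end{itemize}
The difference is therefore
\[
f(0)\,d\left[\frac{-\tau_1}{6}-\frac{\tau_2}{2}\right]+o(d),
\]
whose leading coefficient is strictly positive exactly when $-\tau_1>3\tau_2$. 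Hence the difference is strictly positive for all $d$ small enough, that is, for all $k_1\geq\bar k_1$ for some finite $\bar k_1$.
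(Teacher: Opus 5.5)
Your proposal is correct and follows essentially the same route as the paper: the same exact expression for $\mathcal W_{\mathrm{tot}}^{\M}(0)-\mathcal W_{\mathrm{tot}}^{\FI}$, the same asymptotics $r^M/d\to\tfrac12$, $\ell^M/d\to-\tfrac12$, $k_1d^2\to-\tau_1$, and the same leading coefficient $\tfrac{f(0)}{6}(-\tau_1-3\tau_2)$. The differences are cosmetic: the paper normalizes by $d_k=r^M-\ell^M$ rather than $\sqrt{-\tau_1/k_1}$, and the step you flag as the main obstacle is immediate, because $\ell^M<0$ gives $(r^M-\ell^M)^2=(\ell^M-\tau_1)/k_1\leq-\tau_1/k_1$ directly, so no contradiction argument is needed.
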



The proposition shows that monopoly's distortion can be socially valuable. When sender~1's preference for $a^+$ is sufficiently strong, the resulting gain in action payoffs more than compensates for sender~1's communication cost. Selected monopoly then yields strictly higher total welfare than the receiver's full-information rule.

This comparison also has implications for competition. By \Cref{lem:total-welfare-high-k2}, competitive total welfare converges to the full-information benchmark as $k_2$ increases. The strict welfare advantage identified in the proposition must persist relative to competition for all sufficiently large but finite values of $k_2$. The following corollary records this implication.

	
	\begin{corollary}
    \label{cor:finite-k2-total-welfare-dominance}
    Under the assumptions of \Cref{thm:total-welfare-monopoly-dominates}, for every $k_1\geq\bar k_1$, there exists a finite threshold $\bar k_2(k_1,\tau_2)$ such that $\mathcal W_{\mathrm{tot}}^{\M}(0)> \mathcal W_{\mathrm{tot}}^{\DE}(\tau_2,k_2)$ for every $k_2\geq\bar k_2(k_1,\tau_2)$.
\end{corollary}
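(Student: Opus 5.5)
This corollary is a direct combination of \Cref{thm:total-welfare-monopoly-dominates} and \Cref{lem:total-welfare-high-k2}, so the plan is an $\varepsilon$-argument.

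First, fix any $k_1\geq\bar k_1$, where $\bar k_1$ is the threshold from \Cref{thm:total-welfare-monopoly-dominates}. That proposition gives a strict gap
\[
\Delta(k_1)\coloneqq \mathcal W_{\mathrm{tot}}^{\M}(0)-\mathcal W_{\mathrm{tot}}^{\FI}>0 .
\]
Two points should be checked here. The monopoly side depends only on sender~1's primitives $(\tau_1,k_1)$, the prior $f$, and the action payoffs, including $u_2$, which is counted but muted under monopoly. None of these involves $k_2$, so $\Delta(k_1)$ is a fixed positive number as $k_2$ varies. Likewise, $\mathcal W_{\mathrm{tot}}^{\FI}$ depends only on $u_r,u_1,u_2$ and $f$, so it is also independent of $k_2$.

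Second, apply \Cref{lem:total-welfare-high-k2} with this $k_1$ and the fixed finite $\tau_2$. The hypotheses of \Cref{thm:total-welfare-monopoly-dominates} supply the LQ structure and a finite first moment of $f$. Together with the maintained high-cost convergence conclusion of \Cref{lem:high-k2-convergence}, they give
\[
\mathcal W_{\mathrm{tot}}^{\DE}(\tau_2,k_2)\to\mathcal W_{\mathrm{tot}}^{\FI}\quad\text{as } k_2\to\infty.
\]
Taking $\varepsilon=\Delta(k_1)/2$, there is a finite $\bar k_2(k_1,\tau_2)$ such that $\mathcal W_{\mathrm{tot}}^{\DE}(\tau_2,k_2)<\mathcal W_{\mathrm{tot}}^{\FI}+\Delta(k_1)/2$ for all $k_2\geq\bar k_2(k_1,\tau_2)$. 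For these $k_2$ we then have $\mathcal W_{\mathrm{tot}}^{\DE}(\tau_2,k_2)<\mathcal W_{\mathrm{tot}}^{\M}(0)-\Delta(k_1)/2<\mathcal W_{\mathrm{tot}}^{\M}(0)$, which is the claim.

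The only delicate step is confirming that the hypotheses of \Cref{lem:total-welfare-high-k2} are indeed available. In particular, the high-cost convergence conclusion must hold for the given $k_1$. If the corollary intends this as part of ``the assumptions'', it should be stated as inherited. Otherwise, I would note that \Cref{lem:high-k2-convergence} already delivers it for fixed sender~1 primitives and finite $\tau_2$. The threshold $\bar k_2$ depends on $k_1$ through $\Delta(k_1)$ and the rate of convergence, so no uniformity in $k_1$ is claimed.
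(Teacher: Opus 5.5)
Your proposal is correct and matches the paper's proof. The paper fixes $k_1\geq\bar k_1$, takes the strict gap $\mathcal W_{\mathrm{tot}}^{\M}(0)>\mathcal W_{\mathrm{tot}}^{\FI}$ from \Cref{thm:total-welfare-monopoly-dominates}, and combines it with the convergence $\mathcal W_{\mathrm{tot}}^{\DE}(\tau_2,k_2)\to\mathcal W_{\mathrm{tot}}^{\FI}$ from \Cref{lem:total-welfare-high-k2}; your explicit $\varepsilon=\Delta(k_1)/2$ step and your check that neither benchmark depends on $k_2$ spell out what the paper leaves implicit.
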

	
\Cref{thm:total-welfare-monopoly-dominates} and \Cref{cor:finite-k2-total-welfare-dominance} establish that greater decision accuracy need not increase total welfare. Under the stated conditions, selected monopoly induces $a^+$ on negative states in which the resulting gain in total action payoffs is large enough to offset sender~1's communication cost. Monopoly lowers receiver welfare but raises total welfare above the receiver's full-information benchmark. Because competition converges to that benchmark as $k_2$ grows, monopoly dominates competition for all sufficiently large finite values of $k_2$.

The opposite ranking arises when the receiver and total welfare agree on the states distorted by monopoly. In the LQ class, $S(\theta)=3\theta-\tau_1-\tau_2$. If $\tau_2\geq-\tau_1$, then $S(\theta)<0$ for every negative state. Monopoly's distortion toward $a^+$ is socially harmful throughout its pooling interval. The next proposition combines this alignment with the high-$k_2$ convergence result.
	
	\begin{proposition}
		\label{thm:total-welfare-competition-dominates}
	Consider the LQ class of \Cref{def:LQ}, and fix $k_1>0$. Suppose that the density $f$ has full support and finite first moment, and suppose that the least informative monopoly equilibrium is well defined and non-degenerate. If $\tau_2\geq -\tau_1$, then $\mathcal W_{\mathrm{tot}}^{\FI}	>\mathcal W_{\mathrm{tot}}^{\M}(0)$. As a result, there exists a finite threshold $\bar k_2(k_1,\tau_2)$ such that $\mathcal W_{\mathrm{tot}}^{\DE}(\tau_2,k_2)>	\mathcal W_{\mathrm{tot}}^{\M}(0)$ for every $k_2\geq \bar k_2(k_1,\tau_2)$.
 \end{proposition}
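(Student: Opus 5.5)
The plan has two steps. First, show $\mathcal W_{\mathrm{tot}}^{\FI}>\mathcal W_{\mathrm{tot}}^{\M}(0)$ directly. Second, obtain the competitive ranking from \Cref{lem:total-welfare-high-k2}, exactly as \Cref{cor:finite-k2-total-welfare-dominance} was derived from \Cref{thm:total-welfare-monopoly-dominates}.

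For the first step, write both benchmarks with the aggregate action-payoff difference $S(\theta)=3\theta-\tau_1-\tau_2$. In the least informative monopoly equilibrium, the receiver chooses $a^+$ exactly on $[\ell^M,\infty)$: the pooled interval $(\ell^M,r^M)$ induces $a^+$ because $\lambda=0$ and ties go to $a^+$, and truthful reports outside the interval induce the receiver's full-information action. Hence $q^{\M}_0=\mathds 1_{\{\theta\geq\ell^M\}}$ almost everywhere, and
\[
\mathcal W_{\mathrm{tot}}^{\FI}-\mathcal W_{\mathrm{tot}}^{\M}(0)=-\int_{\ell^M}^{0}S(\theta)f(\theta)\,d\theta+\int_\Theta \mathcal C_1^{\M,0}(\theta)f(\theta)\,d\theta .
\]
The cost term is nonnegative. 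It is in fact strictly positive, since types in $(\ell^M,r^M)$ other than $r^M$ itself misreport and $f$ has full support; strictness is not needed, though. When $\tau_2\geq-\tau_1$, every $\theta<0$ satisfies $S(\theta)<-\tau_1-\tau_2\leq0$. Non-degeneracy gives $\ell^M<0$, and $f>0$ on $[\ell^M,0)$ by full support, so the first integral is strictly positive. Finite first moment guarantees that all the total-welfare integrals are finite, so subtracting them is legitimate. This yields the strict inequality.

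For the second step, \Cref{lem:total-welfare-high-k2} gives $\mathcal W_{\mathrm{tot}}^{\DE}(\tau_2,k_2)\to\mathcal W_{\mathrm{tot}}^{\FI}$ as $k_2\to\infty$. I would take $\varepsilon=\mathcal W_{\mathrm{tot}}^{\FI}-\mathcal W_{\mathrm{tot}}^{\M}(0)>0$ and choose $\bar k_2(k_1,\tau_2)$ so that $|\mathcal W_{\mathrm{tot}}^{\DE}-\mathcal W_{\mathrm{tot}}^{\FI}|<\varepsilon$ for all $k_2\geq\bar k_2$; the conclusion then follows immediately. The lemma's hypotheses match the statement's (LQ class, fixed $k_1$, finite $\tau_2$, finite first moment). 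The only remaining hypothesis is the high-cost convergence conclusion of \Cref{lem:high-k2-convergence}. I would take it as given, since that lemma establishes it for fixed sender~1 primitives and finite $\tau_2$.

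The main point requiring care is identifying $q^{\M}_0$ on the measure-zero boundary points and confirming that sender~2's action payoff is counted but his cost is zero under monopoly. Neither affects the sign of the difference, so I expect no genuine obstacle; the argument is essentially a sign check on $S$ over the pooling region.
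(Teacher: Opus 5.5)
Your proposal is correct and matches the paper's proof essentially step for step. You write $\mathcal W_{\mathrm{tot}}^{\FI}-\mathcal W_{\mathrm{tot}}^{\M}(0)$ as $-\int_{\ell^M}^{0}S(\theta)f(\theta)\,d\theta$ plus sender~1's expected pooling cost, use $S<0$ on negative states when $\tau_2\geq-\tau_1$, and then carry the strict gap over to competition via \Cref{lem:total-welfare-high-k2}; the paper does the same, except that it also invokes non-degeneracy to make the cost term strictly positive, which, as you note, is not needed.
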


	
	The two propositions provide sufficient conditions for opposite welfare rankings. When sender~1's preference for $a^+$ is sufficiently strong, monopoly can generate socially valuable persuasion and outperform competition. When sender~2's opposition is sufficiently strong, the receiver and total welfare agree on the negative states distorted by monopoly, and sufficiently disciplined competition improves both welfare criteria. These conditions do not exhaust the parameter space. Between them, the ranking also depends on the magnitude of the action-payoff gains and the equilibrium communication costs.

Therefore, the welfare criterion matters for policy. A regulator concerned with receiver welfare has a direct reason to favor competition when $k_2$ is high because competition improves decision accuracy. A regulator concerned with total welfare must additionally ask whether the receiver's preferred action maximizes aggregate surplus. When sender payoffs represent genuine social stakes rather than transfers or purely distributive concerns, eliminating persuasion can reduce total welfare even as it improves the receiver's decisions.
	
		
\subsection{A numerical illustration}
\label{ex:total-welfare-illustration}

The following example illustrates how selected monopoly can raise total welfare above the receiver's full-information benchmark. Let $\theta$ be standard normal, with density $\varphi$, and set $u_r(\theta)=\theta$ and
\[
u_1(\theta)=\theta+\frac{4}{5} > u_2(\theta)=\theta-\frac{1}{10}.
\]
Misreporting costs are quadratic, with $k_1=35/2$. The least informative monopoly equilibrium has
\[
\ell^M=-\frac1{10} < r^M=\frac1{10}.
\]
Indeed, symmetry of the prior gives
\[
\int_{-1/10}^{1/10}\theta\varphi(\theta)\,d\theta=0,
\]
and the lower pooling type reaches $r^M$ because
\[
\frac{35}{2} \left(\frac15\right)^2 = \frac7{10} = u_1\!\left(-\frac1{10}\right).
\]

The aggregate action-payoff gain from choosing $a^+$ is
\[
S(\theta) =u_r(\theta)+u_1(\theta)+u_2(\theta) =3\theta+\frac7{10}.
\]
Relative to the receiver's full-information rule, monopoly changes the action only on $[-1/10,0)$. The resulting gain in action payoffs is
\[
\int_{-1/10}^{0} \left(3\theta+\frac7{10}\right)\varphi(\theta)\,d\theta \approx 0.02191.
\]
Sender~$1$'s equilibrium misreporting cost is
\[
\frac{35}{2} \int_{-1/10}^{1/10} \left(\frac1{10}-\theta\right)^2\varphi(\theta)\,d\theta \approx 0.01858.
\]
Thus,
\[
\mathcal W_{\mathrm{tot}}^{\M}(0) -\mathcal W_{\mathrm{tot}}^{\FI} \approx 0.02191-0.01858 =0.00333>0.
\]
The level of total welfare under the receiver's full-information rule is
\[
\mathcal W_{\mathrm{tot}}^{\FI} = \int_0^\infty \left(3\theta+\frac7{10}\right)\varphi(\theta)\,d\theta = \frac{3}{\sqrt{2\pi}}+\frac7{20} \approx1.54683.
\]
Hence, $\mathcal W_{\mathrm{tot}}^{\M}(0) \approx1.550157$. By \Cref{lem:total-welfare-high-k2}, competitive welfare converges to $\mathcal W_{\mathrm{tot}}^{\FI}$ as $k_2\to\infty$. Hence, for all sufficiently large finite $k_2$, selected monopoly yields higher total welfare than under the selected competitive outcome in this example.\footnote{Measured relative to monopoly welfare, the limiting percentage loss from monopoly to competition is $\approx0.215\%$. This is a limiting percentage, as the exact loss at a given finite $k_2$ requires solving the corresponding adversarial equilibrium. The calculation includes the equilibrium misreporting cost and illustrates the same mechanism as the formal proposition.}

\begin{figure}
    \centering
    \includegraphics[width=0.8\linewidth]{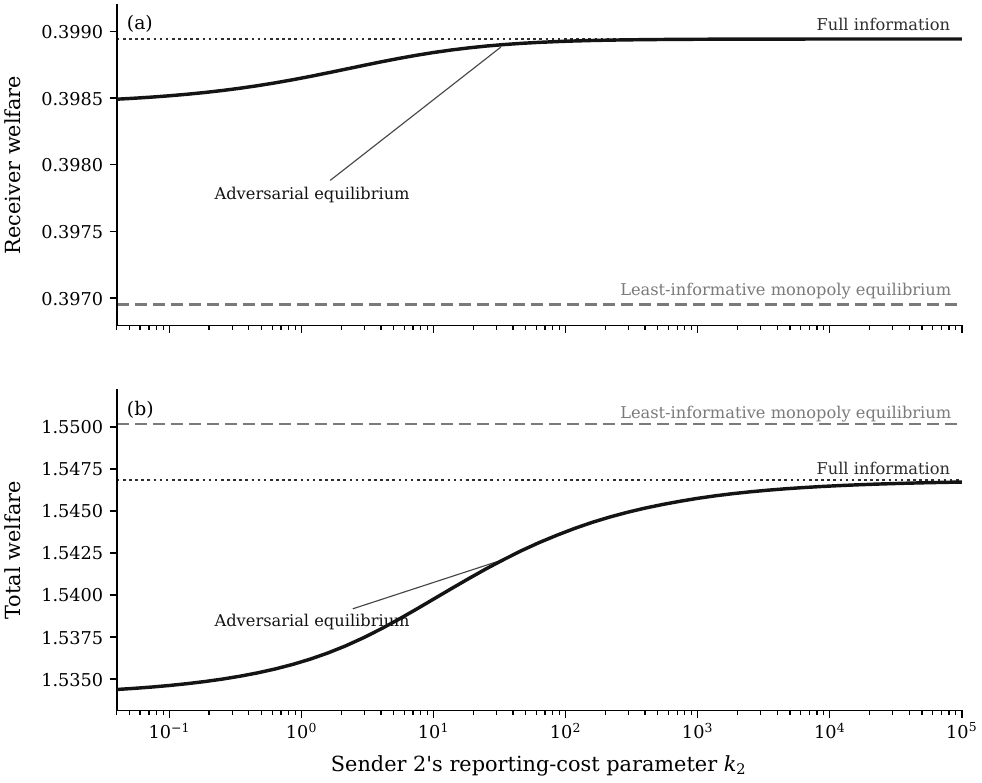}
    \caption{Receiver and total welfare as sender~2's reporting-cost parameter varies. The parameters are those of the numerical illustration, i.e.,  $\theta\sim\mathcal N(0,1)$, $(\tau_1,k_1)=(-4/5,35/2)$, and $\tau_2=1/10$. The horizontal axis is logarithmic. Panel~(a) reports receiver welfare, while panel~(b) reports total welfare, including the action payoffs of the receiver and both senders and the equilibrium reporting costs of both active senders. In each panel, the solid curve represents the adversarial equilibrium, the dotted line represents full information, and the dashed line represents the least-informative monopoly equilibrium.}
    \label{fig:welfare-by-k2}
\end{figure}

Figure~\ref{fig:welfare-by-k2} illustrates the divergence between the two welfare criteria. Panel~(a) shows that competition improves receiver welfare throughout the displayed range and converges to the receiver's full-information payoff as $k_2$ increases. Panel~(b) shows the opposite ranking for total welfare. Competition remains below selected monopoly and converges to the lower full-information benchmark. In this example, monopoly's distortion raises aggregate action payoffs by enough to offset sender~1's communication cost. Competition improves decision accuracy while reducing total welfare.


\section{Concluding remarks and discussion}

		
	\subsection{Competition, bias, and institutional design}
	\label{sec:policy-discussion}
	
	The results yield several insights for institutional design. Competition can discipline a biased source, but plurality alone does not guarantee better information. Under the selected outcomes studied in the paper, adding an oppositely biased source improves the receiver's welfare when that source faces sufficiently high costs of misreporting. The broader welfare effect remains ambiguous because the receiver's preferred action need not maximize aggregate surplus. The value of competition depends jointly on the sources' incentives, the institutions that constrain manipulation, and the welfare criterion used to evaluate the outcome.

Media regulation provides the most direct application. Existing regulatory frameworks combine concerns about plurality with constraints on reporting. Ofcom requires news to be reported with due accuracy and due impartiality, while clarifying that due impartiality does not require equal time for every view \citep{Ofcom2025}. In the United States, the FCC describes competition, localism, and viewpoint diversity as traditional public-interest goals of its review of broadcast-ownership rules \citep{FCC2025}. The European Media Freedom Act similarly protects access to a plurality of editorially independent media content and requires assessments of how media-market concentrations affect pluralism and editorial independence \citep{EMFA2024}.

The model suggests that plurality and accountability address complementary aspects. Plurality creates the possibility that one source checks another, whereas verification, liability, reputational concerns, and related institutions raise the cost of manipulating information. Competition is most effective when the additional source can discipline one-sided persuasion without cheaply generating new strategic noise. Opposition between sources can help produce this discipline, but greater bias is not valuable in itself. Indeed, our LQ analysis shows that bias changes the receiver's burden of proof without implying a globally positive welfare effect.

Adversarial legal procedure provides a second application. Its usual rationale is that opposed advocates discipline one another before an impartial decision maker, as in models of adversarial evidence and disclosure \citep{Shin1998}. At the same time, adversarial proceedings consume resources and encourage strategic presentation of evidence \citep{Tullock1980,Posner1999,Kagan2003}. The model captures this trade-off. Opposed advocates can improve the decision maker's information when misrepresentation is sufficiently constrained, but the informational benefit is not automatic. Nor does greater accuracy from the decision maker's perspective necessarily maximize total welfare when the interests of other affected parties are included. The analysis identifies conditions under which the disciplinary benefit of opposition is strongest.

Second-opinion and escalation mechanisms provide a related application. Martha's Rule in the English NHS allows patients, families, carers, and staff to request a rapid review when concerns about a patient's deterioration are not being addressed. The review is conducted by a different team, creating an additional channel through which an initial assessment can be challenged \citep{NHSEngland2026}. NHS administrative data report that activations of the rule have been followed by changes in treatment and transfers to higher levels of care. This experience illustrates the institutional role of an independent check, but it should not be interpreted as a direct test of the model. Medical experts may possess different information and need not have opposed preferences.\footnote{Additional advice is most valuable when the new source is sufficiently independent to challenge the initial recommendation, sufficiently accountable that strategic misrepresentation is costly, and sufficiently well integrated into the decision process that the additional review does not create excessive delay or confusion. The model does not imply that more opinions, or more strongly opposed opinions, are always desirable.}

A key implication of our findings is that competition, bias, and accountability should be evaluated jointly. Opposed incentives can discipline persuasion, but the welfare effect of greater bias remains ambiguous. Competition improves receiver welfare under the paper's high-cost condition, but it need not improve total welfare. Institutional design should consider how costly manipulation is, how do the sources' incentives interact, and whose welfare the institution seeks to maximize.

	\section{Conclusion}
	
	This paper studies whether competition improves information provision when biased sources can misrepresent facts at a cost. The comparison is complicated by equilibrium multiplicity. Both monopoly and competition can support outcomes with very different informational properties. We address this problem using a common belief-based selection criterion. The criterion selects the least informative equilibrium from the monopoly family and the adversarial-equilibrium outcome under two-sender competition. The resulting welfare comparison applies the same criterion to beliefs in both market structures.

Competition does not uniformly dominate monopoly. A sufficiently skeptical receiver can obtain nearly full information from a monopolist and may be better off than under adversarial competition. Under the selected outcomes, however, competition improves receiver welfare in the symmetric benchmark and, more generally, when the additional sender faces sufficiently high misreporting costs. For any fixed finite bias of sender~2, a higher misreporting cost eventually eliminates the mistakes generated by that sender while preserving his ability to discipline sender~1. Competitive welfare approaches the receiver's full-information benchmark, whereas selected monopoly retains a positive pooling loss.

Greater decision accuracy need not imply greater total welfare. The receiver's preferred action may differ from the action that maximizes the sum of all players' payoffs. Monopoly can induce persuasion that harms the receiver but improves the allocation of actions from an aggregate perspective. When this benefit exceeds the associated communication cost, selected monopoly can generate more total welfare than competition. When the receiver's ranking and the aggregate-surplus ranking are aligned on the states distorted by monopoly, sufficiently disciplined competition improves both welfare criteria.

The analysis deliberately focuses on a binary decision and two sources that observe the same state. The binary action space isolates the threshold decision at the center of the model, while costly misreporting captures an intermediate case between cheap talk and verifiable disclosure. The many-sender extension shows that truthful full revelation can be supported when at least three senders report and a unilateral outlier cannot overturn the agreement of the others.\footnote{This is an existence result under unilateral consistency, not a selection or uniqueness result. Richer action spaces, heterogeneous information, endogenous source entry, and coordination among aligned senders provide natural directions for further research.}

Competition is most valuable when an additional source can challenge one-sided persuasion but faces effective constraints on manipulation. Plurality and accountability are complementary, as adding voices can improve decisions, but its value depends on the sources' incentives, the credibility of their reports, and the welfare criterion used to evaluate the outcome.

	
	\newpage
	
	\appendix

	
	\section{Proofs}


\begin{proof}[Proof of \Cref{lem:first-branch-automatic}]
		Take $\theta<\theta'$ in $\left[\bar r_1^{-1}(r),\,\bar r_1^{-1}(r^*(\lambda))\right]$. On this interval the report $r$ lies weakly above both types, so	$|r-\theta|>|r-\theta'|$. By the assumed distance-monotonicity of the cost function, we have $C_1(r,\theta)>C_1(r,\theta')$. Since $u_1(\theta)$ is strictly increasing, it follows that $u_1(\theta)<u_1(\theta')$. Hence,
		\[
		\frac{k_1C_1(r,\theta)}{u_1(\theta)} > \frac{k_1C_1(r,\theta')}{u_1(\theta')}.
		\]
		Since the relevant interval lies above sender 1's threshold $\tau_1$, we have $u_1(\theta)>0$ throughout it. Therefore, $q_r(\theta)>q_r(\theta')$, so the first segment is strictly decreasing.
	\end{proof}


\begin{proof}[Proof of \Cref{prop:Gamma-costs}]
		By \Cref{lem:first-branch-automatic}, the first piece of $q_r$ is strictly	decreasing.	It remains to study the second piece,
		\[
		q_r(\theta) = 1-\frac{k_1\left(C_1(r^*(\lambda),\theta)-C_1(r,\theta)\right)}{u_1(\theta)},
		\]
		for $\theta\in\left[\bar r_1^{-1}(r^*(\lambda)),\,\tilde\theta(r,\lambda)\right]$ and where $\tilde\theta(r,\lambda)$ is defined by $C_1(r,\tilde\theta)=C_1(r^*(\lambda),\tilde\theta)$. Set
		\[
		A(\theta)\coloneqq C_1(r^*(\lambda),\theta)-C_1(r,\theta).
		\]
		Since $r<r^*(\lambda)$ and $\theta\leq \tilde\theta(r,\lambda)$, we have $A(\theta)\geq 0$. We claim that $A(\theta)$ is weakly decreasing on the whole interval.
		
		First consider $\theta\leq r$. Then, $A(\theta)=\Gamma(r^*(\lambda)-\theta)-\Gamma(r-\theta)$. Because $\Gamma$ is weakly convex, its derivative is weakly increasing wherever it exists. Hence,
		\[
		\frac{dA(\theta)}{d\theta} = -\Gamma'(r^*(\lambda)-\theta)+\Gamma'(r-\theta) \leq 0,
		\]
		since $r^*(\lambda)-\theta>r-\theta$.
		
		Now consider $r\leq \theta\leq \tilde\theta(r,\lambda)$. Then, $A(\theta)=\Gamma(r^*(\lambda)-\theta)-\Gamma(\theta-r)$. Therefore,
		\[
		\frac{dA(\theta)}{d\theta} = -\Gamma'(r^*(\lambda)-\theta)-\Gamma'(\theta-r)<0,
		\]
		because $\Gamma$ is strictly increasing, so $\Gamma'>0$ wherever it exists.	Thus, $A(\theta)$ is weakly decreasing on the whole second segment.
		
		Since $u_1(\theta)$ is strictly increasing and positive on the same interval, the ratio	$A(\theta)/u_1(\theta)$ is strictly decreasing. As a result,
		\[
		q_r(\theta)=1-\frac{k_1A(\theta)}{u_1(\theta)}
		\]
		is strictly increasing. This proves \Cref{ass:sc}.
	\end{proof}


\begin{proof}[Proof of \Cref{thm:nwbr}]
		Fix a generic monopoly equilibrium indexed by $\lambda$. For readability write
		\[
		r^*\coloneqq r^*(\lambda),
		\]
 \[
		\ell\coloneqq \ell(\lambda),
		\]
 \[
		\underline\theta(r)\coloneqq \bar r_1^{-1}(r).
		\]
		Fix an arbitrary off-path report $r\in(\ell,r^*)$. We show that one round of the Never-a-Weak-Best-Response test deletes every type except $\ell$ and that, once the receiver assigns the deviation to type $\ell$, the deviation is not profitable. This proves that the equilibrium survives the refinement.
		
		The sender's equilibrium payoff is
		\[
		W_\lambda(\theta)=
		\begin{cases}
			0, & \theta<\ell,\\
			u_1(\theta)-k_1 C_1(r^*,\theta), & \theta\in[\ell,r^*),\\
			u_1(\theta), & \theta\geq r^*.
		\end{cases}
		\]
		
		First, we observe that only types in $[\underline\theta(r),\vartheta(r,\lambda)]$ can ever weakly gain from deviating to $r$. If $\theta<\underline\theta(r)$, then even if the receiver chooses $a^+$ with probability one,
		\[
		w(r,1,\theta)=u_1(\theta)-k_1 C_1(r,\theta)<0=W_\lambda(\theta),
		\]
		by definition of the inverse reach $\underline\theta(r)=\bar r_1^{-1}(r)$. Hence, no such type can weakly prefer the deviation.
		
		If $\theta\in(\vartheta(r,\lambda),r^*)$, then $C_1(r,\theta)>C_1(r^*,\theta)$, so even when action $a^+$ is chosen with probability one,
		\[
		w(r,1,\theta)=u_1(\theta)-k_1 C_1(r,\theta) < u_1(\theta)-k_1 C_1(r^*,\theta)=W_\lambda(\theta).
		\]
		Thus, no such type can weakly prefer the deviation either.
		
		Finally, if $\theta\geq r^*$, then
		\[
		w(r,1,\theta)=u_1(\theta)-k_1 C_1(r,\theta)<u_1(\theta)=W_\lambda(\theta),
		\]
		again because $C_1(r,\theta)>0$. Therefore, only types in the interval $[\underline\theta(r),\vartheta(r,\lambda)]$ can ever be weak best responses to the off-path report $r$.
		
		Next, we show that the receiver probability that makes a type indifferent is uniquely minimized at $\ell$. For $\theta\in[\underline\theta(r),\ell]$, the equilibrium payoff is $W_\lambda(\theta)=0$. The indifference condition between deviating to $r$ and staying in equilibrium is 
		\[
		0=\sigma u_1(\theta)-k_1 C_1(r,\theta),
		\]
		which yields
		\[
		\sigma=q_r(\theta)=\frac{k_1 C_1(r,\theta)}{u_1(\theta)}.
		\]
		Hence, for such a type, we have $D_0(\theta,r)=\{q_r(\theta)\}$ and $D(\theta,r)=(q_r(\theta),1]$.
		
		For $\theta\in[\ell,\vartheta(r,\lambda)]$, the equilibrium payoff is $W_\lambda(\theta)=u_1(\theta)-k_1 C_1(r^*,\theta)$. The indifference condition becomes
		\[
		u_1(\theta)-k_1 C_1(r^*,\theta)=\sigma u_1(\theta)-k_1 C_1(r,\theta),
		\]
		so
		\[
		\sigma=q_r(\theta)=1-\frac{k_1\left(C_1(r^*,\theta)-C_1(r,\theta)\right)}{u_1(\theta)}.
		\]
		Hence, again, we get $D_0(\theta,r)=\{q_r(\theta)\}$ and $D(\theta,r)=(q_r(\theta),1]$. \Cref{ass:sc} implies that the function $q_r$ is strictly decreasing on $[\underline\theta(r),\ell]$ and strictly increasing on $[\ell,\vartheta(r,\lambda)]$. Therefore, $q_r$ has a unique minimum at $\theta=\ell$.
		
		Finally, we obtain that one round of $\NWBR$ deletes every type except $\ell$.	Fix any type $\theta\neq \ell$ in $[\underline\theta(r),\vartheta(r,\lambda)]$. Since $q_r(\theta)>q_r(\ell)$ and $D(\ell,r)=(q_r(\ell),1]$, we have $D_0(\theta,r)=\{q_r(\theta)\}\subseteq D(\ell,r)$. Hence, the pair $(\theta,r)$ is deleted by the Never-a-Weak-Best-Response test. This argument applies to every type $\theta\neq \ell$ that survived a deviation to $r$. Thus, every pair $(\theta,r)$ with $\theta\neq \ell$ is eliminated in one round. By contrast, $D_0(\ell,r)=\{q_r(\ell)\}$ is not contained in $\bigcup_{\theta\neq \ell} D(\theta,r)$, because every interval $D(\theta,r)=(q_r(\theta),1]$ begins strictly above $q_r(\ell)$.	So $(\ell,r)$ is the unique surviving type-report pair.
		
		Once all other types have been deleted, the only type consistent with the off-path report $r$ is $\ell$. Because $\ell<0$, the receiver's unique best reply to beliefs concentrated on type $\ell$ is action $a^-$. Equivalently, the receiver chooses $a^+$ with probability $\sigma=0$. Under that reply,
		\[
		w(r,0,\ell)=-k_1 C_1(r,\ell)<0.
		\]
		Since type $\ell$ obtains equilibrium payoff $W_\lambda(\ell)=0$, the deviation is not profitable. Because the report $r\in(\ell,r^*)$ was arbitrary, every off-path report fails to be profitable after the Never-a-Weak-Best-Response refinement is applied. Therefore, the equilibrium survives the refinement.
	\end{proof}


\begin{proof}[Proof of \Cref{cor:lq-nwbr}]
		Fix $r\in(\ell(\lambda),r^*(\lambda))$. For $\theta\in[\bar r_1^{-1}(r),\ell(\lambda)]$ we have
		\[
		q_r(\theta)=\frac{k_1(r-\theta)^2}{\theta-\tau_1}.
		\]
		Differentiating yields
		\[
		q_r'(\theta) = -k_1\frac{(r-\theta)(r+\theta-2\tau_1)}{(\theta-\tau_1)^2}<0,
		\]
		because $r>\theta>\tau_1$. For $\theta\in[\ell(\lambda),\vartheta(r,\lambda)]$ we have
		\[
		q_r(\theta)=1-k_1\frac{(r^*(\lambda)-\theta)^2-(r-\theta)^2}{\theta-\tau_1} =1-k_1\left(r^*(\lambda)-r\right)\frac{r^*(\lambda)+r-2\theta}{\theta-\tau_1}.
		\]
		Differentiating gives
		\[
		q_r'(\theta) = k_1\left(r^*(\lambda)-r\right)\frac{r^*(\lambda)+r-2\tau_1}{(\theta-\tau_1)^2}>0,
		\]
		because $r^*(\lambda)>r>\tau_1$. Hence, \Cref{ass:sc} holds. The conclusion follows from \Cref{thm:nwbr}.
	\end{proof}


\begin{proof}[Proof of \Cref{thm:common-refinement-monopoly}]
Write $r^M\coloneqq r^*(0)$ and $\ell^M\coloneqq\bar r_1^{-1}(r^M)$. The proof has two parts. We first construct admissible off-path beliefs for $E_0$. For a report below the pooling report, we will let the receiver infer a single state $t^M(r)$. This inferred state rises continuously from $\ell^M$ to zero as the report rises from $\ell^M$ to $r^M$. Formally, for $\ell^M\leq r<r^M$, let
\[
t^M(r) \coloneqq \ell^M\frac{r^M-r}{r^M-\ell^M}.
\]
For $0\leq r<r^M$, monotonicity of the inverse reach and the fact that $0<-\ell^M<r^M-\ell^M$ give
\[
\bar r_1^{-1}(r) \leq \ell^M \leq t^M(r) \leq r.
\]
Hence, $t^M(r)\in P_1(r)=[\bar r_1^{-1}(r),r]$. The inferred state can indeed rationally generate report $r$.

Retain the Bayesian posterior at every on-path report and assign the following posterior off path, i.e., $\mu_r=\delta_{t^M(r)}$ for $\ell^M<r<r^M$. Together with the on-path beliefs, this specification generates the following expected payoff difference for the receiver,
\begin{equation*}
\widetilde V(r)
=
\begin{cases}
u_r(r), & r\leq\ell^M,\\[3pt]
u_r\!\left(t^M(r)\right), & \ell^M<r<r^M,\\[3pt]
0, & r=r^M,\\[3pt]
u_r(r), & r>r^M.
\end{cases}
\end{equation*}

The construction has the desired economic implication. Below $r^M$, the inferred state is below zero, so the receiver chooses $a^-$. At $r^M$, she is indifferent and chooses $a^+$ under the maintained tie-breaking rule. Above $r^M$, reports are truthful and she again chooses $a^+$. Moreover, $\widetilde V$ is weakly increasing everywhere and strictly increasing within each atom-free component. The component below the pooling atom is
\[
C_-\coloneqq\left[0,r^M\right).
\]
Extend $t^M$ to the pooling report by setting $t^M(r^M)=0$, and define for $r\in[0,r^M]$
\[
\overline V_{C_-}(r)
\coloneqq
u_r\!\left(t^M(r)\right).
\]
This function agrees with $\widetilde V$ throughout $C_-$ and is continuous at
$r^M$. Its component-wise derivative satisfies
\[
\partial_1^{C_-}\overline V_{C_-}(r)
=
u_r'\!\left(t^M(r)\right)
\frac{-\ell^M}{r^M-\ell^M}
>0
\]
for every $r\in[0,r^M]$, with the appropriate one-sided derivative at the
endpoints. In particular,
\[
\partial_1^{C_-}\overline V_{C_-}(r^M)
=
u_r'(0)\frac{-\ell^M}{r^M-\ell^M}
>0.
\]

If $r^M<\bar r_1(0)$, there is also an upper component defined by
\[
C_+\coloneqq \left(r^M,\bar r_1(0)\right).
\]
Define its extension for all $r\in[r^M,\bar r_1(0))$ by
\[
\overline V_{C_+}(r)
\coloneqq
u_r(r).
\]
This extension agrees with $\widetilde V$ throughout $C_+$ and satisfies $\partial_1^{C_+}\overline V_{C_+}(r)
=
u_r'(r)>0$, including at $r^M$ in the feasible one-sided sense.

The lower extension assigns value $\overline V_{C_-}(r^M)=u_r(0)=0$. When the upper component is present, its extension instead assigns $\overline V_{C_+}(r^M)=u_r(r^M)>0$. Therefore, the skeptical-envelope map selects the lower-component value, i.e.,
\[
\left(\mathcal L_{\nu^{E_0}}\widetilde V\right)(r^M)
=
0
=
\widetilde V(r^M).
\]
Thus, requirement~$(iii)$ holds at the pooling atom and throughout both atom-free components, and the skeptical-envelope condition holds at the atom. All assigned beliefs are supported on states that can rationally generate the observed report. In particular, whenever a report identifies a unique state, the receiver assigns probability one to it.

It remains to check that these off-path beliefs do not create a profitable deviation. Types below $\ell^M$ cannot induce $a^+$ at $r^M$ or above because those reports are beyond their reach. Types in the pooling interval obtain a weakly positive payoff at $r^M$. A lower report leads to $a^-$, while a higher report leads to the same action $a^+$ at a higher reporting cost. Types above $r^M$ report truthfully and already pay no reporting cost. Thus, the proposed beliefs preserve all equilibrium incentives, proving part~$(i)$.

Now fix $\lambda>0$ and write $r^*=r^*(\lambda)$. At the pooling atom, $V^{E_\lambda}(r^*)=\lambda$. If reports immediately above $r^*$ belong to $Z_1$, the value approached from above is $u_r(r^*)$. This exceeds $\lambda$ because the pooling posterior also assigns positive probability to states below $r^*$. Because sender~$1$ is upward biased, the skeptical-envelope condition requires the value at the atom to equal the smaller value approached from its neighboring regions. Hence, the value approached from below must be
\[
\lim_{r\uparrow r^*}V^{E_\lambda}(r)=\lambda>0.
\]
The same conclusion is immediate when the pooling report can be approached only from below within $Z_1$.

Because this limiting value is positive, every report sufficiently close to $r^*$ from below also induces $a^+$. Choose one such report $r>0$ and a pooled type
\[
\theta\in \left(r,\frac{r+r^*}{2}\right).
\]
The deviation from $r^*$ to $r$ leaves the receiver's action unchanged but strictly reduces the reporting cost, because $|r-\theta|<|r^*-\theta|$. Hence,
\[
u_1(\theta)-k_1C_1(r,\theta) > u_1(\theta)-k_1C_1(r^*,\theta),
\]
contradicting sender optimality. No $E_\lambda$ with $\lambda>0$ is skeptical-envelope admissible. This proves part~$(ii)$.
\end{proof}


\begin{proof}[Proof of \Cref{lem:operator-competition-geometry}]
Sender~1's report $r_1\geq0$ can rationally originate from the states in $[\bar r_1^{-1}(r_1),r_1]$. Sender~2's report $r_2\leq0$ can rationally originate from the states in $[r_2,\underline r_2^{-1}(r_2)]$. Their intersection gives the stated formula. Throughout $Z_2$, this intersection contains zero. It is equal to $\{0\}$ only at the two stated corners.

The set of atoms of a probability distribution is at most countable. Removing a countable set of points from a rectangle in $\mathbb R^2$ does not prevent us from drawing a path between any two remaining points. To see this, choose an intermediate point in the rectangle and connect each of the original points to it with a straight line. The intermediate point can be chosen so that neither line passes through a removed point, because the excluded choices lie on only countably many lines. Hence, $K_2\setminus\mathcal A(\nu)$ is path connected.

To establish density, fix any $\bm r\in Z_2$ and any relative neighborhood $U$ of $\bm r$ in $Z_2$. The set $U\cap K_2$ is uncountable, whereas $\mathcal A(\nu)$ is countable. Hence,
\[
U\cap\left(K_2\setminus\mathcal A(\nu)\right)\neq\varnothing.
\]
Therefore, every point of $Z_2$ belongs to the closure of $K_2\setminus\mathcal A(\nu)$.
\end{proof}


\begin{proof}[Proof of \Cref{lem:AE-atomless}]
By part~$(iv)$ of \Cref{prop:AE-characterization}, conditional on a state $\theta$, sender $j$'s only possible report atom is at the truthful report $r_j=\theta$. Every non-truthful report belongs to the atomless component with density $\psi_j(\cdot\mid\theta)$.

By the independent-private-randomization convention in \Cref{subsec:common-primitives}, for every fixed report pair $(r_1,r_2)$, we have
\[
\nu^{\DE}\!\left(\{(r_1,r_2)\}\right) = \int_\Theta \phi_1(\{r_1\}\mid\theta) \phi_2(\{r_2\}\mid\theta) f(\theta)\,d\theta.
\]
The integrand can be positive only if both senders use their truthful atoms, which requires $r_1=r_2=\theta$. For a fixed report pair, this can occur at no more than one state. Because the prior has a density, every singleton state has probability zero. Therefore,
\[
\nu^{\DE}\!\left(\{(r_1,r_2)\}\right)=0.
\]
Since the report pair was arbitrary,
$\mathcal A(\nu^{\DE})=\varnothing$.
\end{proof}


\begin{proof}[Proof of \Cref{cor:common-refinement-AE}]
Let $E$ be skeptical-envelope admissible. Requirement~$(ii)$ is imposed on the full report space $\Theta^2$ and is exactly condition $(wM)$ in \citet{Vaccari2023JET}. Set
\[
C^E
\coloneqq
K_2\setminus\mathcal A(\nu^E).
\]
By \Cref{lem:operator-competition-geometry}, $C^E$ is path connected and dense in $Z_2$. Hence, $\mathscr C(\nu^E)=\{C^E\}$ and $\overline{C^E}=Z_2$. In particular, every report atom borders the unique component $C^E$.

At every non-atomic point of $K_2$, the extension $\overline V_{C^E}$ agrees with $V^E$ by requirement~$(iii)$. At every atom $\bm a\in\mathcal A(\nu^E)\cap K_2$, there is only one bordering component. Therefore,
\[
\left(\mathcal L_{\nu^E}V^E\right)(\bm a) = \overline V_{C^E}(\bm a) = \left(\mathcal U_{\nu^E}V^E\right)(\bm a).
\]
Skeptical-envelope admissibility then gives $V^E(\bm a)=\overline V_{C^E}(\bm a)$. As a result, $V^E$ and $\overline V_{C^E}$ agree throughout $K_2$.
Component-wise regularity implies
\[
\frac{\partial V^E(r_1,r_2)}{\partial r_j} = \partial_j^{C^E}\overline V_{C^E}(r_1,r_2) >0
\]
for every $j\in\{1,2\}$ whenever $\underline r_2(0)<r_2\leq0\leq r_1<\bar r_1(0)$. At the axes, the relevant derivative is understood in the feasible one-sided
sense. This is condition $(sM)$ in \citet{Vaccari2023JET}.

Plausible-state support, together with \Cref{lem:operator-competition-geometry}, implies
\[
\mu_{(0,0)}^E = \mu_{(\bar r_1(0),\underline r_2(0))}^E = \delta_0.
\]
As a result,
\[
V^E(0,0) = V^E(\bar r_1(0),\underline r_2(0)) = u_r(0) =0,
\]
which is condition $(Dom)$. Thus, the three regularity requirements together with the skeptical-envelope condition imply $(wM)$, $(sM)$, and $(Dom)$. The essential-uniqueness result in \citet{Vaccari2023JET} proves part~$(i)$.

For part~$(ii)$, take an adversarial equilibrium. Its receiver payoff difference satisfies $(wM)$, $(sM)$, and $(Dom)$. By \Cref{lem:AE-atomless}, its joint-report distribution has no point atoms, so both envelope maps are the identity on $Z_2$.

It remains to verify plausible-state support. The adversarial-equilibrium characterization implies that every strictly conflicting report pair $(r_1,r_2)\in Z_2$ with $r_2<0<r_1$ is on path. At every such pair, Bayes' rule assigns probability only to states in which both reports are used by the corresponding equilibrium strategies. Every such state belongs to $P_2(r_1,r_2)$. Hence,
\[
\mu^E_{(r_1,r_2)}\!\left(P_2(r_1,r_2)\right)=1
\]
at every strictly conflicting report pair.

It remains to complete beliefs at zero-probability boundary profiles. Write
\[
\theta_-(r_1,r_2) \coloneqq \max\left\{\bar r_1^{-1}(r_1),r_2\right\},
\]
\[
\theta_+(r_1,r_2) \coloneqq \min\left\{r_1,\underline r_2^{-1}(r_2)\right\}.
\]
Thus, $P_2(r_1,r_2) = [\theta_-(r_1,r_2),\theta_+(r_1,r_2)]$. Because $V^E$ and the reach functions are continuous, and because every such boundary profile can be approached by strictly conflicting on-path profiles, we have that $u_r\!\left(\theta_-(r_1,r_2)\right) \leq V^E(r_1,r_2) \leq u_r\!\left(\theta_+(r_1,r_2)\right)$. Whenever $\theta_-(r_1,r_2)<\theta_+(r_1,r_2)$, define
\[
\omega(r_1,r_2) \coloneqq \frac{ V^E(r_1,r_2)-u_r(\theta_-(r_1,r_2)) }{ u_r(\theta_+(r_1,r_2))-u_r(\theta_-(r_1,r_2)) } \in[0,1]
\]
and set
\[
\mu^E_{(r_1,r_2)} = \left(1-\omega(r_1,r_2)\right) \delta_{\theta_-(r_1,r_2)} + \omega(r_1,r_2) \delta_{\theta_+(r_1,r_2)}.
\]
At the two profiles for which $P_2(r_1,r_2)=\{0\}$, set $\mu^E_{(r_1,r_2)}=\delta_0$.

This completion is supported on $P_2(r_1,r_2)$ and generates exactly the original value $V^E(r_1,r_2)$. It leaves the receiver's action, all sender incentives, and conditions $(wM)$, $(sM)$, and $(Dom)$ unchanged. By \Cref{lem:AE-atomless}, both envelope maps are the identity on $Z_2$. Thus, the resulting adversarial-equilibrium assessment is skeptical-envelope admissible.

Because $\mathcal A(\nu^{\DE})=\varnothing$, the unique atom-free component is $K_2$ itself. Therefore, requirement~$(iii)$ imposes no additional continuation condition at report atoms and coincides on $K_2$ with condition $(sM)$, which the adversarial equilibrium already satisfies.
\end{proof}


\begin{proof}[Proof of \Cref{prop:mono-welfare}]
		Under full information the receiver chooses $a^+$ if and only if $\theta\geq 0$. In the monopoly equilibrium indexed by $\lambda$, the only states in which the receiver behaves differently are those in the pooling interval $(\ell(\lambda),r^*(\lambda))$. After observing the pooling report $r^*(\lambda)$, the receiver chooses $a^+$. Therefore, the only welfare loss relative to full information comes from the negative states in the pooling interval, namely the states in $(\ell(\lambda),0)$. The loss is exactly
		\[
		\int_{\ell(\lambda)}^0 u_r(\theta)f(\theta)\,d\theta,
		\]
		which is weakly negative because $u_r(\theta)<0$ for every $\theta<0$. This proves the formula and the inequality. The inequality is strict unless the pooling interval collapses at the receiver threshold, which is precisely the full-information case.
	\end{proof}


\begin{proof}[Proof of \Cref{thm:high-skepticism}]
		By \Cref{prop:mono-welfare},
		\[
		W_r^{\M}(\lambda) = W_r^{\FI} + \int_{\ell(\lambda)}^0 u_r(\theta)f(\theta)\,d\theta.
		\]
		As $\lambda$ approaches the most informative equilibrium, the pooling interval shrinks to the receiver's threshold and, therefore, $\ell(\lambda)\uparrow0$. By dominated convergence,
		\[
		\lim_{\lambda\to\bar\lambda}W_r^{\M}(\lambda)=W_r^{\FI}.
		\]
		Since $W_r^{\DE}<W_r^{\FI}$, there exists $\lambda_0<\bar\lambda$ such that $W_r^{\M}(\lambda)>W_r^{\DE}$ for every $\lambda\in(\lambda_0,\bar\lambda]$.
	\end{proof}


\begin{proof}[Proof of \Cref{prop:bounds}]
		For the lower bound, consider the feasible decision rule under which the receiver ignores sender~$2$ and chooses $a^+$ if and only if sender~$1$'s report is weakly positive. Call the welfare generated by this rule $\underline W_r^{\DE}$. Since the receiver's actual adversarial-equilibrium strategy is sequentially rational, it weakly dominates every feasible rule, so $W_r^{\DE}\geq \underline W_r^{\DE}$. Thus, it is enough to compute $\underline W_r^{\DE}$.
		
		If $\theta\geq 0$, then sender~$1$ never reports below the truth. Hence, he never reports a negative number. Under the auxiliary rule, the receiver chooses the correct action and obtains payoff $u_r(\theta)$. If $\theta\leq \theta_L$, sender~$1$ is truthful, so his report is negative and the receiver again chooses the correct action, obtaining payoff zero. If $\theta\in(\theta_L,0)$, sender~$1$ reports truthfully with probability $\alpha_1(\theta)$ and upward-misreports with probability $1-\alpha_1(\theta)$. Under the auxiliary rule the receiver therefore chooses $a^-$ with probability $\alpha_1(\theta)$ and $a^+$ with probability $1-\alpha_1(\theta)$. Her expected payoff in that state is
		\[
		0\cdot \alpha_1(\theta)+u_r(\theta)\left(1-\alpha_1(\theta)\right)=u_r(\theta)\left(1-\alpha_1(\theta)\right).
		\]
		Integrating across states yields the stated formula for $\underline W_r^{\DE}$.
		
		For the upper bound, consider a state $\theta\in(\theta_L,0)$. If sender~$2$ reports truthfully, then his report is $r_2=\theta$. Every upward misreport of sender~$1$ in that state is weakly above $s(\theta)$ by construction of the adversarial equilibrium, so sender~$1$ swings the truthful report of sender~$2$ and the receiver chooses $a^+$. Thus, on the event where sender~1 misreports upward and sender~2 reports truthfully, the receiver necessarily makes a mistake. The probability of that event is $(1-\alpha_1(\theta))\alpha_2(\theta)$. If one ignores all other mistakes the receiver may make, welfare can only go up. Therefore,
		\[
		W_r^{\DE}\leq W_r^{\FI}+\int_{\theta_L}^0 u_r(\theta)f(\theta)\left(1-\alpha_1(\theta)\right)\alpha_2(\theta)\,d\theta=\overline W_r^{\DE}.
		\]
		The strict inequalities follow because $u_r(\theta)<0$ on $(\theta_L,0)$ and because $\alpha_1,\alpha_2\in(0,1)$ on a set of positive measure.
	\end{proof}


\begin{proof}[Proof of \Cref{lem:cutoff-sufficient-condition}]
		By the lower bound in \Cref{prop:bounds},
		\[
		W_r^{\DE} \geq W_r^{\FI} + \int_{\theta_L}^0 u_r(\theta)f(\theta)\left(1-\alpha_1(\theta)\right)\,d\theta.
		\]
		By \Cref{prop:mono-welfare},
		\[
		W_r^{\M}(0) = W_r^{\FI} + \int_{\ell^M}^0 u_r(\theta)f(\theta)\,d\theta.
		\]
		Therefore,
		\[
		W_r^{\DE}-W_r^{\M}(0) \geq -\int_{\ell^M}^{\theta_L}u_r(\theta)f(\theta)\,d\theta - \int_{\theta_L}^0u_r(\theta)f(\theta)\alpha_1(\theta)\,d\theta.
		\]
		Since $u_r(\theta)<0$ for every $\theta<0$, both terms on the right-hand side are weakly positive. At least one is strictly positive under the stated assumptions. Hence, $W_r^{\DE}>W_r^{\M}(0)$.
	\end{proof}


\begin{proof}[Proof of \Cref{thm:symmetric}]
		Under the symmetric benchmark, reflecting states and reports around zero maps the competitive environment into itself while exchanging the two senders. Hence, if $s$ is an adversarial-equilibrium swing function, then
\[
\widetilde s(r)\coloneqq-s(-r)
\]
is also a swing function. Essential uniqueness of the adversarial-equilibrium outcome implies $s(r)=\widetilde s(r)$, and therefore $s(-r)=-s(r)$. This gives oddness, but one more step is needed. For $r>0$, define $g(r)\coloneqq-s(r)$. The function $g$ is strictly increasing. Oddness and the involution property $s(s(r))=r$ imply $g(g(r))=r$. An increasing involution must be the identity. If $g(r)>r$, monotonicity would give $g(g(r))>g(r)$, contrary to $g(g(r))=r$, and the case $g(r)<r$ is analogous. Thus $g(r)=r$ and
\[
s(r)=-r
\]
on the relevant domain.
		
		The left truthful cutoff is characterized by $s\left(\bar r_1(\theta_L)\right)=\theta_L$. Using $s(r)=-r$, this becomes $\bar r_1(\theta_L)=-\theta_L$. Alternatively, we can write $\bar r_1^{-1}(-\theta_L)=\theta_L$.
		
		By symmetry of $f$ and anti-symmetry of $u_r$, we have
		\begin{equation}\label{eq:monop_cond}
		\int_{\theta_L}^{-\theta_L}u_r(\theta)f(\theta)\,d\theta=0.
		\end{equation}
		Thus, the interval $(\theta_L,-\theta_L)$, pooled at report $-\theta_L$, satisfies the least-informative monopoly indifference condition \eqref{eq:monop_cond}, where $\theta_L=\bar r_1^{-1}(-\theta_L)$. By uniqueness of the least-informative monopoly pooling report, we obtain $r^M=-\theta_L=-\ell^M$.
		The welfare ranking now follows from \Cref{lem:cutoff-sufficient-condition}, because $\theta_L=\ell^M$.
	\end{proof}


\begin{proof}[Proof of \Cref{prop:exact-welfare}]
		Let $q(\theta)$ denote the equilibrium probability that the receiver chooses the positive action in state $\theta$. By definition,
		\[
		W_r^{\DE}=\int_{\Theta} u_r(\theta)f(\theta)q(\theta)\,d\theta.
		\]
		
		Consider first a state $\theta\geq \theta_H$. By the definition of the right truthful cutoff, sender~$2$ cannot profitably swing a truthful report by sender~$1$. Hence, both senders report truthfully with probability one and the receiver chooses the positive action with probability one. Therefore, $q(\theta)=1$ for every $\theta\geq \theta_H$. Now consider a state $\theta\leq \theta_L$. By the definition of the left truthful cutoff, sender~$1$ cannot profitably swing a truthful report by sender~$2$. Hence, both senders again report truthfully with probability one, and the receiver chooses the positive action with probability zero. Therefore, $q(\theta)=0$ for every $\theta\leq \theta_L$. It remains to characterize $q(\theta)$ on the mixing regions.
		
		\medskip
		
		\noindent
		\textbf{Positive states.} Fix $\theta\in[0,\theta_H)$.	In this region sender~$1$ is truthful with probability $\alpha_1(\theta)$ and continuously misreports upward with density $\psi_1(\cdot\mid \theta)$ on $\left[\theta,s(\underline{r}_2 (\theta))\right)$. Sender~$2$ is truthful with probability $\alpha_2(\theta)$ and continuously misreports downward with cumulative distribution function $\Psi_2(\cdot\mid \theta)$ on $\left[\underline{r}_2 (\theta),s(\theta)\right)$.
		
		The receiver makes a mistake if and only if sender~$2$ swings sender~$1$'s report. There are two disjoint cases. First, sender~$1$ reports truthfully, which occurs with probability $\alpha_1(\theta)$. Conditional on that event, sender~$2$ swings the truthful report $\theta$ if and only if he misreports. That happens with probability $1-\alpha_2(\theta)$. Hence, this case contributes $\alpha_1(\theta)\left(1-\alpha_2(\theta)\right)$ to the conditional mistake probability.
		
		Second, sender~$1$ misreports to some $r\in[\theta,s(\underline r_2(\theta)))$. Conditional on that report, sender~$2$ swings if and only if his report lies below $s(r)$. Since $\Psi_2(\cdot\mid\theta)$ is the sub-distribution function of sender~$2$'s unconditional continuous component, that probability is $\Psi_2(s(r)\mid\theta)$. Integrating against sender~$1$'s unconditional continuous density yields the second term in \eqref{eq:mplus-def}. Hence, the total conditional mistake probability in a positive state is exactly $m_+(\theta)$, and therefore $q(\theta)=1-m_+(\theta)$ for every $\theta\in[0,\theta_H)$.
		
		\medskip
		\noindent
		\textbf{Negative states.} Fix $\theta\in(\theta_L,0]$. In this region sender~$2$ is truthful with probability $\alpha_2(\theta)$ and continuously misreports downward with density $\psi_2(\cdot\mid \theta)$ on $\left[s(\bar r_1(\theta)),\theta\right)$. Sender~$1$ is truthful with probability $\alpha_1(\theta)$ and continuously misreports upward with cumulative distribution function $\Psi_1(\cdot\mid \theta)$ on $\left[s(\theta),\bar r_1(\theta)\right)$.
		
		The receiver now makes a mistake if and only if sender~$1$ swings sender~$2$'s report. Again there are two disjoint cases. First, sender~$2$ reports truthfully, which occurs with probability $\alpha_2(\theta)$. Conditional on that event, sender~$1$ swings the truthful report $\theta$ if and only if he misreports. That occurs with probability $1-\alpha_1(\theta)$. Hence, this case contributes $\alpha_2(\theta)\left(1-\alpha_1(\theta)\right)$ to the conditional mistake probability.
		
		Second, sender~$2$ misreports to some $r\in[s(\bar r_1(\theta)),\theta)$. Conditional on that report, sender~$1$ swings if and only if his continuous report is weakly above $s(r)$. Since sender~$1$'s continuous component has total mass $1-\alpha_1(\theta)$, and since $\Psi_1(s(r)\mid\theta)$ is the mass of that component weakly below $s(r)$, this probability is $1-\alpha_1(\theta)-\Psi_1(s(r)\mid\theta)$. Integrating against sender~$2$'s unconditional continuous density yields the second term in \eqref{eq:mminus-def}. Hence, the total conditional mistake probability in a negative state is $m_-(\theta)$, and therefore $q(\theta)=m_-(\theta)$ for every $\theta\in(\theta_L,0]$.
		
		Substituting the piece-wise characterization of $q(\theta)$ into
		\[
		W_r^{\DE}=\int_{\Theta} u_r(\theta)f(\theta)q(\theta)\,d\theta
		\]
		gives \eqref{eq:exact-welfare-alt}.	Subtracting and adding
		\[
		\int_{0}^{\theta_H}u_r(\theta)f(\theta)\,d\theta
		\]
		then yields \eqref{eq:exact-welfare}, because
		\[
		W_r^{\FI} = \int_{0}^{\theta_H} u_r(\theta)f(\theta)\,d\theta +\int_{\theta_H}^{\sup\Theta} u_r(\theta)f(\theta)\,d\theta.
		\]
		This completes the proof.
	\end{proof}


\begin{proof}[Proof of \Cref{cor:exact-comparison}]
		By \Cref{prop:exact-welfare}, we have $W_r^{\DE}=W_r^{\FI}-\mathcal L_+^{\DE}-\mathcal L_-^{\DE}$. By the monopoly welfare formula, $W_r^{\M}(0)=W_r^{\FI}-\mathcal L^{\M}$. Subtracting the second identity from the first yields
		\[
		W_r^{\DE}-W_r^{\M}(0) = \mathcal L^{\M}-\mathcal L_+^{\DE}-\mathcal L_-^{\DE}.
		\]
		Therefore, $W_r^{\DE}>W_r^{\M}(0)$ if and only if $\mathcal L_+^{\DE}+\mathcal L_-^{\DE}<\mathcal L^{\M}$. This proves \eqref{eq:exact-comparison}.
	\end{proof}


\begin{proof}[Proof of \Cref{cor:exact-symmetric}]
		Under the symmetry assumptions of \Cref{thm:symmetric}, the prior is symmetric around zero, the receiver's payoff satisfies $u_r(-\theta)=-u_r(\theta)$, the truthful cutoffs satisfy $\theta_H=-\theta_L$, and the adversarial-equilibrium strategies are mirror images of each other. In particular, if $\theta>0$, then the probability of a mistaken negative decision in state $\theta$ is equal to the probability of a mistaken positive decision in state $-\theta$. Hence, $m_+(\theta)=m_-(-\theta)$.
		
		Using the change of variable $x=-\theta$ in the positive-state loss term, together with the symmetry of $f$ and $u_r$, yields
		\begin{align*}
			\mathcal L_+^{\DE}
			&=
			\int_{0}^{\theta_H} u_r(\theta)f(\theta)m_+(\theta)\,d\theta \\
			&=
			\int_{0}^{-\theta_L} u_r(\theta)f(\theta)m_-(-\theta)\,d\theta \\
			&=
			\int_{\theta_L}^{0} \left[-u_r(x)\right]f(x)m_-(x)\,dx
			=
			\mathcal L_-^{\DE}.
		\end{align*}
		This proves \eqref{eq:symmetric-losses}. Substituting $\mathcal L_+^{\DE}=\mathcal L_-^{\DE}$ into $W_r^{\DE}=W_r^{\FI}-\mathcal L_+^{\DE}-\mathcal L_-^{\DE}$ gives
		\[
		W_r^{\DE} = W_r^{\FI}-2\mathcal L_-^{\DE} = W_r^{\FI} + 2\int_{\theta_L}^{0} u_r(\theta)f(\theta)m_-(\theta)\,d\theta,
		\]
		which is \eqref{eq:exact-welfare-symmetric}. The final inequality is then just the exact comparison criterion in \Cref{cor:exact-comparison} specialized to the symmetric case.
	\end{proof}


\begin{proof}[Proof of \Cref{lem:high-k2-convergence}]
Part~$(iii)$ of \Cref{prop:AE-characterization} implies that both senders report truthfully outside $(\theta_L,\theta_H)$. We show that this mixing interval shrinks to zero as $k_2$ grows.

First fix $\theta>0$. If $\theta\geq\tau_2$, then $\theta\geq\theta_H$ for every $k_2$, so both senders are already truthful. Now suppose $0<\theta<\tau_2$. Sender~$2$'s lower reach satisfies
\[
k_2 C_2\!\left(\underline r_2(\theta),\theta\right) =-u_2(\theta).
\]
The right-hand side is fixed. Hence $C_2(\underline r_2(\theta),\theta)\to0$. Because the lower reach is below $\theta$ and the cost is strictly increasing in the distance from the truth, this implies $\underline r_2(\theta)\to\theta$. Thus, $\underline r_2(\theta)>0$ for all sufficiently large $k_2$. If $\theta$ lies in the swing-function domain, then $s(\theta)<0$ and, therefore,
\[
s(\theta)-\underline r_2(\theta)<0.
\]
Because $s(x)-\underline r_2(x)$ is strictly decreasing and vanishes at $x=\theta_H$, this inequality gives $\theta>\theta_H$. If $\theta$ lies above the swing-function domain, the same conclusion is immediate. Hence, both senders are truthful at every fixed $\theta>0$ once $k_2$ is large enough.

Now fix $\theta<0$. If $\theta\leq\tau_1$, then $\theta<\theta_L$ because $\theta_L>\tau_1$, and both senders report truthfully. Suppose, therefore, that $\tau_1<\theta<0$, so that $\bar r_1(\theta)$ is well defined. If $\bar r_1(\theta)\leq0$, sender~$1$ cannot reach a positive report. Since $\bar r_1(\theta_L)=s(\theta_L)>0$ and $\bar r_1$ is increasing, we have $\theta<\theta_L$. Suppose instead that $\bar r_1(\theta)>0$. At state zero, sender~$2$'s lower reach satisfies
\[
k_2 C_2\!\left(\underline r_2(0),0\right)=-u_2(0),
\]
so $C_2(\underline r_2(0),0)\to0$. The same distance argument gives $\underline r_2(0)\to0$. Moreover, $s\!\left(\bar r_1(\theta)\right) \geq \underline r_2(0)$. For all sufficiently large $k_2$, the right-hand side is greater than the fixed negative state $\theta$. Hence, $s\!\left(\bar r_1(\theta)\right)-\theta>0$. The function $x\mapsto s(\bar r_1(x))-x$ is strictly decreasing and vanishes at $x=\theta_L$. Therefore, $\theta<\theta_L$, and both senders again report truthfully.

Thus, for every fixed $\theta\neq0$, equilibrium reports are eventually truthful. The receiver then takes her full-information action, both mistake probabilities are zero, and both misreporting costs are zero. This proves all the stated limits.
\end{proof}


\begin{proof}[Proof of \Cref{prop:high-k2-threshold}]
	By \Cref{lem:high-k2-convergence}, we have $\widetilde m_+(\theta\mid\tau_2,k_2)\to0$ for every $\theta>0$ and $\widetilde m_-(\theta\mid\tau_2,k_2)\to0$ for every $\theta<0$. Moreover, recall that $0\leq \widetilde m_+(\theta\mid\tau_2,k_2)\leq1$ and $0\leq \widetilde m_-(\theta\mid\tau_2,k_2)\leq1$. Since $|u_r(\theta)|f(\theta)$ is integrable, dominated convergence gives \[
	\lim_{k_2\to\infty}\mathcal L_+^{\DE}(\tau_2,k_2)=\lim_{k_2\to\infty}\mathcal L_-^{\DE}(\tau_2,k_2)=0.
	\]
	
	By \Cref{cor:exact-comparison}, $W_r^{\DE}(\tau_2,k_2)>W_r^{\M}(0)$	if and only if $\mathcal L_+^{\DE}(\tau_2,k_2)+\mathcal L_-^{\DE}(\tau_2,k_2)<\mathcal L^{\M}$. Since $\mathcal L^{\M}>0$ and the left-hand side converges to zero, there exists a finite $\bar k_2(\tau_2)$ such that the inequality holds for every $k_2\geq\bar k_2(\tau_2)$.
\end{proof}


\begin{proof}[Proof of \Cref{cor:high_cost}]
		Fix $\lambda<\bar\lambda$. Define the monopoly loss relative to full information by
		\[
		\mathcal L^{\M}(\lambda) \coloneqq -\int_{\ell(\lambda)}^0u_r(\theta)f(\theta)\,d\theta.
		\]
		Since $\lambda<\bar\lambda$, the monopoly equilibrium is not fully revealing, so $\mathcal L^{\M}(\lambda)>0$. By \Cref{prop:exact-welfare}, we have
		\[
		W_r^{\DE}(\tau_2,k_2) = W_r^{\FI} - \mathcal L_+^{\DE}(\tau_2,k_2) - \mathcal L_-^{\DE}(\tau_2,k_2),
		\]
		while by \Cref{prop:mono-welfare} we obtain $W_r^{\M}(\lambda)=W_r^{\FI}-\mathcal L^{\M}(\lambda)$. Therefore,
		\[
		W_r^{\DE}(\tau_2,k_2)-W_r^{\M}(\lambda) = \mathcal L^{\M}(\lambda) - \mathcal L_+^{\DE}(\tau_2,k_2) - \mathcal L_-^{\DE}(\tau_2,k_2).
		\]
		By \Cref{prop:high-k2-threshold}, $\mathcal L_+^{\DE}(\tau_2,k_2)+\mathcal L_-^{\DE}(\tau_2,k_2)\to0$. Since $\mathcal L^{\M}(\lambda)>0$, the desired strict inequality holds for all sufficiently large $k_2$.
	\end{proof}


\begin{proof}[Proof of \Cref{prop:Gs-positive}]
		Fix $r\in(0,\bar r_1(0))$ and write $s\coloneqq s(r)$. Let
		\begin{equation}\label{eq:I}
			I(r,s)\coloneqq \left[\max\{s,\bar r_1^{-1}(r)\},\min\{r,\underline{r}_2 ^{-1}(s)\}\right].
		\end{equation}
		Differentiating \eqref{eq:G} with respect to $s$ and using Leibniz' rule gives
		\begin{align*}
			G_s(r,s)
			&= \int_{I(r,s)} \frac{4\theta f(\theta)(r-\theta)}{(\theta-\tau_1)(\theta-\tau_2)}\,d\theta \\
			&\qquad +\mathds{1}_{\left\{\underline{r}_2 ^{-1}(s)<r\right\}} \frac{4\underline{r}_2 ^{-1}(s)f(\underline{r}_2 ^{-1}(s))(s-\underline{r}_2 ^{-1}(s))(r-\underline{r}_2 ^{-1}(s))}{(\underline{r}_2 ^{-1}(s)-\tau_1)(\underline{r}_2 ^{-1}(s)-\tau_2)} \frac{\partial \underline{r}_2 ^{-1}(s)}{\partial s}.
		\end{align*}
		There is no lower-bound term because the lower limit either does not depend on $s$ or makes the integrand vanish through the factor $(s-\theta)$. Likewise, the upper-bound term appears only when the upper limit is $\underline{r}_2 ^{-1}(s)$. If the upper limit is $r$, the factor $(r-\theta)$ makes the boundary contribution vanish.
		
		We first show that the integral term is strictly positive. Since $G(r,s)=0$, we have that
		\[
		\int_{I(r,s)} \frac{4\theta f(\theta)(s-\theta)(r-\theta)}{(\theta-\tau_1)(\theta-\tau_2)}\,d\theta=0.
		\]
		Rearranging yields
		\[
		\int_{I(r,s)} \frac{4\theta f(\theta)(r-\theta)}{(\theta-\tau_1)(\theta-\tau_2)}\,d\theta = \frac{1}{s} \int_{I(r,s)} \frac{4\theta^2 f(\theta)(r-\theta)}{(\theta-\tau_1)(\theta-\tau_2)}\,d\theta.
		\]
		On $I(r,s)$ we have $r-\theta>0$, $\theta-\tau_1>0$, and $\theta-\tau_2<0$. Therefore, the integrand on the right-hand side is strictly negative. Since $s<0$, the right-hand side is strictly positive. Hence, the integral term is strictly positive.
		
		Now consider the boundary term. From \eqref{eq:r2-inv} one obtains
		\[
		\frac{\partial \underline{r}_2 ^{-1}(s)}{\partial s}=1-\frac{1}{\sqrt{1+4k_2(\tau_2-s)}}>0.
		\]
		Whenever the indicator is one, $\underline{r}_2 ^{-1}(s)$ belongs to $(0,\tau_2)$, and so $\underline{r}_2 ^{-1}(s)>0$, $s<\underline{r}_2 ^{-1}(s)$, and $r>\underline{r}_2 ^{-1}(s)$. As a result, the fraction multiplying $\partial \underline{r}_2 ^{-1}(s)/\partial s$ is strictly positive. Thus, the boundary term is weakly positive. Combining the two parts yields $G_s(r,s(r))>0$.

        At the endpoint, strict monotonicity of the swing function on $[\underline r_2(0),\bar r_1(0)]$ implies $s\!\left(\bar r_1(0)\right)=\underline r_2(0)$. Moreover, $\bar r_1^{-1}\!\left(\bar r_1(0)\right)=0=\underline r_2^{-1}\!\left(\underline r_2(0)\right)$. As a result, $I\!\left(\bar r_1(0),\underline r_2(0)\right)=[0,0]=\left\{ 0 \right\}$. Therefore, the integral term in the displayed formula for $G_s$ is zero. The boundary term is also zero because its numerator contains the factor $\underline r_2^{-1}(\underline r_2(0))=0$. Hence, $G_s\!\left(\bar r_1(0),s(\bar r_1(0))\right)=0$.	
\end{proof}


\begin{proof}[Proof of \Cref{prop:cost-effect}]
First consider $r=\bar r_1(0)$. The endpoint identity for the swing function and the LQ reach formula give
\[
s_{k_2}\!\left(\bar r_1(0)\right) = \underline r_2(0\mid\tau_2,k_2) = -\sqrt{\frac{\tau_2}{k_2}}.
\]
Therefore,
\[
\frac{\partial s_{k_2}(\bar r_1(0))}{\partial k_2} = \frac{\sqrt{\tau_2}}{2k_2^{3/2}} >0.
\]
Moreover,
\[
\underline r_2^{-1}\!\left( s_{k_2}(\bar r_1(0)) \right) = 0 < \bar r_1(0),
\]
so this is precisely the binding case in the proposition. This proves the endpoint conclusion directly, without invoking the implicit function theorem.

For the remainder of the proof, fix $r\in(0,\bar r_1(0))$ and write $s\coloneqq s_{k_2}(r)$.

In the LQ swing equation, $k_2$ enters only through the upper integration limit
\[
b_{k_2}(s) \coloneqq \min\left\{r,\underline r_2^{-1}(s)\right\}.
\]
The inverse reach $\underline r_2^{-1}(s)$ is strictly decreasing in $k_2$. Whenever it binds, it lies in $(0,r)$, where the integrand in \eqref{eq:G} is strictly positive. It follows that $G(r,s\mid\tau_2,k_2)$ is weakly decreasing in $k_2$ for every fixed admissible $s$, and strictly decreasing whenever the inverse-reach boundary moves in the interior.

The adversarial-equilibrium characterization gives a unique swing root, and \Cref{prop:Gs-positive} shows that $G$ crosses zero from below at that root. Therefore, if $\widehat k_2\geq k_2$, the point-wise inequality
\[
G\left(r,s\mid\tau_2,\widehat k_2\right) \leq G\left(r,s\mid\tau_2,k_2\right)
\]
implies $s_{\widehat k_2}(r)\geq s_{k_2}(r)$. If the old root lies below the admissible report interval for $\widehat k_2$, this inequality holds immediately from the movement of that interval's lower endpoint.

At differentiability points, implicit differentiation gives
\[
\frac{\partial s_{k_2}(r)}{\partial k_2} =- \frac{G_{k_2}(r,s_{k_2}(r))} {G_s(r,s_{k_2}(r))}.
\]
Here $G_s>0$, while $G_{k_2}=0$ when the inverse reach is non-binding and $G_{k_2}<0$ when it binds. This gives the derivative statements. The same argument applied from either side gives the claim at boundary-switching points.
\end{proof}


\begin{proof}[Proof of \Cref{prop:bias-effect}]
		Under the strict non-binding condition $\underline r_2^{-1}(s(r))>r$, the upper integration limit in \eqref{eq:G} is locally equal to $r$. Hence, there is no boundary term when differentiating with respect to $\tau_2$. We obtain
		\[
		G_{\tau_2}(r,s)=\int_{I(r,s)} \frac{4\theta f(\theta)(s-\theta)(r-\theta)}{(\theta-\tau_1)(\theta-\tau_2)^2}\,d\theta.
		\]
		Using again the identity $G(r,s)=0$, rewrite
		\[
		0=\int_{I(r,s)} \frac{4\theta f(\theta)(s-\theta)(r-\theta)(\theta-\tau_2)}{(\theta-\tau_1)(\theta-\tau_2)^2}\,d\theta.
		\]
		Expanding $\theta(\theta-\tau_2)=\theta^2-\tau_2\theta$ and rearranging yields
		\[
		\tau_2 G_{\tau_2}(r,s)=\int_{I(r,s)} \frac{4\theta^2 f(\theta)(s-\theta)(r-\theta)}{(\theta-\tau_1)(\theta-\tau_2)^2}\,d\theta.
		\]
		On $I(r,s)$ the factors satisfy $\theta^2>0$, $s<\theta$, $r>\theta$, $\theta>\tau_1$, and $(\theta-\tau_2)^2>0$. Hence, the integrand on the right-hand side is strictly negative. Because $\tau_2>0$, it follows that $G_{\tau_2}(r,s)<0$. \Cref{prop:Gs-positive} then gives
		\[
		\frac{\partial s(r)}{\partial \tau_2}=-\frac{G_{\tau_2}(r,s(r))}{G_s(r,s(r))}>0.
		\]
	\end{proof}


\begin{proof}[Proof of \Cref{cor:more-costly}]
			By \Cref{thm:symmetric}, the symmetric benchmark satisfies $s_{-\tau_1,k_1}(r^M)=\ell^M$. Holding $\tau_2=-\tau_1$ fixed and increasing $k_2$, \Cref{prop:cost-effect} implies that, along the continuous adversarial-equilibrium swing root path,
			\[
			s_{-\tau_1,\widehat k_2}(r^M) \geq s_{-\tau_1,k_1}(r^M) = \ell^M.
			\]
		
			Let $\widehat\theta_L$ denote the left truthful cutoff in the adversarial equilibrium with parameters $(-\tau_1,\widehat k_2)$. Define
		\[
		\Lambda(\theta) \coloneqq s_{-\tau_1,\widehat k_2}\left(\bar r_1(\theta)\right)-\theta .
		\]
		Because $\bar r_1$ is strictly increasing and the swing function is strictly decreasing, $\Lambda$ is strictly decreasing. The cutoff $\widehat\theta_L$ is characterized by $\Lambda(\widehat\theta_L)=0$. Moreover, since $\bar r_1(\ell^M)=r^M$, we have
		\[
		\Lambda(\ell^M) = s_{-\tau_1,\widehat k_2}(r^M)-\ell^M \geq0.
		\]
		Because $\Lambda$ is strictly decreasing, this implies $\widehat\theta_L\geq\ell^M$. The conclusion follows from \Cref{lem:cutoff-sufficient-condition}.
		\end{proof}


\begin{proof}[Proof of \Cref{prop:condition_lq}]
		Under the LQ setting, sender~2's lower reach is obtained from $k_2(\theta-r_2)^2=\tau_2-\theta$ for $\theta\leq\tau_2$ and $r_2\leq\theta$. Thus,
		\begin{equation}\label{eq:lq_reach_sender2}
			\underline r_2(\theta) = \theta-\sqrt{\frac{\tau_2-\theta}{k_2}}.
		\end{equation}
		Differentiating gives
		\[
		\frac{\partial \underline r_2(\theta)}{\partial \tau_2} = -\frac{1}{2\sqrt{k_2}\sqrt{\tau_2-\theta}} <0
		\]
		and
		\[
		\frac{\partial \underline r_2(\theta)}{\partial k_2} = \frac{\sqrt{\tau_2-\theta}}{2k_2^{3/2}} >0.
		\]
		Since $\underline r_2(\cdot)$ is strictly increasing, its inverse exists on its image. Differentiating the identity $\underline r_2\left(\underline r_2^{-1}(r_2)\right)=r_2$ with respect to $\tau_2$ and $k_2$ gives \[ \frac{\partial \underline r_2^{-1}(r_2)}{\partial \tau_2} = - \frac{\partial \underline r_2(\theta)/\partial\tau_2} {\partial \underline r_2(\theta)/\partial\theta} \Bigg|_{\theta=\underline r_2^{-1}(r_2)} >0
		\]
		and
		\[
		\frac{\partial \underline r_2^{-1}(r_2)}{\partial k_2} = - \frac{\partial \underline r_2(\theta)/\partial k_2} {\partial \underline r_2(\theta)/\partial\theta} \Bigg|_{\theta=\underline r_2^{-1}(r_2)} <0,
		\]
		because $\partial \underline r_2(\theta)/\partial\theta>0$. The final statement follows immediately.
	\end{proof}


\begin{proof}[Proof of \Cref{prop:condition_inverse}]
		Write $r\coloneqq r^M$ and $\ell\coloneqq \ell^M$.	In the LQ class, sender~$2$'s lower reach at state $r$ is
		\[
		\underline r_2(r\mid\tau_2,k_2) = r-\sqrt{\frac{\tau_2-r}{k_2}}.
		\]
		For each $n$, define
		\[
		d_n \coloneqq \sqrt{\frac{\tau_2^n-r}{k_2^n}}.
		\]
		Since $\tau_2^n\to+\infty$ and $k_2^n\to0^+$, we have $d_n\to+\infty$. Let
		\[
		s_n^0\coloneqq r-d_n = \underline r_2(r\mid\tau_2^n,k_2^n).
		\]
		Then, $\underline r_2^{-1}(s_n^0\mid\tau_2^n,k_2^n)=r$. It is enough to show that $s_n>s_n^0$ for all sufficiently large $n$.
		
		For all sufficiently large $n$, $s_n^0<\ell$. At $s=s_n^0$, the upper endpoint of the swing integral is $r$, while the lower endpoint is $\ell$. Hence,
		\[
		G(r,s_n^0\mid\tau_2^n,k_2^n) = \int_\ell^r \frac{ 4\theta f(\theta)(s_n^0-\theta)(r-\theta) }{ (\theta-\tau_1)(\theta-\tau_2^n) } \,d\theta.
		\]
		Substituting $s_n^0=r-d_n$, we get
		\[
		G(r,s_n^0\mid\tau_2^n,k_2^n) = \int_\ell^r \frac{ 4\theta f(\theta)\left(r-\theta-d_n\right)(r-\theta) }{ (\theta-\tau_1)(\theta-\tau_2^n) } \,d\theta.
		\]
		Multiply by the positive scalar $\tau_2^n/d_n$ and define
		\[
		\widehat G_n \coloneqq \frac{\tau_2^n}{d_n} G(r,s_n^0\mid\tau_2^n,k_2^n).
		\]
		Then,
		\[
		\widehat G_n = \int_\ell^r 4\theta f(\theta) \frac{r-\theta}{\theta-\tau_1} \left(1-\frac{r-\theta}{d_n}\right) \frac{\tau_2^n}{\tau_2^n-\theta} \,d\theta.
		\]
		The two factors, $1-(r-\theta)/d_n$ and $\tau_2^n/(\tau_2^n-\theta)$ converge uniformly to one on $[\ell,r]$. Therefore,
		\[
		\widehat G_n \to 4\int_\ell^r \theta f(\theta) \frac{r-\theta}{\theta-\tau_1} \,d\theta.
		\]
		Let
		\[
		H(\theta)\coloneqq \frac{r-\theta}{\theta-\tau_1}.
		\]
		Then, $H$ is strictly decreasing on $[\ell,r]$, since
		\[
		H'(\theta) = -\frac{r-\tau_1}{(\theta-\tau_1)^2}<0.
		\]
		By the least-informative monopoly indifference condition in the linear receiver-payoff case,
		\[
		\int_\ell^r \theta f(\theta)\,d\theta=0.
		\]
		Hence,
		\[
		\int_\ell^r \theta f(\theta)H(\theta)\,d\theta = \int_\ell^r \theta f(\theta)\left[H(\theta)-H(0)\right]\,d\theta.
		\]
		For $\theta<0$, $H(\theta)>H(0)$, so $\theta\left[H(\theta)-H(0)\right]<0$. For $\theta>0$, $H(\theta)<H(0)$, so again $\theta\left[H(\theta)-H(0)\right]<0$. Since $f$ has full support and $\ell<0<r$, the integrand is strictly negative on a set of positive measure. Therefore,
		\[
		\int_\ell^r \theta f(\theta)H(\theta)\,d\theta<0.
		\]
		It follows that $\widehat G_n<0$ for all sufficiently large $n$. Since $\tau_2^n/d_n>0$, we have $G(r,s_n^0\mid\tau_2^n,k_2^n)<0$ for all sufficiently large $n$.
		
		The adversarial-equilibrium characterization gives a unique zero $s_n$ of $G(r,\cdot\mid\tau_2^n,k_2^n)$. By \Cref{prop:Gs-positive}, this zero is a positive crossing. Continuity and uniqueness imply that $G$ is negative below $s_n$ and positive above it. Since $G(r,s_n^0\mid\tau_2^n,k_2^n)<0$, we must have $s_n^0<s_n$. Recalling that $s_n^0=\underline r_2(r\mid\tau_2^n,k_2^n)$, we obtain that $s_n>\underline r_2(r\mid\tau_2^n,k_2^n)$. Because $\underline r_2(\cdot\mid\tau_2^n,k_2^n)$ is strictly increasing, this is equivalent to
\[
\underline r_2^{-1}(s_n\mid\tau_2^n,k_2^n)>r.
\]
Substituting $r=r^M$ proves the result.
	\end{proof}


\begin{proof}[Proof of \Cref{lem:total-welfare-high-k2}]
		In the LQ class, we have $u_r(\theta)=\theta$, $u_1(\theta)=\theta-\tau_1$, and $u_2(\theta)=\theta-\tau_2$. Therefore, 
		\[
		S(\theta) \coloneqq u_r(\theta)+u_1(\theta)+u_2(\theta) = 3\theta-\tau_1-\tau_2
		\]
		is affine in $\theta$. Since $f$ has finite first moment, $S(\theta)f(\theta)$ is integrable.
		
		By \Cref{lem:high-k2-convergence}, for every fixed $\theta\neq0$,
		\[
		q^{\DE}(\theta\mid\tau_2,k_2) \to \mathds{1}_{\{\theta\geq0\}}.
		\]
		Since $0\leq q^{\DE}(\theta\mid\tau_2,k_2)\leq1$, dominated convergence gives
		\[
		\int_\Theta S(\theta) q^{\DE}(\theta\mid\tau_2,k_2)f(\theta)\,d\theta \to \int_0^\infty S(\theta)f(\theta)\,d\theta = \mathcal W_{\mathrm{tot}}^{\FI}.
		\]
		
		It remains to show that expected communication costs vanish in expectation. By \Cref{lem:high-k2-convergence}, for every fixed $\theta\neq0$, we have $\mathcal C_j^{\DE}(\theta\mid\tau_2,k_2)\to0$ for $j\in\{1,2\}$. Moreover, truthful reporting is feasible and costless. Hence, any report used with positive probability in equilibrium must give sender~$j$ an expected payoff at least as high as truthful reporting. Since the action payoff difference for sender~$j$ is bounded in absolute value by $|u_j(\theta)|$, the expected communication cost at state $\theta$ is bounded by
		\[
		0 \leq \mathcal C_j^{\DE}(\theta\mid\tau_2,k_2) \leq |u_j(\theta)|.
		\]
		Because $u_j$ is affine and $f$ has finite first moment, $|u_j(\theta)|f(\theta)$ is integrable. Dominated convergence implies, for $j\in\{1,2\}$,
		\[
		\int_\Theta \mathcal C_j^{\DE}(\theta\mid\tau_2,k_2)f(\theta)\,d\theta \to0.
		\]
		Combining the convergence of the action-payoff term with the convergence of communication costs proves $\mathcal W_{\mathrm{tot}}^{\DE}(\tau_2,k_2) \to \mathcal W_{\mathrm{tot}}^{\FI}$.
	\end{proof}


\begin{proof}[Proof of \Cref{thm:total-welfare-monopoly-dominates}]
		Recall that $S(\theta)=u_r(\theta)+u_1(\theta)+u_2(\theta)$. Under \Cref{def:LQ}, $S(\theta)=3\theta-\tau_1-\tau_2$. Since $u_r(\theta)=\theta$, full information induces action $a^+$ if and only if $\theta\geq0$. Therefore,
		\[
		\mathcal W_{\mathrm{tot}}^{\FI} = \int_0^\infty S(\theta)f(\theta)\,d\theta.
		\]
		
		Fix $-\tau_1>3\tau_2$. Let $\ell^M$ and $r^M$ denote the lower pooling type and pooling report in the least informative monopoly equilibrium. To emphasize their dependence on $k_1$, write $\ell_k\coloneqq \ell^M$, $r_k\coloneqq r^M$, and $d_k\coloneqq r_k-\ell_k$. The least informative monopoly equilibrium is characterized by
		\[
		\int_{\ell_k}^{r_k}\theta f(\theta)\,d\theta=0
		\]
		and by the reach condition
		\[
		r_k=\ell_k+\sqrt{\frac{\ell_k-\tau_1}{k_1}}.
		\]
		Hence, $d_k^2=(\ell_k-\tau_1)/k_1$. Since $\ell_k<0$, we have $d_k^2\leq (-\tau_1)/k_1$, and, therefore, $d_k\to0$ as $k_1\to\infty$. Because $\ell_k<0<r_k$ in the least informative monopoly equilibrium, this implies $\ell_k\to0$ and $r_k\to0$.
		
		We next derive the relative location of the two endpoints. Since $f$ is continuous at zero and $f(0)>0$,
		\[
		0 = \int_{\ell_k}^{r_k}\theta f(\theta)\,d\theta = \frac{f(0)}{2}\left(r_k^2-\ell_k^2\right)+o\left(d_k^2\right).
		\]
		Alternatively, we can write
		\[
		0 = \frac{f(0)}{2}d_k(r_k+\ell_k)+o\left(d_k^2\right).
		\]
		Dividing by $d_k>0$ gives $r_k+\ell_k=o(d_k)$. Thus, $r_k/d_k\to\frac12$ and $\ell_k/d_k\to-\frac12$. Moreover, from $k_1d_k^2=\ell_k-\tau_1$ and $\ell_k\to0$, we get $k_1d_k^2\to -\tau_1$.
		
		In the least informative monopoly equilibrium, the receiver chooses $a^+$ if and only if $\theta\geq \ell_k$. Sender~1 pools the interval $(\ell_k,r_k)$ at report $r_k$, and therefore incurs cost $k_1(r_k-\theta)^2$	on that interval. Sender~2 is muted and incurs no communication cost. Hence, the difference between total welfare in the least informative monopoly equilibrium and full-information total welfare is
		\[
		\mathcal W_{\mathrm{tot}}^{\M}(0)-\mathcal W_{\mathrm{tot}}^{\FI} = \int_{\ell_k}^{0}S(\theta)f(\theta)\,d\theta - k_1\int_{\ell_k}^{r_k}(r_k-\theta)^2f(\theta)\,d\theta.
		\]
		We evaluate this difference as $k_1\to\infty$.
		
		First,
		\[
		\frac{1}{d_k} \int_{\ell_k}^{0}S(\theta)f(\theta)\,d\theta \to \frac{f(0)}{2}(-\tau_1-\tau_2),
		\]
		because $S(\theta)=-\tau_1-\tau_2+3\theta$ and the interval $[\ell_k,0]$ has length asymptotic to $d_k/2$.
		
		Second,
		\[
		\int_{\ell_k}^{r_k}(r_k-\theta)^2f(\theta)\,d\theta = \frac{f(0)}{3}d_k^3+o\left(d_k^3\right).
		\]
		Therefore,
		\[
		\frac{1}{d_k} k_1 \int_{\ell_k}^{r_k}(r_k-\theta)^2f(\theta)\,d\theta \to \frac{f(0)}{3}(-\tau_1),
		\]
		because $k_1d_k^2\to -\tau_1$.
		
		Combining the two limits gives
		\[
		\frac{ \mathcal W_{\mathrm{tot}}^{\M}(0)-\mathcal W_{\mathrm{tot}}^{\FI} }{d_k} \to f(0) \left[ \frac{-\tau_1-\tau_2}{2} - \frac{-\tau_1}{3} \right] = \frac{f(0)}{6}(-\tau_1-3\tau_2).
		\]
		By assumption, $-\tau_1>3\tau_2$, so the limit is strictly positive. Hence, there exists $\bar k_1<\infty$ such that $\mathcal W_{\mathrm{tot}}^{\M}(0)>
		\mathcal W_{\mathrm{tot}}^{\FI}$ for every $k_1\geq \bar k_1$. This proves the proposition.
	\end{proof}


\begin{proof}[Proof of
\Cref{cor:finite-k2-total-welfare-dominance}]
Fix $k_1\geq\bar k_1$. By
\Cref{thm:total-welfare-monopoly-dominates}, we have that
\[
\mathcal W_{\mathrm{tot}}^{\M}(0)> \mathcal W_{\mathrm{tot}}^{\FI}.
\]
By \Cref{lem:total-welfare-high-k2}, $\mathcal W_{\mathrm{tot}}^{\DE}(\tau_2,k_2)$ converges to $\mathcal W_{\mathrm{tot}}^{\FI}$ as $k_2\to\infty$. The strict inequality therefore implies that there exists a finite threshold $\bar k_2(k_1,\tau_2)$ such that
\[
\mathcal W_{\mathrm{tot}}^{\M}(0)> \mathcal W_{\mathrm{tot}}^{\DE}(\tau_2,k_2)
\]
for every $k_2\geq\bar k_2(k_1,\tau_2)$.
\end{proof}


\begin{proof}[Proof of \Cref{thm:total-welfare-competition-dominates}]
		Let $S(\theta)$	denote the \emph{total} action-payoff difference from choosing $a^+$ rather than $a^-$. In the LQ class, we have $S(\theta)=3\theta-\tau_1-\tau_2$. If $\tau_2\geq-\tau_1$, then, for every $\theta<0$, we obtain $S(\theta)=3\theta-\tau_1-\tau_2\leq 3\theta<0$.
		
		Let $\ell^M<0<r^M$ denote the lower pooling type and pooling report in the least informative monopoly equilibrium. Under the receiver's full-information decision rule, action $a^+$ is chosen if and only if $\theta\geq0$. Under the least informative monopoly equilibrium, action $a^+$ is chosen if and only if $\theta\geq\ell^M$. Thus, relative to the receiver's full-information decision rule, the least informative monopoly equilibrium differs only on the interval $(\ell^M,0)$, where it induces action $a^+$ instead of $a^-$.
		
		Therefore,
		\[
		\mathcal W_{\mathrm{tot}}^{\M}(0)-\mathcal W_{\mathrm{tot}}^{\FI} = \int_{\ell^M}^{0} S(\theta)f(\theta)\,d\theta - k_1\int_{\ell^M}^{r^M}(r^M-\theta)^2f(\theta)\,d\theta.
		\]
		The first term is strictly negative because $S(\theta)<0$ for every $\theta<0$ and $f$ has full support. The second term is weakly negative, and it is strictly negative because the least informative monopoly equilibrium is non-degenerate, so sender~1 incurs positive misreporting costs on a set of positive measure. Hence, $\mathcal W_{\mathrm{tot}}^{\M}(0)-\mathcal W_{\mathrm{tot}}^{\FI}<0$, or, alternatively, $\mathcal W_{\mathrm{tot}}^{\FI}> \mathcal W_{\mathrm{tot}}^{\M}(0)$.
		
		By \Cref{lem:total-welfare-high-k2}, we have that $\mathcal W_{\mathrm{tot}}^{\DE}(\tau_2,k_2) \to \mathcal W_{\mathrm{tot}}^{\FI}$ as $k_2\to\infty$. Since $\mathcal W_{\mathrm{tot}}^{\FI}> \mathcal W_{\mathrm{tot}}^{\M}(0)$, there exists a finite threshold $\bar k_2(k_1,\tau_2)$ such that $\mathcal W_{\mathrm{tot}}^{\DE}(\tau_2,k_2) > \mathcal W_{\mathrm{tot}}^{\M}(0)$ for every $k_2\geq\bar k_2(k_1,\tau_2)$. This proves the proposition.
	\end{proof}



\section{Supplementary material}


	\subsection{Benchmarks: cheap talk and verifiable disclosure}
\label{app:polar-benchmarks}

Before analyzing costly misreporting, we consider two polar communication technologies. In the cheap-talk benchmark, messages are costless and unverifiable. In the verifiable-disclosure benchmark, senders cannot lie but may withhold information or disclose only coarse evidence. These benchmarks clarify what partial verifiability adds to the comparison between monopoly and competition. They also anticipate two themes of the analysis: the importance of equilibrium selection and the distinction between improving the receiver's decisions and increasing total welfare.

This distinction arises because the receiver's preferred action need not maximize aggregate surplus. Let $S(\theta)\coloneqq u_r(\theta)+u_1(\theta)+u_2(\theta)$ denote the aggregate action-payoff gain from choosing $a^+$ rather than $a^-$. The receiver chooses according to the sign of $u_r(\theta)$, whereas total welfare depends on the sign of $S(\theta)$. Thus, if $S(\theta)>0$ in some states in which $u_r(\theta)<0$, persuasion may harm the receiver while increasing total welfare. The benchmarks show that this wedge does not originate in misreporting costs. In the baseline model, those costs instead determine whether such persuasion is feasible and whether its allocative benefit survives after communication costs are taken into account.


	\subsubsection{Cheap talk}
	\label{subsec:cheap_talk_benchmark}
	
	Set $k_1=k_2=0$. Reports are then pure cheap talk. Since messages have no intrinsic meaning, it is useful to describe equilibria by their payoff-relevant outcome. For a market structure $E$, let $q^E(\theta)\in[0,1]$ denote the probability that the receiver chooses $a^+$ in state $\theta$.
	
	We do not attempt to characterize the full set of cheap-talk equilibria, which may also include mixed-strategy and payoff-equivalent variants. For the benchmark comparison, it is enough to distinguish babbling outcomes, one-sender outcomes, and the canonical both-senders-active binary advocacy outcome.
	
	The unrestricted cheap-talk game has the usual multiplicity. Babbling equilibria always exist in both market structures: all sender types send payoff-irrelevant messages, the receiver ignores messages, and she chooses a best reply to the prior. This observation is standard, so we do not state it as a formal result. It immediately implies that no all-equilibrium pro-competition result is possible in cheap talk. A comparison of market structures must either compare equilibrium sets or impose a selection on informative equilibria.
	
	There is also a second source of multiplicity that is specific to competition. In cheap talk, one sender can always be ignored. For instance, the receiver may use only sender 1's message and ignore sender 2. Sender 2 then has no payoff-relevant effect on the receiver's action, so any message rule, including babbling, is optimal for him. Given sender 2's babbling, it is optimal for the receiver to ignore him. Hence, the sender-1 monopoly outcome is always embedded in the competitive cheap-talk game.
	
	This embedding is a useful contrast with the costly-misreporting model. With positive misreporting costs, an ignored sender does not generally babble. If the receiver's action is independent of sender 2's report, sender 2 minimizes reporting costs by reporting truthfully. But a truthful report is informative; once the receiver observes it, it is no longer sequentially optimal to ignore it. Thus, the cheap-talk logic by which competition can replicate monopoly through one sender's babbling does not carry over to the baseline model with costly misreporting. This is one reason why competition is strategically sharper under partial verifiability than under cheap talk.
	
	We now record the payoff-relevant cheap-talk outcomes that will be used as benchmarks. The description below abstracts from redundant message labels and unused messages.
	
	\paragraph{Monopoly-like outcomes.}
	Consider first a one-sender cheap-talk outcome generated by sender $j$. Since sender $j$ wants $a^+$ if and only if $\theta\geq \tau_j$, all types on the same side of $\tau_j$ have identical preferences over messages. Hence, any payoff-relevant one-sender outcome has the form
	\[
	q_j(\theta)=
	\begin{cases}
		p_j^-, & \theta<\tau_j,\\
		p_j^+, & \theta\geq \tau_j,
	\end{cases}
	\qquad p_j^-\leq p_j^+,
	\]
	where each on-path action probability is a receiver best reply to the posterior induced by the corresponding cell. For sender 1, since $\tau_1<0$, the low cell lies entirely below the receiver's cutoff, so $p_1^-=0$. Let
	\[
	T_r \coloneqq\int_{\tau_1}^{\infty}u_r(\theta)f(\theta)d\theta.
	\]
	The sender-1 monopoly-like outcome therefore gives
	\[
	q_1(\theta)=
	\begin{cases}
		0, & \theta<\tau_1,\\
		p_1^+, & \theta\geq \tau_1,
	\end{cases}
	\qquad
	p_1^+\in BR_r(T_r),
	\]
	where, consistently with the maintained tie-breaking convention, $BR_r(x)=\{1\}$ if $x\geq0$ and $BR_r(x)=\{0\}$ if $x<0$. In monopoly this is the canonical informative cheap-talk outcome whenever $T_r\geq0$. In competition the same outcome is supported by having the receiver ignore sender 2 and sender 2 babble.
	
	Analogously, competition also admits sender-2 monopoly-like outcomes, in which the receiver listens only to sender 2 and sender 1 babbles. These outcomes are not the comparison benchmark, but they are part of the competitive cheap-talk equilibrium set. They also make clear why one should not expect all competitive equilibria to dominate monopoly: equilibrium selection remains essential.
	
	\paragraph{The two-sender advocacy outcome.}
	The natural informative competitive benchmark is the binary advocacy outcome in which each sender recommends the action he prefers. Sender 1 sends a pro-$a^+$ message if and only if $\theta\geq \tau_1$, while sender 2 sends a pro-$a^+$ message if and only if $\theta\geq \tau_2$. The induced cells are $(-\infty,\tau_1)$, $[\tau_1,\tau_2)$, and $[\tau_2,\infty)$. On the first cell both senders favor $a^-$, while on the third cell both favor $a^+$. On the middle cell sender 1 favors $a^+$ and sender 2 favors $a^-$. Define
	\[
 A_r(\tau_2) \coloneqq\int_{\tau_1}^{\tau_2}u_r(\theta)f(\theta)d\theta,
	\]
 \[
	B_r(\tau_2) \coloneqq\int_{\tau_2}^{\infty}u_r(\theta)f(\theta)d\theta.
	\]
	Since $\tau_2>0$, $B_r(\tau_2)>0$. The advocacy outcome is
	\[
	q_A(\theta)=
	\begin{cases}
		0, & \theta<\tau_1,\\
		p_A, & \theta\in[\tau_1,\tau_2),\\
		1, & \theta\geq\tau_2,
	\end{cases}
	\qquad
	p_A\in BR_r(A_r(\tau_2)).
	\]
	This outcome is an equilibrium. On the agreement cells, the receiver's action coincides with both senders' preferred action. On the disagreement cell, sender 1's message is the most favorable report available for inducing $a^+$, while sender 2's message is the most favorable report available for inducing $a^-$. Thus, neither sender can profitably deviate given the other's recommendation. Off-path beliefs after the unused profile in which sender 1 recommends $a^-$ and sender 2 recommends $a^+$ can be chosen so that neither sender has a profitable deviation.
	
	The advocacy outcome is not the unique non-babbling equilibrium of the unrestricted cheap-talk game. As just discussed, competition also admits babbling outcomes, sender-1 monopoly-like outcomes, sender-2 monopoly-like outcomes, and redundant refinements with payoff-irrelevant messages. The advocacy outcome should instead be understood as the canonical both-senders-active binary action-recommendation outcome. Within binary equilibria in which sender 1's recommendation is payoff-relevant when sender 2 recommends $a^-$, and sender 2's recommendation is payoff-relevant when sender 1 recommends $a^+$, the cutoffs must be $\tau_1$ and $\tau_2$: sender 1 is indifferent between recommendations only at $\tau_1$, and sender 2 is indifferent only at $\tau_2$.
	
	\paragraph{Receiver welfare.}
	Receiver welfare under the sender-1 monopoly-like cheap-talk outcome is
	\[
	W_r^{CT,M}=\max\{T_r,0\}.
	\]
	The formula also covers $T_r=0$, as the receiver chooses $a^+$ on the high cell but obtains zero expected payoff there. Receiver welfare under the advocacy outcome is
	\[
	W_r^{CT,A}=B_r(\tau_2)+\max\{A_r(\tau_2),0\}.
	\]
	Since $T_r=A_r(\tau_2)+B_r(\tau_2)$ and $B_r(\tau_2)>0$, we obtain that $W_r^{CT,A}\geq W_r^{CT,M}$. The inequality is strict whenever the receiver strictly dislikes the disagreement cell, $A_r(\tau_2)<0$. In that case competition allows sender 2 to block sender 1's pro-$a^+$ persuasion on a cell where the receiver would prefer $a^-$. If $A_r(\tau_2)\geq0$, the receiver chooses $a^+$ on the disagreement cell as well, so the advocacy outcome coincides with the sender-1 monopoly-like action rule.
	
	The effect of a more downward-biased sender 2 is also transparent in the advocacy outcome. Holding all other primitives fixed, increasing $\tau_2$ expands the disagreement cell and shrinks the agreement-on-$a^+$ cell. Since $W_r^{CT,A}=B_r(\tau_2)+\max\{A_r(\tau_2),0\}$, receiver welfare is locally constant when $A_r(\tau_2)>0$ and strictly decreasing when $A_r(\tau_2)<0$. Thus, a higher $\tau_2$ never helps the receiver in the advocacy benchmark. In the sender-1 monopoly-like outcome it has no effect, because sender 2 is ignored. In sender-2 monopoly-like outcomes it weakly worsens the receiver's information by moving sender 2's cutoff farther to the right. Hence, across the canonical cheap-talk outcomes, a more biased second sender never improves receiver welfare. Its effect is either neutral or harmful.

	
	\paragraph{Total welfare.}
	Because communication is costless in cheap talk, total welfare depends only on the states in which the receiver chooses $a^+$. Using $S(\theta)$ defined above, let
	\[
	A_S(\tau_2) \coloneqq\int_{\tau_1}^{\tau_2}S(\theta)f(\theta)d\theta,
	\]
 \[
	B_S(\tau_2) \coloneqq\int_{\tau_2}^{\infty}S(\theta)f(\theta)d\theta,
	\]
	and
 \[
 T_S \coloneqq A_S(\tau_2)+B_S(\tau_2)
 \]
 Since all players prefer $a^+$ on $[\tau_2,\infty)$, $B_S(\tau_2)>0$.
	
	Under the sender-1 monopoly-like outcome, total welfare is $W_{tot}^{CT,M}=p_1^+T_S$, where $p_1^+\in BR_r(T_r)$. Under the advocacy outcome, it is $W_{tot}^{CT,A}=B_S(\tau_2)+p_AA_S(\tau_2)$, where $p_A\in BR_r(A_r(\tau_2))$. Under the maintained tie-breaking convention, competition yields strictly lower total welfare than sender-1 monopoly if and only if $T_r\geq0$ and $A_r(\tau_2)<0<A_S(\tau_2)$. The interpretation is immediate. The first inequality says that sender 1's monopoly-like message induces $a^+$ on the pooled high cell. The second says that, on the disagreement cell, the receiver would prefer $a^-$. The third says that total welfare would nevertheless prefer $a^+$ on that same disagreement cell. Competition then improves receiver welfare by blocking sender 1's persuasion, but it lowers total welfare because the blocked persuasion was socially valuable.
	
	This is the main lesson of the cheap-talk benchmark. Since communication is costless, any total-welfare advantage of one-sided advice cannot be due to savings in misreporting costs. It comes from the action-allocation effect: monopoly may induce $a^+$ in states where the receiver dislikes $a^+$, but where aggregate action surplus favors $a^+$. Misreporting costs in the baseline model do not create this wedge. They determine whether the socially valuable distortion can be sustained and whether its benefit survives net of communication costs.
	
	In the LQ class, we have $S(\theta)=3\theta-\tau_1-\tau_2$. Therefore,
	\[
	A_S(\tau_2)=3A_r(\tau_2)-(\tau_1+\tau_2)\Pr\{\theta\in[\tau_1,\tau_2)\}.
	\]
	If $\tau_2\geq-\tau_1$, then $A_r(\tau_2)<0$ implies $A_S(\tau_2)<0$. Hence, under the natural restriction that the monopolist is the less biased sender, the advocacy outcome weakly improves total welfare relative to sender-1 monopoly. If instead sender 1's upward bias is sufficiently strong relative to sender 2's opposition, then $A_S(\tau_2)$ can be positive even when $A_r(\tau_2)$ is negative. In that case monopoly can dominate competition in total welfare even though competition improves receiver welfare.
	

	\subsubsection{Verifiable disclosure}
	\label{subsec:disclosure_benchmark}
	
	We now consider the opposite polar case. Disclosure is costless, and the evidence structure is rich: in state $\theta$, a sender may disclose any set $E\subseteq\Theta$ containing $\theta$, including the singleton $\{\theta\}$, or disclose no information. A sender can therefore withhold information or disclose coarse evidence, but cannot submit evidence that excludes the true state.
	
	The competitive disclosure benchmark has a strong all-equilibrium implication. In every state $\theta>0$, sender 1 and the receiver both prefer $a^+$. If an equilibrium induced $a^-$ with positive probability at such a state, sender 1 could disclose the singleton evidence $\{\theta\}$ and force the receiver to choose $a^+$. Hence, every competitive disclosure equilibrium has $q^{D,C}(\theta)=1$ for $\theta>0$. Similarly, in every state $\theta<0$, sender 2 and the receiver both prefer $a^-$. If an equilibrium induced $a^+$ with positive probability, sender 2 could disclose $\{\theta\}$ and force $a^-$. Hence,
	\[
	q^{D,C}(\theta)=\mathds{1}_{\{\theta\geq0\}}
	\]
	for every $\theta\neq0$ in every competitive disclosure equilibrium.
	
	In monopoly with only sender 1 active, the same logic is one-sided. If $\theta>0$, sender 1 can disclose $\{\theta\}$ and force $a^+$, so $q^{D,M}(\theta)=1$. If $\theta<\tau_1$, sender 1 and the receiver both prefer $a^-$, so no equilibrium can induce $a^+$ there. Possible mistakes are confined to the conflict interval $(\tau_1,0)$, where sender 1 prefers $a^+$ but the receiver prefers $a^-$. Therefore, competition weakly improves receiver welfare in every disclosure equilibrium, and it strictly improves receiver welfare whenever the monopolist induces $a^+$ with positive probability on a positive-measure subset of $(\tau_1,0)$.
	
	For total welfare, the same alignment issue as in cheap talk remains. Since there are no misreporting costs in the disclosure benchmark, the welfare difference between competitive disclosure and a monopoly disclosure equilibrium is
	\[
	W_{tot}^{D,C}-W_{tot}^{D,M} = -\int_{\tau_1}^{0}S(\theta)q^{D,M}(\theta)f(\theta)d\theta.
	\]
	Thus, competition weakly improves total welfare in every disclosure equilibrium if $S(\theta)\leq0$ for every $\theta\in(\tau_1,0)$. In the LQ class this condition is implied by $\tau_2\geq-\tau_1$. When this sign condition fails, the total-welfare comparison is generally ambiguous. By the displayed formula, competition reduces total welfare exactly when the weighted integral of $S(\theta)$ over the states in the conflict interval on which monopoly induces $a^+$ is positive.

	
	\subsubsection{Takeaways from the benchmarks}
	\label{subsec:benchmark_lessons}
	
	The two benchmarks deliver three takeaways. First, cheap talk has too much equilibrium multiplicity to support an all-equilibrium pro-competition result. Babbling equilibria always exist, and the competitive game can always reproduce the sender-1 monopoly outcome by having the receiver ignore sender 2 and sender 2 babble. This mechanism is specific to cheap talk. With positive misreporting costs, an ignored sender reports truthfully to avoid costs, making it impossible for the receiver to ignore him on path.

	Second, in the canonical informative cheap-talk outcome, competition improves receiver welfare by converting one sender's one-sided persuasion into a disagreement cell. A more downward-biased sender 2 never helps the receiver in this benchmark: it either has no effect because sender 2 is ignored, or it weakly worsens the receiver's payoff by expanding the region in which sender 2 withholds support for $a^+$.
	
	Third, the total-welfare comparison is governed by $S(\theta)$, not by $u_r(\theta)$. Cheap talk shows this most starkly because there are no communication costs. If the receiver and total welfare rank the relevant disagreement cell in the same way, competition improves both receiver welfare and total welfare. If they rank it differently, competition can improve receiver welfare while lowering total welfare. The baseline costly-misreporting model adds a second layer: finite misreporting costs determine how much persuasion and adversarial discipline are feasible, and whether the action-surplus benefit of one-sided advice survives net of communication costs.


\subsection{More than two senders}
\label{app:many-senders}

This appendix provides a simple existence result for environments with at least three senders. It is distinct from the selection analysis in \Cref{sec:common_refinement}. When all but one sender submit the same report, the receiver can treat that common report as identifying the state. A unilateral deviation then cannot affect the receiver's action, and truthful reporting is optimal because it minimizes reporting costs.

This mechanism differs from the skeptical-envelope selection argument. When all other senders agree, one sender's report is redundant, whereas the skeptical-envelope criterion requires local responsiveness to each report in its domain. The result below establishes the existence of full revelation under a separate belief restriction, yet it does not claim that full revelation is selected by the skeptical-envelope refinement.

Consider an extension of the model with a finite set of senders $N=\{1,\ldots,n\}$, where $n\geq3$. All senders observe the same state $\theta$, report simultaneously, and have payoffs and reporting costs as in the baseline model. Thus, sender $j$'s payoff is
\[
w_j(r_j,a,\theta) = \mathds{1}_{\{a=a^+\}}u_j(\theta)-k_jC_j(r_j,\theta),
\]
where $k_j>0$, $C_j(\theta,\theta)=0$, and $C_j(r_j,\theta)$ is strictly increasing in $|r_j-\theta|$. The ordering of the senders' biases is irrelevant for the result.

To support truthful reporting, we impose a restriction on beliefs following a unilateral deviation: the common report of all other senders determines the receiver's posterior.

\begin{definition}
\label{def:unilateral-consistency}
An assessment is \emph{unilaterally consistent} if, for every $x\in\Theta$ and every report profile $\bm r\in\Theta^n$ at which at least $n-1$ reports are equal to $x$, the receiver's posterior is
\[
\mu_{\bm r}=\delta_x.
\]
\end{definition}

Because $n\geq3$, at most one value can appear in at least $n-1$ coordinates, so the restriction is well defined.

\begin{proposition}
\label{prop:many-senders-full-revelation}
Suppose that $n\geq3$. There exists a perfect Bayesian equilibrium in which every sender reports truthfully, the receiver learns the state, and the assessment is unilaterally consistent.
\end{proposition}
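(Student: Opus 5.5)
The plan is to construct the equilibrium explicitly and then check each player's optimality. Every sender uses the truthful strategy $r_j=\theta$. We then specify the receiver's posterior system in three cases.

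\emph{Case 1: on-path profiles.} At an on-path profile, all $n$ reports equal some $x$. Bayes' rule gives $\mu_{\bm r}=\delta_x$, which is consistent with the definition.

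\emph{Case 2: at least $n-1$ reports equal $x$.} At any profile where at least $n-1$ reports equal a common value $x$, we set $\mu_{\bm r}=\delta_x$. Because $n\geq 3$, the value $x$ is uniquely determined by the profile. Hence this rule is well defined and coincides with Bayes' rule on path.

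\emph{Case 3: all remaining off-path profiles.} At every other profile, we assign an arbitrary fixed posterior, say the prior, or $\delta_{r_1}$. Unilateral consistency places no restriction on these profiles, and they are never reached by a unilateral deviation from truthful play.

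The receiver chooses $a^+$ if and only if $V(\bm r)\geq0$, using the maintained tie-breaking rule. Under point beliefs $\delta_x$, this means choosing $a^+$ if and only if $u_r(x)\geq0$, i.e., $x\geq0$. This is sequentially rational given the beliefs. On path she learns $\theta$ exactly and takes her full-information action.

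\emph{Sender optimality.} Fix a state $\theta$ and a sender $j$, and suppose every sender other than $j$ reports $\theta$. Any report $r_j$ produces a profile in which at least $n-1$ coordinates equal $\theta$. By unilateral consistency the receiver's posterior is $\delta_\theta$ whatever $r_j$ is. The action is therefore $\mathds 1_{\{\theta\geq0\}}$ independently of $r_j$. Sender $j$'s payoff is $\mathds 1_{\{a=a^+\}}u_j(\theta)-k_jC_j(r_j,\theta)$. The first term is constant in $r_j$. Since $k_j>0$ and $C_j$ is strictly increasing in $|r_j-\theta|$ with $C_j(\theta,\theta)=0$, the payoff is uniquely maximized at $r_j=\theta$. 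Truthful reporting is thus a strict best response, and mixed deviations are also unprofitable. The ordering of biases plays no role.

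\emph{Main obstacle.} The only delicate step is the well-definedness of the belief restriction. Two distinct values cannot each occupy $n-1$ coordinates when $n\geq3$, since that would require $2(n-1)>n$ coordinates. With $n=2$ this fails: a profile $(x,y)$ has each value in $n-1=1$ coordinate, giving conflicting restrictions, which is why the result needs three senders.

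The argument also needs every unilateral deviation to land in the region governed by the restriction, and this holds by construction. Beliefs at profiles with two or more deviators are irrelevant for a perfect Bayesian equilibrium under unilateral deviations. Plausibility requirements such as the skeptical-envelope condition are not imposed, as the paper already notes.
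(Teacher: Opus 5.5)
Your proposal is correct and follows essentially the same construction as the paper: truthful reporting, posterior $\delta_x$ at every profile with at least $n-1$ coordinates equal to $x$, arbitrary beliefs elsewhere with receiver best replies, and the observation that a unilateral deviation leaves the action unchanged and only adds reporting cost. Your counting argument that the restriction is well defined because $2(n-1)>n$ when $n\geq3$, and fails for $n=2$, spells out a point the paper only states in passing.
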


\begin{proof}[Proof of \Cref{prop:many-senders-full-revelation}]
Let every sender report truthfully, i.e., $r_j(\theta)=\theta$ for every $j\in N$ and every $\theta\in\Theta$. At any report profile at which at least $n-1$ reports are equal to some $x\in\Theta$, set the receiver's posterior equal to $\delta_x$. Complete beliefs arbitrarily at all other off-path profiles, and let the receiver take a best response to her posterior at every report profile.

On the equilibrium path, the receiver observes $\bm r=(\theta,\ldots,\theta)$. Bayes' rule therefore gives $\mu_{\bm r}=\delta_\theta$, so the receiver learns the state and takes the full-information action.

Now fix sender $j$ and state $\theta$, and suppose that every other sender reports truthfully. After any unilateral deviation $r_j'$, the remaining $n-1$ reports are still equal to $\theta$. Unilateral consistency therefore gives
\[
\mu_{(r_j',\bm r_{-j})}=\delta_\theta.
\]
The deviation cannot change the receiver's action. It can only change sender $j$'s reporting cost, and
\[
k_jC_j(r_j',\theta) \geq k_jC_j(\theta,\theta)=0,
\]
with strict inequality whenever $r_j'\neq\theta$. Hence no sender has a profitable unilateral deviation. The proposed strategies, beliefs, and receiver actions form a perfect Bayesian equilibrium, and the equilibrium is fully revealing.
\end{proof}

\Cref{prop:many-senders-full-revelation} establishes the existence of a truthful perfect Bayesian equilibrium under unilateral consistency. It does not establish uniqueness, robustness to coordinated deviations, or admissibility under the skeptical-envelope refinement. The mechanism is that, following any unilateral deviation, the remaining $n-1$ truthful reports continue to identify the state. The deviation therefore cannot affect the receiver's action and can only increase the deviating sender's reporting cost. Equilibrium selection and coordinated deviations in many-sender environments remain outside the scope of the result.

	
	\subsection{NWBR without the single-crossing condition}
	\label{app:nwbr-arbitrary-qr}
	
	This appendix characterizes the $\NWBR$ deletion procedure when \Cref{ass:sc} is not imposed. Without single crossing, the types that survive after an off-path report are the minimizers of the indifference cutoff $q_r$. The results are not used in the welfare analysis, but they identify precisely when the refinement can eliminate a generic monopoly equilibrium.
	
	Fix a generic monopoly equilibrium indexed by $\lambda$, with pooling report $r^*(\lambda)$ and lower pooling type $\ell(\lambda)$. For any off-path report $r\in(\ell(\lambda),r^*(\lambda))$, write $\underline\theta(r)\coloneqq \bar r_1^{-1}(r)$, and let $\vartheta(r,\lambda)$ be the unique type in $(r,r^*(\lambda))$ satisfying
	\[
	C_1(r,\vartheta(r,\lambda)) = C_1(r^*(\lambda),\vartheta(r,\lambda)).
	\]
	On the set of types that can weakly benefit from deviating to $r$, define $q_r$ as in Section~\ref{sec:monopoly}. That is, for $\theta\in[\underline\theta(r),\ell(\lambda)]$,
	\[
	q_r(\theta) = \frac{k_1C_1(r,\theta)}{u_1(\theta)}
	\]
	and, for $\theta\in[\ell(\lambda),\vartheta(r,\lambda)]$,
	\[
	q_r(\theta) = 1- \frac{k_1\left(C_1(r^*(\lambda),\theta)-C_1(r,\theta)\right)} {u_1(\theta)}.
	\]
	Define the set of minimizing types,
	\begin{equation*}
		M_r^\lambda \coloneqq \arg\min_{\theta\in[\underline\theta(r),\vartheta(r,\lambda)]} q_r(\theta).
	\end{equation*}
	
	\begin{proposition}
		\label{prop:nwbr-survivors}
		Fix a generic monopoly equilibrium indexed by $\lambda$ and an off-path report $r\in(\ell(\lambda),r^*(\lambda))$. After one round of the Never-a-Weak-Best-Response test, the surviving type-report pairs are exactly $\{(\theta,r) \mid \theta\in M_r^\lambda\}$.
	\end{proposition}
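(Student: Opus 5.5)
The plan follows the structure of the proof of \Cref{thm:nwbr}, dropping the single-crossing step. First I would restrict attention to the types that could ever weakly benefit from deviating to $r$. The argument given there applies verbatim and uses no shape assumption on $q_r$.

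\begin{itemize}
\item Types $\theta<\underline\theta(r)$ cannot gain even when $\sigma=1$, because $r$ lies beyond their reach.
\item Types in $(\vartheta(r,\lambda),r^*(\lambda))$ pay more at $r$ than at $r^*(\lambda)$.
\item Types $\theta\ge r^*(\lambda)$ already obtain $a^+$ at zero cost.
\end{itemize}

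For each such type, $D_0(\theta,r)=D(\theta,r)=\varnothing$. The test $D_0\subseteq\bigcup D$ is then trivially satisfied, so these pairs are deleted.

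Second, for $\theta\in[\underline\theta(r),\vartheta(r,\lambda)]$ I would recompute the indifference sets exactly as before. In both segments the deviation payoff is affine and strictly increasing in $\sigma$ with slope $u_1(\theta)>0$, since these types lie above $\tau_1$. Hence $D_0(\theta,r)=\{q_r(\theta)\}$ and $D(\theta,r)=(q_r(\theta),1]$, with $D_0=\varnothing$ if $q_r(\theta)\notin[0,1]$.

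One must check that $q_r(\theta)\in[0,1]$ on this interval.
\begin{itemize}
\item On the first segment, $q_r\le 1$ holds because $\theta\ge\bar r_1^{-1}(r)$.
\item On the second segment, $q_r\le1$ holds because $C_1(r^*,\theta)\ge C_1(r,\theta)$ up to $\vartheta$.
\item Nonnegativity is immediate on the first segment. On the second it follows from the equilibrium condition that pooling types weakly prefer $r^*$ to $0$, i.e.\ $u_1(\theta)\ge k_1C_1(r^*,\theta)$.
\end{itemize}

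At the junction $\ell(\lambda)$, the two formulas agree because $u_1(\ell)=k_1C_1(r^*,\ell)$. So $q_r$ is continuous on the compact interval, and $M_r^\lambda$ is nonempty.

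The deletion test then reduces to a comparison of cutoffs. A pair $(\theta,r)$ is deleted iff $q_r(\theta)\in\bigcup_{\theta'\neq\theta}(q_r(\theta'),1]$, i.e.\ iff some other type $\theta'$ has $q_r(\theta')<q_r(\theta)$.
\begin{itemize}
\item If $\theta\notin M_r^\lambda$, take $\theta'$ to be any minimizer; then $q_r(\theta')<q_r(\theta)$ and the pair is deleted.
\item If $\theta\in M_r^\lambda$, then no $\theta'$ in the relevant interval has $q_r(\theta')<q_r(\theta)$. Types outside the interval have empty $D$. Hence $q_r(\theta)\notin\bigcup_{\theta'\ne\theta}D(\theta',r)$ and the pair survives, even when several minimizers tie.
\end{itemize}

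The main obstacle is handling degenerate boundary cases. At the endpoints $\underline\theta(r)$ and $\vartheta(r,\lambda)$, $q_r$ equals $1$. A type there survives only if every type has $q_r=1$, which cannot happen since $q_r(\ell)<1$. At $\vartheta$ itself, however, one should verify that $D_0=\{1\}$ is still covered. This is settled by continuity plus the existence of a type with strictly smaller $q_r$.
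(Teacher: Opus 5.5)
Your proposal is correct and follows the paper's proof. Types outside $[\underline\theta(r),\vartheta(r,\lambda)]$ are deleted because their $D_0$ and $D$ are empty, and inside that interval $D_0(\theta,r)=\{q_r(\theta)\}$ and $D(\theta,r)=(q_r(\theta),1]$, so a pair survives exactly when $\theta$ minimizes $q_r$. Your checks that $q_r$ lies in $[0,1]$, is continuous across $\ell(\lambda)$, and attains its minimum are details the paper leaves implicit; the endpoint case at $\vartheta(r,\lambda)$ needs no continuity argument, since $D(\ell(\lambda),r)$ already contains $1$ because $q_r(\ell(\lambda))<1$.
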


	
	\begin{proof}[Proof of \Cref{prop:nwbr-survivors}]
		Fix $r\in(\ell(\lambda),r^*(\lambda))$. As in the proof of \Cref{thm:nwbr}, only types in the interval $[\underline\theta(r),\vartheta(r,\lambda)]$	can ever weakly gain from deviating to $r$. Hence, every pair $(\theta,r)$ with $\theta\notin[\underline\theta(r),\vartheta(r,\lambda)]$ is deleted immediately.
		
		Now take $\theta\in[\underline\theta(r),\vartheta(r,\lambda)]$. For such a type, the receiver probability that makes the type indifferent between following the equilibrium strategy and deviating to $r$ is $q_r(\theta)$. Hence, $D_0(\theta,r)=\{q_r(\theta)\}$ and $D(\theta,r)=(q_r(\theta),1]$. Therefore,
		\[
		D_0(\theta,r) \subseteq \bigcup_{\theta'\neq\theta}D(\theta',r) \iff \exists\, \theta'\neq\theta \; \text{ such that }\; q_r(\theta')<q_r(\theta).
		\]
		Alternatively, we can say that $(\theta,r)$ survives one round of $\NWBR$ if and only if $q_r(\theta)$ attains the minimum of $q_r$ on $[\underline\theta(r),\vartheta(r,\lambda)]$. By definition, this means $\theta\in M_r^\lambda$.
	\end{proof}

	
	\Cref{prop:nwbr-survivors} characterizes the deletion step. Survival of the equilibrium is a further requirement: once the surviving types have been identified, the receiver must be able to hold a posterior supported on them and respond in a way that makes the deviation unprofitable. The next corollary gives the exact criterion.
	
	\begin{corollary}
		\label{cor:nwbr-criterion-no-sc}
		A generic monopoly equilibrium indexed by $\lambda$ survives the Never-a-Weak-Best-Response refinement if and only if, for every off-path report $r\in(\ell(\lambda),r^*(\lambda))$, there exists a posterior belief $\mu_r$ supported on $M_r^\lambda$ such that the receiver has a best reply assigning probability $\sigma_r$ to action $a^+$ with
		\[
		\sigma_r \leq \min_{\theta\in M_r^\lambda}q_r(\theta).
		\]
	\end{corollary}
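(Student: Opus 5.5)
Both directions rest on \Cref{prop:nwbr-survivors}. Off-path reports outside $(\ell(\lambda),r^*(\lambda))$ are handled as in the proof of \Cref{thm:nwbr}, so we only need to study reports inside this interval. Fix a generic equilibrium $E_\lambda$ and such a report $r$. After one round of the test, the surviving type-report pairs are exactly $\{(\theta,r)\mid\theta\in M_r^\lambda\}$. We also use the same description of the deviation sets as before: for every surviving type, $D_0(\theta,r)=\{q_r(\theta)\}$ and $D(\theta,r)=(q_r(\theta),1]$. The equilibrium survives the refinement exactly when, for every such $r$, the receiver can hold a belief concentrated on the surviving types and play a best reply to it that makes none of those types strictly prefer the deviation.

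For sufficiency, suppose that for each $r$ we have a posterior $\mu_r$ supported on $M_r^\lambda$ and a best reply $\sigma_r\le\min_{\theta\in M_r^\lambda}q_r(\theta)$. Use this as the off-path belief at $r$. By definition, $q_r(\theta)$ is the action probability that makes type $\theta$ indifferent. Because $u_1(\theta)>0$ on the relevant interval, the deviation payoff $w(r,\sigma,\theta)$ is strictly increasing in $\sigma$. Hence $\sigma_r\le q_r(\theta)$ implies $w(r,\sigma_r,\theta)\le W_\lambda(\theta)$ for every $\theta\in M_r^\lambda$. For types outside $[\underline\theta(r),\vartheta(r,\lambda)]$, the deviation is unprofitable even at $\sigma=1$. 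So no type gains from deviating, the on-path play of $E_\lambda$ is unchanged, and the equilibrium survives. Since all minimizers share the same value of $q_r$, the minimum over $M_r^\lambda$ is simply that common value $q^\ast_r$.

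For necessity, suppose that for some $r$ every posterior supported on $M_r^\lambda$ admits only best replies with $\sigma>q^\ast_r$. Then every belief consistent with the refinement leads to a reply under which each type in $M_r^\lambda$ strictly gains by deviating to $r$. We also need the interval $[\underline\theta(r),\vartheta(r,\lambda)]$ to be compact and $q_r$ to be continuous; together these guarantee that $M_r^\lambda$ is nonempty, so such a deviating type actually exists and the argument is not vacuous. The equilibrium therefore fails the refinement.

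The main obstacle is interpreting ``survives NWBR'' precisely. One must decide whether beliefs after the first deletion round must be supported on $M_r^\lambda$ and whether further deletion rounds could matter. I would argue that iteration does nothing new. The equilibrium payoffs $W_\lambda$ are fixed across rounds, so after the first round only the types in $M_r^\lambda$ remain, and all of them have the same $q^\ast_r$. None of them is weakly dominated by another survivor, so the set $M_r^\lambda$ is stable under the procedure, and the criterion above characterizes survival.
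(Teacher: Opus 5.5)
Your proposal is correct and follows essentially the same route as the paper: use \Cref{prop:nwbr-survivors} to restrict the belief at $r$ to $M_r^\lambda$, note that a surviving type strictly gains exactly when $\sigma_r>q_r(\theta)$, and use the fact that all minimizers share the same value of $q_r$. Your remarks on nonemptiness of $M_r^\lambda$ and on stability under iteration fill in points the paper leaves implicit. One small correction: there are no off-path reports outside $(\ell(\lambda),r^*(\lambda))$ to handle, since every such report is sent truthfully on path.
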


	
	\begin{proof}[Proof of \Cref{cor:nwbr-criterion-no-sc}]
		Fix an off-path report $r$. By \Cref{prop:nwbr-survivors}, after one round of $\NWBR$ the only surviving types are those in $M_r^\lambda$. Any off-path belief consistent with the refinement must therefore assign probability one to $M_r^\lambda$.
		
		Let $\sigma_r$ be the probability with which the receiver's best reply chooses action $a^+$ under some posterior belief $\mu_r$ supported on $M_r^\lambda$. For any surviving type $\theta\in M_r^\lambda$, the deviation payoff is strictly above the equilibrium payoff if and only if $\sigma_r>q_r(\theta)$. Since all types in $M_r^\lambda$ attain the same minimum value of $q_r$, the deviation is not profitable for any surviving type if and only if
		\[
		\sigma_r \leq \min_{\theta\in M_r^\lambda}q_r(\theta).
		\]
		This condition must hold for every off-path report $r$, which proves the claim.
	\end{proof}
	
	The criterion in \Cref{cor:nwbr-criterion-no-sc} shows why the single-crossing condition used in the main text is useful. Under \Cref{ass:sc}, the minimizing set is always the singleton $\{\ell(\lambda)\}$. Since $\ell(\lambda)<0$, the receiver can assign the deviation to a negative type and choose $a^-$. Without single crossing, the minimizing set may move to higher types, and the survival of a generic monopoly equilibrium depends on whether some receiver best reply supported on $M_r^\lambda$ can still deter the deviation.

	
	\subsection{Distance-cost extensions}\label{sec:lc}
	
	This appendix examines which LQ comparative statics extend beyond quadratic costs. The extension concerns equilibrium geometry, not welfare. In the linear-convex class, the reach formulas and the local cost effect continue to hold. A stronger Arrow--Pratt regularity condition also yields $G_s>0$ and preserves the local bias effect when sender~2's inverse reach is non-binding.

The appendix proceeds in two steps. \Cref{subsec:LC-class} derives the reach and cost results for the linear-convex class, while \Cref{app:costeffect} studies the comparative statics on misreporting cost for that class. \Cref{subsec:LAR-class} introduces the stronger curvature restriction and establishes the two remaining sign properties.
	
	
	\subsubsection{The linear-convex class}
	\label{subsec:LC-class}
	
	The linear-convex distance-cost class, or LC class, retains linear utilities but allows general convex distance costs. Formally, $u_i(\theta)=\theta-\tau_i$ for $i\in\{1,2,r\}$, with $\tau_1<\tau_r=0<\tau_2$, and $C_j(r,\theta)=\Gamma_j(|r-\theta|)$ for $j\in\{1,2\}$. Each function $\Gamma_j:[0,\infty)\to[0,\infty)$ satisfies
\begin{enumerate}[label=($\roman*$)]
    \item $\Gamma_j(0)=0$;
    \item $\Gamma_j$ is strictly increasing on $(0,\infty)$;
    \item $\Gamma_j$ is continuously differentiable on $(0,\infty)$;
    \item $\Gamma_j$ is weakly convex.
\end{enumerate}

Weak convexity is sufficient for the reach and cost results below, but not for the sign of the bias effect or for $G_s>0$.
	
	The senders' reaches admit a simple representation in the linear-convex class.
	
	\begin{lemma}
		\label{prop:lc-reaches}
		Suppose $u_i(\theta)=\theta-\tau_i$ for $i\in\{1,2,r\}$, $\tau_1<\tau_r=0<\tau_2$, and $C_j(r,\theta)=\Gamma_j(|r-\theta|)$, with $\Gamma_j$ strictly increasing and $\Gamma_j(0)=0$. Then, sender~1's upper reach is, for $\theta\geq \tau_1$,
		\[
		\bar r_1(\theta) = \theta+\Gamma_1^{-1}\!\left(\frac{\theta-\tau_1}{k_1}\right),
		\]
		and sender~2's lower reach is, for $\theta\leq \tau_2$,
		\[
		\underline{r}_2(\theta) = \theta-\Gamma_2^{-1}\!\left(\frac{\tau_2-\theta}{k_2}\right).
		\]
	\end{lemma}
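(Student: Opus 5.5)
The plan is to solve the reach equations directly, using the distance-cost structure to reduce each one to a scalar equation in the misreport size. I take sender~1 first. Fix $\theta\geq\tau_1$. By definition, $\bar r_1(\theta)$ is the largest $r$ satisfying $u_1(\theta)=k_1C_1(r,\theta)$, which in the linear-convex setting reads $\theta-\tau_1=k_1\Gamma_1(|r-\theta|)$. Writing $x\coloneqq|r-\theta|\geq0$, this becomes $\Gamma_1(x)=(\theta-\tau_1)/k_1$. The right-hand side is nonnegative because $\theta\geq\tau_1$.

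The key step is to show this scalar equation has exactly one nonnegative solution. Since $\Gamma_1$ is strictly increasing with $\Gamma_1(0)=0$, it is injective on $[0,\infty)$, so there is at most one solution $x^*$. Existence comes from the standing assumption that the reach exists, or equivalently from $(\theta-\tau_1)/k_1$ lying in the range of $\Gamma_1$. Continuity of $\Gamma_1$, together with property $(v)$ of the cost function, makes that range all of $[0,\infty)$. Hence $x^*=\Gamma_1^{-1}\!\left((\theta-\tau_1)/k_1\right)$.

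The $r$ that solve the reach equation are then exactly $\theta\pm x^*$. The maximum is $\theta+x^*$, which gives the formula for $\bar r_1(\theta)$.

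Sender~2 is handled by the symmetric argument. Fix $\theta\leq\tau_2$. The reach equation is $-u_2(\theta)=\tau_2-\theta=k_2\Gamma_2(|r-\theta|)\geq0$. By the same injectivity and range argument, $|r-\theta|=\Gamma_2^{-1}\!\left((\tau_2-\theta)/k_2\right)$. The minimum of the two candidate reports $\theta\pm\Gamma_2^{-1}(\cdot)$ is $\theta-\Gamma_2^{-1}(\cdot)$, which is $\underline r_2(\theta)$.

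I do not expect a real obstacle, since the argument is elementary. The only point needing care is that $\Gamma_j^{-1}$ is well defined at the relevant argument, which requires surjectivity onto $[0,\infty)$ or at least onto the values used. I would justify this by citing the maintained assumption in \Cref{subsec:common-primitives} that reaches exist on their domains; that assumption is equivalent to the argument lying in the image of $\Gamma_j$. Continuity of $\Gamma_j$ at $0$ follows from convexity on $[0,\infty)$ together with $\Gamma_j(0)=0$ and monotonicity, and I would note this as a side remark.
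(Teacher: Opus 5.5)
Your proposal is correct and follows essentially the same route as the paper: reduce the reach condition to a scalar equation $\Gamma_j(x)=\text{const}\geq 0$ and invert it using strict monotonicity of $\Gamma_j$. You add two details the paper leaves implicit: an explicit selection between the two roots $\theta\pm x^*$ (the paper simply restricts to $r\geq\theta$ for sender~1 and $r\leq\theta$ for sender~2), and a justification that $\Gamma_j^{-1}$ is defined at the relevant argument.
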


	
	\begin{proof}[Proof of \Cref{prop:lc-reaches}]
		For sender~1, the largest upward report that is weakly better than truthful reporting when	action $a^+$ is induced with certainty solves, for $r\geq \theta$,
		\[
		\theta-\tau_1 = k_1\Gamma_1(r-\theta).
		\]
		Since $\Gamma_1$ is strictly increasing, the solution is unique and equals
		\[
		r=\theta+\Gamma_1^{-1}\!\left(\frac{\theta-\tau_1}{k_1}\right).
		\]
		This gives $\bar r_1(\theta)$.
		
		For sender~2, the most extreme downward report that is weakly better than truthful reporting when action $a^-$ is induced with certainty solves, for $r\leq \theta$,
		\[
		\tau_2-\theta = k_2\Gamma_2(\theta-r).
		\]
		Again by strict monotonicity of $\Gamma_2$, the solution is unique and equals
		\[
		r=\theta-\Gamma_2^{-1}\!\left(\frac{\tau_2-\theta}{k_2}\right).
		\]
		This gives $\underline{r}_2(\theta) $.
	\end{proof}
	
	The corresponding comparative statics are immediate.
	
	\begin{proposition}
		\label{prop:lc-reach-cs}
		Consider the LC class. Under the assumptions of \Cref{prop:lc-reaches}, for every fixed
		$\theta<\tau_2$,
		\[
		\frac{\partial \underline{r}_2(\theta) }{\partial \tau_2}<0< \frac{\partial \underline{r}_2(\theta) }{\partial k_2}.
		\]
		Likewise, for every fixed report $r_2<\tau_2$ in the domain of $\underline{r}_2 ^{-1}$,
		\[
		\frac{\partial \underline{r}_2^{-1}(r_2) }{\partial \tau_2}>0> \frac{\partial \underline{r}_2^{-1}(r_2) }{\partial k_2}.
		\]
	\end{proposition}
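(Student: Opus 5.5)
The plan is to differentiate the explicit reach formula of \Cref{prop:lc-reaches} directly and then pass to the inverse reach through the implicit function theorem, mirroring the proof of \Cref{prop:condition_lq}. Write $\underline r_2(\theta)=\theta-\Gamma_2^{-1}\!\left((\tau_2-\theta)/k_2\right)$ and set $x\coloneqq(\tau_2-\theta)/k_2>0$ for $\theta<\tau_2$. Since $\Gamma_2$ is strictly increasing and continuously differentiable on $(0,\infty)$, its inverse is strictly increasing on the relevant image. Wherever $\Gamma_2'>0$, the inverse function theorem gives
\[
(\Gamma_2^{-1})'(x)=\frac{1}{\Gamma_2'\!\left(\Gamma_2^{-1}(x)\right)}>0.
\]

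First, I compute the reach derivatives by the chain rule:
\[
\frac{\partial\underline r_2(\theta)}{\partial\tau_2}=-\frac{(\Gamma_2^{-1})'(x)}{k_2}<0,
\qquad
\frac{\partial\underline r_2(\theta)}{\partial k_2}=(\Gamma_2^{-1})'(x)\,\frac{\tau_2-\theta}{k_2^2}>0.
\]
Both signs follow from $(\Gamma_2^{-1})'>0$, $k_2>0$, and $\tau_2-\theta>0$.

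Second, for the inverse reach, I need $\partial\underline r_2/\partial\theta=1+(\Gamma_2^{-1})'(x)/k_2$, which is at least $1$ and hence strictly positive. Differentiating the identity $\underline r_2(\underline r_2^{-1}(r_2))=r_2$ with respect to a parameter $p\in\{\tau_2,k_2\}$ gives
\[
\frac{\partial\underline r_2^{-1}(r_2)}{\partial p}=-\frac{\partial\underline r_2(\theta)/\partial p}{\partial\underline r_2(\theta)/\partial\theta}\bigg|_{\theta=\underline r_2^{-1}(r_2)}.
\]
Substituting the signs from the first step delivers the stated inequalities. I also need $\underline r_2^{-1}(r_2)<\tau_2$, which holds because $r_2<\tau_2$ and $\underline r_2(\theta)\le\theta$. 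Note that this step uses only strict monotonicity and differentiability; convexity plays no role beyond guaranteeing the regularity already assumed.

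The main obstacle is a technical one: strict monotonicity of $\Gamma_2$ does not rule out $\Gamma_2'(y)=0$ at isolated points, and there $(\Gamma_2^{-1})'$ would be infinite. I would handle this with weak convexity. Since $\Gamma_2$ is convex, strictly increasing, and satisfies $\Gamma_2(0)=0$, its derivative is nondecreasing and cannot vanish at any $y>0$; otherwise $\Gamma_2'=0$ on $(0,y]$, and $\Gamma_2$ would be constant there, contradicting strict monotonicity. Hence $\Gamma_2'>0$ on $(0,\infty)$, and because $x>0$, the point $\Gamma_2^{-1}(x)>0$ lies in the interior where all derivatives are finite and positive.
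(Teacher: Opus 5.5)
Your proposal is correct and follows the paper's proof. You differentiate the explicit LC reach formula, use strict monotonicity plus convexity to get $\Gamma_2'>0$ on $(0,\infty)$, and pass to the inverse reach by implicit differentiation; your argument for $\Gamma_2'>0$ usefully expands the paper's one-line claim. One small fix: $\underline r_2(\theta)\le\theta$ only gives $r_2\le\underline r_2^{-1}(r_2)$, the wrong direction. The bound $\underline r_2^{-1}(r_2)<\tau_2$, which keeps $x>0$, should instead come, as in the paper, from $\underline r_2(\tau_2)=\tau_2$ and strict monotonicity of $\underline r_2$.
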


	
	\begin{proof}[Proof of \Cref{prop:lc-reach-cs}]
		Consider the following reach,
		\[
		\underline{r}_2(\theta) =\theta-\Gamma_2^{-1}\!\left(\frac{\tau_2-\theta}{k_2}\right).
		\]
		Because $\Gamma_2$ is differentiable, strictly increasing, and convex, $\Gamma_2'(x)>0$ for every $x>0$. Hence, $\Gamma_2^{-1}$ is differentiable on $(0,\infty)$ and $(\Gamma_2^{-1})'(x)>0$. Differentiating gives
		\[
		\frac{\partial \underline{r}_2(\theta) }{\partial \tau_2} = -(\Gamma_{2}^{-1})'\!\left(\frac{\tau_2-\theta}{k_2}\right)\frac{1}{k_2}<0,
		\]
		and
		\[
		\frac{\partial \underline{r}_2(\theta) }{\partial k_2} = -(\Gamma_{2}^{-1})'\!\left(\frac{\tau_2-\theta}{k_2}\right) \left(-\frac{\tau_2-\theta}{k_2^2}\right) >0.
		\]
		
		For $r_2<\tau_2$, strict monotonicity and $\underline r_2(\tau_2)=\tau_2$ imply $\underline r_2^{-1}(r_2)<\tau_2$. Therefore, its inverse satisfies $\underline{r}_2 (\underline{r}_2^{-1}(r_2) )=r_2$. Implicit differentiation yields
		\[
		\frac{\partial \underline{r}_2^{-1}(r_2) }{\partial \tau_2} = - \frac{\partial \underline{r}_2(\theta) /\partial \tau_2} {\partial \underline{r}_2(\theta) /\partial \theta} \bigg|_{\theta=\underline{r}_2^{-1}(r_2) } >0,
		\]
		and
		\[
		\frac{\partial \underline{r}_2^{-1}(r_2) }{\partial k_2} = - \frac{\partial \underline{r}_2(\theta) /\partial k_2} {\partial \underline{r}_2(\theta) /\partial \theta} \bigg|_{\theta=\underline{r}_2^{-1}(r_2) } <0,
		\]
		because $\partial \underline{r}_2(\theta) /\partial \theta>0$.
	\end{proof}
	
	The adversarial equilibrium swing report remains characterized by an implicit zero condition. For $r>0$ and $s(r)<0$, define $I(r,s)$ as in \eqref{eq:I}, and
	\begin{equation}
		G(r,s\mid \tau_2,k_2) \coloneqq \int_{I(r,s)} \frac{\theta f(\theta)\Gamma_1'(r-\theta)\Gamma_2'(\theta-s)} {(\theta-\tau_1)(\tau_2-\theta)} \,d\theta.
		\label{eq:lc-swing}
	\end{equation}
	The swing report is characterized by $G(r,s(r)\mid \tau_2,k_2)=0$.
	

	\subsubsection{The cost effect}\label{app:costeffect}
	
	The cost parameter $k_2$ enters the swing equation only through sender~2's reach. This delivers the cleanest comparative-statics result in the LC class.
	
	\begin{proposition}
		\label{prop:LC-cost-effect}
		Consider the LC class. Fix $r>0$ and let $s(r)$ be an interior adversarial-equilibrium swing report. Suppose that $G_s(r,s(r))>0$. At every differentiability point of the swing root,
		\[
		\frac{\partial s(r)}{\partial k_2}=0 \quad\text{if}\quad \underline r_2^{-1}(s(r))>r,
		\]
		and
		\[
		\frac{\partial s(r)}{\partial k_2}>0 \quad\text{if}\quad \underline r_2^{-1}(s(r))<r.
		\]
		At boundary-switching points, where $\underline r_2^{-1}(s(r))=r$, the corresponding one-sided derivatives are weakly non-negative whenever they exist.
	\end{proposition}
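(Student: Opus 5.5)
The argument mirrors the LQ proof of \Cref{prop:cost-effect}: implicit differentiation of the swing equation \eqref{eq:lc-swing}, after locating where $k_2$ enters. The integrand
\[
\frac{\theta f(\theta)\Gamma_1'(r-\theta)\Gamma_2'(\theta-s)}{(\theta-\tau_1)(\tau_2-\theta)}
\]
does not depend on $k_2$. The lower endpoint $\max\{s,\bar r_1^{-1}(r)\}$ depends only on sender~1's primitives and on $s$. Hence $k_2$ enters $G$ only through the upper endpoint
\[
b(s,k_2)\coloneqq\min\{r,\underline r_2^{-1}(s)\}.
\]

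First, compute $G_{k_2}$ at $(r,s(r))$ at a differentiability point.

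\emph{Non-binding case.} If $\underline r_2^{-1}(s(r))>r$, then by continuity the minimum equals $r$ in a neighborhood of $(s(r),k_2)$. The integral is therefore locally independent of $k_2$, so $G_{k_2}=0$.

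\emph{Binding case.} If $\underline r_2^{-1}(s(r))<r$, Leibniz's rule gives
\[
G_{k_2}=g\bigl(\underline r_2^{-1}(s)\bigr)\,\frac{\partial \underline r_2^{-1}(s)}{\partial k_2},
\]
where $g$ is the integrand. \Cref{prop:lc-reach-cs} gives $\partial \underline r_2^{-1}(s)/\partial k_2<0$, so it remains to sign $g$ at $\theta^\ast\coloneqq\underline r_2^{-1}(s)$.

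Signing $g(\theta^\ast)$ is the main step. I would first show $\theta^\ast\in(0,\tau_2)$, exactly as in the LQ proof of \Cref{prop:Gs-positive}. Since $s<0$ lies in the image of the lower reach, $\underline r_2(0)\le s$ on the swing domain. Monotonicity of $\underline r_2$ then gives $\theta^\ast\ge 0$. To get strictness, note that $\theta^\ast=0$ would force $s=\underline r_2(0)$, the endpoint, which is excluded because $s(r)$ is interior. Also $\theta^\ast<\tau_2$, because $s<\tau_2$ (see the proof of \Cref{prop:lc-reach-cs}).

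With $\theta^\ast\in(0,\tau_2)$, each factor of $g(\theta^\ast)$ is positive:
\begin{itemize}
\item $\theta^\ast>0$ and $f(\theta^\ast)>0$;
\item $\Gamma_1'(r-\theta^\ast)>0$, since $r>\theta^\ast$ in the binding case and $\Gamma_1$ is strictly increasing and convex, so $\Gamma_1'>0$ on $(0,\infty)$;
\item $\Gamma_2'(\theta^\ast-s)>0$, since $\theta^\ast>s$;
\item $\theta^\ast-\tau_1>0$ and $\tau_2-\theta^\ast>0$.
\end{itemize}
Hence $g(\theta^\ast)>0$ and $G_{k_2}<0$.

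A subtlety is whether $\Gamma_2'$ could vanish at $\theta^\ast-s>0$. Strict monotonicity together with convexity rules this out, because a convex increasing function with zero derivative at some interior point would be constant to the left of that point.

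Next, apply the implicit function theorem using the assumed $G_s(r,s(r))>0$:
\[
\frac{\partial s(r)}{\partial k_2}=-\frac{G_{k_2}}{G_s}.
\]
This equals $0$ in the non-binding case and is strictly positive in the binding case.

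Finally, consider a boundary-switching point, where $\underline r_2^{-1}(s(r))=r$. The upper limit there is a minimum of two $C^1$ functions, so $G$ has one-sided partial derivatives in $k_2$. Each one-sided value equals either $0$ or $g(r)\,\partial\underline r_2^{-1}/\partial k_2$.

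At $\theta=r$ the factor $\Gamma_1'(r-\theta)$ is $\Gamma_1'(0)\ge0$, so $g(r)\ge 0$. Each one-sided $G_{k_2}$ is therefore $\le0$. Dividing by $G_s>0$, taken as the one-sided derivative, gives one-sided derivatives of $s$ that are weakly non-negative whenever they exist.
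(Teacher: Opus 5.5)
Your proposal is correct and follows essentially the same route as the paper. In both, $k_2$ enters $G$ only through the upper limit $\min\{r,\underline r_2^{-1}(s)\}$. Leibniz's rule then gives $G_{k_2}\le 0$, strictly so when the inverse reach binds, because the integrand is positive at $\underline r_2^{-1}(s)\in(0,r)$ and $\partial\underline r_2^{-1}/\partial k_2<0$. Implicit differentiation with $G_s>0$ delivers the signs. You also justify steps the paper merely asserts: $\underline r_2^{-1}(s)\in(0,\tau_2)$ via interiority of $s(r)$, $\Gamma_j'>0$ on $(0,\infty)$ via strict monotonicity plus convexity, and the one-sided argument at boundary-switching points.
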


	
	\begin{proof}[Proof of \Cref{prop:LC-cost-effect}]
		The LC swing equation can be written as
		\[
		G(r,s\mid\tau_2,k_2) = \int_{I(r,s)} \frac{ \theta f(\theta)\Gamma_1'(r-\theta)\Gamma_2'(\theta-s) }{ (\theta-\tau_1)(\tau_2-\theta) } \,d\theta,
		\]
		where
		\[
		I(r,s) = \left[ \max\{s,\bar r_1^{-1}(r)\}, \min\{r,\underline r_2^{-1}(s)\} \right].
		\]
		The parameter $k_2$ enters this equation only through the upper integration bound $\underline r_2^{-1}(s)$. Thus, at differentiability points,
		\[
		G_{k_2}(r,s) = \mathds{1}_{\left\{\underline r_2^{-1}(s)<r\right\}} g(r,s) \frac{\partial \underline r_2^{-1}(s)}{\partial k_2},
		\]
		where
		\[
		g(r,s) = \frac{ \underline r_2^{-1}(s) f(\underline r_2^{-1}(s)) \Gamma_1'(r-\underline r_2^{-1}(s)) \Gamma_2'(\underline r_2^{-1}(s)-s) }{ (\underline r_2^{-1}(s)-\tau_1)(\tau_2-\underline r_2^{-1}(s)) }.
		\]
		Whenever the upper bound binds, $0<\underline r_2^{-1}(s)<r$ and $\underline r_2^{-1}(s)<\tau_2$, so $g(r,s)>0$. The LC reach formula implies
		\[
		\frac{\partial \underline r_2^{-1}(s)}{\partial k_2}<0.
		\]
		Therefore, $G_{k_2}(r,s)\leq0$, with strict inequality when the inverse reach binds.
		
		Since $G_s(r,s(r))>0$, implicit differentiation gives
		\[
		\frac{\partial s(r)}{\partial k_2} = -\frac{G_{k_2}(r,s(r))}{G_s(r,s(r))} \geq0,
		\]
		with strict inequality when $\underline r_2^{-1}(s(r))<r$. At boundary-switching points the conclusion follows by one-sided differentiation.
	\end{proof}


	The LC result is local and conditional on $G_s>0$. It is also a boundary effect: $k_2$ affects the swing equation only when sender~2's inverse reach binds. Two LQ sign properties do not follow from weak convexity alone. First, the sign of $G_s$ is not automatic. Second, the bias effect
\[
\frac{\partial s(r)}{\partial \tau_2}
\]
cannot generally be signed from weak convexity of $\Gamma_2$, because differentiating \eqref{eq:lc-swing} with respect to $\tau_2$ produces an interior term whose sign depends on the shape of $\Gamma_2'$. We next impose a stronger curvature restriction that recovers both properties.

	
	\subsubsection{Arrow-Pratt regular distance costs}
	\label{subsec:LAR-class}
	
	The stronger class restricts the local curvature of distance costs. The key object is the curvature-to-slope ratio
\[
\rho_j(x)\coloneqq\frac{\Gamma_j''(x)}{\Gamma_j'(x)},
\qquad x>0.
\]
This is formally analogous to the Arrow--Pratt coefficient of absolute risk aversion, except that it is applied to a convex cost function rather than to a concave utility function. We call the resulting class the \emph{linear Arrow--Pratt regular} (LAR) class.

\begin{assumption}
    \label{ass:LAR}
    Utilities satisfy $u_i(\theta)=\theta-\tau_i$ for
    $i\in\{1,2,r\}$, with $\tau_1<\tau_r=0<\tau_2$. For each sender
    $j\in\{1,2\}$, $C_j(r,\theta)=\Gamma_j(|r-\theta|)$, where
    $\Gamma_j:[0,\infty)\to[0,\infty)$ satisfies:
    \begin{enumerate}[label=(\roman*)]
        \item $\Gamma_j(0)=0$ and $\Gamma_j'(0)=0$;
        \item $\Gamma_j$ is strictly increasing on $(0,\infty)$;
        \item $\Gamma_j\in C^1([0,\infty))\cap C^2((0,\infty))$;
        \item $\Gamma_j''(x)\geq0$ for every $x>0$;
        \item the ratio
        $\rho_j(x)=\Gamma_j''(x)/\Gamma_j'(x)$ is strictly decreasing
        on $(0,\infty)$.
    \end{enumerate}
\end{assumption}

\Cref{ass:LAR} adds smoothness at zero and a decreasing curvature-to-slope restriction. Standard power costs $\Gamma_j(x)=x^\alpha$ with $\alpha>1$ satisfy the assumption because
\[
\rho_j(x)=\frac{\alpha-1}{x},
\]
which is strictly decreasing. The LQ class corresponds to $\alpha=2$. Linear costs are excluded because $\Gamma_j'(0)\neq0$.
	
	We now establish the two additional sign properties used in \Cref{sec:LQ}. For $r>0$ and $s<0$, recall that
	\begin{equation}
		G(r,s\mid \tau_2,k_2) = \int_{I(r,s)} \frac{\theta f(\theta)\Gamma_1'(r-\theta)\Gamma_2'(\theta-s)} {(\theta-\tau_1)(\tau_2-\theta)} \,d\theta,
		\label{eq:lar-swing}
	\end{equation}
	where $I(r,s)$ is defined by \eqref{eq:I}. The swing report satisfies $G(r,s(r)\mid \tau_2,k_2)=0$.
	
	\begin{lemma}
		\label{prop:lar-Gs}
		Suppose \Cref{ass:LAR} holds. Then, for every $r\in(0,\bar r_1(0))$ and every corresponding swing report $s(r)$, we have $G_s(r,s(r)\mid\tau_2,k_2)>0$. At the endpoint, we obtain $G_s\!\left( \bar r_1(0),s(\bar r_1(0))\mid\tau_2,k_2 \right)=0$, where the derivative is understood in the feasible one-sided sense.
	\end{lemma}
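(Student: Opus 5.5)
The plan is to adapt the proof of \Cref{prop:Gs-positive} to general distance costs. The one new ingredient is the identity $\Gamma_2''=\rho_2\Gamma_2'$ together with the monotonicity of $\rho_2$; in the LQ case the corresponding step was the division by $s$.

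\textbf{Step 1: differentiate $G$.} Fix $r\in(0,\bar r_1(0))$ and write $s\coloneqq s(r)<0$. Set
\[
h(\theta)\coloneqq\frac{f(\theta)\Gamma_1'(r-\theta)\Gamma_2'(\theta-s)}{(\theta-\tau_1)(\tau_2-\theta)},
\]
which is positive on the interior of $I(r,s)$, so that $G(r,s)=\int_{I(r,s)}\theta h(\theta)\,d\theta=0$. Differentiating \eqref{eq:lar-swing} in $s$ by Leibniz' rule gives an interior term and possibly boundary terms:
\[
-\int_{I(r,s)}\frac{\theta f(\theta)\Gamma_1'(r-\theta)\Gamma_2''(\theta-s)}{(\theta-\tau_1)(\tau_2-\theta)}\,d\theta \;+\;\text{boundary terms}.
\]

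\textbf{Step 2: the boundary terms are zero or positive.} This uses $\Gamma_j'(0)=0$ from \Cref{ass:LAR}(i).
\begin{itemize}
\item If the lower limit is $s$, the integrand there contains $\Gamma_2'(0)=0$, so the term vanishes.
\item If the lower limit is $\bar r_1^{-1}(r)$, it does not depend on $s$, so there is no term.
\item If the upper limit is $r$, the integrand there contains $\Gamma_1'(0)=0$, so the term vanishes.
\item If the upper limit is $\underline r_2^{-1}(s)<r$, the endpoint lies in $(0,\tau_2)$. The integrand is then positive there, and $\partial\underline r_2^{-1}(s)/\partial s>0$ by the LC reach formula in \Cref{prop:lc-reaches}. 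So this boundary term is nonnegative.
\end{itemize}

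\textbf{Step 3: the interior term is strictly positive.} Write $\Gamma_2''(\theta-s)=\rho_2(\theta-s)\Gamma_2'(\theta-s)$. The interior term then becomes $-\int_I\theta h(\theta)\rho_2(\theta-s)\,d\theta$. Since $\int_I\theta h=0$, we may subtract the constant $\rho_2(-s)$, the value of $\rho_2(\theta-s)$ at $\theta=0$:
\[
-\int_I\theta h(\theta)\bigl[\rho_2(\theta-s)-\rho_2(-s)\bigr]\,d\theta .
\]
By \Cref{ass:LAR}(v), $\theta\mapsto\rho_2(\theta-s)$ is strictly decreasing. Hence the bracket is negative when $\theta>0$ and positive when $\theta<0$, so $\theta[\cdot]<0$ throughout.

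The identity $\int_I\theta h=0$ with $h>0$ forces $I$ to contain a nondegenerate set of both negative and positive states, because $I$ is not a singleton for $r<\bar r_1(0)$. The integrand is therefore strictly negative on a set of positive measure, and the interior term is strictly positive. Combining this with Step 2 gives $G_s>0$. This is the analogue of dividing by $s$ in the LQ proof.

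\textbf{Main obstacle: integrability near $\theta=s$.} The main technical point is justifying differentiation under the integral sign when the lower limit equals $s$, because $\rho_2(\theta-s)$ may blow up there. For example, with power costs $\rho_2(x)=(\alpha-1)/x$.

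The first thing I would try is to note that the problematic product is actually $\Gamma_2''(\theta-s)$ itself, not $\rho_2$. This is integrable near $s$, since $\int_s^{s+\varepsilon}\Gamma_2''(\theta-s)\,d\theta=\Gamma_2'(\varepsilon)<\infty$ and the remaining factors are bounded. One can then differentiate $G$ as a limit of difference quotients and apply monotone or dominated convergence, using $\Gamma_2''\ge0$. The rewriting via $\rho_2$ is used only pointwise on the open interval, where $\rho_2$ is finite.

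\textbf{Endpoint $r=\bar r_1(0)$.} As in \Cref{prop:Gs-positive}, we have $s(\bar r_1(0))=\underline r_2(0)$ and $I=\{0\}$. The interior integral is then zero, and every boundary term carries the factor $\theta=0$ (or $\Gamma_j'(0)=0$). Hence the one-sided derivative is $0$.
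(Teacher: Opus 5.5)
Your proposal is correct and follows the paper's proof essentially step for step. It uses Leibniz differentiation, kills or signs the boundary terms via $\Gamma_j'(0)=0$ and $\partial\underline r_2^{-1}(s)/\partial s>0$, makes the interior term strictly positive by subtracting $\rho_2(-s)$ (using $G(r,s)=0$ and the strict decrease of $\rho_2$), and handles the endpoint through $I=\{0\}$. Your remark that the term $\Gamma_2''(\theta-s)$ is integrable near $\theta=s$ adds a justification for differentiating under the integral sign, which the paper's proof takes for granted.
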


	
	\begin{proof}[Proof of \Cref{prop:lar-Gs}]
		Fix $r\in(0,\bar r_1(0))$ and write $s\coloneqq s(r)$. Define
		\[
		a(r,s)\coloneqq \max\left\{s,\bar r_1^{-1}(r)\right\},
		\]
    \[
		b(r,s)\coloneqq \min\left\{r,\underline{r}_2 ^{-1}(s)\right\},
		\]
		so that $I(r,s)=[a(r,s),b(r,s)]$. Also define
		\[
		H_r(\theta,s) \coloneqq \frac{\theta f(\theta)\Gamma_1'(r-\theta)\Gamma_2'(\theta-s)} {(\theta-\tau_1)(\tau_2-\theta)}.
		\]
		Then,
		\[
		G(r,s\mid\tau_2,k_2)=\int_{a(r,s)}^{b(r,s)} H_r(\theta,s)\,d\theta.
		\]
		On the interval $I(r,s)$, all factors except $\theta$ are strictly positive, so $\operatorname{sign} H_r(\theta,s)=\operatorname{sign}\theta$. Moreover,
\[
\partial_s H_r(\theta,s) = -H_r(\theta,s)\rho_2(\theta-s).
\]
		
		Differentiate with respect to $s$. By Leibniz' rule,
		\begin{align}\label{eq:Gs-Leibniz} 
			G_s(r,s)
			&=
			\int_{a(r,s)}^{b(r,s)} \partial_s H_r(\theta,s)\,d\theta + \mathds{1}_{\left\{b(r,s)=\underline{r}_2 ^{-1}(s)\right\}}H_r(b(r,s),s)\frac{\partial \underline{r}_2 ^{-1}(s)}{\partial s} \\ \nonumber
			&\qquad -\mathds{1}_{\left\{a(r,s)=s\right\}}H_r(a(r,s),s).
		\end{align}
        At a boundary-switching point, the same expression is obtained from the corresponding one-sided derivatives. The relevant boundary integrand vanishes because \Cref{ass:LAR} imposes $\Gamma_1'(0)=\Gamma_2'(0)=0$.
		
		We now study the two boundary terms. First, if $a(r,s)=\bar r_1^{-1}(r)$, then $a(r,s)$ does not depend on $s$, so the lower-bound contribution is zero. If instead $a(r,s)=s$, then
		\[
		H_r(s,s) = \frac{s f(s)\Gamma_1'(r-s)\Gamma_2'(0)} {(s-\tau_1)(\tau_2-s)} =0,
		\]
		because \Cref{ass:LAR} imposes $\Gamma_2'(0)=0$. Hence, the lower-bound term in \eqref{eq:Gs-Leibniz} is always zero. Second, for the upper-bound term, if $b(r,s)=r$, then $b(r,s)$ is constant in $s$, so again the contribution is zero. If instead $b(r,s)=\underline{r}_2 ^{-1}(s)$, then
		\[
		\frac{\partial b(r,s)}{\partial s} = \frac{\partial \underline{r}_2 ^{-1}(s)}{\partial s}.
		\]
		Therefore, the upper-bound contribution is
		\begin{equation}
			\mathds{1}_{\left\{\underline{r}_2 ^{-1}(s)<r\right\}}\, H_r\left(\underline{r}_2 ^{-1}(s),s\right)\, \frac{\partial \underline{r}_2 ^{-1}(s)}{\partial s}.
			\label{eq:upper-boundary-term}
		\end{equation}
		
		We claim that \eqref{eq:upper-boundary-term} is weakly positive. Indeed, when the indicator is zero, the term is zero. When the indicator is one, we have $0<\underline{r}_2 ^{-1}(s)<r$ and $\underline{r}_2 ^{-1}(s)<\tau_2$, so every factor in
		\[
		H_r\left(\underline{r}_2 ^{-1}(s),s\right) = \frac{\underline{r}_2 ^{-1}(s)\,f\left(\underline{r}_2 ^{-1}(s)\right) \Gamma_1'\left(r-\underline{r}_2 ^{-1}(s)\right) \Gamma_2'\left(\underline{r}_2 ^{-1}(s)-s\right)} {\left(\underline{r}_2 ^{-1}(s)-\tau_1\right)\left(\tau_2-\underline{r}_2 ^{-1}(s)\right)}
		\]
		is strictly positive. Moreover, $\underline{r}_2 ^{-1}(s)$ is increasing in $s$, and differentiating the defining identity
		\[
		\tau_2-\underline{r}_2 ^{-1}(s) = k_2\Gamma_2(\underline{r}_2 ^{-1}(s)-s)
		\]
		yields
		\[
		\frac{\partial \underline{r}_2 ^{-1}(s)}{\partial s} = \frac{k_2\Gamma_2'(\underline{r}_2 ^{-1}(s)-s)} {1+k_2\Gamma_2'(\underline{r}_2 ^{-1}(s)-s)} >0.
		\]
		Hence, the upper-bound term in \eqref{eq:upper-boundary-term} is strictly positive when it is active, and zero otherwise. Thus, it is weakly positive.	As a result, \eqref{eq:Gs-Leibniz} becomes
		\[
		G_s(r,s) = \int_{a(r,s)}^{b(r,s)} \partial_s H_r(\theta,s)\,d\theta + \mathds{1}_{\left\{\underline{r}_2 ^{-1}(s)<r\right\}}\, H_r\left(\underline{r}_2 ^{-1}(s),s\right)\, \frac{\partial \underline{r}_2 ^{-1}(s)}{\partial s},
		\]
		with the second term weakly positive.
		
		Since $s$ is a swing report,
		\[
		G(r,s\mid \tau_2,k_2)=\int_{a(r,s)}^{b(r,s)} H_r(\theta,s)\,d\theta=0.
		\]
		Therefore, for any constant $c$,
		\[
		\int_{a(r,s)}^{b(r,s)} H_r(\theta,s)\rho_2(\theta-s)\,d\theta = \int_{a(r,s)}^{b(r,s)} H_r(\theta,s)\left(\rho_2(\theta-s)-c\right)\,d\theta.
		\]
		Choose $c=\rho_2(-s)$. Because $\rho_2$ is strictly decreasing, we have
		\[
		\theta>0 \implies \theta-s>-s \implies \rho_2(\theta-s)-\rho_2(-s)<0,
		\]
		and
		\[
		\theta<0 \implies \theta-s<-s \implies \rho_2(\theta-s)-\rho_2(-s)>0.
		\]
		Since $H_r(\theta,s)$ has the sign of $\theta$, it follows that $H_r(\theta,s)\left(\rho_2(\theta-s)-\rho_2(-s)\right)<0$ for every $\theta\neq 0$ in $I(r,s)$. Hence,
		\[
		\int_{a(r,s)}^{b(r,s)} H_r(\theta,s)\rho_2(\theta-s)\,d\theta<0.
		\]
		Therefore,
\[
\int_{a(r,s)}^{b(r,s)} \partial_s H_r(\theta,s)\,d\theta = -\int_{a(r,s)}^{b(r,s)} H_r(\theta,s)\rho_2(\theta-s)\,d\theta >0.
\]
Adding the weakly positive boundary term yields $G_s(r,s)>0$.

At $r=\bar r_1(0)$, we have $s(r)=\underline r_2(0)$ and $a(r,s)=b(r,s)=0$. Hence, the integral in \eqref{eq:Gs-Leibniz} is zero. Its upper-boundary term
is also zero because $H_{\bar r_1(0)}\!\left(0,\underline r_2(0)\right)=0$. Thus, the feasible one-sided derivative at the endpoint equals zero.
	\end{proof}
	
	The next result shows that the sign of the bias effect also extends beyond the quadratic case, provided sender~2's inverse reach does not bind at the comparison-relevant report.
	
	\begin{proposition}
		\label{prop:lar-bias}
    Suppose \Cref{ass:LAR} holds. Fix $r\in(0,\bar r_1(0))$, and suppose that sender~2's inverse reach is locally non-binding at the corresponding swing report, i.e.,
    \[
    \underline r_2^{-1}(s(r))>r.
    \]
    Then,
    \[
    \frac{\partial s(r)}{\partial \tau_2}>0.
    \]
	\end{proposition}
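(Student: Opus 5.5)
The plan is to mirror the LQ argument in \Cref{prop:bias-effect}, using \Cref{prop:lar-Gs} for the denominator. By the implicit function theorem,
\[
\frac{\partial s(r)}{\partial \tau_2}=-\frac{G_{\tau_2}(r,s(r))}{G_s(r,s(r))}.
\]
\Cref{prop:lar-Gs} already gives $G_s(r,s(r))>0$ for $r\in(0,\bar r_1(0))$. It therefore suffices to show that $G_{\tau_2}(r,s(r))<0$.

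\textbf{Step 1: no boundary terms.} Under the strict non-binding condition $\underline r_2^{-1}(s(r))>r$, continuity of the inverse reach in $\tau_2$ keeps the upper limit of $I(r,s)$ locally equal to $r$. That limit does not depend on $\tau_2$. The lower limit $\max\{s,\bar r_1^{-1}(r)\}$ involves only sender~1 primitives and $s$, so it is also independent of $\tau_2$. Hence, when differentiating \eqref{eq:lar-swing}, $\tau_2$ enters only through the factor $1/(\tau_2-\theta)$. Note that $\Gamma_2$ and $k_2$ enter only through the inverse reach, which is inactive here. This gives
\[
G_{\tau_2}(r,s)=-\int_{I(r,s)}\frac{\theta f(\theta)\Gamma_1'(r-\theta)\Gamma_2'(\theta-s)}{(\theta-\tau_1)(\tau_2-\theta)^2}\,d\theta .
\]

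\textbf{Step 2: sign via the zero condition.} Write $H_r(\theta,s)$ as in the proof of \Cref{prop:lar-Gs}. Then
\[
G_{\tau_2}=-\int_{I(r,s)} \frac{H_r(\theta,s)}{\tau_2-\theta}\,d\theta .
\]
Since $G(r,s)=\int_{I} H_r\,d\theta=0$, I subtract the constant $1/\tau_2$ from the weight:
\[
G_{\tau_2}=-\int_I H_r(\theta,s)\left(\frac{1}{\tau_2-\theta}-\frac1{\tau_2}\right)d\theta=-\int_I \frac{\theta\,H_r(\theta,s)}{\tau_2(\tau_2-\theta)}\,d\theta .
\]
On $I(r,s)$ we have $\theta<r<\tau_2$, and every factor of $H_r$ other than $\theta$ is positive. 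Thus $\theta H_r(\theta,s)=\theta^2\cdot(\text{positive})\geq 0$, with strict inequality for $\theta\neq 0$. The interval $I(r,s)$ is non-degenerate, since it contains $0$ in its interior when $0<r<\bar r_1(0)$ and the non-binding case holds. Together with full support of $f$, this gives $G_{\tau_2}<0$. Equivalently, this is the identity $\tau_2G_{\tau_2}=-\int \theta^2(\cdots)$ used in the LQ proof.

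\textbf{Step 3: conclusion and obstacle.} Combining the two steps yields $\partial s/\partial\tau_2=-G_{\tau_2}/G_s>0$. Note that $\rho_2$ plays no role in this derivative. \Cref{ass:LAR} enters only through \Cref{prop:lar-Gs}. It also enters through $\Gamma_2'(0)=0$, which makes any lower-boundary term at $\theta=s$ vanish.

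The main obstacle is justifying the local differentiability of $s$ in $\tau_2$. This requires $G$ to be $C^1$ in $(s,\tau_2)$ near the root, which needs continuity of $f$ there. It also requires the max in the lower limit not to switch at the root. If that switch occurs, I would argue with one-sided derivatives, exactly as the paper does at boundary-switching points, since both one-sided versions of $G_s$ remain positive.
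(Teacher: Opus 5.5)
Your proof is correct and follows the paper's argument. The paper also takes $G_s>0$ from \Cref{prop:lar-Gs} and uses the zero condition to obtain $\tau_2(-G_{\tau_2})=\int\theta^2K_r(\theta,s)/(\tau_2-\theta)^2\,d\theta>0$, which is exactly your subtraction of the constant $1/\tau_2$. One small slip, harmless because $\Gamma_2$ does not depend on $\tau_2$: $\Gamma_2$ does enter the integrand through $\Gamma_2'(\theta-s)$, not only through the inverse reach.
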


	
	\begin{proof}[Proof of \Cref{prop:lar-bias}]
		Fix $r\in(0,\bar r_1(0))$ and write $s\coloneqq s(r)$. Under the strict non-binding condition $\underline r_2^{-1}(s)>r$, the upper integration limit in \eqref{eq:lar-swing} is locally equal to $r$. Hence, there is no boundary term when differentiating $G$ with respect to $\tau_2$, and
		\[
		G_{\tau_2}(r,s) = -\int_{a(r,s)}^{r} \frac{\theta f(\theta)\Gamma_1'(r-\theta)\Gamma_2'(\theta-s)} {(\theta-\tau_1)(\tau_2-\theta)^2} \,d\theta.
		\]
		Define
		\[
		K_r(\theta,s) \coloneqq \frac{f(\theta)\Gamma_1'(r-\theta)\Gamma_2'(\theta-s)} {\theta-\tau_1},
		\]
		so that $K_r(\theta,s)>0$ on the relevant interval. Then,
		\[
		G(r,s)=\int_{a(r,s)}^{r}\frac{\theta K_r(\theta,s)}{\tau_2-\theta}\,d\theta=0.
		\]
		Rewrite this identity as
		\[
		0 = \int_{a(r,s)}^{r} \frac{\theta(\tau_2-\theta)K_r(\theta,s)}{(\tau_2-\theta)^2}\,d\theta = \tau_2\int_{a(r,s)}^{r}\frac{\theta K_r(\theta,s)}{(\tau_2-\theta)^2}\,d\theta - \int_{a(r,s)}^{r}\frac{\theta^2 K_r(\theta,s)}{(\tau_2-\theta)^2}\,d\theta.
		\]
		The first integral is exactly $-G_{\tau_2}(r,s)$. Hence,
		\[
		\tau_2\left(-G_{\tau_2}(r,s)\right) = \int_{a(r,s)}^{r} \frac{\theta^2 K_r(\theta,s)}{(\tau_2-\theta)^2}\,d\theta.
		\]
		The right-hand side is strictly positive, and $\tau_2>0$, so $G_{\tau_2}(r,s)<0$. By \Cref{prop:lar-Gs}, $G_s(r,s)>0$. Implicit differentiation of $G(r,s(r)\mid \tau_2,k_2)=0$ yields
		\[
		\frac{\partial s(r)}{\partial \tau_2} = -\frac{G_{\tau_2}(r,s(r))}{G_s(r,s(r))}>0.
		\]
	\end{proof}
	
	\Cref{prop:lar-Gs,prop:lar-bias} show that the main sign arguments of the LQ analysis are not artifacts of quadratic costs. Under \Cref{ass:LAR}, the swing condition is locally well behaved and, when sender~2's inverse reach is locally non-binding, an increase in his downward bias raises the swing report. The next corollary collects these comparative statics.

\begin{corollary}
    \label{cor:LAR-sign-logic}
    Suppose \Cref{ass:LAR} holds. Fix $r\in(0,\bar r_1(0))$ and let $s(r)$ be a corresponding interior swing report. Then,
	    \begin{enumerate}[label=(\roman*)]
        \item the swing equation is locally well behaved:
        \[
        G_s(r,s(r))>0;
        \]
        \item at differentiability points,
        \[
        \frac{\partial s(r)}{\partial k_2}\geq0,
        \]
        with strict inequality when $\underline r_2^{-1}(s(r))<r$ and equality when $\underline r_2^{-1}(s(r))>r$; at a boundary-switching point, the corresponding one-sided derivatives are weakly non-negative whenever they exist; \item if $\underline r_2^{-1}(s(r))>r$, then
        \[
        \frac{\partial s(r)}{\partial \tau_2}>0.
        \]
    \end{enumerate}
\end{corollary}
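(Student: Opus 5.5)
The corollary packages \Cref{prop:lar-Gs}, \Cref{prop:LC-cost-effect} and \Cref{prop:lar-bias}, so I would prove it by checking that each hypothesis of those results holds under \Cref{ass:LAR} and then citing them item by item. First I would verify that \Cref{ass:LAR} implies the LC-class conditions. Items (i)--(iv) of \Cref{ass:LAR} give $\Gamma_j(0)=0$, strict monotonicity on $(0,\infty)$, continuous differentiability on $(0,\infty)$, and weak convexity. These are exactly the four LC requirements in \Cref{subsec:LC-class}. Hence \Cref{prop:lc-reaches} and \Cref{prop:lc-reach-cs} apply, and the swing report is characterized by $G(r,s(r)\mid\tau_2,k_2)=0$ with $G$ as in \eqref{eq:lar-swing}.

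For part (i), I would invoke \Cref{prop:lar-Gs} directly. It states $G_s(r,s(r))>0$ for every $r\in(0,\bar r_1(0))$ and every corresponding swing report, which is exactly the claim. The only point to flag is that "interior swing report" here means the root for $r$ strictly inside $(0,\bar r_1(0))$. This excludes the endpoint where $G_s=0$.

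For part (ii), part (i) supplies the hypothesis $G_s(r,s(r))>0$ that \Cref{prop:LC-cost-effect} requires. Since LAR is a subclass of LC, that proposition then yields the stated derivative signs. The derivative is zero when $\underline r_2^{-1}(s(r))>r$ and strictly positive when the inverse reach binds, because $G_{k_2}$ is then strictly negative through the boundary term. At boundary-switching points, the one-sided derivatives are weakly non-negative. Combining the cases gives $\partial s(r)/\partial k_2\ge 0$ at every differentiability point.

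For part (iii), the non-binding condition is exactly the hypothesis of \Cref{prop:lar-bias}, whose conclusion is $\partial s(r)/\partial\tau_2>0$. That proposition in turn relies on \Cref{prop:lar-Gs} for the denominator of the implicit-function ratio.

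I expect no genuine obstacle. The main thing to do carefully is the bookkeeping in the first step: checking that \Cref{ass:LAR} really is a subclass of LC, so that \Cref{prop:LC-cost-effect}, which was stated for LC, may be applied. A second point is that the implicit function theorem is being invoked at differentiability points of the swing root, consistent with the earlier one-sided-derivative conventions.
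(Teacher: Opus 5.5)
Your proposal is correct and matches the paper's proof, which likewise obtains (i) from \Cref{prop:lar-Gs}, (ii) from \Cref{prop:LC-cost-effect} using that the LAR class is a subclass of the LC class with (i) supplying $G_s>0$, and (iii) from \Cref{prop:lar-bias}. Your explicit check that \Cref{ass:LAR} implies the LC conditions (with $\Gamma_j''\geq0$ on $(0,\infty)$ giving weak convexity) is a slightly more careful version of the same bookkeeping.
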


\begin{proof}[Proof of \Cref{cor:LAR-sign-logic}]
Part~$(i)$ follows from \Cref{prop:lar-Gs}. Part~$(ii)$ follows from \Cref{prop:LC-cost-effect}, because the LAR class is a subclass of the LC class and part~$(i)$ supplies $G_s>0$. Part~$(iii)$ follows from \Cref{prop:lar-bias}.
\end{proof}

\Cref{cor:LAR-sign-logic} is a mechanism result, not a welfare result. Receiver welfare also depends on the truthful cutoffs, equilibrium mixing probabilities, and mistake probabilities, none of which is controlled by these local properties of the swing equation.



\clearpage
\bibliographystyle{aer}
\bibliography{biblio_news}

\end{document}